\tikzset{
  photon/.style={decorate, decoration={snake}, draw=black},
  fermion/.style={draw=black, postaction={decorate},decoration={markings,mark=at position .55 with {\arrow{>}}}},
  vertex/.style={draw,shape=circle,fill=black,minimum size=5pt,inner sep=0pt},
particle/.style={thick,draw=black},
particle2/.style={thick,draw=blue},
avector/.style={thick,draw=black, postaction={decorate},
    decoration={markings,mark=at position 1 with {\arrow[black]{triangle 45}}}},
gluon/.style={decorate, draw=black,
    decoration={coil,aspect=0}}
 }
\NewDocumentCommand\semiloop{O{black}mmmO{}O{above}}
{%
\draw[#1] let \p1 = ($(#3)-(#2)$) in (#3) arc (#4:({#4+180}):({0.5*veclen(\x1,\y1)})node[midway, #6] {#5};)
}
\theoremstyle{plain}
\newtheorem{thm}{Theorem}[section]
\newtheorem{lem}[thm]{Lemma}
\newtheorem{prop}[thm]{Proposition}
\theoremstyle{definition}
\newtheorem{defn}[thm]{Definition}
\newtheorem{notation}[thm]{Notation}
\newtheorem{rem}[thm]{Remark}
\newtheorem{ex}[thm]{Example}
\newtheorem{ass}[thm]{Assumption}
\newtheorem*{thm*}{Theorem}
\newtheorem*{lem*}{Lemma}
\newtheorem*{prop*}{Proposition}
\newtheorem*{cor*}{Corollary}
\newtheorem*{exe*}{Exercise}
\newtheorem*{defn*}{Definition}
\newtheorem*{rem*}{Remark}
\theoremstyle{remark}
\newcommand{\R}{\mathbb{R}}
\newcommand{\Z}{\mathbb{Z}}
\newcommand{\A}{\mathcal{A}}
\newcommand{\E}{\mathbb{E}} 
\newcommand{\bbE}{\mathbb{E}} 
\newcommand{\bbX}{\mathbb{X}} 
\newcommand{\calU}{\mathcal{U}}
\newcommand{\calD}{\mathcal{D}}
\newcommand{\calY}{\mathcal{Y}}
\newcommand{\ii}{{\mathrm{i}}}
\newcommand{\dd}{{\mathrm{d}}}
\newcommand{\id}{\mathrm{id}}
\newcommand{\C}{\mathbb{C}}
\DeclareMathOperator{\End}{End}
\newcommand{\T}{\mathsf{T}}
\newcommand{\bbQ}{{\mathbb{Q}}}
\newcommand{\de}{\partial}
\newcommand{\calB}{\mathcal{B}}
\newcommand{\calH}{\mathcal{H}}
\newcommand{\calS}{\mathcal{S}}
\newcommand{\calG}{\mathcal{G}}
\newcommand{\calL}{\mathcal{L}}
\newcommand{\calM}{\mathcal{M}}
\newcommand{\calE}{\mathcal{E}}
\newcommand{\calP}{\mathcal{P}}
\newcommand{\calF}{\mathcal{F}}
\newcommand{\btpsi}{\boldsymbol{\widetilde{\psi}}}
\newcommand{\sfeta}{\boldsymbol{\eta}}
\newcommand{\sfX}{{\mathsf{X}}}
\newcommand{\qtconn}{\nabla_\mathsf{G}}
\def\gpd{\,\lower1pt\hbox{$\longrightarrow$}\hskip-.24in\raise2pt
               \hbox{$\longrightarrow$}\,}
\let\Tilde=\widetilde
\let\Hat=\widehat
\newcommand{\hateta}{\widehat{\boldsymbol\eta}}
\newcommand{\hatX}{\widehat{\mathsf{X}}}
\DeclareMathOperator{\dr}{d}
\DeclareMathOperator{\Map}{Map}
\newcommand{\I}{\mathrm{i}}
\newcommand{\ee}{\textnormal{e}}
\newcommand{\calV}{\mathcal{V}}
\newglossaryentry{M}
{
  name={\ensuremath{M}},
  description={a finite-dimensional manifold},
  symbol={\ensuremath{M}}
}
\newglossaryentry{S_M}
{
  name={\ensuremath{S_M}},
  description={local action functional},
  symbol={\ensuremath{S_M}}
}
\newglossaryentry{F_M}
{
  name={\ensuremath{F_M}},
  description={space of fields},
  symbol={\ensuremath{F_M}}
}
\newglossaryentry{calF_M}
{
  name={\ensuremath{\calF_M}},
  description={BV space of fields},
  symbol={\ensuremath{\calF_M}}
}
\newglossaryentry{calS_M}
{
  name={\ensuremath{\calS_M}},
  description={BV action functional},
  symbol={\ensuremath{\calS_M}}
}
\newglossaryentry{calF^de_M}
{
  name={\ensuremath{\calF^\de_M}},
  description={BFV space of boundary fields},
  symbol={\ensuremath{\calF^\de_M}}
}
\newglossaryentry{calS^de}
{
  name={\ensuremath{\calS^\de}},
  description={BFV boundary action},
  symbol={\ensuremath{\calS^\de_M}}
}
\newglossaryentry{Delta}
{
  name={\ensuremath{\Delta}},
  description={global BV Laplacian on half-densitites},
  symbol={\ensuremath{\Delta}}
}
\newglossaryentry{psi_M}
{
  name={\ensuremath{\psi_M}},
  description={quantum state},
  symbol={\ensuremath{\psi_M}}
}
\newglossaryentry{V_M}
{
  name={\ensuremath{\calV_M}},
  description={space of residual fields},
  symbol={\ensuremath{\calV_M}}
}
\newglossaryentry{calH_deM}
{
  name={\ensuremath{\calH_{\de M}}},
  description={space of boundary states},
  symbol={\ensuremath{\calH_{\de M}}}
}
\newglossaryentry{Omega_deM}
{
  name={\ensuremath{\Omega_{\de M}}},
  description={BFV boundary operator},
  symbol={\ensuremath{\Omega_{\de M}}}
}
\newglossaryentry{Dens}
{
  name={\ensuremath{\textnormal{Dens}^{\frac{1}{2}}(M)}},
  description={half-densities on a manifold \ensuremath{M}},
  symbol={\ensuremath{\textnormal{Dens}^{\frac{1}{2}}(M)}}
}
\newglossaryentry{calP}
{
  name={\ensuremath{\calP}},
  description={polarization of $\calF^\de_M$},
  symbol={\ensuremath{\calP}}
}
\newglossaryentry{calH^calP_deM}
{
  name={\ensuremath{\calH^\calP_{\de M}}},
  description={full space of boundary states},
  symbol={\ensuremath{\calH^\calP_{\de M})}}
}
\newglossaryentry{calH^calP(princ)_deM}
{
  name={\ensuremath{\calH^{\calP,\textnormal{princ}}_{\de M}}},
  description={principal space of boundary states},
  symbol={\ensuremath{\calH^{\calP,\textnormal{princ}}_{\de M}}}
}
\newglossaryentry{Conf}
{
  name={\ensuremath{\mathsf{Conf}_\Gamma(M)}},
  description={configuration space},
  symbol={\ensuremath{\mathsf{Conf}_\Gamma(M)}}
}
\newglossaryentry{C}
{
  name={\ensuremath{\mathsf{C}_\Gamma(M)}},
  description={compactified configuration space},
  symbol={\ensuremath{\mathsf{C}_\Gamma}(M)}
}
\newglossaryentry{HatcalH_M}
{
  name={\ensuremath{\Hat{\calH}_M}},
  description={the space \ensuremath{\textnormal{Dens}^\frac{1}{2}(\calV_M)\otimes \calH_{\de M}}},
  symbol={\ensuremath{\Hat{\calH}_M}}
}
\newglossaryentry{B}
{
  name={\ensuremath{\calB^\calP_{\de M}}},
  description={leaf space of the polarization $\calP$},
  symbol={\ensuremath{\calB^\calP_{\de M}}}
}
\newglossaryentry{fullstate}
{
  name={\ensuremath{\boldsymbol{\psi}_M}},
  description={full quantum state},
  symbol={\ensuremath{\boldsymbol{\psi}_M}}
}
\newglossaryentry{Omega^princ}
{
  name={\ensuremath{\Omega^{\textnormal{princ}}}},
  description={principal part of the BFV boundary operator},
  symbol={\ensuremath{\Omega^{\textnormal{princ}}}}
}
\newglossaryentry{Omega^X_0}
{
  name={\ensuremath{\Omega^\mathbb{X}_0}},
  description={\ensuremath{\mathbb{X}}-part of the free BFV boundary operator},
  symbol={\ensuremath{\Omega^\mathbb{X}_0}}
}
\newglossaryentry{Omega^E_0}
{
  name={\ensuremath{\Omega^\mathbb{E}_0}},
  description={\ensuremath{\mathbb{E}}-part of the free BFV boundary operator},
  symbol={\ensuremath{\Omega^\mathbb{E}_0}}
}
\newglossaryentry{Omega^E_pert}
{
  name={\ensuremath{\Omega^\mathbb{E}_\textnormal{pert}}},
  description={\ensuremath{\mathbb{E}}-part of the perturbative BFV boundary operator},
  symbol={\ensuremath{\Omega^\mathbb{E}_\textnormal{pert}}}
}
\newglossaryentry{Omega^X_pert}
{
  name={\ensuremath{\Omega^\mathbb{X}_\textnormal{pert}}},
  description={\ensuremath{\mathbb{X}}-part of the perturbative BFV boundary operator},
  symbol={\ensuremath{\Omega^\mathbb{X}_\textnormal{pert}}}
}
\newglossaryentry{fullOmega_deM}
{
  name={\ensuremath{\boldsymbol{\Omega}_{\de M}}},
  description={full BFV boundary operator},
  symbol={\ensuremath{\boldsymbol{\Omega}_{\de M}}}
}
\newglossaryentry{phi}
{
  name={\ensuremath{\varphi}},
  description={formal exponential map},
  symbol={\ensuremath{\varphi}}
}
\newglossaryentry{T}
{
  name={\ensuremath{\mathsf{T}}},
  description={Taylor expansion},
  symbol={\ensuremath{\mathsf{T}}}
}
\newglossaryentry{HatS}
{
  name={\ensuremath{\Hat{S}}},
  description={completed symmetric algebra},
  symbol={\ensuremath{\Hat{S}}}
}
\newglossaryentry{D}
{
  name={\ensuremath{D_\mathsf{G}}},
  description={classical Grothendieck connection},
  symbol={\ensuremath{D_\mathsf{G}}}
}
\newglossaryentry{R}
{
  name={\ensuremath{R}},
  description={vector field in the definition of \ensuremath{D_\mathsf{G}}},
  symbol={\ensuremath{R}}
}
\newglossaryentry{TildeS_Sigma,x}
{
  name={\ensuremath{\Tilde{\calS}_{\Sigma,x}}},
  description={formal globalized action},
  symbol={\ensuremath{\Tilde{\calS}_{\Sigma,x}}}
}
\newglossaryentry{Sigma}
{
  name={\ensuremath{\Sigma}},
  description={worldsheet manifold with boundary in the Poisson Sigma Model},
  symbol={\ensuremath{\Sigma}}
}
\newglossaryentry{hatX}
{
  name={\ensuremath{\hatX}},
  description={lift of \ensuremath{\mathsf{X}} by \ensuremath{\varphi_x}},
  symbol={\ensuremath{\hatX}}
}
\newglossaryentry{hateta}
{
  name={\ensuremath{\hateta}},
  description={lift of \ensuremath{\boldsymbol{\eta}} by \ensuremath{\dd\varphi_x}},
  symbol={\ensuremath{\hateta}}
}
\newglossaryentry{Tildepsi_Sigma,x}
{
  name={\ensuremath{\Tilde{\psi}_{\Sigma,x}}},
  description={principal covariant quantum state},
  symbol={\ensuremath{\Tilde{\psi}_{\Sigma,x}}}
}
\newglossaryentry{btpsi_Sigma,x}
{
  name={\ensuremath{\btpsi_{\Sigma,x}}},
  description={full covariant quantum state},
  symbol={\ensuremath{\btpsi_{\Sigma,x}}}
}
\newglossaryentry{x}
{
  name={\ensuremath{x}},
  description={constant background field},
  symbol={\ensuremath{x}}
}
\newglossaryentry{X}
{
  name={\ensuremath{X}},
  description={Base map of the Poisson Sigma Model},
  symbol={\ensuremath{X}}
} 
\newglossaryentry{superX}
{
  name={\ensuremath{\mathsf{X}}},
  description={superfield version of \ensuremath{X}},
  symbol={\ensuremath{\mathsf{X}}}
} 
\newglossaryentry{eta}
{
  name={\ensuremath{\eta}},
  description={fiber map of the Poisson Sigma Model},
  symbol={\ensuremath{\eta}}
}
\newglossaryentry{superEta}
{
  name={\ensuremath{\boldsymbol{\eta}}},
  description={superfield version of \ensuremath{\eta}},
  symbol={\ensuremath{\boldsymbol{\eta}}}
}
\newglossaryentry{qGBFV}
{
  name={\ensuremath{\nabla_\mathsf{G}}},
  description={quantum Grothendieck BFV operator},
  symbol={\ensuremath{\nabla_\mathsf{G}}}
}
\newglossaryentry{Poisson_mnf}
{
  name={\ensuremath{\mathscr{P}}},
  description={Poisson manifold},
  symbol={\ensuremath{\mathscr{P}}}
}
\newglossaryentry{Poisson_str}
{
  name={\ensuremath{\Pi}},
  description={Poisson structure},
  symbol={\ensuremath{\Pi}}
}
\newglossaryentry{bbX}
{
  name={\ensuremath{\mathbb{X}}},
  description={boundary field part of \ensuremath{\mathsf{X}}},
  symbol={\ensuremath{\mathbb{X}}}
} 
\newglossaryentry{bbE}
{
  name={\ensuremath{\mathbb{E}}},
  description={boundary field part of \ensuremath{\boldsymbol{\eta}}},
  symbol={\ensuremath{\mathbb{E}}}
} 
\newglossaryentry{scrX}
{
  name={\ensuremath{\mathscr{X}}},
  description={fluctuation field part of \ensuremath{\mathsf{X}}},
  symbol={\ensuremath{\mathscr{X}}}
} 
\newglossaryentry{scrE}
{
  name={\ensuremath{\mathscr{E}}},
  description={fluctuation field part of \ensuremath{\boldsymbol{\eta}}},
  symbol={\ensuremath{\mathscr{E}}}
} 
\newglossaryentry{sfx}
{
  name={\ensuremath{\mathsf{x}}},
  description={residual field part of \ensuremath{\mathsf{X}}},
  symbol={\ensuremath{\mathsf{x}}}
} 
\newglossaryentry{sfe}
{
  name={\ensuremath{\mathsf{e}}},
  description={residual field part of \ensuremath{\boldsymbol{\eta}}},
  symbol={\ensuremath{\mathsf{e}}}
} 
\newglossaryentry{calD_G}
{
  name={\ensuremath{\mathcal{D}_\mathsf{G}}},
  description={deformed Grothendieck connection},
  symbol={\ensuremath{\calD_\mathsf{G}}}
} 
\newglossaryentry{A}
{
  name={\ensuremath{A}},
  description={connection term coming from Kontsevich's formality map},
  symbol={\ensuremath{A}}
}
\newglossaryentry{F}
{
  name={\ensuremath{F}},
  description={curvature term coming from Kontsevich's formality map},
  symbol={\ensuremath{F}}
}
\newglossaryentry{barD_G}
{
  name={\ensuremath{\overline{\mathcal{D}}_\mathsf{G}}},
  description={the flat connection \ensuremath{\calD_\mathsf{G}+[\gamma,\enspace]_\star}},
  symbol={\ensuremath{\overline{\calD}_\mathsf{G}}}
} 
\newglossaryentry{F^P}
{
  name={\ensuremath{F^\mathscr{P}}},
  description={Weyl curvature tensor of \ensuremath{\calD_\mathsf{G}}},
  symbol={\ensuremath{F^\mathscr{P}}}
}
\newglossaryentry{barF^P}
{
  name={\ensuremath{\overline{F}^\mathscr{P}}},
  description={Weyl curvature tensor of \ensuremath{\overline{\calD}_\mathsf{G}}},
  symbol={\ensuremath{\overline{F}^\mathscr{P}}}
}
\newglossaryentry{gamma}
{
  name={\ensuremath{\gamma}},
  description={a solution to \ensuremath{F^\mathscr{P}+\calD_\mathsf{G}\gamma+\gamma\star\gamma=0}},
  symbol={\ensuremath{\gamma}}
}
\newglossaryentry{twistedEOmega}
{
  name={\ensuremath{\Tilde{\boldsymbol{\Omega}}^{\E,\gamma}_{\de\Sigma}}},
  description={twisted \ensuremath{\E}-part of the full BFV boundary operator},
  symbol={\ensuremath{\Tilde{\boldsymbol{\Omega}}^{\E,\gamma}_{\de\Sigma}}}
}
\newglossaryentry{tildeXOmega}
{
  name={\ensuremath{\Tilde{\boldsymbol{\Omega}}^{\mathbb{X}}_{\de\Sigma}}},
  description={\ensuremath{\mathbb{X}}-part of the globalized full BFV boundary operator},
  symbol={\ensuremath{\Tilde{\boldsymbol{\Omega}}^{\mathbb{X}}_{\de\Sigma}}}
}
\newglossaryentry{tildeEOmega}
{
  name={\ensuremath{\Tilde{\boldsymbol{\Omega}}^{\mathbb{E}}_{\de\Sigma}}},
  description={\ensuremath{\mathbb{E}}-part of the globalized full BFV boundary operator},
  symbol={\ensuremath{\Tilde{\boldsymbol{\Omega}}^{\mathbb{E}}_{\de\Sigma}}}
}
\newglossaryentry{twistedpsi}
{
  name={\ensuremath{\Tilde{\boldsymbol{\psi}}^{\gamma}_{\Sigma,x}}},
  description={twisted full covariant quantum state},
  symbol={\ensuremath{\Tilde{\boldsymbol{\psi}}^{\gamma}_{\Sigma,x}}}
}
\newglossaryentry{modifiedformalglobalizedaction}
{
  name={\ensuremath{\mathbb{S}_{\Sigma,x}}},
  description={modified formal globalized action},
  symbol={\ensuremath{\mathbb{S}_{\Sigma,x}}}
}
\newglossaryentry{gammaaction}
{
  name={\ensuremath{\calS_{\Sigma,\gamma}}},
  description={\ensuremath{\gamma}-action term},
  symbol={\ensuremath{\calS_{\Sigma,\gamma}}}
}
\newglossaryentry{omegaaction}
{
  name={\ensuremath{\calS_{\Sigma,\omega}}},
  description={\ensuremath{\omega}-action term},
  symbol={\ensuremath{\calS_{\Sigma,\omega}}}
}
\newglossaryentry{calE}
{
  name={\ensuremath{\calE}},
  description={the deformed bundle \ensuremath{\Hat{S}T^*\mathscr{P}[[\varepsilon]]}},
  symbol={\ensuremath{\calE}}
}
\newglossaryentry{epsilon}
{
  name={\ensuremath{\varepsilon}},
  description={formal deformation parameter},
  symbol={\ensuremath{\varepsilon}}
}
\newglossaryentry{star_commutator}
{
  name={\ensuremath{[\enspace,\enspace]_\star}},
  description={star commutator},
  symbol={\ensuremath{[\enspace,\enspace]_\star}}
}
\newglossaryentry{star}
{
  name={\ensuremath{\star}},
  description={Kontsevich's star product},
  symbol={\ensuremath{\star}}
}
\newglossaryentry{hbar}
{
  name={\ensuremath{\hbar}},
  description={reduced Planck's constant},
  symbol={\ensuremath{\hbar}}
}
\newglossaryentry{scrC}
{
  name={\ensuremath{\mathscr{C}}},
  description={set of all corner points of \ensuremath{\Sigma}},
  symbol={\ensuremath{\mathscr{C}}}
}
\newglossaryentry{nabla^gamma_G}
{
  name={\ensuremath{\nabla^\gamma_\mathsf{G}}},
  description={twisted quantum Grothendieck BFV operator},
  symbol={\ensuremath{\nabla^\gamma_\mathsf{G}}}
}
\newglossaryentry{calH_C}
{
  name={\ensuremath{\calH_C}},
  description={space of corner states},
  symbol={\ensuremath{\calH_C}}
}
\newglossaryentry{ext_state_space}
{
  name={\ensuremath{\Hat{\calH}^\mathscr{C}_{\de\Sigma,x}}},
  description={extended state space},
  symbol={\ensuremath{\Hat{\calH}^\mathscr{C}_{\de\Sigma,x}}}
}
\newglossaryentry{tot_ext_state_space}
{
  name={\ensuremath{\Hat{\calH}^\mathscr{C}_{\de\Sigma,\textnormal{tot}}}},
  description={total extended state space},
  symbol={\ensuremath{\Hat{\calH}^\mathscr{C}_{\de\Sigma,\textnormal{tot}}}}
}
\newglossaryentry{tildenabla^gamma_G}
{
  name={\ensuremath{\Tilde{\nabla}^\gamma_\mathsf{G}}},
  description={twisted quantum Grothendieck BFV operator for corners},
  symbol={\ensuremath{\Tilde{\nabla}^\gamma_\mathsf{G}}}
}
\newglossaryentry{Omega_scrC}
{
  name={\ensuremath{\boldsymbol{\Omega}_\mathscr{C}}},
  description={corner contribution of the full BFV boundary operator},
  symbol={\ensuremath{\boldsymbol{\Omega}_\mathscr{C}}}
}
\newglossaryentry{i}
{
  name={\ensuremath{\I}},
  description={imaginary unit},
  symbol={\ensuremath{\I}}
}
\newglossaryentry{omega_M}
{
  name={\ensuremath{\omega_M}},
  description={BV symplectic form},
  symbol={\ensuremath{\omega_M}}
}
\newglossaryentry{omega^de}
{
  name={\ensuremath{\omega^\de}},
  description={BFV symplectic form},
  symbol={\ensuremath{\omega^\de}}
}
\newglossaryentry{omega}
{
  name={\ensuremath{\omega}},
  description={a \ensuremath{\star}-central \ensuremath{\calD_\mathsf{G}}-closed 2-form},
  symbol={\ensuremath{\omega}}
}
\begin{document}

\title[On the Globalization of the PSM in the BV-BFV formalism]{On the Globalization of the Poisson Sigma Model in the BV-BFV formalism}
\author[A. S. Cattaneo]{Alberto S. Cattaneo}
\author[N. Moshayedi]{Nima Moshayedi}
\author[K. Wernli]{Konstantin Wernli}
\address{Institut f\"ur Mathematik\\ Universit\"at Z\"urich\\ 
Winterthurerstrasse 190
CH-8057 Z\"urich}
\email[A. S.~Cattaneo]{cattaneo@math.uzh.ch}
\address{Institut f\"ur Mathematik\\ Universit\"at Z\"urich\\ 
Winterthurerstrasse 190
CH-8057 Z\"urich}
\email[N.~Moshayedi]{nima.moshayedi@math.uzh.ch}
\address{Institut f\"ur Mathematik\\ Universit\"at Z\"urich\\ 
Winterthurerstrasse 190
CH-8057 Z\"urich}
\email[K.~Wernli]{konstantin.wernli@math.uzh.ch}
\thanks{
This research was (partly) supported by the NCCR SwissMAP, funded by the Swiss National Science Foundation, and by the
COST Action MP1405 QSPACE, supported by COST (European Cooperation in Science and Technology).  We acknowledge partial support of SNF grant No. 200020\_172498/1. K. W. acknowledges partial support by the Forschungskredit of the University of Zurich, grant no. FK-16-093.
}
\maketitle

\begin{abstract} 
We construct a formal global quantization of the Poisson Sigma Model in the BV-BFV formalism using the perturbative quantization of AKSZ theories on manifolds with boundary and analyze the properties of the boundary BFV operator. Moreover, we consider mixed boundary conditions and show that they lead to quantum anomalies, i.e. to a failure of the (modified differential) Quantum Master Equation. We show that it can be restored by adding boundary terms to the action, at the price of introducing corner terms in the boundary operator. We also show that the quantum Grothendieck BFV operator on the total space of states is a differential, i.e. squares to zero, which is necessary for a well-defined BV cohomology. 
\end{abstract}

\tableofcontents


\section{Introduction}

\subsection{Motivation}
Symplectic groupoids are an important concept in Poisson and symplectic geometry \cite{W}.
A groupoid is a small category whose morphisms are invertible. 
We denote a groupoid
by $G\rightrightarrows M$, where $M$ is the set of objects and $G$ the set of morphisms.
A Lie groupoid is, roughly speaking, a groupoid where 
$M$ and $G$ are smooth manifolds and all structure maps are smooth. Finally, a symplectic groupoid is a Lie groupoid with a symplectic form $\omega\in\Omega^2(G)$ such that the graph of the multiplication  is a Lagrangian submanifold of $(G,\omega)\times(G,\omega)\times(G,-\omega)$. The manifold of objects $M$ has an induced Poisson
structure uniquely determined by requiring that the source map $G\to M$ is Poisson. A Poisson manifold $M$ that arises this way is called integrable. Not every Poisson manifold is integrable.  \\
The Poisson Sigma Model, \cite{SS1,SS2,I} is a 2-dimensional topological Sigma Model with target a Poisson manifold $\mathscr{P}$. The reduced phase space of the Poisson Sigma Model on an interval with target a Poisson manifold $\mathscr{P}$ is the source simply connected symplectic groupoid of $\mathscr{P}$ if $\mathscr{P}$ is integrable and otherwise a topological groupoid arising by singular symplectic reduction \cite{CF5}. \\
In \cite{C,CC1,CC2} it was shown that the space of 
classical boundary fields
always
has an interesting structure called \textsf{relational symplectic groupoid}. 
An relational symplectic groupoid is, roughly speaking, a groupoid in the ``extended category'' of symplectic manifolds where morphisms
are canonical relations. Recall that a canonical relation between two symplectic manifolds $(M_1,\omega_1)$ to
$(M_2,\omega_2)$ is an immersed Lagrangian submanifold of 
$(M_1,\omega_1)\times(M_2,-\omega_2)$.
The main structure of an relational symplectic groupoid $(\mathcal{G},\omega)$ is then given by an immersed Lagrangian submanifold 
$\mathcal{L}_1$ of $(\mathcal{G},\omega)$, which plays the role of unity, and by
an immersed Lagrangian submanifold $\mathcal{L}_3$ of
 $(\mathcal{G},\omega)\times(\mathcal{G},\omega)\times(\mathcal{G},-\omega)$,
 which plays the role
 of associative multiplication\footnote{There is also an immersed Lagrangian submanifold $\calL_2\in (\calG,\omega)\times (\calG,-\omega)$ representing the identity. The composition of $\calL_3$ with $\calL_2$ also defines an immersed Lagrangian submanifold of $(\calG,\omega)\times (\calG,\omega)\times(\calG,-\omega)$ that induces an equivalence relation and a quotient space which is precisely the symplectic reduction, so the symplectic groupoid $G$ in case $M$ is integrable.}. (In addition, there is also an antisymplectomorphism $\mathcal{I}$ of $\mathcal{G}$ that plays the role of the inversion map.)
 \\ 
The goal of this paper is another step towards deformation quantization of the relational symplectic groupoid through the Poisson Sigma Model, using the BV-BFV\footnote{Here the letters BFV stand for Batalin, Fradkin and Vilkovisky, who introduced what is now known as BV \cite{BV1,BV2} and BFV \cite{BF1,BF2,FV1,FV2} formalisms for gauge fixing.} formalism for the quantization of gauge theories on manifolds with boundary \cite{CMR1, CMR2}. This possible application of the BV-BFV formalism was first discussed in \cite{CMR2}. In \cite{CMW2} we explained how the quantization of the relational symplectic groupoid can be achieved in the case of constant Poisson structures. In \cite{CMW4}, we generalized the methods of formal geometry \cite{B,GK,} used in \cite{BCM,CMW2} to describe the perturbative quantization of any polarized AKSZ theory \cite{AKSZ}, possibly on manifolds with boundary. In that picture, the quantum state of the Poisson Sigma Model\footnote{We consider the Poisson Sigma Model as a perturbation around the trivial Poisson structure, so the moduli space of classical solutions on which $\btpsi$ is defined is identified with the target $\mathscr{P}$.} with target $\mathscr{P}$ is described by a section $\btpsi$ of a certain bundle over $\mathscr{P}$ which is closed with respect to an operator $\nabla_\mathsf{G}$: 
\begin{equation}
\nabla_\mathsf{G}\btpsi = 0. 
\end{equation} This equation is called the \textsf{modified differential Quantum Master Equation}. We will call $\nabla_\mathsf{G}$ the \textsf{quantum Grothendieck BFV operator}. In this paper we apply the results of \cite{CMW4} to the Poisson Sigma Model, and extend them to the more general case when we consider, in addition, boundary pieces with fixed boundary conditions. Typically we allow the different types to occur on different pieces of a single connected component of the boundary of the source manifold $\Sigma$.
We speak of alternating boundary conditions. These boundary conditions are required to define the relational symplectic groupoid on boundary fields of the Poisson Sigma Model.
\subsection{Main results}

Let us summarize the main results of the paper. 
We show that the introduction of alternating boundary conditions introduces a quantum anomaly, i.e. a failure of the closedness of $\btpsi$. In fact, we have:
\begin{prop*}\ref{prop_curv}
Consider the full state $\btpsi_{\Sigma,x}$ defined by $\Tilde{\calS}_{\Sigma,x}$ as in Definition \ref{full_state_2}. Then 
\begin{equation}
\nabla_\mathsf{G}\btpsi_{\Sigma,x}=\exp\left(\frac{\gls{i}}{\hbar}\int_{\partial_0\Sigma}F(R,R,\mathsf{T}\varphi^*_x\Pi)(\mathscr{X})\right)\btpsi_{\Sigma,x},
\end{equation}
where we integrate out the $\mathsf{X}$-fluctuation $\mathscr{X}$, which are the high energy part, in $F$ along $\de_0\Sigma$.
\end{prop*}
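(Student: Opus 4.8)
The plan is to write the full state as a BV pushforward and then commute the quantum Grothendieck BFV operator through the integral. By Definition~\ref{full_state_2} the full covariant state is
\[
\btpsi_{\Sigma,x}=\int\exp\!\left(\tfrac{\I}{\hbar}\Tilde{\calS}_{\Sigma,x}\right),
\]
where the integral is the BV pushforward over the fluctuation fields $\mathscr{X}$ and $\mathscr{E}$ along a gauge-fixing Lagrangian, leaving dependence on the residual and boundary fields. The operator $\nabla_\mathsf{G}$ splits into the classical Grothendieck connection $D_\mathsf{G}$ (built from the vector field $R$), the global BV Laplacian $\Delta$, and the quantized BFV boundary operator $\Omega$. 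First I would move each summand inside the pushforward: $\Delta$ and $\Omega$ act on the exponential through the bulk modified QME established for polarized AKSZ theories in \cite{CMW4}, while $D_\mathsf{G}$ acts by differentiation in the base point $x$ and brings down $\tfrac{\I}{\hbar}D_\mathsf{G}\Tilde{\calS}_{\Sigma,x}$.

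The decisive step is the differential Quantum Master Equation for $\Tilde{\calS}_{\Sigma,x}$. For a closed worldsheet, or with homogeneous boundary conditions, this holds identically and $\btpsi_{\Sigma,x}$ is $\nabla_\mathsf{G}$-closed; the content of the proposition is to isolate its failure. I would trace this failure to the non-flatness of the Grothendieck connection at the quantum level, mediated by Kontsevich's formality map: the connection and curvature terms $A$ and $F$ control the Weyl curvature of the deformed connection $\calD_\mathsf{G}$, and the obstruction to $D_\mathsf{G}\Tilde{\calS}_{\Sigma,x}$ being absorbed by the pair $(\Delta,\Omega)$ is exactly the image of this curvature under formality. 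Evaluating it on two copies of the defining connection form $R$ (since $F$ is a two-form) and on the Taylor expansion $\mathsf{T}\varphi^*_x\Pi$ of the Poisson structure produces the integrand $F(R,R,\mathsf{T}\varphi^*_x\Pi)$.

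It remains to localize the anomaly and to exponentiate. With alternating boundary conditions the bulk term vanishes by the classical master equation, and a Stokes-type argument confines the surviving contribution to the piece $\de_0\Sigma$ carrying the fixed boundary condition, where the polarization is incompatible with $D_\mathsf{G}$; this gives $\int_{\de_0\Sigma}F(R,R,\mathsf{T}\varphi^*_x\Pi)(\mathscr{X})$ inside the pushforward. Integrating out the high-energy $\mathsf{X}$-fluctuation $\mathscr{X}$ appearing in $F$ along $\de_0\Sigma$, the resulting insertion enters the effective action additively and localizes away from the remaining integrated modes; by the usual resummation of connected contributions it exponentiates, factoring out of the pushforward as the multiplicative prefactor $\exp\!\big(\tfrac{\I}{\hbar}\int_{\de_0\Sigma}F(R,R,\mathsf{T}\varphi^*_x\Pi)(\mathscr{X})\big)$ and yielding the claimed identity.

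The main obstacle I anticipate is the second step: establishing the \emph{exact} form of the anomaly. One must verify that every term generated by $\Delta$, $\Omega$ and $D_\mathsf{G}$ cancels against the bulk QME except for a single curvature contribution, and that this contribution is precisely the contraction $F(R,R,\mathsf{T}\varphi^*_x\Pi)$ rather than another component of the formality curvature. This requires careful bookkeeping of the boundary-restricted graphs of Kontsevich's formality map and control of the $\hbar$-expansion, so that no higher-order corrections contribute to the anomaly and the exponentiation is exact.
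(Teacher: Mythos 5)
Your overall strategy is the same as the paper's: run the Stokes-theorem argument for the modified differential Quantum Master Equation on the compactified configuration spaces, identify the obstruction with the Weyl curvature $F$ of the deformed Grothendieck connection $\calD_\mathsf{G}$ coming from Kontsevich's formality map, and observe that it localizes on $\de_0\Sigma$ where it cannot be cancelled by the BFV boundary operator. However, the step you explicitly defer as ``the main obstacle'' --- verifying that the surviving contribution is \emph{precisely} $F(R,R,\mathsf{T}\varphi^*_x\Pi)$ and nothing else --- is in fact the entire content of the paper's proof, and without it your argument is an announcement of the answer rather than a derivation. The paper closes this gap with a short dimension count that is already set up in Proposition \ref{E_rep2} and Section \ref{subsec_boundary}: the new boundary strata are those where a subgraph $\Gamma'$ with $n$ bulk and $k$ boundary vertices collapses at a point of $\de_0\Sigma$; the stratum has dimension $2n+k-2$; since $\hateta=0$ on $\de_0\Sigma$, the propagator vanishes whenever its tail lands there, so no arrow may leave $\Gamma'$ and no boundary vertices occur ($k=0$); writing $n=m+r$ with $m$ vertices labelled by $\Pi$ (form degree $2$) and $r$ by $R$ (form degree $1$), the nonvanishing condition $2n+k-2=2m+r$ forces $r=2-k=2$. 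Summing over all graphs with exactly two $R$-vertices and arbitrarily many $\Pi$-vertices is, by Equation \eqref{eq:defn_F}, the definition of $F(R,R,\mathsf{T}\varphi^*_x\Pi)$. Your worry about ``higher-order corrections'' is thus resolved not by controlling an $\hbar$-expansion but by this purely combinatorial degree constraint.

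Two smaller points. First, your phrase ``the bulk term vanishes by the classical master equation'' conflates the standard bulk cancellations (which are part of the mdQME machinery of \cite{CMW4} and hold here as well) with the genuinely new strata on $\de_0\Sigma$; the anomaly is not a leftover bulk term but a new codimension-one stratum of the configuration space that simply has no counterpart in $\boldsymbol{\Omega}_{\de\Sigma}$ because no boundary fields live on $\de_0\Sigma$. Second, the exponentiation is not an extra resummation step you need to justify: since the state is already defined as the exponential of the sum over connected graphs, the anomalous connected contributions assemble into the multiplicative prefactor automatically. With the degree count supplied, your outline becomes the paper's proof.
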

Here $F(R,R,\mathsf{T}\varphi^*_x\Pi)$ is defined in Appendix \ref{app:globalization} and is part of Kontsevich's $L_\infty$-morphism, and $\de_0\Sigma$ is a certain boundary component. 
Next, we show that by ``twisting'' the state and the operator $\nabla_\mathsf{G}$ by an appropriate Maurer--Cartan element (see Section \ref{sec:GlobalizedPSM}) the anomaly can be reduced to terms supported at the corners (i.e. points where boundary conditions change). We prove the following theorem: 
\begin{thm*}\ref{thm_corners}
Consider the twisted full state $\btpsi_{\Sigma,x}^\gamma$ defined in Definition \ref{full_covariant_state} and the twisted quantum Grothendieck BFV operator $\nabla^\gamma_\mathsf{G}$ defined in Definition \ref{defn_twisted_conn}.
Then
\begin{equation}
\nabla_{\mathsf{G}}^\gamma\btpsi_{\Sigma,x}^\gamma=\sum_{C\in\mathscr{C}_1}T(C)\btpsi_{\Sigma,x}^\gamma,
\end{equation}
where $T(C)$ are functionals on $\calB^\calP_{\de\Sigma}$ with values in $\Omega^1(\mathscr{P})$, depending only on the values of the fields at the corner point $C$.
\end{thm*}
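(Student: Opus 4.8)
The plan is to start from the anomaly computed in Proposition~\ref{prop_curv} and to show that the $\gamma$-twist has been engineered precisely so as to trade the bulk anomaly supported on $\de_0\Sigma$ for contributions localized at the corners. First I would unfold the definitions from Definition~\ref{full_covariant_state} and Definition~\ref{defn_twisted_conn}: the twisted state $\btpsi^\gamma_{\Sigma,x}$ differs from $\btpsi_{\Sigma,x}$ by the insertion of $\exp\!\big(\tfrac{\I}{\hbar}\calS_{\Sigma,\gamma}\big)$, where $\calS_{\Sigma,\gamma}$ is the $\gamma$-action term integrated along $\de_0\Sigma$, while the twisted operator $\nabla^\gamma_\mathsf{G}$ mirrors the deformed connection $\overline{\calD}_\mathsf{G}=\calD_\mathsf{G}+[\gamma,\enspace]_\star$. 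Applying $\nabla^\gamma_\mathsf{G}$ to $\btpsi^\gamma_{\Sigma,x}$ and using the Leibniz rule, the bare operator $\nabla_\mathsf{G}$ acting on the $\btpsi$-factor reproduces, via Proposition~\ref{prop_curv}, the anomaly governed by $\int_{\de_0\Sigma}F(R,R,\mathsf{T}\varphi^*_x\Pi)(\mathscr{X})$, whereas the derivative hitting the $\exp\!\big(\tfrac{\I}{\hbar}\calS_{\Sigma,\gamma}\big)$ factor together with the $[\gamma,\enspace]_\star$-correction contribute terms proportional to $\calD_\mathsf{G}\gamma$ and to $\gamma\star\gamma$, again integrated along $\de_0\Sigma$.

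The crucial second step is to identify the Kontsevich formality curvature $F(R,R,\mathsf{T}\varphi^*_x\Pi)$, \emph{after integrating out the high-energy fluctuation $\mathscr{X}$ along $\de_0\Sigma$}, with the Weyl curvature tensor $F^\mathscr{P}$ of $\calD_\mathsf{G}$. This is where the structural properties of Kontsevich's $L_\infty$-morphism recalled in Appendix~\ref{app:globalization} enter, since $F$ and $F^\mathscr{P}$ are a priori distinct objects. Granting this identification, the three contributions collected in the first step assemble into $\int_{\de_0\Sigma}\big(F^\mathscr{P}+\calD_\mathsf{G}\gamma+\gamma\star\gamma\big)$, and this integrand vanishes identically by the Maurer--Cartan equation $F^\mathscr{P}+\calD_\mathsf{G}\gamma+\gamma\star\gamma=0$ defining $\gamma$. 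Hence the whole bulk anomaly distributed along $\de_0\Sigma$ cancels.

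The third step isolates the residual terms: these are exactly the pieces in which the covariant derivative of the $\gamma$-insertion produces a total derivative along the one-dimensional arcs comprising $\de_0\Sigma$. By Stokes' theorem on each such arc these localize at the endpoints, which are precisely the corner points where the boundary conditions change; collecting the endpoint evaluations yields $\sum_{C\in\mathscr{C}_1}T(C)$. Since $\gamma\in\Omega^1(\mathscr{P})$ and the worldsheet integration has been exhausted by Stokes, each $T(C)$ is a functional on $\calB^\calP_{\de\Sigma}$ with values in $\Omega^1(\mathscr{P})$, depending only on the fields evaluated at $C$, as claimed.

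The main obstacle I anticipate is the second step: establishing that the formality curvature term reduces to $F^\mathscr{P}$ once the fluctuation $\mathscr{X}$ is integrated out requires the precise normalizations and the graph/weight combinatorics of Kontsevich's map, and care must be taken that no tadpole or self-contraction terms survive to spoil the exact matching needed for the Maurer--Cartan cancellation. A secondary technical point is the Stokes bookkeeping in the third step: one must check that after the cancellation only genuinely corner-local data remain, so that $T(C)$ picks up no global boundary contribution and depends only on the values of the fields at $C$.
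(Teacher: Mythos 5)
Your first two steps track the paper's mechanism for cancelling the bulk anomaly on $\de_0\Sigma$: the paper's proof also shows (by the dimension count $r=2-k$ at collapses on $\de_0\Sigma$) that the only surviving boundary strata there carry $k=0$ (curvature $F$), $k=1$ (the term $A(R,\mathsf{T}\varphi_x^*\Pi)(\gamma)$) and $k=2$ (the term $\gamma\star\gamma$), and these cancel against $\dd_x\gamma$ precisely by the Maurer--Cartan equation. Your worry about ``identifying'' $F(R,R,\mathsf{T}\varphi_x^*\Pi)$ with $F^{\mathscr{P}}$ is largely moot: in Appendix \ref{app:globalization} the Weyl curvature is \emph{defined} componentwise by $F^{\mathscr{P}}_x=\dd x^i\wedge\dd x^j\,F(R_i,R_j,\mathsf{T}\varphi_x^*\Pi)$, so no separate matching of normalizations is needed beyond what is already built into the definitions.

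The genuine gap is in your third step. The corner terms $T(C)$ do \emph{not} arise by applying Stokes' theorem to a total derivative of the $\gamma$-insertion along the arcs of $\de_0\Sigma$ and collecting endpoint evaluations of $\gamma$. They arise from \emph{new codimension-one strata of the compactified configuration space} in which an entire subgraph --- possibly containing arbitrarily many bulk vertices labeled by $\Pi$, an $R$ vertex, or a $\gamma$ vertex, together with vertices on the $\mathbb{X}$-polarized boundary --- collapses to a corner point. Because there is no translation symmetry at a corner, the dimension count changes to $2n+k-1=2m+r$, giving the solutions $(k,r)=(1,0)$ and $(0,1)$ with $m$ arbitrary; your picture captures at best the $m=0$, single-$\gamma$-vertex case. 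Your argument also cannot explain why only corners in $\mathscr{C}_1$ contribute: at corners in $\mathscr{C}_2$ one needs a separate vanishing lemma (mapping the quadrant to the upper half-plane by $z\mapsto z^2$ and using that the Kontsevich propagator depends only on differences of real parts, so integration over the center-of-mass direction kills all graphs with bulk points), after which the remaining single-vertex contributions from the two sides of the corner cancel by orientation. Finally, the claim that $T(C)$ is a \emph{multiplication} operator depending only on fields at $C$ rests on the fact that at $\mathscr{C}_1$ corners the propagator vanishes when either its head or its tail approaches the corner, so no arrows can enter or leave the collapsing subgraph and only entire connected components collapse. None of these ingredients is present in your proposal, so as written it neither produces the full set of corner contributions nor justifies their localization and form.
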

We show that this twisted operator also squares to zero (Remark \ref{rem_flatness_twisted}).
However, we want again to interpret the state as a closed section with respect to a certain operator that squares to zero. In Section \ref{sec:mdQME} we show that  this can  be done by enlarging the space of states (see Definition \ref{defn_ext_state_space}) and defining a new operator $\Tilde{\nabla}^\gamma_\mathsf{G}$  (see Equation \eqref{eq:conntilde}) on the new bundle of states. We show the following theorem: 
\begin{thm*}[modified differential Quantum Master Equation for alternating boundary conditions (\ref{thm_mdQME_corners})]
Let $\Tilde{\nabla}^\gamma_\mathsf{G}$ be given as in Equation \eqref{eq:conntilde}, and consider the twisted full state $\btpsi_{\Sigma,x}^\gamma$. Then 
\begin{equation}
\label{eq:mdQME_alt_boundary}
\Tilde{\nabla}^\gamma_\mathsf{G}\btpsi_{\Sigma,x}^\gamma=0
\end{equation}
\end{thm*}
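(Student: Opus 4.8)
The plan is to reduce the statement to Theorem \ref{thm_corners} and to show that the corner part built into $\Tilde{\nabla}^\gamma_\mathsf{G}$ is exactly what cancels the residual corner anomaly. Recall that Theorem \ref{thm_corners} already localizes the failure of closedness of the twisted state at the corner points,
\[
\nabla^\gamma_\mathsf{G}\btpsi^\gamma_{\Sigma,x} = \sum_{C\in\mathscr{C}_1} T(C)\,\btpsi^\gamma_{\Sigma,x},
\]
with each $T(C)$ a one-form on $\mathscr{P}$ depending only on the boundary fields at $C$. Since the right-hand side is a finite, purely local (in $C$) obstruction, the natural move is to absorb it by enlarging the bundle of states and correcting the operator, which is precisely what Definition \ref{defn_ext_state_space} and Equation \eqref{eq:conntilde} do.

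First I would unwind the construction of the extended state space from Definition \ref{defn_ext_state_space}: to each corner $C\in\mathscr{C}_1$ one attaches a corner state space $\calH_C$, and the bundle of states over $\mathscr{P}$ acquires the extra corner tensor factors. On this enlarged bundle the operator of Equation \eqref{eq:conntilde} differs from $\nabla^\gamma_\mathsf{G}$ by the corner contribution $\boldsymbol{\Omega}_\mathscr{C}$ of the full BFV boundary operator, acting on the corner factors. The theorem then amounts to the identification (with the signs fixed by Equation \eqref{eq:conntilde})
\[
\boldsymbol{\Omega}_\mathscr{C}\,\btpsi^\gamma_{\Sigma,x} = \sum_{C\in\mathscr{C}_1} T(C)\,\btpsi^\gamma_{\Sigma,x},
\]
after which the result is immediate: combining this with Theorem \ref{thm_corners} gives $\Tilde{\nabla}^\gamma_\mathsf{G}\btpsi^\gamma_{\Sigma,x} = \nabla^\gamma_\mathsf{G}\btpsi^\gamma_{\Sigma,x} - \boldsymbol{\Omega}_\mathscr{C}\btpsi^\gamma_{\Sigma,x} = 0$.

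The substance of the proof, and the step I expect to be the main obstacle, is establishing this last identification term by term. I would compare the explicit integrand for $T(C)$ produced in the proof of Theorem \ref{thm_corners} with the action of $\boldsymbol{\Omega}_\mathscr{C}$ on the corresponding corner factor in $\calH_C$. The corner states are to be chosen so that $\boldsymbol{\Omega}_\mathscr{C}$ acts by exactly the same vertices and propagators that generate the corner anomaly, so that the matching holds; verifying this requires tracking the combinatorial (graphical) data, the signs, and the powers of $\hbar$ on both sides, and confirming that applying $\boldsymbol{\Omega}_\mathscr{C}$ produces no corner contributions beyond the $T(C)$ already isolated. This is a finite check, but it is where the careful definitions of $\calH_C$ and $\boldsymbol{\Omega}_\mathscr{C}$ must be invoked.

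Finally, to interpret Equation \eqref{eq:mdQME_alt_boundary} as a genuine cocycle condition one wants $\Tilde{\nabla}^\gamma_\mathsf{G}$ to be a differential on the total extended space of states; this follows from the nilpotency of $\nabla^\gamma_\mathsf{G}$ (Remark \ref{rem_flatness_twisted}) together with the nilpotency of the corner operator and its compatibility with $\nabla^\gamma_\mathsf{G}$ on the enlarged bundle, but is logically separate from the closedness established here.
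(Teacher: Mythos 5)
Your proposal is correct and takes essentially the same route as the paper, which deduces the theorem immediately from Theorem \ref{thm_corners}. The ``term-by-term identification'' you single out as the main obstacle is in fact tautological in the paper's setup, since $\boldsymbol{\Omega}_\mathscr{C}$ is \emph{defined} as $-\sum_{C\in\mathscr{C}_1}T(C)$ with $T(C)$ exactly the corner functionals appearing in Theorem \ref{thm_corners} (promoted to multiplication operators on the extended state space), so nothing remains to be checked beyond the bookkeeping already done there.
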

We also show that the new operator $\Tilde{\nabla}^\gamma_\mathsf{G}$ again squares to zero:
\begin{thm*}\ref{thm:flatness}
The operator $\Tilde{\nabla}^\gamma_\mathsf{G}$ squares to zero, i.e. $(\Tilde{\nabla}^\gamma_{\mathsf{G}})^2=0$.
\end{thm*}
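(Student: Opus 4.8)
The plan is to exploit the fact that $\Tilde{\nabla}^\gamma_\mathsf{G}$ is assembled on the total extended state space $\Hat{\calH}^\mathscr{C}_{\de\Sigma,\textnormal{tot}}$ out of the bulk operator $\nabla^\gamma_\mathsf{G}$ and the corner contribution $\boldsymbol{\Omega}_\mathscr{C}$ of the full BFV boundary operator, in the manner of a total (mapping-cone-type) differential. Schematically writing $\Tilde{\nabla}^\gamma_\mathsf{G} = \nabla^\gamma_\mathsf{G} + \boldsymbol{\Omega}_\mathscr{C}$, where $\nabla^\gamma_\mathsf{G}$ acts within each corner sector and $\boldsymbol{\Omega}_\mathscr{C}$ (from Equation \eqref{eq:conntilde}) glues the bulk state to the corner states, I would expand the graded square
\begin{equation}
(\Tilde{\nabla}^\gamma_\mathsf{G})^2 = (\nabla^\gamma_\mathsf{G})^2 + [\nabla^\gamma_\mathsf{G},\boldsymbol{\Omega}_\mathscr{C}] + \boldsymbol{\Omega}_\mathscr{C}^2,
\end{equation}
with $[\enspace,\enspace]$ the graded commutator, and treat the three terms separately.

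The first term vanishes by Remark \ref{rem_flatness_twisted}, which already establishes $(\nabla^\gamma_\mathsf{G})^2=0$; this reduces the whole statement to the cancellation of the last two terms. For $\boldsymbol{\Omega}_\mathscr{C}^2$ I would write $\boldsymbol{\Omega}_\mathscr{C}$ as a sum of local operators indexed by the corners $C\in\mathscr{C}_1$ and split the square into single-corner and two-corner contributions. Operators attached to two distinct corners have disjoint support on $\de\Sigma$ and therefore (super-)commute, killing the off-diagonal part; each single-corner term vanishes because the local operator at $C$ is itself a BFV-type boundary operator satisfying a local quantum master equation, i.e. nilpotent by the same mechanism --- including its $\hbar$-dependent BV-Laplacian correction --- that makes $\Omega_{\de\Sigma}$ square to zero away from the corners.

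The crux is the mixed term $[\nabla^\gamma_\mathsf{G},\boldsymbol{\Omega}_\mathscr{C}]$, which expresses that $\boldsymbol{\Omega}_\mathscr{C}$ is a chain map for $\nabla^\gamma_\mathsf{G}$. I would split $\nabla^\gamma_\mathsf{G}$ into the twisted flat Grothendieck connection $\overline{\calD}_\mathsf{G}=\calD_\mathsf{G}+[\gamma,\enspace]_\star$ and the remaining bulk boundary and Laplacian terms. The commutator with $\overline{\calD}_\mathsf{G}$ should reduce to covariant constancy of the corner data, which follows from the Maurer--Cartan equation $F^\mathscr{P}+\calD_\mathsf{G}\gamma+\gamma\star\gamma=0$ satisfied by $\gamma$ and hence the flatness of $\overline{\calD}_\mathsf{G}$. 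The commutator with the bulk boundary terms is where the extension was engineered to work: the obstruction to $\Omega_{\de\Sigma}$ squaring to zero under alternating boundary conditions is supported exactly at the corners --- this is the anomaly computed in Proposition \ref{prop_curv} and refined to the corner functionals $T(C)$ in Theorem \ref{thm_corners} --- and $\boldsymbol{\Omega}_\mathscr{C}$ is defined precisely so that its anticommutator with those boundary terms reproduces the negative of this localized obstruction.

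I expect this last cancellation to be the main obstacle. It requires matching, order by order in $\hbar$ and in the degree of the residual corner fields, the corner terms generated by the non-nilpotency of the boundary operator against the output of $\boldsymbol{\Omega}_\mathscr{C}$, while keeping careful track of the star-product structure induced by the twist $\gamma$ and of the orientation and sign conventions attached to each corner in $\mathscr{C}_1$. A useful consistency check along the way is that, by the modified differential Quantum Master Equation \eqref{eq:mdQME_alt_boundary}, $\Tilde{\nabla}^\gamma_\mathsf{G}\btpsi^\gamma_{\Sigma,x}=0$; applying $\Tilde{\nabla}^\gamma_\mathsf{G}$ once more must then give zero, which both tests and uses the identity $(\Tilde{\nabla}^\gamma_\mathsf{G})^2=0$ on the distinguished state.
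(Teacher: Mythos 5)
Your overall reduction is right as far as it goes: $(\nabla^\gamma_\mathsf{G})^2=0$ by Remark \ref{rem_flatness_twisted}, so the content of the theorem is exactly the Maurer--Cartan-type identity
\[
\dd_x\boldsymbol{\Omega}_{\mathscr{C}}+\frac{1}{2}\left[\boldsymbol{\Omega}_{\mathscr{C}},\boldsymbol{\Omega}_{\mathscr{C}}\right]+\left[\boldsymbol{\Omega}_{\mathscr{C}},\Tilde{\boldsymbol{\Omega}}_{\de\Sigma}^{\calP,\gamma}\right]=0,
\]
which is Proposition \ref{prop:maurer-cartan} of the paper. The gap is in how you propose to establish it: you split this into two separately vanishing pieces, $\boldsymbol{\Omega}_\mathscr{C}^2=0$ and $[\nabla^\gamma_\mathsf{G},\boldsymbol{\Omega}_\mathscr{C}]=0$, and neither claim is justified --- nor is it what actually happens. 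The operator $\boldsymbol{\Omega}_\mathscr{C}=-\sum_{C\in\mathscr{C}_1}T(C)$ is not a ``BFV-type boundary operator satisfying a local quantum master equation''; it is a multiplication operator by $\Omega^1(\mathscr{P})$-valued corner functionals, and its square at a single corner --- a wedge of $1$-forms carrying distinct functional coefficients --- has no reason to vanish. Likewise $\dd_x\boldsymbol{\Omega}_\mathscr{C}\neq 0$ in general, since the $T(C)$ depend on $x$ through $\gamma$, $R$ and $\mathsf{T}\varphi_x^*\Pi$, so $\boldsymbol{\Omega}_\mathscr{C}$ is not a chain map for $\nabla^\gamma_\mathsf{G}$ on its own. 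Only the sum of the three terms vanishes.

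The actual mechanism --- which your proposal defers as ``the main obstacle'' --- is a Stokes argument on the corner configuration spaces $\mathsf{C}^\mathscr{C}_{\Gamma'}(\Sigma)$: one writes $\dd_x\boldsymbol{\Omega}_\mathscr{C}$ as a sum of integrals of $\dd\sigma_{\Gamma'}$ minus integrals over $\de\mathsf{C}^\mathscr{C}_{\Gamma'}(\Sigma)$ and classifies the codimension-one faces. Bulk collapses cancel using the Classical Master Equation, the $(\dd_x+R)$-closedness of the vertex tensors and $[R,R]=0$; collapses at the $\hateta=0$ boundary cancel by the Maurer--Cartan equation $\dd_x\gamma+A(R,\mathsf{T}\varphi_x^*\Pi)(\gamma)+\gamma\star\gamma+F(R,R,\mathsf{T}\varphi_x^*\Pi)=0$; collapses at the corner produce $\frac{1}{2}[\boldsymbol{\Omega}_\mathscr{C},\boldsymbol{\Omega}_\mathscr{C}]$; and collapses at the polarized boundary produce $[\Tilde{\boldsymbol{\Omega}}_{\de\Sigma}^{\calP,\gamma},\boldsymbol{\Omega}_\mathscr{C}]$. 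The cancellation is therefore global across all faces, not term by term, and carrying out this face-by-face bookkeeping is the substance of the proof rather than something to be matched ``order by order in $\hbar$'' afterwards. (Your closing sanity check is also vacuous: once $\Tilde{\nabla}^\gamma_\mathsf{G}\btpsi^\gamma_{\Sigma,x}=0$, applying the operator again gives $0$ for free and tests nothing about $(\Tilde{\nabla}^\gamma_\mathsf{G})^2$.)
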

\subsection{Summary}
Let us give a brief overview of the paper. 
\begin{itemize}
\item{
In Section \ref{sec:BVBFV} we give a very rough review of the classical and quantum BV-BFV formalism. For more details the reader is referred to the literature \cite{CMR1,CMR2}. In particular we recall the Quantum Master Equation and its generalization to manifolds with boundary, called the modified Quantum Master Equation. 
}\\
\item{
In Section \ref{sec:AKSZ} we recall the construction, and the results, of \cite{CMW4}. Most importantly, to apply the quantum BV-BFV formalism one needs to linearize the target around constant maps, which form a part of the moduli space of classical solutions of any polarized AKSZ theory \cite{AKSZ}. For nonlinear targets, this can be done in a covariant way, as one varies the image of the constant map. In a natural way this leads to a family of quantizations parametrized by the target that satisfy a generalization of the modified Quantum Master Equation, that we call the modified differential Quantum Master Equation. This equation can be interpreted as the closedness of the state with respect to the quantum Grothendieck BFV operator  $\qtconn$ that squares to zero. Moreover, under change of gauge choices the state changes by a $\qtconn$-exact term, so that there is a certain cohomology describing the physical states. 
}\\
\item{
In Section \ref{sec:PSM} we recall the classical Poisson Sigma Model and its BV-BFV extension \cite{CMR1}.
}\\
\item{In Section \ref{sec:GlobalizedPSM} we apply the results recalled in Section \ref{sec:AKSZ} to the Poisson Sigma Model, which is an example of an AKSZ theory. In particular, we describe the algebraic structure which is captured in the modified differential Quantum Master Equation and the fact that $\qtconn$ squares to zero. We also describe how to twist the theory by a certain 1-form $\gamma$, which produces a new state and a new operator. 
}\\
\item{
In Section \ref{sec:ABC} we discuss what happens when one combines the globalization of the partition function over constant maps with the alternating or mixed boundary conditions that appear in the construction of the relational symplectic groupoid. In particular, we describe an anomaly that arises from the curvature\footnote{This is related to the curvature that appears in the globalization of Kontsevich's star product, see e.g. \cite{CFT}.} of the deformed Grothendieck connection $\calD_G$, and how it can be cancelled by a quantum counterterm in the action. We also describe how the modified differential Quantum Master Equation gets spoilt by terms that come from the corners where the different boundary conditions meet. 
}\\
\item{
In Section \ref{sec:mdQME} we explain how one can restore the modified differential Quantum Master Equation for the Poisson Sigma Model with alternating boundary conditions. For this one has to extend both the space of operators and the space of states, and we define these extensions in Section \ref{subsec:ex_states} and \ref{subsec:ex_op}. We  prove that there is an extension of the twisted operator for which the state defines a closed section and that the extended operator also squares to zero. 
}\\
\item{
Finally, in Section \ref{sec:Outlook} we explain directions for further research. These are not restricted to the deformation quantization of the relational symplectic groupoid. The methods developed in this paper could help understand both the globalization of other theories with more complicated moduli spaces of classical solutions, and the ``extended'' quantization (in the sense of extended TQFTs) of AKSZ theories on manifolds with corners (and possibly, defects of higher codimension). 
}
\end{itemize}
\vspace{0.3cm}

Various details are discussed in the appendices: 

\begin{itemize}
\item{In Appendix \ref{app:Conf} we recall the compactification of various configuration spaces and their boundary strata.}\\
\item{In Appendix \ref{app:PSM} we recall the globalization of Kontsevich's star product and its connection to the Poisson Sigma Model.}\\
\item{In Appendix \ref{app:prop} we recall the construction of a propagator for the Poisson Sigma Model with changing boundary conditions.}
\end{itemize}

\begin{rem*}
We provide a glossary of the most important symbols at the end of the paper.
\end{rem*}

\subsection*{Acknowledgements}
We thank I. Contreras for helpful comments. Moreover, we want to thank the two referees for providing us with important and helpful comments.

\section{The BV-BFV formalism}
\label{sec:BVBFV}

The BV-BFV formalism is a gauge fixing formalism for gauge theories on manifolds with boundary, both at the classical 
\cite{CMR1} and quantum \cite{CMR2} levels. We briefly  recall the most important ideas. 
Readers already familiar with the BV-BFV formalism  as in \cite{CMR2} can skip this section. Another reference for learning about this formalism is \cite{CattMosh1}.

\subsection{Field theory}
We start with the following definition of a classical field theory. 
\begin{defn}[Classical field theory]
A $d$-dimensional \textsf{classical field theory} associates to every compact $d$-dimensional manifold $\gls{M}$ (possibly with boundary) a space of fields $\gls{F_M}$ and an action functional $\gls{S_M}\colon F_M \to \R$.
\end{defn}
Field theories are usually required to be \textsf{local}. For the purpose of the present paper, the following definition will suffice. When we refer to a ``manifold'' $M$, we implicitly allow $M$ to come equipped with background fields (e.g. a metric) upon which the field theory is allowed to depend\footnote{This in particular will also allow us to consider 2D Yang--Mills theory in this formalism.}.
\begin{defn}[Local field theory]
We say that a field theory $(F_M,S_M)$ is \textsf{local} if there is a fiber bundle $E \to M$ such that $F_M = \Gamma(E)$ and there is an integer $k$ such that
\begin{equation}
S_M(\phi) = \int_M L[j^k(\phi)],\label{eq:Lagrangian_field_theory}
\end{equation}
where $j^k$ denotes the $k$-th jet prolongation, and $L \colon J^kE \to \mathrm{Dens}(M)$ is a function on the $k$-th jet bundle of $E$ with values in densities of $M$. $L$ is called the \textsf{Lagrangian} of the theory. 
\end{defn}
Let $(F_M,S_M)$ be a local  field theory. If $M\not=\varnothing$ and we do not fix any boundary conditions, there is a $1$-form $\alpha_{\de M}\in\Omega^1(F_{\de M})$ (the \textsf{Noether} $1$-form) such that the variation of the action $S_M$ is given by 
\begin{equation}
\delta S_M:=\text{EL}_M+\pi^*_M\alpha_{\de M},
\end{equation}
where $\pi_M\colon F_M\to F_{\de M}$ is the natural surjective submersion from the space of fields $F_M$ onto the space of fields $F_{\de M}$ on the boundary $\de M$. $F_{\de M}$ is given by restrictions of bulk fields and their normal jets to the boundary. We denote by $\text{EL}_M$ the $1$-form\footnote{$\text{EL}_M$ is the term that depends only on the variations of the fields but not on higher jets.} coming from the \textsf{Euler-Lagrange equations} (EL equations). The classical solutions are given by the critical points of $S_M$, i.e. by solutions of $\delta S_M=0$.
One can define a presymplectic form $\omega_{\de M}$ on $F_{\de M}$ by setting $\omega_{\de M}:=\delta\alpha_{\de M}$ (we think of $\delta$ as the de Rham differential on the space of fields). By techniques of symplectic geometry, such as \textsf{symplectic reduction}, one can obtain a symplectic manifold $(F^\de_{\de M},\omega^\de_{\de M})$.
Moreover, this construction is compatible with cutting and gluing \cite{CMR1,CMR3}. It leads to a nice quantum formulation in the guise of path integrals after choosing a suitable polarization \cite{CMR2}. We will discuss these issues in this section. 

\begin{rem}
Note that if $\de M=\varnothing$ we get the usual Euler--Lagrange equations from $\delta S_M=0$.
\end{rem}

\subsection{Finite dimensional BV theory}

Let $M$ be a closed manifold and let $F_M$ denote the space of fields associated to $M$. If we consider a regular\footnote{This means that the Hessian of the Lagrangian is weakly non degenerate.} local field theory $S_M\colon F_M\to\R$ the partition function in the path integral approach is
\begin{equation}
\label{state}
\psi_M=\int_{\phi\in F_M}\ee^{\frac{\I}{\hbar}S_M(\phi)}\mathscr{D}\phi.
\end{equation}
Usually, $F_M$ is infinite-dimensional, and one cannot define\footnote{Only in special situations, i.e. $\dim M = 1$, and some examples discussed in \cite{GJ}.} $\mathscr{D}\phi$. The way out is usually to translate the formal asymptotics as $\hbar \to 0$ of finite-dimensional integrals to the infinite-dimensional case. The terms in the asymptotic expansion are convenienetly labeled by Feynman diagrams \cite{Feynman1949,Feynman1950,P}. If the critical points of the action functional $S_M$ are degenerate, one needs to gauge-fix the theory before one can use the formal asymptotics. The most powerful gauge fixing formalism is the BV formalism. We briefly review its finite-dimensional version. Further references for gauge theories, different gauge fixing formalisms (including BV) and their perturbative quantization are
\cite{Mn,Mn2,R}. \\
The start is the following definition: 

\begin{defn}[BV manifold]
A \textsf{BV manifold} is a triple  $(\calF,\omega,\calS)$, where $\calF$ is a supermanifold with $\mathbb{Z}$-grading, $\omega$ an odd symplectic form of degree $-1$ on $\calF$, and $\calS$ is an even function of degree zero on $\calF$, such that 
\begin{equation}
(\calS,\calS) = 0.
\end{equation}
\end{defn}
Here, following Batalin and Vilkovisky \cite{BV1,BV2}, we denote the Poisson bracket induced by the odd symplectic form with round brackets $(\enspace,\enspace)$.  
\begin{rem}[Grading on $\calF$]
Note that we have two different gradings on $\calF$, the $\Z_2$-grading from the supermanifold structure and an additional $\Z$-grading. In physics, the $\Z$-grading is referred to as \textsf{ghost number} and the parity corresponds to bosonic and fermionic particles. Since we consider only bosonic theories, the $\Z_2$-grading coincides with the reduction of the $\Z$-grading.
\end{rem}
In a Darboux chart $(q^i,p_i)$, we can define the \textsf{BV Laplacian} by
\begin{equation}
\Delta^{\text{loc}}:=\sum_i(-1)^{\vert q^{i}\vert}\frac{\partial^2}{\partial q^{i}\partial p_i}.
\end{equation}
Then we get that $(\Delta^{\text{loc}})^2=0$ and for two functions $f,g$, $\Delta^{\text{loc}}(fg)=\Delta^{\text{loc}} fg\pm f\Delta^{\text{loc}} g\pm (f,g)$. 
This extends to a well-defined global operator $\gls{Delta}$ on half-densities \cite{Khudaverdian2004, Severa2006}. 

Moreover, given a half-density $f$ and a Lagrangian submanifold $\calL \subset \calF$, we can define a \textsf{BV integral} $
\int_{\mathcal{L}}f $
by restricting the half-density to the Lagrangian where it becomes a density and can be integrated. The main result in the Batalin--Vilkovisky formalism is the following theorem. 
\begin{thm}[Batalin--Vilkovisky \cite{BV1}]
\label{thm:BV}
If we assume that the integrals converge, then
\begin{itemize} 
\item{If $f=\Delta g$, then $\int_{\mathcal{L}}f=0$,
}
\item{If $\Delta f=0$ and $(\calL_t)$ is a smoothly varying family of Lagrangians, then $\frac{\dd}{\dd t}\int_{\mathcal{L}_t}f = 0$.
}
\end{itemize}
\end{thm}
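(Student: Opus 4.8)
The plan is to reduce both assertions to a single \emph{Stokes-type lemma} for the BV Laplacian, namely that $\int_\calL \Delta g = 0$ for every Lagrangian $\calL$ and every half-density $g$ with suitable support or decay (assuming, as in the statement, that all integrals converge). The first bullet is exactly this lemma, and the second bullet will then follow from it by an infinitesimal Cartan-type argument. Throughout I would work in Darboux charts $(q^i,p_i)$, using that $\Delta$ is a globally well-defined operator on half-densities---so the final answers are chart-independent---and that a half-density on $\calF$ canonically restricts to a \emph{density} on any Lagrangian $\calL$, obtained by honest restriction in the even transverse directions and Berezin integration in the odd ones.

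For the first bullet I would choose Darboux coordinates adapted to $\calL$, so that $\calL = \{p=0\}$ and the $p_i$ are the transverse coordinates. Writing $g = G(q,p)\,(\dd q\,\dd p)^{\frac12}$ and using the local form $\Delta^{\text{loc}} = \sum_i (-1)^{|q^i|}\frac{\de^2}{\de q^i\de p_i}$, the restriction of $\Delta g$ to $\calL$ splits term by term. For an index $i$ with $p_i$ odd, the transverse restriction is Berezin integration $\int \dd p_i$, and $\int \dd p_i\, \frac{\de}{\de p_i}(\cdots)=0$ because $\de/\de p_i$ squares to zero on a single odd variable; for an index with $p_i$ even, the restriction is evaluation at $p_i=0$ and the surviving operator $\de/\de q^i$ makes the contribution an exact $q$-divergence on $\calL$. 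Hence $\Delta g|_\calL$ is a total divergence and $\int_\calL \Delta g = 0$ by Stokes (the boundary being empty, or the boundary-at-infinity terms vanishing under the decay hypothesis). The delicate points here are the precise definition of the restriction map and the bookkeeping of parities and signs in the super-Berezinian Jacobians; this is where the cited global constructions of $\Delta$ on half-densities \cite{Khudaverdian2004, Severa2006} do the real work, and I regard this as the technical heart of the theorem.

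For the second bullet I would realize the family $(\calL_t)$ as the image $\calL_t = \Phi_t(\calL_0)$ of an ambient isotopy whose infinitesimal generator is the Hamiltonian vector field $X_{H_t}$ of a time-dependent function $H_t$; differentiating the pullback gives $\frac{\dd}{\dd t}\int_{\calL_t} f = \int_{\calL_t} L_{X_{H_t}} f$, where $L_X$ denotes the Lie derivative of the half-density $f$. The key algebraic input is the identity
\begin{equation}
L_{X_H} f = \Delta(Hf) - (-1)^{|H|} H\,\Delta f,
\end{equation}
which follows from the second-order structure of $\Delta$ together with the fact that $\Delta$ applied to a function is one-half the $\omega$-divergence of its Hamiltonian vector field; concretely, writing $f = \rho\,\mu^{\frac12}$ with $\Delta\mu^{\frac12}=0$ reduces it to the Leibniz-defect relation $\Delta^{\text{loc}}(fg) = \Delta^{\text{loc}}f\,g \pm f\,\Delta^{\text{loc}}g \pm (f,g)$ recorded before the theorem. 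Since $\Delta f = 0$ by hypothesis, the identity collapses to $L_{X_{H_t}}f = \Delta(H_t f)$, and the first bullet (applied with $g = H_t f$) gives $\frac{\dd}{\dd t}\int_{\calL_t} f = \int_{\calL_t}\Delta(H_t f) = 0$.

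The one genuine gap to fill is that a general smoothly varying family of Lagrangians is generated only by a \emph{closed} one-form on $\calL$, i.e.\ by a locally Hamiltonian (symplectic) vector field, so a global $H_t$ need not exist when the deformation class is cohomologically nontrivial. I would handle this either by patching the local identity with a partition of unity, or---more cleanly---by noting that the derivative is local in $t$ and in $\calF$, so it suffices to verify it for Hamiltonian generators and then integrate. In every case the conclusion reduces to the Stokes-type lemma of the first bullet, which is therefore the crux of the whole argument and the step I expect to be the main obstacle.
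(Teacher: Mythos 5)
The paper does not prove this theorem: it is quoted from the literature (Batalin--Vilkovisky, and in the geometric half-density form Schwarz, Khudaverdian and \v{S}evera), so there is no in-paper argument to compare yours against. Judged on its own, your proof is the standard one and is essentially correct: both bullets do reduce to the lemma $\int_{\calL}\Delta g=0$, which in Darboux coordinates adapted to $\calL=\{p=0\}$ holds because each term $(-1)^{|q^i|}\de_{q^i}\de_{p_i}G$ survives restriction as a derivative along a tangent direction of $\calL$ and is then killed either by Stokes (even direction) or by Berezin integration of a derivative (odd direction); and the second bullet follows from the graded-commutator identity $[\Delta,H]=L_{X_H}$ on half-densities exactly as you use it.

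Two caveats. First, the restriction of a half-density to $\calL=\{p=0\}$ is evaluation at $p=0$ in \emph{all} transverse directions; this is precisely what makes $|Dq\,Dp|^{1/2}\mapsto|Dq|$ well defined, and Berezin integration enters only afterwards, over the odd \emph{tangent} directions of $\calL$, when the resulting density is integrated. Your bookkeeping assigns Berezin integration to the odd transverse directions instead; the conclusion is unaffected (each term still vanishes term by term), but the restriction map as you describe it is not the one of the cited constructions. Second, as you yourself flag, your argument for the second bullet is complete only for families generated by a global Hamiltonian (a gauge-fixing fermion $\Psi_t$); a general smooth family is generated by a closed one-form on $\calL_t$, and the partition-of-unity patch you sketch only shows that the integrand is \emph{locally} a divergence, which --- like any top-degree form --- it automatically is, so this does not by itself force the integral to vanish. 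Since every application in this paper (and in the BV--BFV references it relies on) deforms $\calL$ by a generating function, the restriction to exact deformations is harmless here, but it should be built into the hypothesis rather than left as an open step.
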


\begin{rem}
The second point of Theorem \ref{thm:BV} tells us that if we have an ill-defined integral $\int_{\calL_0}f$ for some Lagrangian submanifold $\calL_0$, but we know that $\Delta f=0$, then we can define the value of the integral by a well-defined one $\int_{\calL_1}f$ for some Lagrangian submanifold $\calL_1$, and this does not depend on the choice of $\calL_1$ as long as we deform it smoothly.
\end{rem}

The replacement of $\calL_0$ by $\calL_1$ is called \textsf{gauge-fixing}. This construction can be extended to any (super)manifold. Moreover, considering $f=\ee^{\frac{\I}{\hbar}S}$, two other conditions arise, which are the Master Equations for the classical and quantum level:
\begin{align}
(S,S)&=0, \\
(S,S)-2\I\hbar\Delta S&=0.
\end{align}
The latter one is equivalent to $\Delta \ee^{\frac{\I}{\hbar}S} = 0$. The former one is the classical limit of the latter one for $\hbar\to 0$, and motivates the definition of BV manifold as given above. 

\subsection{Classical BV-BFV formalism}
\label{classicalBV-BFV}
We now turn to the infinite-dimensional case and review the main definitions of reference \cite{CMR1}. 
We first consider the classical BV formalism in field theory and its extension to manifolds with boundary.
\begin{defn}[BV theory]
A $d$-dimensional \textsf{BV theory} is the association of a BV manifold  $M\mapsto (\gls{calF_M},\gls{omega_M},\gls{calS_M})$ to every closed $d$-manifold $M$. 
\end{defn}
\begin{rem}
These BV manifolds are typically infinite-dimensional. This means that neither the BV Laplacian nor the BV integral are defined (at least not without further work). 
\end{rem}
\begin{defn}[BV extension]
We say that a BV theory $(\calF_M,\omega_M,\calS_M)$ is a \textsf{BV extension} of a local field theory $M \mapsto (F_M,S_M)$ if for all closed $d$-manifolds $M$, we have that the degree 0 part $(\calF_M)_0$ of $\calF_M$ satisfies $(\calF_M)_0 = F_M$ and $\calS_M\big|_{(\calF_M)_0} = S_M$. Moreover, we want $\calF_M$,$\calS_M$ and $\omega_M$ to be local.
\end{defn}
To extend the BV formalism to manifolds with boundary one needs its Hamiltonian counterpart, the BFV formalism \cite{BF1,BF2, FV1, FV2}.

\begin{defn}[BFV manifold]
A \textsf{BFV manifold} is a triple \begin{equation}\calF^\partial:=(\mathcal{F}^\partial,\gls{omega^de},Q^\partial) \end{equation} where $\mathcal{F}^\partial$ is a graded manifold, $\omega^\de$ an even symplectic form of degree $0$, and $Q^\partial$ a degree $1$ cohomological, symplectic vector field on $\calF^\partial$. If $\omega^\partial=\delta\alpha^\partial$ is exact, 
the BFV manifold is called \textsf{exact}.
\end{defn}
\begin{rem}
For degree reasons, $Q^\de$ automatically has a Hamiltonian function that we denote by $\calS^\de$, and call it the \textsf{boundary action}. This is the reason that the boundary action $\calS^\de$ is not included in the data of a BFV manifold. 
\end{rem}
%
 Again, we denote by $\delta$ the de Rham differential on the space of fields. 
The notion of BV theory can be extended to manifolds with boundary as was shown in \cite{CMR1,CMR2}. On the boundary we will use the BFV formalism. The compatibility between the BV formalism and the BFV formalism is captured in the following definition. \begin{defn}[BV-BFV manifold]
 A \textsf{BV-BFV manifold} over a given exact BFV manifold $\calF^\partial=(\mathcal{F}^\partial,\omega^\partial=\delta \alpha^\partial,Q^\partial)$ is a quintuple \begin{equation}
 \calF := (\calF,\omega,\calS,Q,\pi),\end{equation} where \begin{itemize} 
 \item  $\calF$ is a graded manifold, 
 \item $\omega$ is an even symplectic form of degree $0$, 
 \item $\calS$ is an even function of degree $0$, 
 \item $Q$ is a degree $1$ cohomological vector field, 
 \item  $\pi\colon \calF \to \calF^\partial$ is a surjective submersion
 \end{itemize} such that \begin{equation}
 \label{mCME0}
 \iota_{Q}\omega=\delta \calS+\pi^*\alpha^\partial
 \end{equation}and $Q^\partial=\delta\pi Q$ where $\delta\pi$ denotes the differential of $\pi$. \end{defn}
 \begin{rem} 
\label{point} 
If $\calF^\partial$ is a point, we get that $(\calF_M,\omega_M,\calS_M)$ is a BV manifold. The shorthand notation for a BV-BFV manifold is $\pi\colon\calF \to \calF^\de$.
\end{rem}

Note that by Remark \ref{point}, the following notion generalizes the one of a BV theory.
\begin{defn}[BV-BFV theory]
\label{defn:BVBFV_theory}
A $d$-dimensional \textsf{BV-BFV theory} associates 
\begin{itemize}
\item to every closed $(d-1)$-dimensional manifold $\Sigma$ a BFV manifold $\calF^\partial_\Sigma$,
\item to a $d$-dimensional manifold $M$ with boundary $\de M$ a BV-BFV manifold $\pi_M\colon \gls{calF^de_M} \to \calF^\de_{\de M}$.
\end{itemize}
\end{defn}

\begin{rem}
Formally, for the Hamiltonian vector field $Q$ of $\calS$, one can write $(\calS,\calS)=\iota_Q\iota_Q\omega=Q(\calS)$. If we consider a BV-BFV theory for a manifold $M$ with boundary $\de M$, one can prove \cite[Proposition 3.1]{CMR1} that 
\begin{equation}
Q(\calS)=\pi^*(2\gls{calS^de}\iota_{Q^\de}\alpha^\de).
\end{equation}
Together with \eqref{mCME0} this implies that
\begin{equation}
\label{mCME}
\iota_Q\iota_Q\omega=2\pi^*\calS^\de.
\end{equation}
We call \eqref{mCME} the \textsf{modified Classical Master Equation}.

\end{rem}

It was shown in \cite{CMR1} that \textsf{abelian $BF$ theory} is an example of a BV-BFV theory. 
\begin{ex}[Abelian $BF$ theory]
\textsf{Abelian $BF$ theory} is given by the following data: 
To a $d-1$-dimensional manifold $\Sigma$, we associate the BFV manifold 
\begin{align}
\calF^\de_\Sigma &= \Omega^\bullet(\Sigma)[1]\oplus \Omega^\bullet(\Sigma)[d-2]\ni \mathsf{X}\oplus \boldsymbol{\eta}\\
\omega^\de_\Sigma&=\int_\Sigma\delta\mathsf{X}\land \delta\boldsymbol{\eta}\\
Q^\de_\Sigma&=(-1)^d\int_\Sigma\left(\dd\boldsymbol{\eta}\land\frac{\delta}{\delta\boldsymbol{\eta}}+\dd\mathsf{X}\land\frac{\delta}{\delta\mathsf{X}}\right)
\end{align}
The Hamiltonian function of $Q^\de_\Sigma$ is given by $\calS^\de = \int_M \boldsymbol{\eta}\land\dd\mathsf{X}$. \\
To a $d$-dimensional manifold $M$ with boundary $\de M$, we associate the BV-BFV manifold over $\calF^\de_{\de M}$ given by 
\begin{align}
\calF_M &= \Omega^\bullet(M)[1]\oplus \Omega^\bullet(M)[d-2]\ni \mathsf{X}\oplus \boldsymbol{\eta}\\
\omega_M&=\int_M\delta\mathsf{X}\land \delta\boldsymbol{\eta}\\
\calS_M&=\int_M \boldsymbol{\eta}\land\dd\mathsf{X}\\
Q_M&=(-1)^d\int_M\left(\dd\boldsymbol{\eta}\land\frac{\delta}{\delta\boldsymbol{\eta}}+\dd\mathsf{X}\land\frac{\delta}{\delta\mathsf{X}}\right) \\
\end{align}
and the map $\pi\colon \calF \to \calF^\de$ is given by restriction, i.e. $\pi = \iota_{\de M}^*$, where $\iota_{\de M}\colon \de M \to M$ is the inclusion. 
\end{ex}

\begin{defn}[$BF$-like theories]
We say that a BV-BFV theory is \textsf{$BF$-like} if
\begin{align}
\calF_M &= (\Omega^\bullet(M)\otimes V[1])\oplus (\Omega^\bullet (M)\otimes V^*[d-2])\\
\calS_M &= \int_M\left(\langle\boldsymbol{\eta},\dd\mathsf{X}\rangle+\calV(\mathsf{X},\boldsymbol{\eta})\right),
\end{align}
where $V$ is a graded vector space, $\langle\enspace,\enspace\rangle$ denotes the pairing between $V^*$ and $V$, and $\calV$ denotes some density-valued function of the fields $\mathsf{X}$ and $\boldsymbol{\eta}$ whose value $\calV(x)$ at $x \in M$ depends only on $\mathsf{X}(x),\boldsymbol{\eta}(x)$\footnote{In particular, $\calV$ does not depend on derivatives of the fields.},
such that $\calS_M$ satisfies the Classical Master Equation for $M$ without boundary.
\end{defn}

\begin{ex}[Quantum mechanics]
Consider $M$ to be a $1$-dimensional manifold, i.e. $d=1$ and $V=W[-1]$ with $W$ concentrated in degree zero. Denote by $P$ and $Q$ the degree-zero form components of $\mathsf{X}$ and $\boldsymbol{\eta}$, respectively. Choose a volume form $\dd t$ on $M$ and a function $H$ on $T^*W$. Set $\calV(\mathsf{X},\boldsymbol{\eta}):=H(\mathsf{X},\boldsymbol{\eta})\dd t=H(Q,P)\dd t$. Then 
\begin{equation}
\calS_M=\int_M\left(\sum_i P_i\dot{Q}^{i}+H(Q,P)\right)\dd t,
\end{equation}
is the action of classical 
mechanics in the Hamiltonian formalism.
\end{ex}

\begin{rem}
The Poisson Sigma Model, which is the main theory regarded in this paper, is an
example of a $BF$-like AKSZ theory (see Section \ref{sec:AKSZ}).
\end{rem}

\begin{ex}[$BF$-like AKSZ theories \cite{AKSZ}]
Assume we are given a function $\Theta$ on $T^*[d-1](V[1])=V[1]\oplus V^*[d-2]$ that is of degree $d$ such that $\{\Theta,\Theta\}=0$, where $\{\enspace,\enspace\}$ is the canonical Poisson structure on the shifted cotangent bundle. Set $\calV(\mathsf{X},\boldsymbol{\eta})$ to be the top degree part of $\Theta(\mathsf{X},\boldsymbol{\eta})$. 
\end{ex}


\subsection{Quantum BV-BFV formalism}
In \cite{CMR2} the notion of a quantum BV-BFV theory was given and it was shown how to perturbatively quantize a classical BV-BFV theory\footnote{We have to assume certain conditions which are in particular satisfied for $BF$-like theories.}. Let us briefly review this\footnote{We slightly changed the definition of quantum BV-BFV theory so that in principle it does not depend on a classical BV-BFV theory.}. 
\begin{defn}[Quantum BV-BFV theory]
A $d$-dimensional \textsf{quantum BV-BFV theory} associates 
\begin{itemize}
\item To every closed $(d-1)$-dimensional manifold $\Sigma$ a graded $\C[[\hbar]]$-module $\mathcal{H}_{\Sigma}$, 
\item To every $d$-dimensional manifold (possibly with boundary) $M$ a finite-dimensional BV manifold $\mathcal{V}_M$, a degree 1 coboundary operator $\gls{Omega_deM}$ on $\gls{calH_deM}$  and a homogeneous element\footnote{Typically, $\psi$ will have degree 0. This is the case when the gauge-fixing Lagrangian (see below) has degree zero, in the sense that its Berezinian bundle has degree zero. This is the case in all examples we consider.} 
\begin{equation}
\gls{psi_M} \in \gls{HatcalH_M} := \mathrm{Dens}^{\frac12}(\gls{V_M}) \otimes \calH_{\de M},
\end{equation}
where $\gls{Dens}$ denotes the space of half-densities on some manifold $M$,
\end{itemize}
such that 
\begin{equation}(\hbar^2\Delta_{\mathcal{V}_M} + \Omega_{\de M})\psi_M = 0.
\label{mQME}
\end{equation}
\end{defn}

\begin{rem}
The shorthand notation for a quantum BV-BFV theory is 
\begin{equation}
M \mapsto (\widehat{\mathcal{H}}_M,\psi_M,\Delta_{\mathcal{V}_M},\Omega_{\de M}).
\end{equation}
\end{rem}
Let us introduce some terminology: We call $\mathcal{V}_M$ the \textsf{space of residual fields}, $\mathcal{H}_{\de M}$ the \textsf{space of boundary states} and $\psi_M$ the \textsf{quantum state}. $\Delta_{\mathcal{V}_M}$ denotes the canonical BV Laplacian on half-densities on the BV manifold $\mathcal{V}_M$. Recall that $\Delta_{\mathcal{V}_M}^2 =0$.
Hence, $\widehat{\mathcal{H}}_M$ carries the two commuting differentials $\Delta_{\mathcal{V}_M}$ and $\Omega_{\de M}$ which gives it the structure of a bicomplex.  We call $\Omega_{\de M}$ the quantum BFV boundary operator.
The condition \eqref{mQME} is called the \textsf{modified Quantum Master Equation}.
\begin{rem}[Terminology]
The space $\calH_\Sigma$ is called the space of states because it arises as a quantization of the symplectic manifold of boundary fields (see also the discussion in \ref{sec:space_of_states} below). An element of $\calH_\Sigma$ is then called a state. In the absence of residual fields, $\psi_M$ is the state produced by the bulk. It is what is usually called a state in the literature, see e.g. \cite{Witten1989} if $\de M$ has a single connected component\footnote{Note that this is a particular state induced by the bulk and not some choice of vacuum state.}.
In case $M$ is a cylinder, $\psi_M$ is actually an evolution operator that can be viewed as a generalized state (note that we never insist on $\calH_\Sigma$ being a Hilbert space). If the boundary is empty (and there are no residual fields), then $\psi_M$ is what is usually called the partition function. It is in general useful (and often necessary) to make a choice of ``slow'' or ``low energy'' fields, which we prefer to call residual fields, and to integrate on a complement. Then $\psi_M$ will be properly a state only after integrating out the residual fields (which is not always possible, cf. the discussions in \cite[Appendix F]{CMR1}, \cite{BCM}, \cite{IM}), but by abuse of notation we prefer to call it the state anyway.  \end{rem}
\begin{defn}[Equivalence]
We say that two quantum BV-BFV theories $(\widehat{\mathcal{H}}_M,\Delta_{\mathcal{V}_M},\Omega_{\de M},\psi_M)$ and $(\widehat{\mathcal{H}}_M',\Delta_{\mathcal{V}'_M},\Omega'_{\de M},\psi'_M)$ are equivalent if  for every manifold $M$ with boundary $\partial M$ there is a quasi-isomorphism of bicomplexes \begin{equation}I_M\colon (\widehat{\mathcal{H}}_M,\Delta_{\mathcal{V}_M},\Omega_{\de M}) \to  (\widehat{\mathcal{H}}_M',\Delta_{\mathcal{V}'_M},\Omega'_{\de M})\end{equation}
such that $I_M(\psi_M)= \psi'_M$.
\end{defn}
\begin{defn}[Change of data]
\label{change_of_data}
Another equivalence relation among theories is the following: We say that two quantum BV-BFV theories $(\widehat{\mathcal{H}}_M,\Delta_{\mathcal{V}_M},\Omega_{\de M},\psi_M)$ and $(\widehat{\mathcal{H}}_M,\Delta_{\mathcal{V}'_M},\Omega'_{\de M},\psi'_M)$
are related by \textsf{change of data} if there is an operator $\tau$ of degree 0 on $\mathcal{H}_{\de M}$ and an element $\chi_M \in \widehat{\mathcal{H}}_M$ with $\deg(\chi_M)=\deg(\psi_M)-1$ such that
\begin{align}
\begin{split}
\Omega'_{\de M} &= [\Omega_{\de M},\tau] \\
\psi'_M &=  (\hbar^2\Delta_{\calV_M} + \Omega_{\de M})\chi_M - \tau \psi_M
\end{split}
\end{align} 
\end{defn}
Let us now explain how to produce a quantum BV-BFV theory by perturbative quantization of a classical BV-BFV theory. Fix a classical BV-BFV theory $\pi\colon \calF \to \calF^\de$. For simplicity we shall assume that $\calF$ and $\calF^\de$ are always \textsf{vector spaces}, which is sufficient for the present paper. For a general discussion see \cite{CMR2}.
\subsubsection{The space of states}\label{sec:space_of_states}
Consider a $(d-1)$-dimensional manifold $\Sigma$. Then the BV-BFV theory associates to it a symplectic vector space $(\calF^\de_\Sigma,\omega^\de_\Sigma,Q_\Sigma^\de)$.   Morally, we want to construct $\mathcal{H}_{\Sigma}$ and $\Omega_{\Sigma}$ as a geometric quantization of this symplectic vector space. More precisely, the construction proceeds as follows.
We require the data of a polarization\footnote{We have only considered the case of real polarizations so far.} $\calP$ of this symplectic vector space. For our purposes, a splitting 
\begin{equation}
\calF^\de_\Sigma \cong \mathcal{B}^\calP_{\Sigma} \oplus \mathcal{K}^\calP_{\Sigma}
\end{equation}
of $\calF^\de_\Sigma$ into Lagrangian subspaces is sufficient. Here $\mathcal{K}^\calP_{\Sigma}$ is thought of as the Lagrangian distribution on $\calF^\de_\Sigma$ and $\mathcal{B}^\calP_\Sigma$ is identified with the leaf space of the polarization. Given a polarization $\gls{calP}$ the associated space of states $\calH_{\de M}$ is a certain space of functionals  on $\mathcal{B}^\calP_\Sigma$. We will discuss the space of states for $BF$-like theories in \ref{sec:BF_like}.

\subsubsection{Splitting the space of fields} To define the quantum state we proceed with the following constructions. Consider a $d$-manifold $M$ (possibly with boundary) and the associated BV-BFV manifold $(\calF_M,\omega_M,\calS_M,Q_M,\pi_M)$ over the exact BFV manifold $(\calF^\de_{\de M},\omega^\de_{\de M}=\delta\alpha^\de_{\de M},Q^\de_{\de M})$. Then, choosing a polarization $\calP$ on $\de M$, we choose a splitting
\begin{equation}
\calF_M\cong\calB_{\de M}^\calP\oplus\calY,
\end{equation}
where $\calY$ denotes some complement. This splitting is subject to the following assumption\footnote{This assumption forces one to choose singular extensions of boundary fields.}. 
\begin{ass}[\cite{CMR2}]\label{ass:singularsplitting}
There is a weakly symplectic form $\omega_\calY$ on $\calY$ such that $\omega_M$ is the extension of $\omega_\calY$ to $\calF_M$. 
\end{ass} Formally, we can think of $\gls{B}$ as the space of \textsf{boundary fields} and $\calY$ the space of \textsf{bulk fields}. Depending on the boundary polarization, we split $\calY$ into residual fields and some complement, i.e. we choose a splitting 
\begin{equation} 
\calY\cong\calV^\calP_M\oplus \calY'
\end{equation}
subject to the following assumption\footnote{This assumption is rather strong but can be slightly relaxed to the notion of \textsf{hedgehog fibration}.} 
\begin{ass}\label{ass:symplecticproduct}
We assume the following hold:
\begin{enumerate}
\item $\calV^\calP_M, \calY'$ are BV manifolds, 
\item $\calV^\calP_M$ is finite-dimensional
\item $\omega_\calY = \omega_{\calV^\calP_M} + \omega_{\calY'}$.
\end{enumerate}
\end{ass}
We call the complement $\calY'$ the space of \textsf{fluctuation fields}. 
Residual fields are also called \textsf{low energy fields} or \textsf{slow fields} and fluctuation fields are also called \textsf{high energy fields} or \textsf{fast fields}.  Typically we choose $\calV^\calP_M$ as the solutions of $\delta \calS_M^{0}=0$ modulo gauge transformations, where $\calS_M^0$ denotes the quadratic part of the action $\calS_M$. This is the minimal choice, and is typically called the space of \textsf{zero modes}. Other choices are related by the equivalence relations above. 
\begin{defn}
A splitting 
\begin{equation}
\label{split}
\calF_M \cong \calB_{\de M}^\calP \oplus \calV^\calP_M\oplus \calY'
\end{equation}
is called \textsf{good} if it satisfies Assumptions \ref{ass:singularsplitting} and \ref{ass:symplecticproduct}.
\end{defn}
\begin{rem}[Connection to Atiyah's TQFT formulation]
From the point of view of topological quantum field theories (TQFTs) as functors $\textbf{Cob}_n\to \textbf{Vect}_\C$ from the $n$-cobordism category (objects are $(n-1)$-manifolds bounding an $n$-manifold and morphisms are exactly the bounding $n$-manifolds connecting the objects) to the category of vector spaces over the complex numbers, it is clear that the quantum state should depend on the bulk. This can be seen by using the fact that the state represents exactly the bounding manifold between the objects and thus a morphism of the cobordism category. This also makes sense for manifolds without boundary, in which case the state is given by a partition function $Z\colon \C\to\C$, where as a morphism in $\textbf{Cob}_n$ it represents any closed $n$-manifold, seen as a bounding manifold connecting the empty $(n-1)$-manifold, i.e. as a morphism $\varnothing\to\varnothing$.
\end{rem}

\subsubsection{The quantum state in $BF$-like theories}
\label{sec:BF_like}
The quantum state in $BF$-like theories is defined perturbatively in terms of Feynman graphs by considering integrals defined on the configuration space of these graphs. In $BF$-like theories there are two preferred polarizations, namely the $\frac{\delta}{\delta \mathsf{X}}$- and $\frac{\delta}{\delta\boldsymbol{\eta}}$-polarization. We specify a polarization by splitting the boundary $\de M$ of the manifold $M$ into two parts $\de_1M$ and $\de_2M$, where we choose the $\frac{\delta}{\delta\boldsymbol{\eta}}$-polarization on $\de_1M$ and the $\frac{\delta}{\delta\mathsf{X}}$-polarization on $\de_2M$. We denote the $\mathsf{X}$-leaf by $\mathbb{X}\in\calB^{\frac{\delta}{\delta\boldsymbol{\eta}}}_{\de M}$ and the $\boldsymbol{\eta}$-leaf by $\E\in \calB^\frac{\delta}{\delta\mathsf{X}}_{\de M}$.

For $BF$-like theories, the polarization determines the first splitting as 
\begin{align}
\calB^\calP_{\de M} &= (\Omega^\bullet(\de_1M)\otimes V[1]) \oplus (\Omega^\bullet(\de_2M) \otimes V^*[d-2]) \\
\calY &= (\Omega^{\bullet}(M,\de_1M)\otimes V[1]) \oplus (\Omega^\bullet(M,\de_2M)\otimes V^*[d-2])
\end{align} 
The minimal space of residual fields is isomorphic to
\begin{equation}
\calV^\calP_M\cong(H^\bullet(M,\de_1M)\otimes V[1] )\oplus (H^\bullet(M,\de_2M)\otimes V^*[d-2]),
\end{equation}
for some graded vector space $V$. A good splitting is then determined by a splitting of the complex of de Rham forms with relative boundary conditions into a subspace $\calV^\calP_M$ isomorphic to cohomology and a complement $\calY'$ in a way compatible with the symplectic structure. One possibility to do so is to use a Riemannian metric and embed the cohomology as harmonic forms. \\
Before we can introduce the quantum state we need to introduce 
the concept of \textsf{composite fields}, which we denote by square brackets $[\enspace]$, e.g. for a boundary field $\mathbb{A}$ we will write $[\mathbb{A}^{i_1}\dotsm \mathbb{A}^{i_k}]$. They can be understood as a \textsf{regularization} of higher functional derivatives: the higher functional derivative $\frac{\delta^k}{\delta\mathbb{A}^{i_1}\dotsm \delta\mathbb{A}^{i_k}}$ gets replaced by a first order functional derivative $\frac{\delta}{\delta[\mathbb{A}^{i_1}\dotsm \mathbb{A}^{i_k}]}$. Concretely, this corresponds to introducing additional boundary vertices as in Figure \ref{fig:composite_field_vertices}. 

\begin{rem}
In fact, this concept will not be needed for the definition of the principal part of the quantum state.
We will use this concept to define the full part of the quantum state where we need to make sure that it will be compatible with the quantum BFV boundary operator, where higher functional derivatives do indeed appear as we will see.
\end{rem}

\begin{defn}[Regular functional]
\label{regular_fun}
A \textsf{regular functional} on the space of base boundary fields is a linear combination of expressions of the form 
\begin{multline}
\label{regular_functional}
\int_{\mathsf{C}_{m_1}(\de_1M)\times \mathsf{C}_{m_2}(\de_2M)}L^{J_1^1...J_1^{\ell_1}J_2...J_2^{\ell_2}...}_{I_1^1....I_1^{r_1}I_2^1...I_2^{r_2}...}\land \pi_1^*\prod_{j=1}^{r_1}\left[\mathbb{X}^{I_1^j}\right]\land\dotsm \land \pi_{m_1}^*\prod_{j=1}^{r_{m_1}}\left[\mathbb{X}^{I_{m_1}^j}\right]\land\\
\land\pi_1^*\prod_{j=1}^{\ell_1}\left[\mathbb{E}_{J_1^j}\right]\land\dotsm\land \pi_{m_1}^*\prod_{j=1}^{\ell_{m_2}}\left[\mathbb{E}_{J_{m_2}^j}\right],
\end{multline}
where $I_i^j$ and $J_i^j$ are (target) multi-indices and $L^{J_1^1...J_1^{\ell_1}J_2...J_2^{\ell_2}...}_{I_1^1....I_1^{r_1}I_2^1...I_2^{r_2}...}$ is a smooth differential form on the direct product of compactified configuration spaces (see Appendix \ref{app:Conf}) $\mathsf{C}_{m_1}(\de_1M)\times \mathsf{C}_{m_2}(\de_2M)$ depending on residual fields. A regular functional is called \textsf{principal} if all multi-indices have length one.
\end{defn}

\begin{defn}[Full space of boundary states]
\label{space_of_boundary_states}
The \textsf{full space of boundary states} $\gls{calH^calP_deM}$ is given by the linear combinations of regular functionals of the form \eqref{regular_functional}.
\end{defn}

\begin{defn}[Principal space of boundary states]
We define the \textsf{principal space of boundary states} $\gls{calH^calP(princ)_deM}$ as the subspace of $\calH^\calP_{\de M}$, where we only consider principal regular functionals.
\end{defn}
The state is defined in terms of Feynman graphs and rules. We briefly explain what these terms mean in the BV-BFV context (for perturbations of abelian $BF$ theory).
\begin{defn}[($BF$) Feynman graph]
A ($BF$) \textsf{Feynman graph} is an oriented graph with three types of vertices $V(\Gamma) = V_{\textnormal{bulk}}(\Gamma) \sqcup V_{\de_1} \sqcup V_{\de_2}$, called bulk vertices and type 1 and 2 boundary vertices, such that 
\begin{itemize}
\item bulk vertices can have any valence, 
\item type 1 boundary vertices carry any number of incoming half-edges (and no outgoing half-edges), 
\item type 2 boundary vertices carry any number of outgoing half-edges (and no incoming half-edges),
\item multiple edges and loose half-edges (leaves) are allowed but not short loops (tadpoles).
\end{itemize}
A \textsf{labeling} of a Feynman graph is a function from the set of half-edges to $\{1,\ldots,\dim V\}$.
\end{defn}
\begin{defn}[Principal graph]
A Feynman graph is called \textsf{principal} if all boundary vertices (type 1 and type 2) are univalent or zero valent.  
\end{defn}
For a set $S$ and a manifold $M$, the open configuration space of $S$ in $M$ is 
\begin{equation}
\mathsf{Conf}_S(M) := \{\iota\colon S\hookrightarrow M|\iota \text{ injection}\}.
\end{equation}
Let $\Gamma$ be a Feynman graph and $M$ a manifold with boundary $\de M = \de_1 M \sqcup \de_2M$ and denote 
\begin{equation}
\gls{Conf}:= \mathsf{Conf}_{V_{bulk}}(M) \times \mathsf{Conf}_{V_{\de_1}}(\de_1M) \times \mathsf{Conf}_{V_{\de_2}}(\de_2M)
\end{equation}
The Feynman rules are a map that associate to a Feynman graph $\Gamma$ a differential form $\omega_{\Gamma} \in \Omega^\bullet(\mathsf{Conf}_\Gamma(M))$. 
\begin{defn}[($BF$) Feynman rules]
Let $\Gamma$ be a labeled Feynman graph, and choose a configuration $\iota\colon V(\Gamma) \to \mathsf{Conf}(\Gamma)$ (that respects the decompositions). We decorate the graph according to the following rules (called \textsf{Feynman rules}):
\begin{itemize}
\item{Bulk vertices in $M$ decorated by ``vertex tensors'' \begin{equation}
\calV^{j_1\ldots j_t}_{i_1 \ldots i_s}:=\frac{\de^{s+t}}{\de \mathsf{X}_{i_1}\dotsm \de \mathsf{X}_{i_s}\de\boldsymbol{\eta}^{j_1}\dotsm \de\boldsymbol{\eta}^{j_t}}\big\vert_{\mathsf{X}=\boldsymbol{\eta}=0}\calV(\mathsf{X},\boldsymbol{\eta}),
\end{equation}
where $s,t$ are the out- and in- valencies of the vertex and $i_1,\ldots,i_s$ and $j_1,\ldots,j_t$ are the labels of the out (resp. in-)oriented half-edges.
}
\item{ Boundary vertices $v \in V_{\de_1}(\Gamma)$ with  incoming half-edges labeled $i_1,\ldots,i_k$ and no out-going half-edges are decorated by a composite field $[\mathbb{X}_{i_1}\ldots\mathbb{X}_{i_k}]$ evaluated at the point (vertex location) $\iota(v)$ on $\partial_1M$.
}
\item{Boundary vertices $v \in V_{\de_2}(\Gamma)$ on $\partial_2M$ with  outgoing half-edges labeled $j_1,\ldots,j_l$ are decorated by $[\E^{j_1}\ldots\E^{j_l}]$ evaluated at the point on $\de_2M$.
}
\item{Edges between vertices $v_1,v_2$ are decorated with the propagator $\zeta(\iota(v_1),\iota(v_2))\cdot \delta_j^{i}$, where $\zeta$ is the propagator induced by $\calL\subset\calY'$, the chosen gauge-fixing Lagrangian.
}
\item{Loose half-edges (leaves) attached to a vertex $v$ and labeled $i$ are decorated with the residual fields $\mathsf{x}_i$ (for out-orientation), $\mathsf{e}^{i}$ (for in-orientation) evaluated at the point $\iota(v)$. 
}
\end{itemize}
We denote the differential forms given by the decorations collectively by $\omega_d$. The differential form $\omega_{\Gamma}$ at $\iota$ is then defined by multiplying all decorations and summing over all labelings: 
\begin{equation}
\label{eq:omega_gamma}
\omega_{\Gamma} = \sum_{\substack{\text{labelings}\\ \text{of }\Gamma}}\hspace{0.1cm}\prod_{\substack{\text{decorations} \\\text{$d$ of } \Gamma}} \omega_d
\end{equation}
\end{defn}
The Feynman rules are summarized in Figures \ref{fig:FeynmanRules0} and \ref{fig:composite_field_vertices}.
\begin{rem}[Configuration spaces]
We will work with the Fulton--MacPherson/Axelrod--Singer compactification of configuration spaces on manifolds with boundary and corners (FMAS compactification, see Appendix \ref{app:Conf}). It is a non-trivial analytic statement (proven first by Axelrod and Singer \cite{AS2}) that the propagator, \textsf{a priori} defined only on the open configuration space $\mathsf{Conf}_2(M)$, extends to the  compactification $\mathsf{C}_2(M)$. It follows that also $\omega_\Gamma$, for all Feynman graphs $\Gamma$, extends to the compactification $\gls{C}$ of $\mathsf{Conf}_\Gamma(M)$. Since integrals remain unchanged by adding strata of lower codimension, this immediately proves that all integrals in Equation \eqref{eq:def_state_pert} below are finite. Moreover, the combinatorics of the stratification can be used for various computations using Stokes' theorem. 
\end{rem}
\begin{defn}[Principal quantum state]
\label{principal_part}
Let $M$ be a manifold, possibly with boundary. 
Given a $BF$-like BV-BFV theory $\pi_M\colon\calF_M \to \calF^\de_{\de M}$, a polarization $\calP$ on $\calF^\de_{\de M}$, a good splitting $\calF_M = \mathcal{B}^\calP_{\de M}\oplus \calV^\calP_M \oplus \calY'$, and a gauge-fixing Lagrangian $\calL \subset \calY'$, we define the \textsf{principal part of the quantum state} by the  formal power series 
\begin{equation}
\psi_M(\mathbb{X},\E;\mathsf{x},\mathsf{e}):=T_M\exp\left(\frac{\I}{\hbar}\sum_{\Gamma}\frac{(-\I\hbar)^{\textnormal{loops}(\Gamma)}}{\vert\textnormal{Aut}(\Gamma)\vert}\int_{\mathsf{C}_\Gamma(M)}\omega_\Gamma(\mathbb{X},\E;\mathsf{x},\mathsf{e})\right),\label{eq:def_state_pert}
\end{equation}
where $\omega_\Gamma$ is given as in \eqref{eq:omega_gamma} and where we denote for an element $\mathsf{X}\oplus \boldsymbol{\eta}\in \calF_M$ the split by 
\begin{align}
\mathsf{X}&= \mathbb{X}\oplus \mathsf{x}\oplus \mathscr{X},\\
\boldsymbol{\eta}&=\E\oplus \mathsf{e}\oplus \mathscr{E}.
\end{align}
Here the sum is taken over all \textsf{connected}, oriented, \textsf{principal} BF Feynman graphs $\Gamma$, $\textnormal{Aut}(\Gamma)$ denotes the set of all automorphisms of $\Gamma$, and $\textnormal{loops}(\Gamma)$ denotes the number of all loops of $\Gamma$. 
\end{defn}
  The coefficient $T_M$ is related to the Reidemeister torsion of $M$, but its precise nature is irrelevant for the purpose of a present paper. For a definition see \cite{CMR1}.
  \begin{rem} The formal power series \eqref{eq:def_state_pert} is our definition of the  
  formal perturbative expansion of the BV integral
\begin{equation}
\psi_M = \int_{\calL \subset \calY'}\ee^{\frac{\I}{\hbar}\calS_M[(\mathsf{X},\boldsymbol{\eta})]} \in \Hat{\calH}_M^\calP:=\Hat{\calH}_{\de M}^\calP\otimes \textnormal{Dens}^\frac{1}{2}(\calV_M^\calP).\end{equation}
It was observed in \cite{CMR2} that, given a good splitting of the form \eqref{split}, one can decompose the action as  
\begin{equation}
\calS^\calP_M := \widehat{\calS}_{M,0} + \widehat{\calS}_{M,\textnormal{pert}} + \calS^\textnormal{res} + \calS^{\textnormal{source}}
\end{equation} 
with 
\begin{align}
\widehat{\calS}_{M,0} &:= \int_M \langle\mathscr{E}, \dd \mathscr{X}\rangle\\
 \widehat{\calS}_{M,\textnormal{pert}} &:= \int_M \calV(\mathscr{X},\mathscr{E}) \\
 \calS^\textnormal{res} &:= (-1)^{d-1}\left(\int_{\de_1 M}\langle \E, \mathsf{x}\rangle +\int_{\de_2 M}\langle \mathbb{X}, \mathsf{e}\rangle \right) \\
 \calS^{\textnormal{source}} &:= (-1)^{d-1}\left(\int_{\de_1 M}\langle \E, \mathscr{X}\rangle +\int_{\de_2 M}\langle \mathbb{X}, \mathscr{E}\rangle \right)
\end{align}
In that way we can rewrite 
\begin{equation}
\psi_M = T_M\left \langle \ee^{\frac{\ii}{\hbar}(\calS^\textnormal{res} + \calS^{\textnormal{source}})}\right\rangle
\end{equation}
where $\langle \enspace \rangle$ denotes the expectation value with respect to the bulk theory ($\widehat{\calS}_{M,0} + \widehat{\calS}_{M,\textnormal{pert}}$), i.e. formally
\begin{equation}
\left\langle \ee^{\frac{\ii}{\hbar}(\calS^\textnormal{res} + \calS^{\textnormal{source}})}\right\rangle =  \int_{\calL \subset \calY'}\ee^{\frac{\I}{\hbar}(\widehat{\calS}_{M,0}[(\mathsf{X},\boldsymbol{\eta})]+\widehat{\calS}_{M,\textnormal{pert}}[(\mathsf{X},\boldsymbol{\eta})] )}\ee^{\frac{\I}{\hbar}(\calS^{\textnormal{res}}[(\mathsf{X},\boldsymbol{\eta})]+{\calS}^{\textnormal{source}}[(\mathsf{X},\boldsymbol{\eta})])}.
\end{equation}
\end{rem}

\begin{rem}
Note that we sum over connected graphs, such that the sum is given by the \textsf{effective action}. 
\end{rem}

%

\begin{figure}
\centering
\subfigure[Interaction vertex]{
\centering
\begin{tikzpicture}
\node[vertex] (o) at (0,0) {};
\node[coordinate, label=below:{$i_1$}] at (30:1) {$i_1$}
edge[fermion] (o);
\node[coordinate, label=above:{$i_2$}] at (60:1) {$i_2$}
edge[fermion] (o);
\node[coordinate, label=below:{$i_s$}] at (145:1) {$i_s$}
edge[fermion] (o);
\draw[dotted] (90:0.5) arc (90:130:0.5);
\node[coordinate, label=below:{$j_1$}] (j1) at (-30:1) {};
\node[coordinate, label=below:{$j_2$}] (j2) at (-60:1) {};
\node[coordinate, label=below:{$j_t$}] (j3) at (-145:1) {};
\draw[dotted] (-90:0.5) arc (-90:-145:0.5); 
\draw[fermion] (o) -- (j1);
\draw[fermion] (o) -- (j2);
\draw[fermion] (o) -- (j3);
\node[coordinate,label=right:{$\leadsto\quad\calV^{j_1\ldots j_t}_{i_1 \ldots i_s}$}] at (2,0) {};
\end{tikzpicture}
}
%
\hspace{2cm}
\subfigure[Residual fields]{
\centering
\begin{tikzpicture}
\node[circle,draw,inner sep=1pt] (x) at (0,1) {$\mathsf{x}^i$};
\node[coordinate, label=below:{$i$}] (x2) at (30:2) {$i_1$}
edge[fermion] (x);
\node[coordinate, label=below:{$j$}] (e2)at (-30:2) {$i_1$};
\node[circle,draw,inner sep=1pt] (e) at (0,-1) {$\mathsf{e}_j$}
edge[fermion] (e2);
\node[coordinate, label={${}$}] at (4,0) {};
\end{tikzpicture}
}
\hspace{3cm}
\subfigure[Boundary vertices]{
\begin{tikzpicture}
\draw (-1,1) -- (1,1);
\node[vertex, label=above:{$\mathbb{X}$}] (x) at (0,1) {};
\node[coordinate] (b1) at (0,0) {}; 
\draw[fermion] (x) -- (b1);
\draw (1,-1) -- (3,-1);
\node[vertex, label=below:{$\mathbb{E}$}] (e) at (2,-1) {};
\node[coordinate] (b2) at (2,0) {}; 
\draw[fermion] (b2) -- (e);

\end{tikzpicture}
\hspace{2cm}
}
\caption{Summary of Feynman graphs and rules}\label{fig:FeynmanRules0}
\end{figure}

\begin{figure}[ht]
\subfigure[Boundary vertex on $\de_1\Sigma$] {
\begin{tikzpicture}
\draw (-3,0) -- (3,0); 
\node[vertex] (o) at (0,0) {};
\node[coordinate, label=below:{$[\mathbb{X}^{i_1}\cdots\mathbb{X}^{i_k}]$}] at (o.south) {};
\node[coordinate, label=below:{$i_1$}] (b1) at (30:1) {$i_1$};
\node[coordinate, label=above:{$i_2$}] (b2) at (60:1) {$i_2$};
\node[coordinate, label=below:{$i_k$}] (b3) at (145:1) {$i_k$};
\draw[dotted] (90:0.5) arc (90:130:0.5); 
\draw[fermion] (o) -- (b1);
\draw[fermion] (o) -- (b2);
\draw[fermion] (o) -- (b3);
\end{tikzpicture}
}
\hspace{2cm}
\subfigure[Boundary vertex on $\de_2\Sigma$] {
\begin{tikzpicture}
\draw (-3,0) -- (3,0); 
\node[vertex] (o) at (0,0) {};
\node[coordinate, label=below:{$[\mathbb{E}_{i_1}\cdots\mathbb{E}_{i_k}]$}] at (o.south) {};
\node[coordinate, label=below:{$i_1$}] at (30:1) {$i_1$}
edge[fermion] (o);
\node[coordinate, label=above:{$i_2$}] at (60:1) {$i_2$}
edge[fermion] (o);
\node[coordinate, label=below:{$i_k$}] at (145:1) {$i_k$}
edge[fermion] (o);
\draw[dotted] (90:0.5) arc (90:130:0.5);
\end{tikzpicture}
}
\caption{Composite field vertices.}\label{fig:composite_field_vertices}
\end{figure}
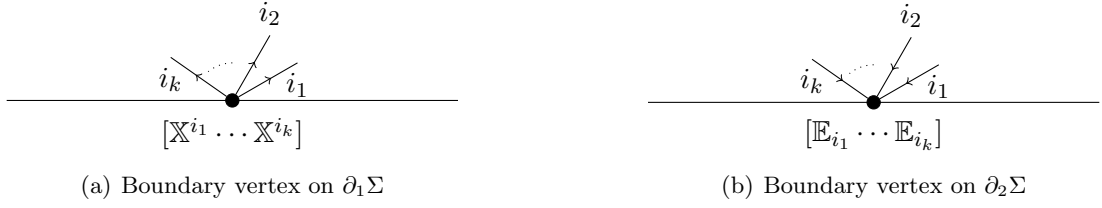 

Using composite fields, one can construct the \textsf{bullet product} on the full space of states as in \cite{CMR2}. For instance, the bullet product of $\int_{\de_1M}u_i\land \mathbb{X}^{i}$ and $\int_{\de_1M}v_i\land \mathbb{X}^{i}$ is 
\begin{multline}
\label{bullet_prod}
\int_{\de_1M}u_i\land \mathbb{X}^{i}\bullet \int_{\de_1M}v_j\land\mathbb{X}^j:=\\
(-1)^{\vert \mathbb{X}^{i}\vert (d-1+\vert v_j\vert)+\vert u_i\vert(d-1)}\left(\int_{\mathsf{C}_2(\de_1M)}\pi_1^*u_i\land \pi_2^*v_j\land \pi_1^*\mathbb{X}^{i}\land \pi_2^*\mathbb{X}^j+\int_{\de_1M}u_i\land v_j\land [\mathbb{X}^{i}\mathbb{X}^j]\right),
\end{multline}
where $u$ and $v$ are smooth differential forms depending on the bulk and residual fields.

\begin{rem}
Consider an operator $\int_{\de_1M}F^{ij}\frac{\delta^2}{\delta\mathbb{X}^{i}\delta\mathbb{X}^j}$. Such an operator is (by definition) interpreted as $\int_{\de_1M}F^{ij}\frac{\delta}{\delta[\mathbb{X}^{i}\mathbb{X}^j]}$, so one gets 
\begin{equation}
\int_{\de_1M}F^{ij}\frac{\delta^2}{\delta\mathbb{X}^{i}\delta\mathbb{X}^j}\left(\int_{\de_1M}u_{i}\land\mathbb{X}^{i}\bullet \int_{\de_1M}v_j\land \mathbb{X}^j\right)=\int_{\de_1M}u_iv_jF^{ij},
\end{equation}
in accordance with our naive expectation.
\end{rem}

\begin{defn}[Full quantum state]
\label{full_covariant_state1}
Let $M$ be a manifold, possibly with boundary. 
Given a $BF$-like BV-BFV theory $\pi_M\colon\calF_M \to \calF^\de_{\de M}$, a polarization $\calP$ on $\calF^\de_{\de M}$, a good splitting $\calF_M \cong \mathcal{B}^\calP_{\de M}\oplus \calV^\calP_M \oplus \calY'$, and a gauge-fixing Lagrangian $\calL \subset \calY'$, we define the \textsf{full quantum state} (similarly as in \eqref{eq:def_state_pert}) by the formal
power series 
\begin{equation}
\gls{fullstate}(\mathbb{X},\E;\mathsf{x},\mathsf{e}):=T_M\exp\left(\frac{\I}{\hbar}\sum_{\Gamma}\frac{(-\I\hbar)^{\textnormal{loops}(\Gamma)}}{\vert\textnormal{Aut}(\Gamma)\vert}\int_{\mathsf{C}_\Gamma(M)}\omega_\Gamma(\mathbb{X},\E;\mathsf{x},\mathsf{e})\right),\label{eq:def_state_pert_full}
\end{equation}
where we also sum over graphs as in Figure \ref{fig:composite_field_vertices} representing composite fields.
\end{defn}
\begin{rem}
The full state can be interpreted as an expectation value with help of the bullet product: 
\begin{equation}
\boldsymbol{\psi}_M = T_M \left\langle \ee_{\bullet}^{\frac{\ii}{\hbar}(\calS^\textnormal{res} + \calS^{\textnormal{source}})}\right\rangle
\end{equation}
where $\ee_\bullet$ denotes the exponential with respect to the bullet product. 
\end{rem}

\subsubsection{The BFV boundary operator}
We want to define the quantum BFV boundary operator for $BF$-like theories according to \cite{CMR2}. Similarly to the state, we will express at first its principal part and then extend it to a regularization using the notion of composite fields. The quantum BFV boundary operator is constructed as a quantization of the BFV action such that Theorem \ref{thm:CMR2} below holds.

\begin{defn}[Principal part of the BFV boundary operator]
The \textsf{principal part} of the BFV boundary operator is given by 
\begin{equation}
\gls{Omega^princ}=\underbrace{\gls{Omega^X_0}+\gls{Omega^E_0}}_{=:\Omega_0}+\underbrace{\gls{Omega^X_pert}+\gls{Omega^E_pert}}_{=:\Omega_{\textnormal{pert}}^{\textnormal{princ}}},
\end{equation}
where 
\begin{align}
\Omega_0^\mathbb{X}&:=(-1)^d\I\hbar\int_{\de_1M}\left(\dd \mathbb{X}\frac{\delta}{\delta\mathbb{X}}\right),\\
\Omega_0^\E &:=(-1)^d\I\hbar\int_{\de_2 M}\left(\dd\mathbb{E}\frac{\delta}{\delta\mathbb{E}}\right),
\end{align}
\begin{multline}
\Omega_{\textnormal{pert}}^\mathbb{X}:=\sum_{n,k\geq 0}\sum_{\Gamma_1'}\frac{(\I\hbar)^{\textnormal{loops}(\Gamma_1')}}{\vert \textnormal{Aut}(\Gamma'_1)\vert}\int_{\de_1M}\left(\sigma_{\Gamma_1'}\right)_{i_1....i_n}^{j_1...j_k}\land\\
\land \mathbb{X}^{i_1}\land \dotsm \land\mathbb{X}^{i_n} \left((-1)^d\I\hbar\frac{\delta}{\delta\mathbb{X}^{j_1}}\right)\dotsm \left((-1)^d\I\hbar\frac{\delta}{\delta\mathbb{X}^{j_k}}\right),
\end{multline}
\begin{multline}
\Omega_{\textnormal{pert}}^\mathbb{E}:=\!\sum_{n,k\geq 0}\sum_{\Gamma_2'}\frac{(\I\hbar)^{\textnormal{loops}(\Gamma_2')}}{\vert \textnormal{Aut}(\Gamma'_2)\vert}\int_{\de_2M}\left(\sigma_{\Gamma_2'}\right)_{i_1....i_n}^{j_1...j_k}\land\\ 
\land\mathbb{E}^{i_1}\land\dotsm\land \mathbb{E}^{i_n}\left((-1)^d\I\hbar\frac{\delta}{\delta\mathbb{E}^{j_1}}\right)\dotsm \left((-1)^d\I\hbar\frac{\delta}{\delta\mathbb{E}^{j_k}}\right),
\end{multline}
where, for $\mathbb{F}_1=\mathbb{X}$ and $\mathbb{F}_2=\E$ and $\ell\in\{1,2\}$, $\Gamma_\ell'$ runs over graphs with 
\begin{itemize}
\item{$n$ vertices on $\de_\ell M$ of valence 1 with adjacent half-edges oriented inwards and decorated with boundary fields $\mathbb{F}^{i_1}_\ell,\ldots,\mathbb{F}^{i_n}_\ell$ all evaluated at the point of collapse $p\in \de_\ell M$,}
\item{$k$ outward leaves if $\ell=1$ and $k$ inward leaves if $\ell=2$, decorated with variational derivatives in boundary fields
$$(-1)^d\I\hbar\frac{\delta}{\delta\mathbb{F}^{j_1}_\ell},\ldots,(-1)^d\I\hbar\frac{\delta}{\delta\mathbb{F}^{j_k}_\ell}$$
at the point of collapse,
}
\item{
no outward leaves if $\ell=2$ and no inward leaves if $\ell=1$  (graphs with them do not contribute).}
\end{itemize}
The form $\sigma_{\Gamma_\ell'}$ is obtained as the integral over the compactified configuration space $\Tilde{\mathsf{C}}_{\Gamma_\ell'}(\mathbb{H}^d)$, where $\mathbb{H}^d$ denotes the $d$-dimensional upper half plane, given by
\begin{equation}
\sigma_{\Gamma_\ell'}=\int_{\Tilde{\mathsf{C}}_{\Gamma_\ell'}(\mathbb{H}^d)}\omega_{\Gamma_\ell'},
\end{equation}
with $\omega_{\Gamma_\ell'}$ being the product of limiting propagators at the point $p$ of collapse and vertex tensors. 
\end{defn}

We want to roughly describe the construction of the BFV boundary operator with composite fields (see \cite{CMR2} for a more detailed discussion). First, we need to define the following notion.

On a regular functional as in \eqref{regular_functional}, we get a term $L$ replaced by $\dd L$ plus all the terms corresponding to  the boundary of the configuration space. As $L$ is smooth, its restriction to the boundary is also smooth and can be integrated on the fibers yielding a smooth form on the base configuration space; for example
\begin{equation}
\Omega_0\int_{\de_1M}L_{IJ}\land[\mathbb{X}^{I}]\land [\mathbb{X}^J]=\pm \I\hbar\int_{\de_1M}\dd L_{IJ}\land [\mathbb{X}^{I}]\land [\mathbb{X}^J],
\end{equation}
\begin{multline}
\Omega_0\int_{\mathsf{C}_2(\de_1M)}L_{IJK}\land \pi_1^*([\mathbb{X}^{I}]\land[\mathbb{X}^J])\land\pi_2^*[\mathbb{X}^K]\\=\pm\I\hbar \int_{\mathsf{C}_2(\de_1M)}\dd L_{IJK}\land \pi_1^*([\mathbb{X}^{I}]\land [\mathbb{X}^J])\land \pi_2^*[\mathbb{X}^K]\pm\I\hbar \int_{\de_1M}\underline{L_{IJK}}\land [\mathbb{X}^{I}]\land [\mathbb{X}^J]\land[\mathbb{X}^K],
\end{multline}
with $\underline{L_{IJK}}=\pi^\de_*L_{IJK}$, where $\pi^\de\colon \de\mathsf{C}_2(\de_1M)\to \de_1M$ is the canonical projection.


Notice that for any two regular functionals $S_1$ and $S_2$ we have $$\Omega_0(S_1\bullet S_2)=\Omega_0(S_1)\bullet S_2\pm  S_1\bullet \Omega_0(S_2).$$ The other generators that we allow are products of expressions of the form 
\begin{align}
\int_{\de_1M}&L^J_{I^1...I^r}\left[\mathbb{X}^{I_1}\right]\land\dotsm \land\left[\mathbb{X}^{I_{r}}\right]\frac{\delta^{\vert J\vert}}{\delta [\mathbb{X}^J]}\\
\int_{\de_2M}&L_I^{J^1...J^\ell}\left[\mathbb{E}_{J_1}\right]\land\dotsm \land\left[\mathbb{E}_{J_{\ell}}\right]\frac{\delta^{\vert I\vert}}{\delta [\mathbb{E}_I]}.
\end{align}

\begin{defn}[Full BFV boundary operator]
\label{full_BFV}
The \textsf{full BFV boundary operator} is given by 
\begin{equation}
\gls{fullOmega_deM}:=\Omega_0+\underbrace{\boldsymbol{\Omega}_{\textnormal{pert}}^\mathbb{X}+\boldsymbol{\Omega}_{\textnormal{pert}}^\E}_{\boldsymbol{\Omega}_{\textnormal{pert}}},
\end{equation}
where 
\begin{equation}
\boldsymbol{\Omega}_{\textnormal{pert}}^\mathbb{X}:=\sum_{n,k\geq 0}\sum_{\Gamma_1'}\frac{(\I\hbar)^{\textnormal{loops}(\Gamma_1')}}{\vert \textnormal{Aut}(\Gamma'_1)\vert}\int_{\de_1M}\left(\sigma_{\Gamma_1'}\right)_{I_1....I_n}^{J_1...J_k}\land\left[\mathbb{X}^{I_1}\right]\land \dotsm \land\left[\mathbb{X}^{I_n}\right] \left((-1)^{kd}(\I\hbar)^k\frac{\delta^{\vert J_1\vert+\dotsm +\vert J_k\vert}}{\delta\left[\mathbb{X}^{J_1}\dotsm \mathbb{X}^{J_k}\right]}\right),
\end{equation}
\begin{equation}
\boldsymbol{\Omega}_{\textnormal{pert}}^\mathbb{E}:=\sum_{n,k\geq 0}\sum_{\Gamma_2'}\frac{(\I\hbar)^{\textnormal{loops}(\Gamma_2')}}{\vert \textnormal{Aut}(\Gamma'_2)\vert}\int_{\de_2M}\left(\sigma_{\Gamma_2'}\right)^{I_1....I_n}_{J_1...J_k}\land \left[\mathbb{E}_{I_1}\right]\land\dotsm\land \left[\mathbb{E}_{I_n}\right] \left((-1)^{kd}(\I\hbar)^k\frac{\delta^{\vert J_1\vert+\dotsm +\vert J_k\vert}}{\delta\left[\mathbb{E}_{J_1}\dotsm \mathbb{E}_{J_k}\right]}\right)\end{equation}
where, for $\mathbb{F}_1=\mathbb{X}$ and $\mathbb{F}_2=\E$ and $\ell\in\{1,2\}$, $\Gamma_\ell'$ runs over graphs with 
\begin{itemize}
\item{$n$ vertices on $\de_\ell M$, where vertex $s$ has valence $\vert I_s\vert\geq 1$, with adjacent half-edges oriented inwards and decorated with boundary fields $[\mathbb{F}^{I_1}_\ell],\ldots,[\mathbb{F}^{I_n}_\ell]$ all evaluated at the point of collapse $p\in \de_\ell M$,}
\item{$\vert J_1\vert+\dotsm +\vert J_k\vert$ outward leaves if $\ell=1$ and $\vert J_1\vert+\dotsm +\vert J_k\vert$ inward leaves if $\ell=2$, decorated with variational derivatives in boundary fields
$$(-1)^d\I\hbar\frac{\delta}{\delta[\mathbb{F}^{J_1}_\ell]},\ldots,(-1)^d\I\hbar\frac{\delta}{\delta[\mathbb{F}^{J_k}_\ell]}$$
at the point of collapse,
}
\item{
no outward leaves if $\ell=2$ and no inward leaves if $\ell=1$ (graphs with them do not contribute).}
\end{itemize}
The form $\sigma_{\Gamma_\ell'}$ is obtained as the integral over the compactified configuration space $\Tilde{\mathsf{C}}_{\Gamma_\ell'}(\mathbb{H}^d)$, where $\mathbb{H}^d$ denotes the $d$-dimensional upper half plane, given by
\begin{equation}
\sigma_{\Gamma_\ell'}=\int_{\Tilde{\mathsf{C}}_{\Gamma_\ell'}(\mathbb{H}^d)}\omega_{\Gamma_\ell'},
\end{equation}
with $\omega_{\Gamma_\ell'}$ being the product of limiting propagators at the point $p$ of collapse and vertex tensors. 
\end{defn}

\begin{figure}[ht]
\begin{tikzpicture}

\draw[thick] (-6,0) -- (6,0); 
\node[vertex] (bdry1) at (-1,0) {};
\node[coordinate, label=below:{}] at (bdry1.south) {};
\node[vertex] (bdry2) at (1,0) {};
\node[coordinate, label=below:{}] at (bdry2.south) {};


\node[vertex] (bulk1) at (-0.5,1) {}; 
\node[] (pi1) at (-0.8,1){};
\node[vertex] (bulk2) at (0.8,1.3) {};
\node[] (pi2) at (1.1,1.3){};
\node[vertex] (bulk3) at (0,0.5) {};
\node[] (pi3) at (0.3,0.5){};


\draw[] (0:2) arc (0:180:2); 
\draw[fermion] (-1,3.5) -- (bulk1);
\draw[fermion] (-0.5,3.5) -- (bulk2);
\draw[fermion] (0,3.5) -- (bulk2);
\draw[fermion] (bulk2) -- (bdry2);
\draw[fermion] (bulk1) -- (bdry1);
\draw[fermion] (bulk1) -- (bulk2);
\draw[fermion] (bulk2) -- (bulk3);
\draw[fermion] (bulk3) -- (bdry1);
\draw[fermion] (bulk3) -- (bulk1);
\draw[fermion] (5,3) -- (bdry2);
\draw[fermion] (5,2) -- (bdry2);
\draw[fermion] (5,1) -- (bdry2);
\draw[fermion] (-5,3) -- (bdry1);
\draw[fermion] (-5,2) -- (bdry1);
\draw[fermion] (-5,1) -- (bdry1);
\end{tikzpicture} 
\caption{Example of a graph collapsing to the boundary with three bulk and two boundary vertices. The semicircle represents the collapsing of the graph.}
\label{fig:collapsing_boundary_op}
\end{figure}
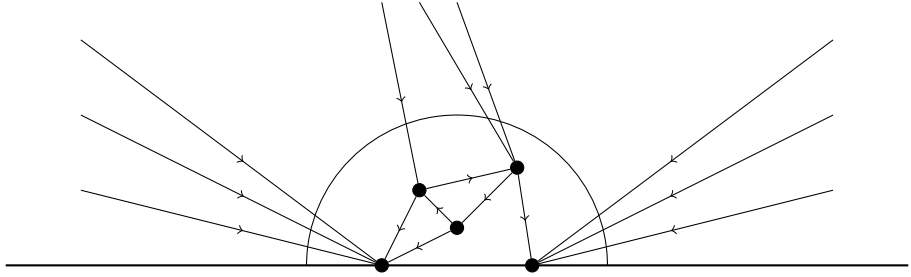

\begin{thm}[\cite{CMR2}]
\label{thm:CMR2}
Let $M$ be a smooth manifold (possibly with boundary). Then the following hold:
\begin{enumerate}
\item{The full covariant state $\boldsymbol{\psi}_M$ satisfies the \textsf{modified Quantum Master Equation}:
\begin{equation}
\label{full_mQME}
(\hbar^2\Delta_{\calV_M}+\boldsymbol{\Omega}_{\de M})\boldsymbol{\psi}_M=0.
\end{equation}
}
\item{The full BFV boundary operator $\boldsymbol{\Omega}_{\de M}$ squares to zero: 
\begin{equation}
\label{flatness_full_boundary_op}
(\boldsymbol{\Omega}_{\de M})^2=0.
\end{equation}
}
\item{A change of propagator or residual fields leads to a theory related by change of data as in \ref{change_of_data}.
}
\end{enumerate}
\end{thm}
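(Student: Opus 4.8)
The plan is to obtain all three statements from a single mechanism: Stokes' theorem on the FMAS compactifications $\mathsf{C}_\Gamma(M)$ and $\Tilde{\mathsf{C}}_{\Gamma'}(\mathbb{H}^d)$, combined with a classification of their codimension-one boundary strata. This is precisely the argument of \cite{CMR2}, so I would reproduce that argument and, in addition, check that the composite-field (full) versions of the state and of the operator from Definitions \ref{full_covariant_state1} and \ref{full_BFV} degenerate under collapses in the same way as their principal counterparts do.

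For (1), the formal origin of the equation is the modified Classical Master Equation \eqref{mCME}: writing the total BV Laplacian on the full splitting \eqref{split} as $\Delta_{\calY'}+\Delta_{\calV_M}$, the relation $\iota_Q\iota_Q\omega = 2\pi^*\calS^\de$ forces $\Delta_{\textnormal{tot}}\,\ee^{\frac{\I}{\hbar}\calS_M}$ to equal a pure boundary term; integrating the fluctuation fields along $\calL$ turns $\Delta_{\calY'}$ into the Feynman expansion, leaves $\hbar^2\Delta_{\calV_M}$ acting on the resulting state, and quantizes the boundary term to $\boldsymbol{\Omega}_{\de M}$. To make this rigorous I would note that the propagator $\zeta$ is closed on each open configuration space and extends to the compactification, so $\omega_\Gamma$ is closed except where the de Rham differential hits the boundary-field decorations. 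Applying $\dd$ under the integral sign and Stokes gives, for each graph, $\int_{\mathsf{C}_\Gamma(M)}\dd\omega_\Gamma = \int_{\de\mathsf{C}_\Gamma(M)}\omega_\Gamma$; the left side produces the $\dd\mathbb{X},\dd\E$ insertions constituting $\Omega_0\boldsymbol{\psi}_M$, while the codimension-one faces on the right fall into three classes. Interior bulk collapses are governed by $(\calS_M,\calS_M)=0$ (equivalently $\{\Theta,\Theta\}=0$ for the AKSZ vertex) and, together with the contraction of residual-field leaves, reconstruct $\hbar^2\Delta_{\calV_M}\boldsymbol{\psi}_M$ and the interior part of $\boldsymbol{\Omega}_{\textnormal{pert}}$; collapses to a boundary point reproduce the boundary vertices of $\boldsymbol{\Omega}_{\textnormal{pert}}$, since by construction $\sigma_{\Gamma'}$ is the integral of $\omega_{\Gamma'}$ over the collapse locus $\Tilde{\mathsf{C}}_{\Gamma'}(\mathbb{H}^d)$; all remaining hidden faces must vanish. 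Summing over graphs and exponentiating yields $(\hbar^2\Delta_{\calV_M}+\boldsymbol{\Omega}_{\de M})\boldsymbol{\psi}_M = 0$.

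For (2), I would decompose $(\boldsymbol{\Omega}_{\de M})^2 = \Omega_0^2 + [\Omega_0,\boldsymbol{\Omega}_{\textnormal{pert}}] + \boldsymbol{\Omega}_{\textnormal{pert}}^2$. Since $\Omega_0$ is essentially $\dd$ along the boundary, $\Omega_0^2 = 0$. For the rest I would apply Stokes to $\sigma_{\Gamma'} = \int_{\Tilde{\mathsf{C}}_{\Gamma'}(\mathbb{H}^d)}\omega_{\Gamma'}$: the limiting propagators are closed, so $0 = \int_{\de\Tilde{\mathsf{C}}_{\Gamma'}(\mathbb{H}^d)}\omega_{\Gamma'}$, and the codimension-one faces correspond to a first collapse to the boundary followed by a further sub-collapse, i.e. to compositions of boundary vertices. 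Once hidden faces are discarded, the vanishing of this total boundary integral gives exactly $[\Omega_0,\boldsymbol{\Omega}_{\textnormal{pert}}] + \boldsymbol{\Omega}_{\textnormal{pert}}^2 = 0$. For (3), a change of propagator or of residual-field space is realized by a smooth homotopy $\zeta_t$ (resp. an interpolation of splittings); integrating the configuration-space forms over the one-parameter family and applying Stokes in the $t$-direction produces $\psi'_M - \psi_M$ from the $t$-boundary and a $(\hbar^2\Delta_{\calV_M}+\boldsymbol{\Omega}_{\de M})$-exact term together with the action of an operator $\tau$ coming from the induced change on $\calH_{\de M}$, which is precisely the relation of Definition \ref{change_of_data}.

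I expect the main obstacle to be, in each part, the verification that the hidden boundary faces of the compactified configuration spaces do not contribute: this is where Kontsevich-type vanishing arguments (dimensional counting, symmetry of the propagator, and the absence of tadpoles) enter, and where one must check that the additional composite-field vertices of the full state and operator do not spoil the cancellations already established in the principal case.
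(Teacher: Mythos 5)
The paper states Theorem \ref{thm:CMR2} without proof, attributing it entirely to \cite{CMR2}, and your sketch reproduces the standard argument from that reference (Stokes' theorem on the FMAS compactifications, classification of codimension-one strata into bulk collapses governed by the CME, boundary collapses giving $\boldsymbol{\Omega}_{\de M}$, and vanishing of hidden faces; a homotopy-in-$t$ argument for part (3)). Your approach is essentially the same as the cited one, and you correctly identify the hidden-face vanishing and the compatibility of composite-field vertices as the points requiring care.
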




\FloatBarrier
\section{Quantization of AKSZ Sigma Models}\label{sec:AKSZ}
In \cite{CMW4} it was shown that one can construct a globalized quantum state in the guise of perturbative quantization for any possibly nonlinear $BF$-like AKSZ Sigma Model \cite{AKSZ} on manifolds with boundary. This is done by considering techniques of formal geometry as in \cite{GK,B} and the BV-BFV formalism. In this section we want to recall the most important concepts of \cite{CMW4}. 

\subsection{AKSZ Sigma Models} 
Let us recall the definition of differential graded symplectic manifolds and AKSZ Sigma Models.

\begin{defn}[Differential graded symplectic manifold]
A \textsf{dg symplectic manifold} of degree $d$ is a graded manifold $\calM$ endowed with a symplectic form $\omega=\dd \alpha$ of degree $d$ and a Hamiltonian function $\Theta$ of degree $d+1$ satisfying $\{\Theta,\Theta\}=0$, where $\{\enspace,\enspace\}$ is the Poisson bracket induced by $\omega$. 
\end{defn}

\begin{rem}
This is sometimes also called a \textsf{Hamiltonian manifold}.
\end{rem}

\begin{defn}[AKSZ Sigma Model]
The \textsf{AKSZ Sigma Model} with target a Hamiltonian manifold $(\calM,\omega=\dd\alpha,\Theta)$ of degree $d-1$ is the BV theory, which associates to a $d$-manifold $\Sigma$ the BV manifold $(\calF_\Sigma,\omega_\Sigma,\calS_\Sigma)$, where\footnote{This is the infinite-dimensional graded manifold adjoint to the Cartesian product (internal morphisms).} $\calF_\Sigma=\Map(T[1]\Sigma,\calM)$, $\omega_\Sigma$ is of the form $\omega_\Sigma=\int_\Sigma\omega_{\mu\nu}\delta \textsf{A}^\mu\land \delta\textsf{A}^\nu$,
and $\calS_\Sigma[\textsf{A}]=\int_\Sigma\left(\alpha_\mu(\textsf{A})\dd \textsf{A}^\mu+\Theta(\textsf{A})\right)$, where $\textsf{A}\in\calF_\Sigma$,
$\omega_{\mu\nu}$ are the components of the symplectic form $\omega$, $\alpha_\mu$ are the components of $\alpha$ and $\textsf{A}^\mu$ are the components of the superfield $\textsf{A}$ in local coordinates. 
\end{defn}

In \cite{CMW4}, we study the following type of AKSZ Sigma Models.

\begin{defn}[Split AKSZ Sigma Model]
We call an AKSZ Sigma Model \textsf{split}\footnote{Note that in $BF$-like AKSZ theories we have a target $T^*[d-1](V[1])=V[1]\oplus V^*[d-2]$, where $V$ is some graded vector space, whereas in the ``split'' case the target is of the form $T^*[d-1]M$ for a graded manifold $M$.}, if the target is of the form 
\begin{equation}
\calM=T^*[d-1]M
\end{equation}
with canonical symplectic structure, where $M$ is a graded manifold.
\end{defn}

\subsection{Formal geometry}
We briefly recall the aspects of formal geometry which are most relevant for this paper. 
Let $M$ be a smooth manifold.

\begin{defn}[Generalized exponential map]
A \textsf{generalized exponential map} is a map $\gls{phi}\colon U\to M$, where $U\subset TM$ is an open neighborhood of the zero section, such that $\varphi(x,0)=x$ and $\dd\varphi(x,0)=\id _{T_xM}$.
\end{defn}

\begin{rem}
For $x\in M$ and $y\in T_xM\cap U$ we write $\varphi(x,y)=\varphi_x(y)$.
\end{rem}

\begin{ex}
An example would be the actual exponential map of a torsion-free linear connection.
\end{ex}

\begin{defn}[Formal exponential map]
A \textsf{formal exponential map} is an equivalence class of generalized exponential maps, where two generalized exponential maps are said to be equivalent if their vertical jets at the zero section agree to all orders.
\end{defn}

For a function $f\in C^\infty(M)$, we can produce a section $\sigma\in\Gamma(\Hat{S}T^*M)$ by defining 
\begin{equation}
\sigma_x:=\mathsf{T}\varphi_x^*f,
\end{equation}
where $\gls{T}$ denotes the Taylor expansion in the fiber coordinates around $y=0$. We denote by $\gls{HatS}$ the completed symmetric algebra. Note that we use any representative of $\varphi$ to define the pullback. We denote this section by $\mathsf{T}\varphi^* f$. Moreover, since it only depends on the jets of the representative, it is independent of the choice of representative.

\begin{defn}[Grothendieck connection]
Given a formal exponential map $\varphi$, we can define the associated \textsf{Grothendieck connection}\footnote{This connection can be extended to a differential on the complex of $\Hat{S}T^*M$-valued differential forms on $\Gamma(\bigwedge^\bullet T^*M\otimes \Hat{S}T^*M)$. Since $\Gamma(\bigwedge^\bullet T^*M\otimes \Hat{S}T^*M)$ is the algebra of functions on the formal graded manifold $\calM:=T[1]M\oplus T[0]M$, the extended differential, which we also denote by  $D_\mathsf{G}$, gives $\calM$ the structure of a differential graded manifold. In particular since $D_\mathsf{G}$ vanishes on the body, we may linearize at each $x\in M$ and get an $L_\infty$-algebra structure on $T_xM[1]\oplus T_xM\oplus T_xM$.} $\gls{D}$ on $\Hat{S}T^*M$, given by $D_\mathsf{G}=\dd+R$, where $\dd$ is the de Rham differential and $\gls{R}\in \Gamma(T^*M\otimes TM\otimes \Hat{S}T^*M)$ is a $1$-form with values in derivations of $\Gamma(\Hat{S}T^*M)$, defined in local coordinates by $R_i\dd x^{i}$ with 
\begin{equation}
\label{vector_field}
R_i(x;y) := \left(\left(\frac{\partial \varphi_x}{\partial y}\right)^{-1}\right)^k_j\frac{\partial\varphi_x^{j}}{\partial x^{i}}\frac{\de }{\de y^k}=: Y^k_i(x;y)\frac{\de}{\de y^k}.
\end{equation}
\end{defn}

\begin{rem}
\label{rem:counterterm}
One can show that \eqref{vector_field} does not depend on the choice of coordinates. 
In fact, we are able to find counterterms for the action to correct for a quantum anomaly since by formal geometry one may resolve functions, differential forms, multivector fields, etc., so actually there is a complex with trivial cohomology in all degrees different from zero. 
In particular, we get
\begin{equation}
H^\bullet_{D_\mathsf{G}}(\Gamma(\Hat{S}T^*M))=H^0_{D_\mathsf{G}}(\Gamma(\Hat{S}T^*M))=\mathsf{T}\varphi^*C^\infty(M)\cong C^\infty(M).
\end{equation}
\end{rem}

\subsection{Globalized BV-BFV Quantization}
\label{subsec:globalized_BV-BFV_quantization}
Now one can use the constructions above to formulate a globalized quantum state, which we call the full covariant state as in \cite{CMW4}. For this we need to extend the action by a formal globalization term, where we also lift the fields as the pullback of the formal exponential map at a constant field $\gls{x}\colon \gls{Sigma} \to M$. This corresponds to linearizing the space of fields $\calF_\Sigma$ around these constant maps. Following \cite{BCM} we give the following definition:

\begin{defn}[Formal globalized action]
\label{formal_globalized_action}
For $(\mathsf{X},\boldsymbol{\eta})\in\calF_\Sigma$, we define the \textsf{formal globalized action} by 
\begin{equation}
\label{formal_glob_action}
\gls{TildeS_Sigma,x}[(\hatX,\hateta)]:=\int_\Sigma\left(\Hat{\boldsymbol{\eta}}_i\land\dd\Hat{\mathsf{X}}^{i}+\mathsf{T}\Tilde{\varphi}_x^*\Theta(\mathsf{X},\boldsymbol{\eta})+Y_i^j(x;\hatX)\hateta_j\land \dd x^{i}\right),
\end{equation}
where $\Tilde{\varphi}_x\colon \Map(T[1]\Sigma,T^*[d-1]T_xM)\to \Map(T[1]\Sigma,\calM)$ denotes the lift of the formal exponential map $\varphi_x$ for $x\in \Sigma$ and $(\hatX,\hateta)$ is the preimage of $(\mathsf{X},\boldsymbol{\eta})$ under this lift.
\end{defn}

\begin{rem}
Note that $\mathsf{X}=\varphi_x(\gls{hatX})$ and $\boldsymbol{\eta}=\left(\dd\varphi_x(\hatX)^*\right)^{-1}\gls{hateta}$.
\end{rem}

\begin{rem}
A similar approach to globalization for closed manifolds was done by Grady--Gwilliam, Costello, Grady--Li--Li \cite{GG,Cost1,GLL}.  Their construction is based on the idea that one can replace the target by an $L_\infty$ equivalent one, whereas the one introduced in \cite{BCM} before was based on the idea of using formal geometry to define a symplectomorphism on a neighborhood of each solution in the space of fields to start the perturbation theory. The two approaches are essentially equivalent. However, in \cite{GG,Cost1,GLL} they only get $BF_\infty$ theories since they start with theories of a particular simple type.
We consider more general theories that do not fit into this framework. Here $BF_\infty$ means that one of the two fields appears at most linearly, but this is not the case in our setting (e.g., in the Poisson Sigma Model for a nonlinear Poisson structure). Moreover, one should work around more general solutions than just the constant ones. In principle, one should do formal geometry on the moduli space of solutions. Note also that this construction can be generalized to non AKSZ models. 
\end{rem}

The Feynman rules corresponding to the formal globalized action as in \eqref{formal_glob_action} are given in Figure \ref{fig:FeynmanRules}. 

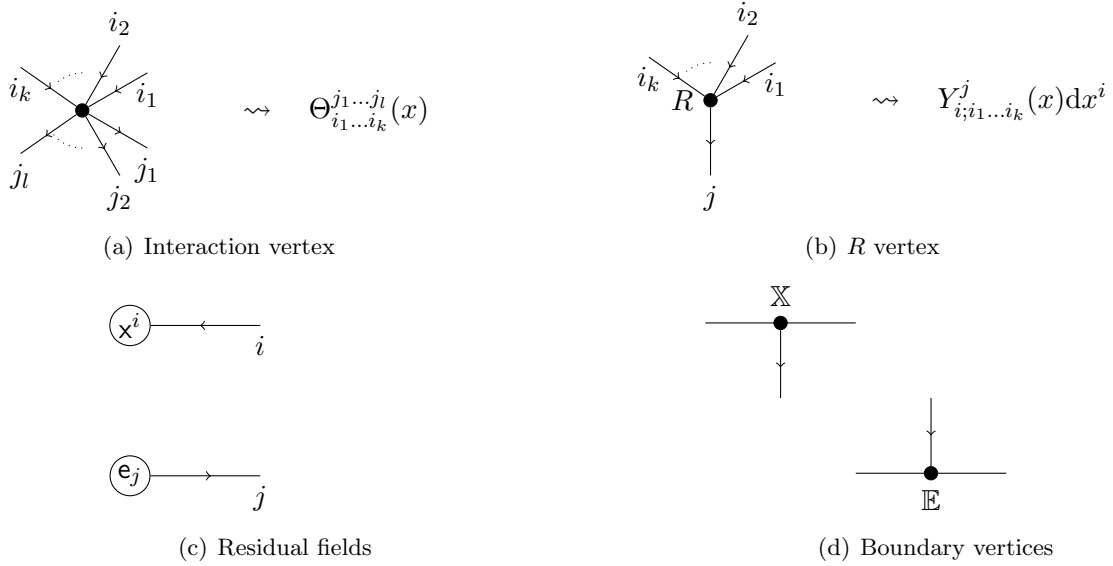
\begin{figure}
\centering
\subfigure[Interaction vertex]{
\centering
\begin{tikzpicture}
\node[vertex] (o) at (0,0) {};
\node[coordinate, label=below:{$i_1$}] at (30:1) {$i_1$}
edge[fermion] (o);
\node[coordinate, label=above:{$i_2$}] at (60:1) {$i_2$}
edge[fermion] (o);
\node[coordinate, label=below:{$i_k$}] at (145:1) {$i_k$}
edge[fermion] (o);
\draw[dotted] (90:0.5) arc (90:130:0.5);
\node[coordinate, label=below:{$j_1$}] (j1) at (-30:1) {};
\node[coordinate, label=below:{$j_2$}] (j2) at (-60:1) {};
\node[coordinate, label=below:{$j_l$}] (j3) at (-145:1) {};
\draw[dotted] (-90:0.5) arc (-90:-145:0.5); 
\draw[fermion] (o) -- (j1);
\draw[fermion] (o) -- (j2);
\draw[fermion] (o) -- (j3);
\node[coordinate,label=right:{$\leadsto\quad\Theta^{j_1\ldots j_l}_{i_1 \ldots i_k}(x)$}] at (2,0) {};
\end{tikzpicture}
}
\hspace{2cm}
\subfigure[$R$ vertex]{
\centering
\begin{tikzpicture}
\node[vertex,label=left:{$R$}] (o) at (0,0) {};
\node[coordinate, label=below:{$i_1$}] at (30:1) {$i_1$}
edge[fermion] (o);
\node[coordinate, label=above:{$i_2$}] at (60:1) {$i_2$}
edge[fermion] (o);
\node[coordinate, label=below:{$i_k$}] at (145:1) {$i_k$}
edge[fermion] (o);
\draw[dotted] (90:0.5) arc (90:130:0.5);
\node[coordinate, label=below:{$j$}] (j1) at (-90:1) {};
\draw[fermion] (o) -- (j1);
\node[coordinate,label=right:{$\leadsto\quad Y^{j}_{i;i_1\ldots i_k}(x)\dd x^i$}] at (2,0) {};
\end{tikzpicture}
}

\hspace{2cm}
\subfigure[Residual fields]{
\centering
\begin{tikzpicture}
\node[circle,draw,inner sep=1pt] (x) at (0,1) {$\mathsf{x}^i$};
\node[coordinate, label=below:{$i$}] (x2) at (30:2) {$i_1$}
edge[fermion] (x);
\node[coordinate, label=below:{$j$}] (e2)at (-30:2) {$i_1$};
\node[circle,draw,inner sep=1pt] (e) at (0,-1) {$\mathsf{e}_j$}
edge[fermion] (e2);
\node[coordinate, label={${}$}] at (4,0) {};
\end{tikzpicture}
}
\hspace{3cm}
\subfigure[Boundary vertices]{
\begin{tikzpicture}
\draw (-1,1) -- (1,1);
\node[vertex, label=above:{$\mathbb{X}$}] (x) at (0,1) {};
\node[coordinate] (b1) at (0,0) {}; 
\draw[fermion] (x) -- (b1);
\draw (1,-1) -- (3,-1);
\node[vertex, label=below:{$\mathbb{E}$}] (e) at (2,-1) {};
\node[coordinate] (b2) at (2,0) {}; 
\draw[fermion] (b2) -- (e);

\end{tikzpicture}
\hspace{2cm}
}
\caption{Summary of Feynman graphs and rules}\label{fig:FeynmanRules}
\end{figure}

\begin{defn}[Principal covariant quantum state]
The \textsf{principal covariant quantum state} $\gls{Tildepsi_Sigma,x}$ is defined as in Definition \ref{principal_part},
using the Feynman rules given in Figure \ref{fig:FeynmanRules} coming from the formal global action $\Tilde{\calS}_{\Sigma,x}$.
\end{defn}

As in the linear case, one needs to define the full covariant state to prove the modified differential Quantum Master Equation.

\begin{defn}[Full covariant quantum state]
\label{full_state_2}
We define the \textsf{full covariant quantum state} $\gls{btpsi_Sigma,x}$ as in Definition \ref{full_covariant_state1}, 
using the Feynman rules in Figure \ref{fig:FeynmanRules} coming from the formal global action $\Tilde{\calS}_{\Sigma,x}$ and additionally with the rules for the boundary vertices as in Figure \ref{fig:composite_field_vertices}.
\end{defn}

One of the main result of \cite{CMW4} is that this state statisfies the globalized version of the modified Quantum Master Equation, which we call the modified differential Quantum Master Equation. It is stated as the following theorem.

\begin{thm}[modified differential Quantum Master Equation for split AKSZ theories]
Consider the full covariant perturbative state $\btpsi_{\Sigma,x}$ as a quantization of an anomaly free and unimodular split AKSZ theory with target $T^*[d-1]\calM$, where $\calM$ is a graded manifold. Then
\begin{equation}
\label{AKSZ_mdQME}
\left(\dr_x -\I\hbar \Delta_{\calV_\Sigma} + \frac{\I}{\hbar} \boldsymbol{\Omega}_{\de \Sigma}\right) \btpsi_{\Sigma,x}=0,
\end{equation}
where we denote by $\dr_x$ the de Rham differential on $M$, the body of the graded manifold $\calM$.
\end{thm}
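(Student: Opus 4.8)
The plan is to reduce \eqref{AKSZ_mdQME} to an identity satisfied by the formal globalized action $\Tilde{\calS}_{\Sigma,x}$ and then to propagate that identity through the perturbative expansion by the BV-BFV Ward identity of Theorem \ref{thm:CMR2}. Recall that $\btpsi_{\Sigma,x}$ is, up to the torsion prefactor, the perturbative BV integral $\int_{\calL\subset\calY'}\ee^{\frac{\I}{\hbar}\Tilde{\calS}_{\Sigma,x}}$, and that the three operators in \eqref{AKSZ_mdQME} act on such an integral as follows: $\frac{\I}{\hbar}\boldsymbol{\Omega}_{\de\Sigma}$ collects the contributions of graphs whose vertices collapse to the boundary $\de\Sigma$; $-\I\hbar\Delta_{\calV_\Sigma}$ produces the contractions of pairs of residual leaves; and $\dr_x$ differentiates the explicit $x$-dependence of the vertex decorations, i.e.\ of $\mathsf{T}\varphi_x^*\Theta$ and of the connection term $Y_i^j(x;\hatX)$.

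First I would establish the classical input, a \emph{differential modified Classical Master Equation} for the action. The extra summand $\int_\Sigma Y_i^j(x;\hatX)\hateta_j\land\dd x^i$ in \eqref{formal_glob_action} is exactly the $R$-part of the Grothendieck connection $D_\mathsf{G}=\dr_x+R$. Using that $D_\mathsf{G}$ is flat and that $\mathsf{T}\varphi_x^*\Theta$ is $D_\mathsf{G}$-closed (Remark \ref{rem:counterterm}), together with $\{\Theta,\Theta\}=0$ for the AKSZ target, one checks the bulk identity
\begin{equation}
\dr_x\Tilde{\calS}_{\Sigma,x}+\tfrac12\bigl(\Tilde{\calS}_{\Sigma,x},\Tilde{\calS}_{\Sigma,x}\bigr)=\pi^*\calS^\de,
\end{equation}
where the round bracket is the BV bracket on bulk fields. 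Concretely, $\dr_x\bigl(\mathsf{T}\varphi_x^*\Theta\bigr)$ is cancelled by the bracket of the free part with the $R$-term, and the self-bracket of the $R$-term vanishes by flatness $\dr_x R+\tfrac12[R,R]=0$, so that only the boundary term survives.

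Next I would pass to the quantum level via Stokes' theorem on the compactified configuration spaces $\mathsf{C}_\Gamma(\Sigma)$. Integrating $\dr\omega_\Gamma$ over $\mathsf{C}_\Gamma(\Sigma)$ produces the sum over its boundary strata, which split into three families: collapses of interior bulk points, which by the classical identity and unimodularity reorganize into the $\Delta_{\calV_\Sigma}$-term and cancel pairwise; collapses onto $\de\Sigma$, which by the very definition of Definition \ref{full_BFV} assemble into $\boldsymbol{\Omega}_{\de\Sigma}$ acting through composite fields; and the faces where the $R$-vertex approaches another vertex, which are matched against the $\dr_x$-insertions. The hypothesis of anomaly-freeness guarantees that the short-loop (tadpole) strata do not contribute, while unimodularity ensures the $\Delta$-type contributions close up without remainder, so that the total collapses to \eqref{AKSZ_mdQME}.

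The main obstacle is precisely this last matching: showing that the hidden faces on which the $R$-vertex collides with a $\Theta$-vertex (or with another $R$-vertex) cancel exactly against the terms obtained by letting $\dr_x$ differentiate the vertex tensors. This is the integrated-graph incarnation of the flatness $\dr_x R+\tfrac12[R,R]=0$, and it requires careful control of signs, of the combinatorial weights $\tfrac{1}{\vert\textnormal{Aut}(\Gamma)\vert}$, and---for the full rather than principal state---of the composite-field bookkeeping that makes $\boldsymbol{\Omega}_{\de\Sigma}$ compatible with the bullet product \eqref{bullet_prod}. Once this configuration-space combinatorics is organized, the equation follows order by order in the graph expansion.
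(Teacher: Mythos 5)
Your proposal follows essentially the same route as the paper: the theorem is imported from \cite{CMW4}, and the argument there (echoed in this paper's proofs of Proposition \ref{prop_curv} and Theorem \ref{thm_corners}) is exactly the one you describe --- a differential modified Classical Master Equation for $\Tilde{\calS}_{\Sigma,x}$ built from the flatness of $D_\mathsf{G}$, followed by Stokes' theorem on the compactified configuration spaces, with the boundary strata sorted into the $\Delta_{\calV_\Sigma}$-, $\boldsymbol{\Omega}_{\de\Sigma}$- and $\dr_x$-contributions. Your identification of the hidden-face cancellation (the $R$-vertex collapsing onto $\Theta$- or $R$-vertices matching the $\dr_x$-differentiation of the vertex tensors) as the crux is exactly where the work in \cite{CMW4} lies.
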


Note that $\gls{qGBFV}:=\left(\dr_x -\I\hbar \Delta_{\calV_\Sigma} + \frac{\I}{\hbar} \boldsymbol{\Omega}_{\de \Sigma}\right)$ is an operator on the total state space $\Hat{\calH}^\calP_{\Sigma,tot}$. Influenced from the classical case, we call it the \textsf{quantum Grothendieck BFV operator}. Another main result of \cite{CMW4} is the following Theorem.

\begin{thm}
\label{thm:flatness_1}
The quantum Grothendieck BFV operator $\nabla_\mathsf{G}$ squares to zero, i.e.
\begin{equation}
(\nabla_\mathsf{G})^2\equiv 0.
\end{equation}
\end{thm}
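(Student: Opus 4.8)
The plan is to establish the operator identity $(\nabla_\mathsf{G})^2 = 0$ by expanding the graded anticommutator of $\nabla_\mathsf{G}$ with itself. Since each of the three summands of $\nabla_\mathsf{G} = \dr_x - \I\hbar\Delta_{\calV_\Sigma} + \frac{\I}{\hbar}\boldsymbol{\Omega}_{\de\Sigma}$ is an odd (degree $+1$) operator, I would first write
\begin{equation}
(\nabla_\mathsf{G})^2 = \dr_x^2 + (\I\hbar)^2\Delta_{\calV_\Sigma}^2 - \tfrac{1}{\hbar^2}\boldsymbol{\Omega}_{\de\Sigma}^2 + \{\dr_x,\, -\I\hbar\Delta_{\calV_\Sigma}\} + \{\dr_x,\, \tfrac{\I}{\hbar}\boldsymbol{\Omega}_{\de\Sigma}\} + \{-\I\hbar\Delta_{\calV_\Sigma},\, \tfrac{\I}{\hbar}\boldsymbol{\Omega}_{\de\Sigma}\}.
\end{equation}
The three diagonal terms vanish individually: $\dr_x^2 = 0$ is the flatness of the de Rham differential on the body $M$; $\Delta_{\calV_\Sigma}^2 = 0$ is the defining nilpotency of the BV Laplacian on the finite-dimensional residual fields (recalled above); and $\boldsymbol{\Omega}_{\de\Sigma}^2 = 0$ is precisely the second assertion of Theorem \ref{thm:CMR2}.

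Next I would dispose of the two mixed anticommutators that involve $\Delta_{\calV_\Sigma}$. The BV Laplacian is a second-order operator in the residual-field variables $(\mathsf{x},\mathsf{e})$ with coefficients independent of the base point $x$, whereas $\dr_x$ differentiates only in the $x$-directions; acting on disjoint sets of variables, they give $\{\dr_x, \Delta_{\calV_\Sigma}\} = 0$. Similarly, $\boldsymbol{\Omega}_{\de\Sigma}$ is assembled from the boundary-field variables $(\mathbb{X},\mathbb{E})$ and their functional derivatives, which are independent of the residual fields, so $\{\Delta_{\calV_\Sigma}, \boldsymbol{\Omega}_{\de\Sigma}\} = 0$ as well. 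This leaves the single term $\{\dr_x,\, \frac{\I}{\hbar}\boldsymbol{\Omega}_{\de\Sigma}\}$, which carries the entire content of the theorem.

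The hard part is to show $\{\dr_x, \boldsymbol{\Omega}_{\de\Sigma}\} = 0$. The subtlety is that the globalized boundary operator depends genuinely on $x$: its defining integrals $\sigma_{\Gamma'} = \int_{\Tilde{\mathsf{C}}_{\Gamma'}(\mathbb{H}^d)}\omega_{\Gamma'}$ use the globalized vertex tensors $\Theta(x)$ together with the Grothendieck-connection $R$-vertex, which carries a factor $\dd x^i$ and therefore contributes in positive form degree on $M$. I would compute $\dr_x\boldsymbol{\Omega}_{\de\Sigma}$ graph by graph and integrate by parts on each compactified configuration space $\Tilde{\mathsf{C}}_{\Gamma'}(\mathbb{H}^d)$ via Stokes' theorem: using $D_\mathsf{G}=\dd+R$, the $x$-variation of the vertex decorations is rewritten as an insertion of an $R$-vertex, and the resulting exact term produces a sum over boundary strata of $\Tilde{\mathsf{C}}_{\Gamma'}(\mathbb{H}^d)$. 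The strata coming from collisions of collapse points reassemble into the composition governing $\boldsymbol{\Omega}_{\de\Sigma}^2$, while the remaining faces are controlled by the structure equation satisfied by $R$, i.e. by the flatness $D_\mathsf{G}^2 = 0$ of the classical Grothendieck connection (equivalently, by the concentration of $H^\bullet_{D_\mathsf{G}}$ in degree zero recorded in Remark \ref{rem:counterterm}).

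I expect the main obstacle to be exactly the bookkeeping in this last step: one must match every face of $\Tilde{\mathsf{C}}_{\Gamma'}(\mathbb{H}^d)$ either to a term in $\boldsymbol{\Omega}_{\de\Sigma}^2$ or to a term generated by the structure equation for $R$, and verify that all signs — arising from the orientations of the configuration spaces and from the $\Z$-grading — conspire so that nothing is left over. As a cleaner route and a consistency check, I would in parallel package $\dr_x + \frac{\I}{\hbar}\boldsymbol{\Omega}_{\de\Sigma}$ as a single \emph{covariant} boundary operator and derive its nilpotency from the globalized modified Classical Master Equation on the boundary BFV data, quantized under the stated anomaly-free and unimodular hypotheses; this reduces the claim to the flatness already established for the ungauged operator in Theorem \ref{thm:CMR2} together with $D_\mathsf{G}^2 = 0$, so that the delicate orientation analysis need only be carried out once.
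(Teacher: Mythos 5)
Your overall strategy of expanding $(\nabla_\mathsf{G})^2$ into graded anticommutators is reasonable, and the terms involving $\Delta_{\calV_\Sigma}$ are disposed of correctly, but the central step contains a genuine error: you cannot split the remaining contribution into two separately vanishing pieces $\boldsymbol{\Omega}_{\de\Sigma}^2=0$ and $\{\dr_x,\boldsymbol{\Omega}_{\de\Sigma}\}=0$. The operator appearing in $\nabla_\mathsf{G}$ is the \emph{globalized} boundary operator $\Tilde{\boldsymbol{\Omega}}_{\de\Sigma}$, an inhomogeneous differential form on $\mathscr{P}$ whose positive form-degree components come from the $R$-vertices (cf.\ Propositions \ref{E_rep2} and \ref{X_rep2}); Theorem \ref{thm:CMR2}(2) asserts nilpotency only of the non-globalized operator $\boldsymbol{\Omega}_{(0)}$ and does not apply to $\Tilde{\boldsymbol{\Omega}}_{\de\Sigma}$. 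In fact neither of your two pieces vanishes on its own: what holds is the single Maurer--Cartan equation
\begin{equation*}
\dd_x\Tilde{\boldsymbol{\Omega}}_{\de\Sigma}+\tfrac12\bigl[\Tilde{\boldsymbol{\Omega}}_{\de\Sigma},\Tilde{\boldsymbol{\Omega}}_{\de\Sigma}\bigr]=0,
\end{equation*}
to which $(\nabla_\mathsf{G})^2=0$ is equivalent. Already its form-degree-one component, $\dd_x\boldsymbol{\Omega}^\E_{(0)}+[\boldsymbol{\Omega}^\E_{(0)},\boldsymbol{\Omega}^\E_{(1)}]=0$ (Equation \eqref{deg1}), shows that $\dd_x\boldsymbol{\Omega}_{(0)}\neq 0$ (the star-product coefficients depend on $x$ through $\mathsf{T}\varphi_x^*\Pi$) and that the commutator with the degree-one part is what cancels it; this identity is precisely the statement that $\calD_\mathsf{G}$ is a derivation of the fiberwise star product. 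Your own Stokes computation betrays the inconsistency: you say the boundary strata ``reassemble into the composition governing $\boldsymbol{\Omega}_{\de\Sigma}^2$'', which would be vacuous if you had already disposed of $\boldsymbol{\Omega}_{\de\Sigma}^2$ by citing Theorem \ref{thm:CMR2}, and it would then leave $\dd_x\boldsymbol{\Omega}_{\de\Sigma}=0$, which is false.

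The repair is essentially your final ``cleaner route'': treat $\dr_x+\frac{\I}{\hbar}\Tilde{\boldsymbol{\Omega}}_{\de\Sigma}$ as a single object and prove the Maurer--Cartan equation for $\Tilde{\boldsymbol{\Omega}}_{\de\Sigma}$ directly, rather than the two separate vanishings. This is in fact how the paper proceeds: Theorem \ref{thm:flatness_1} is imported from \cite{CMW4}, and for the Poisson Sigma Model the Maurer--Cartan equation is verified degree by degree in Proposition \ref{alg_str} --- degree $0$ is associativity of the star product, degree $1$ is the derivation property of $\calD_\mathsf{G}$, degree $2$ is $\dd_xA+\frac12[A,A]=[F,\enspace]_\star$, degree $3$ is the Bianchi identity, and degree $4$ is trivial. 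If you redirect your configuration-space Stokes argument at this combined equation, with the $R$-insertions accounting for the mixed terms, the proof goes through.
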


\begin{rem}
One can also think of $\nabla_\mathsf{G}$ as a flat connection on the total bunlde of states \cite{CMW4}.
\end{rem}

\section{Review of the Poisson Sigma Model}
\label{sec:PSM}
The Poisson Sigma Model \cite{I,SS1,SS2} is a 2-dimensional topological field theory, with important relation to deformation quantization \cite{K,CF1,CF2}, see also Appendix \ref{app:conn_PSM}, and in particular a special case of an AKSZ Sigma Model. In this section we will very briefly review some aspects of its classical version. 
\subsection{Classical Poisson Sigma Model}
Let us fix a Poisson manifold $(\gls{Poisson_mnf},\gls{Poisson_str})$. 
\begin{defn}[Classical Poisson Sigma Model]
The classical Poisson Sigma Model associates to a smooth, oriented, compact  and connected $2$-manifold $\Sigma$ (usually called the worldsheet)  the space of fields $F_{\Sigma} = \textbf{VBun}(T\Sigma,T^*\mathscr{P})$ of vector bundle maps from $T\Sigma$ to $T^*\mathscr{P}$. An element of $F_\Sigma$ will be identified with a pair $(X,\eta)$ where $\gls{X} \colon \Sigma \to \mathscr{P}$ is the base map and $\gls{eta} \in \Gamma(\Sigma,T^*\Sigma \otimes X^*T^*\mathscr{P})$ is a 1-form on $\Sigma$ with values in $X^*T^*\mathscr{P}$. The action functional is 
\begin{equation}
\label{PSM}
S_\Sigma[(X,\eta)]=\int_\Sigma\left(\langle \eta, \dd X \rangle + \frac{1}{2} \langle\Pi(X), \eta \land\eta \rangle\right),
\end{equation}
where $\langle\enspace , \enspace\rangle $ denotes the pairing between vectors and covectors. 
\end{defn}

\begin{rem}
In local coordinates $x^i$ on $\mathscr{P}$, we can write $\eta = \eta_i\dd x^i$ and $X = (X^1,\ldots, X^n)$. Then the action reads
\begin{equation}
\label{PSMlocal}
S_\Sigma[(X,\eta)]=\int_\Sigma\left(\eta_i\land \dd X^{i}+\frac{1}{2}\Pi^{ij}(X)\eta_i\land \eta_j\right),
\end{equation}
where we use the Einstein summation convention. 
\end{rem}

\subsection{BV-BFV extension}
The Poisson Sigma Model is a \textsf{gauge theory}, in the sense that the Lagrangian is invariant under infinitesimal gauge transformations. More precisely there is a distribution on the space of fields which leaves the action invariant and closes on shell, i.e. once the equations of motions are imposed. In particular, the infinitesimal symmetries for the Poisson Sigma Model are given by the following gauge transformations
\begin{align}
\delta_\beta X^{i}&=\Pi^{ij}(X)\beta_j,\\
\delta_\beta \eta_i&=-\dd \beta_i-\partial_i\Pi^{ij}(X)\eta_j\beta_k,
\end{align}
where $\beta_i$ is an infinitesimal parameter that is a section of $X^*T^*M$. If $\de\Sigma\not=\varnothing$, we also want that $\beta_i$ vanishes on $\de\Sigma$ since one wants $\eta$ to vanish on the boundary. 

Because the gauge symmetries only close on shell, the BRST formalism fails, and one needs to revert to the BV formalism \cite{CF1,CF2} on closed surfaces and to the BV-BFV formalism on surfaces with boundary \cite{CMR1,CMR2}.
The BV extended action and space of fields for the Poisson Sigma Model can be constructed from the AKSZ formalism as discussed in \cite{CF4}. 

\begin{defn}[BV extended Poisson Sigma Model]
\label{BV_PSM}
The BV theory associated to the Poisson Sigma Model is given by the triple $$(\calF_\Sigma,\omega_\Sigma,\calS_\Sigma),$$ where the \textsf{BV space of fields} is given by
\begin{equation}
\calF_{\Sigma} := \Map(T[1]\Sigma, T^*[1]\mathscr{P}) \ni (\sfX,\sfeta)
\end{equation}
with $\gls{superX} \colon T[1]\Sigma \to \mathscr{P}$ a map and $\gls{superEta}$ a section of $\sfX^*T^*[1]\mathscr{P}$, the \textsf{BV action} is given by 
\begin{equation}
\label{eq:BV_action}
\calS_{\Sigma}[(\sfX,\sfeta)] := \int_{T[1]\Sigma}\left(\langle \sfeta, D \sfX \rangle + \frac{1}{2}\langle \Pi(\sfX), \sfeta \wedge \sfeta \rangle\right),
\end{equation} 
where $D=\theta^\mu\frac{\partial}{\partial u_\mu}$ is the differential on $T[1]\Sigma$ for local even coordinates $\{u_\mu\}$ on $\Sigma$ and odd coordinates $\{\theta^\mu\}$, and the \textsf{BV symplectic form} is given by 
\begin{equation}
\label{BVform}
\omega_{\Sigma} := \int_{\Sigma} \delta \sfX \wedge \delta\sfeta.
\end{equation}
\end{defn}

\begin{rem}
In local coordinates on $\mathscr{P}$ we can write
\begin{equation}
\calS_{\Sigma}[(\sfX,\sfeta)] := \int_{\Sigma}\left( \sfeta_i \land\dd \sfX^{i} + \frac{1}{2} \Pi^{ij}(\sfX) \sfeta_i \wedge \sfeta_j\right).
\end{equation}
\end{rem}

On closed surfaces, this action satisfies the Classical Master Equation
\begin{equation}
\label{CME}
(\calS_{\Sigma},\calS_{\Sigma}) = 0. 
\end{equation}
Here $(\enspace,\enspace)$ is the odd Poisson bracket (BV bracket) associated to the odd symplectic form $\omega_\Sigma$.

One can reformulate the Classical Master Equation in terms of the cohomological vector field as $Q_{\Sigma}\left(\calS_{\Sigma}\right) = 0$ where $Q_{\Sigma} = \left(\calS_{\Sigma},\enspace\right)$ in local coordinates on $\mathscr{P}$ is given by
\begin{equation}
\label{Qdef}
Q_{\Sigma}  = \int_{\Sigma}\left( \left(\dd \sfX^i + \Pi^{ij}(\sfX) \wedge \sfeta_j\right)\wedge
\frac{\delta}{\delta \sfX^i} + \left(\dd \sfeta_i + \frac{1}{2}\frac{\partial}{\partial x^i}\Pi^{jk}(\mathsf{X})\sfeta_j\land\sfeta_k\right) \wedge \frac{\delta}{\delta \sfeta_i}\right).
\end{equation}

In the BV-BFV formalism the boundary conditions are left unspecified and hence the Classical Master Equation no longer makes sense. However, one can still define the symplectic form $\omega_\Sigma$ by \eqref{BVform}, the action by \eqref{eq:BV_action} and the vector field $Q_\Sigma$ by \eqref{Qdef}.

\begin{defn}[BV-BFV extended Poisson Sigma Model]
The BV-BFV theory associated to the Poisson Sigma Model is given by associating to a manifold $\Sigma$ with boundary $\partial\Sigma$ the BV-BFV manifold $$(\calF^\partial_{\partial\Sigma},\pi_\Sigma,\omega^\partial_{\partial\Sigma}=\delta\alpha^\partial_{\partial\Sigma},\calS^\partial_{\partial\Sigma},Q^\partial_{\partial\Sigma})$$ over the BV manifold $(\calF_\Sigma,\omega_\Sigma,\calS_\Sigma)$, where 
\begin{align}
\calF^\partial_{\de\Sigma} &:= \Map(T[1]\de\Sigma,T^*[1]\mathscr{P}), \\
\alpha^\de_{\de \Sigma} &:= \int_{\de \Sigma} \sfeta \wedge \delta \sfX, \\
Q^\partial_{\de \Sigma}  &:= \int_{\de\Sigma}\left( \left(\dd \sfX^i + \Pi^{ij}(\sfX) \wedge \sfeta_j\right)\wedge
\frac{\delta}{\delta \sfX^i} + \left(\dd \sfeta_i + \frac{1}{2}\frac{\partial}{\partial x^i}\Pi^{jk}(\mathsf{X})\sfeta_j\land\sfeta_k\right) \wedge \frac{\delta}{\delta \sfeta_i}\right), \\
\calS^\partial_{\de \Sigma} &:=  \int_{\de\Sigma}\left(\langle \sfeta, \dd \sfX \rangle + \frac{1}{2}\langle \Pi(\sfX), \sfeta \wedge \sfeta \rangle\right),
\end{align} 
and the map $\pi_\Sigma\colon \calF_{\Sigma} \to \calF_{\de\Sigma}^\de$ given by restriction of maps. 
\end{defn}


As shown in \cite{CMR1}, these then satisfy the axioms of a BV-BFV theory\footnote{This is automatic for theories which admit an AKSZ formulation.}: 
\begin{align}
Q_{\Sigma}^2 &= 0, \\
\delta\pi_\Sigma(Q_{\Sigma}) &= Q_{\de\Sigma}^\de, \\ 
\iota_{Q_{\Sigma}}\omega_{\Sigma} &= \delta \calS_{\Sigma} + \pi_\Sigma^*\alpha^\de_{\de\Sigma}. \label{mCMEI}
\end{align}
Moreover, $\iota_{Q^{\partial}_{\partial\Sigma}}\omega^\partial_{\partial \Sigma}=\delta\calS^{\partial}_{\partial\Sigma}$. As in Subsection \ref{classicalBV-BFV}, we have the modified Classical Master Equation $\iota_{Q_\Sigma}\iota_{Q_\Sigma}\omega_{\Sigma}=2\pi_\Sigma^*\calS^\de_{\de\Sigma}$.

\section{Globalized BV-BFV Quantization of the Poisson Sigma Model}
\label{sec:GlobalizedPSM}
In this section, we analyze in detail the construction explained in Section \ref{sec:AKSZ} applied to the Poisson Sigma Model.
In particular, we want to describe the BFV boundary operator $\boldsymbol{\Omega}_{\de\Sigma}$ for the Poisson Sigma Model in the case of a worldsheet where we have a single boundary component endowed with a certain polarization (see Figure \ref{simply_pol}). As a preparation for the remainder of the paper, we also discuss how the boundary operator behaves under certain modifications of the formal globalized action. 

\subsection{Globalization at the classical level} 
\label{subsec:pert_quant}
We consider the Poisson Sigma Model action as a perturbation of the quadratic part of the action, 
\begin{equation}
\calS_{0,\Sigma} = \int_{\Sigma} \langle \sfeta , \dd \sfX \rangle .
\end{equation}
Recall that we expand around critical points of $\calS_{0,\Sigma}$, which in particular satisfy $\dd \sfX=0$. Hence the ghost number 0 component of $\sfX$ is a constant map, which we denote by its image $x \in \mathscr{P}$. As discussed in \cite{CF3,BCM,CMW2} and Appendix F of \cite{CMR2}, it makes sense to perform perturbative quantization around points in the moduli space of classical solutions. Since the Euler--Lagrange equations for the Poisson Sigma Model are given by $\dd\mathsf{X}+\Pi(\mathsf{X})\boldsymbol{\eta}=0$, we will perturb around the classical solution $\mathsf{X}=x=const.$ and $\boldsymbol{\eta}=0$ and gauge equivalent solutions.
Hence for the Poisson Sigma Model the appropriate moduli space is given by 
\begin{equation}
\label{modulispace} 
\calM_0=\{(x,0)| x \text{ const map to } \mathscr{P}\} \cong \mathscr{P}.
\end{equation}
In this special case we have $\calM_0 \subset \calF_{\Sigma}$. Instead of fixing a single classical solution $x \in \calM_0$ and expanding around it, we want to vary $x$ itself. As in Subsection \ref{subsec:globalized_BV-BFV_quantization} we consider the fields $\hatX$ and $\hateta$ given by $\mathsf{X}=\varphi_x(\hatX)$ and $\boldsymbol{\eta}=((\dd\varphi_x)^*)^{-1}\hateta$. 

We get a \textsf{formally globalized action} for the Poisson Sigma Model as in Definition \ref{formal_globalized_action} by 
\begin{equation}
\Tilde\calS_{\Sigma,x}[(\hatX,\hateta)]=\underbrace{\int_\Sigma\hateta_i\land\dd_\Sigma\hatX^{i}}_{=:\Hat{\calS}_{0,\Sigma}}+\underbrace{\frac{1}{2}\int_\Sigma(\T\varphi_x^*\Pi)^{ij}\left(\hatX\right)\hateta_i\land\hateta_j}_{=:\Hat{\calS}_{\Pi,\Sigma,x}}+\underbrace{\int_\Sigma Y_i^j\left(x;\hatX\right)\hateta_j\land\dd_{\mathcal{M}_0}x^{i}}_{=:\Hat{\calS}_{\Sigma,x,R}},\label{eq:GlobalPSMAction}
\end{equation}
where we denote by $\dd_{\mathcal{M}_0}$ and $\dd_\Sigma$ the de Rham differentials on $\mathcal{M}_0$ and $\Sigma$ respectively (we only write it once and leave out the indication every time it is clear).

\begin{figure}[ht]
\begingroup%
  \makeatletter%
  \providecommand\color[2][]{%
    \errmessage{(Inkscape) Color is used for the text in Inkscape, but the package 'color.sty' is not loaded}%
    \renewcommand\color[2][]{}%
  }%
  \providecommand\transparent[1]{%
    \errmessage{(Inkscape) Transparency is used (non-zero) for the text in Inkscape, but the package 'transparent.sty' is not loaded}%
    \renewcommand\transparent[1]{}%
  }%
  \providecommand\rotatebox[2]{#2}%
  \ifx\svgwidth\undefined%
    \setlength{\unitlength}{187.22693743bp}%
    \ifx\svgscale\undefined%
      \relax%
    \else%
      \setlength{\unitlength}{\unitlength * \real{\svgscale}}%
    \fi%
  \else%
    \setlength{\unitlength}{\svgwidth}%
  \fi%
  \global\let\svgwidth\undefined%
  \global\let\svgscale\undefined%
  \makeatother%
  \begin{picture}(1,0.84519289)%
    \put(0,0){\includegraphics[width=\unitlength]{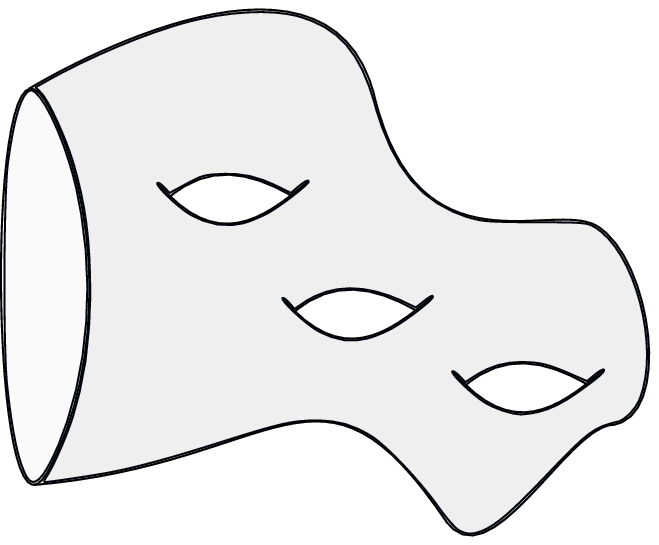}}%
    \put(0.0614748,0.00872313){\color[rgb]{0,0,0}\makebox(0,0)[lb]{\smash{$\de\Sigma$}}}%
    \put(0.18601559,0.31797255){\color[rgb]{0,0,0}\makebox(0,0)[lb]{\smash{$\mathbb{F}$}}}%
    \put(0.29317854,0.69458329){\color[rgb]{0,0,0}\makebox(0,0)[lb]{\smash{$\Sigma$}}}%
  \end{picture}%
\endgroup%

%
%

\caption{Example of a higher genus worldsheet with one connected boundary component and different polarization. Here $\mathbb{F}$ denotes either $\mathbb{X}$ or $\E$.}
\label{simply_pol}
\end{figure}

\subsection{The boundary BFV operator} 
\label{subsec:operator}
In this subsection we want to see how $\boldsymbol{\Omega}_{\de\Sigma}$ is constructed for a formal linearized action but without any globalization term, i.e. for $\Hat{\calS}_{\Sigma,x} = \Hat{\calS}_{0,\Sigma} + \Hat{\calS}_{\Pi,\Sigma,x}$ in the notation of Equation \eqref{eq:GlobalPSMAction}.
We can formulate the boundary operator $\boldsymbol{\Omega}_{\de \Sigma}$ for the Poisson Sigma Model by the usual construction of the collapsing of subgraphs $\Gamma'$ using Definition \ref{full_BFV}
for the non-globalized theory. We briefly review the results of \cite[Section 4.8]{CMR2}, where the boundary operator of the non-globalized theory was computed. Recall the splitting of the space of fields as in \eqref{split}
\begin{align}
\begin{split}
\mathcal{F}_\Sigma&\to\calB_{\partial \Sigma}^\calP\oplus \mathcal{V}^\calP_{\Sigma}\oplus \mathcal{Y}'\\
(\mathsf{X},\boldsymbol{\eta})&\mapsto (\gls{bbX},\gls{bbE})\oplus (\gls{sfx},\gls{sfe})\oplus (\gls{scrX},\gls{scrE}).
\end{split}
\end{align}
We now describe the BFV boundary operator for the different representations\footnote{We call the $\frac{\delta}{\delta\E}$-polarization the $\mathbb{X}$-representation and vice versa.}.

\subsubsection{$\E$-representation}
We look first at the $\E$-representation.
\begin{prop}
\label{E_rep}
In the $\E$-representation, the boundary operator is given by
\begin{equation}
\label{eq:E-rep}
\boldsymbol{\Omega}^\E_{\de\Sigma}=\Omega^\E_0+\boldsymbol{\Omega}_{\textnormal{pert}}^\E,
\end{equation}
where 
\begin{equation}
\boldsymbol{\Omega}_{\textnormal{pert}}^\E=\int_{\de\Sigma} \sum_{I,J,K,R,S}\frac{(-\I\hbar)^{\vert K\vert-\vert I\vert-\vert J\vert+1}}{(\vert K\vert+\vert R\vert+\vert S\vert)!}\partial_K B^{IJ}(\mathsf{T}\varphi_x^*\Pi)[\E_I\E_R][\E_J\E_S]\frac{\delta^{\vert K\vert+\vert R\vert+\vert S\vert}}{\delta[\E_K]\delta[\E_R]\delta[\E_S]},
\end{equation}
where the $B^{IJ}$s are defined as the coefficients in the star product on $C^\infty(\R^n)[[\hbar]]$ by 
\begin{equation}
f\gls{star} g=fg+\sum_{I,J}B^{IJ}\frac{\partial^{\vert I\vert}}{\partial x^{I}}f\frac{\partial^{\vert J\vert}}{\partial x^J}g=fg-\frac{\I\hbar}{2}\sum_{ij}(\mathsf{T}\varphi_x^*\Pi)^{ij}\frac{\partial f}{\partial x^{i}}\frac{\partial g}{\partial x^j}+O(\hbar^2),
\end{equation}
where $I$, $J$ are multi-indices and $i$ and $j$ are indices and $B^{IJ}=0$ if $\vert I\vert=0$ or $\vert J\vert=0$, and $\star$ denotes Kontsevich's star product \cite{K}.
\end{prop}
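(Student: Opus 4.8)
The plan is to specialize the general construction of the full BFV boundary operator from Definition \ref{full_BFV} to the Poisson Sigma Model with $d=2$, and then to recognize the resulting configuration-space integrals as the weights defining Kontsevich's star product. First I would enumerate the graphs $\Gamma_2'$ contributing to $\boldsymbol{\Omega}^\E_{\textnormal{pert}}$: since we work with the non-globalized formal action $\Hat{\calS}_{\Sigma,x}=\Hat{\calS}_{0,\Sigma}+\Hat{\calS}_{\Pi,\Sigma,x}$ (the $R$-vertex being absent), the only bulk interaction vertex comes from $\frac{1}{2}(\mathsf{T}\varphi_x^*\Pi)^{ij}(\hatX)\hateta_i\land\hateta_j$. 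It carries two $\hateta$-legs contracted with the indices of $\Pi^{ij}$ together with an arbitrary number of $\hatX$-legs decorated by the symmetric derivatives $\partial_{i_1}\cdots\partial_{i_s}\Pi^{ij}$; these are exactly Kontsevich's bivector vertices. The boundary vertices on $\de\Sigma$ are decorated by composite fields $[\E_I]$ and the leaves by variational derivatives $\frac{\delta}{\delta[\E_J]}$, all evaluated at the common point of collapse.

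The heart of the argument is the identification $\sigma_{\Gamma_2'}=\int_{\Tilde{\mathsf{C}}_{\Gamma_2'}(\mathbb{H}^2)}\omega_{\Gamma_2'}$ with Kontsevich's weights. For $d=2$ the upper half-plane $\mathbb{H}^2$ is precisely Kontsevich's configuration space, and the limiting propagator at the point of collapse is the Kontsevich angle form, so that the product of these propagators with the bivector vertex tensors reproduces exactly the integrand of a Kontsevich graph. Using the compactification recalled in Appendix \ref{app:Conf}, I would show that each contributing $\Gamma_2'$ determines a Kontsevich graph whose weight equals $\sigma_{\Gamma_2'}$, and conversely that every Kontsevich graph arises in this way. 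Summing over all such graphs with the automorphism factors and loop-powers of $\I\hbar$ prescribed in Definition \ref{full_BFV} then reassembles precisely the coefficients $B^{IJ}$ defined by $f\star g=fg+\sum_{I,J}B^{IJ}\partial_I f\,\partial_J g$; as a first-order check, the single-vertex graph reproduces $B^{(i)(j)}=-\frac{\I\hbar}{2}(\mathsf{T}\varphi_x^*\Pi)^{ij}$.

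It then remains to see how the two slots of the bidifferential operators $B^{IJ}\partial_I(\cdot)\,\partial_J(\cdot)$ are realized on functionals of $\E$. The composite fields $[\E_I\E_R]$ and $[\E_J\E_S]$ encode the attachment of the boundary legs, while the derivatives $\frac{\delta^{\vert K\vert+\vert R\vert+\vert S\vert}}{\delta[\E_K]\delta[\E_R]\delta[\E_S]}$ implement the differentiation on the state; the prefactor $\frac{(-\I\hbar)^{\vert K\vert-\vert I\vert-\vert J\vert+1}}{(\vert K\vert+\vert R\vert+\vert S\vert)!}$ is fixed by matching the $\hbar$-grading of the star product against the loop-number and automorphism normalizations, together with the symmetrization of repeated leaves. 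Writing $\boldsymbol{\Omega}^\E_{\textnormal{pert}}$ out with these decorations yields the claimed formula, while the free part $\Omega^\E_0$ is the quantization of the kinetic term $\langle\hateta,\dd\hatX\rangle$ (after integration by parts) and is the general free operator recalled above.

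The main obstacle is the precise analytic and combinatorial bookkeeping underlying the identification of $\sigma_{\Gamma_2'}$ with Kontsevich's weights: one must verify that the limiting propagators on the collapsed configuration space coincide with the Kontsevich angle functions, that the collapse of subgraphs reproduces exactly the composite-field-and-derivative pattern, and that all signs, symmetry factors and powers of $\I\hbar$ match. This is exactly the detailed computation carried out in \cite[Section 4.8]{CMR2}, on which I would rely.
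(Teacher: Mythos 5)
Your overall strategy is the paper's: specialize Definition \ref{full_BFV} to $d=2$, recognize the limiting propagators at the collapse point as Kontsevich's angle form, and resum the collapsing graphs into the coefficients $B^{IJ}$ of the star product for $\mathsf{T}\varphi_x^*\Pi$. However, you skip the one step that actually carries the proposition: the dimension count showing that a collapsing subgraph in the $\E$-representation has \emph{exactly two} boundary vertices. The paper argues that a subgraph $\Gamma'$ with $n$ bulk and $k$ boundary vertices collapses onto a configuration space of dimension $2n+k-2$, while every bulk vertex is a $\Pi$-vertex emitting two arrows that must stay inside $\Gamma'$ (an arrow leaving the subgraph kills the contribution by the boundary condition on the propagator), so the form degree of $\omega_{\Gamma'}$ is $2n$; equating the two forces $k=2$. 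Without this, there is no reason the answer should be a \emph{bi}differential operator at all — in the $\mathbb{X}$-representation the same construction produces arbitrarily many boundary vertices and an $A_\infty$-type structure (Proposition \ref{X_rep}). You instead start from the assumption that the result has ``two slots'' and work backwards, which is assuming the shape of the conclusion.

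A second, smaller omission: you do not account for the index decomposition that produces the specific formula. The paper splits the inward leaves of $\Gamma'$ into $R$ (arriving at the boundary vertex $u_0$), $S$ (arriving at $u_1$), and $K$ (arriving at bulk vertices of $\Gamma'$), with $I$ and $J$ labelling arrows from bulk vertices into $u_0$ and $u_1$ respectively; the leaves landing on bulk vertices are what generate the derivative $\partial_K B^{IJ}$ of the star-product coefficients, and the composite fields $[\E_I\E_R]$, $[\E_J\E_S]$ and the functional derivatives in $[\E_K],[\E_R],[\E_S]$ follow from this bookkeeping. Your appeal to ``matching the $\hbar$-grading and symmetrization'' for the prefactor, and your closing deferral of ``the precise analytic and combinatorial bookkeeping'' to \cite[Section 4.8]{CMR2}, effectively outsources the content of the proof rather than supplying it.
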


\begin{proof}[Proof]
Consider a graph $\Gamma'$ with $n$ bulk vertices and $k$ boundary vertices collapsing on the $\E$-boundary. Note that we have $\dim \Tilde{\mathsf{C}}_{\Gamma'}(\mathbb{H}^d)=2n+k-2$, which has to be the same as the form degree of $\omega_{\Gamma'}$ so that the integral 
\begin{equation}
\sigma_{\Gamma'}=\int_{\Tilde{\mathsf{C}}_{\Gamma'}(\mathbb{H}^d)}\omega_{\Gamma'}
\end{equation}
does not vanish. 

Thus we need to have $2n+k-2=2n$, since $n$ is the number of points in the bulk which represent the Poisson tensor, i.e. emitting two arrows that have to remain inside the collapsing subgraph (otherwise the contribution vanishes by the boundary condition on the propagator). Hence we get $k=2$, i.e. the graph has exactly two boundary vertices. We label one boundary vertex by $u_0$ and the other one by $u_1$. Let $L$ be a multiindex labeling the inward leaves of $\Gamma'$. We decompose $L$ as $L=(R,K,S)$, where $R,K,S$ are again multiindices, representing different types of inward leaves. $R$ labels the leaves arriving directly at $u_0$, $S$ labels the leaves arriving directly to $u_1$ and $K$ labels the leaves arriving at some bulk vertices of $\Gamma'$. Moreover, we label by the multiindex $I$ the arrows arriving at $u_0$ from some bulk vertices of $\Gamma'$ and by the multiindex $J$ the arrows arriving at $u_1$ from some bulk vertices of $\Gamma'$ (see Figure \ref{fig:E-rep1}). Since we have exactly two boundary vertices ($k=2$), the graphs when considered without leaves are given by the same graphs as in Kontsevich's star product. If we sum over all graphs having the same multiindices $K,I,J$, we obtain the $K$'th derivative of the $B^{IJ}$ coefficient in the star product, since the limiting propagator coincides with Kontsevich's propagator, and hence we get \eqref{eq:E-rep}.
\end{proof}
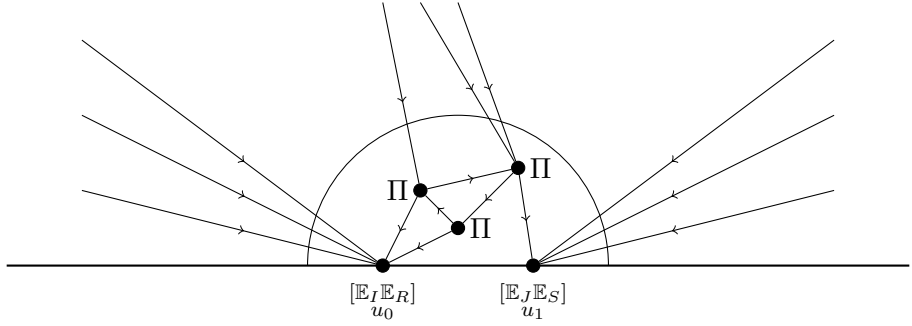
\begin{figure}[ht]
\begin{tikzpicture}

\draw[thick] (-6,0) -- (6,0); 
\node[vertex] (bdry1) at (-1,0) {};
\node[coordinate, label=below:{$\substack{[\E_{I}\E_R]\\ u_0}$}] at (bdry1.south) {};
\node[vertex] (bdry2) at (1,0) {};
\node[coordinate, label=below:{$\substack{[\E_J\E_S]\\ u_1}$}] at (bdry2.south) {};


\node[vertex] (bulk1) at (-0.5,1) {}; 
\node[] (pi1) at (-0.8,1){$\Pi$};
\node[vertex] (bulk2) at (0.8,1.3) {};
\node[] (pi2) at (1.1,1.3){$\Pi$};
\node[vertex] (bulk3) at (0,0.5) {};
\node[] (pi3) at (0.3,0.5){$\Pi$};


\draw[] (0:2) arc (0:180:2); 
\draw[fermion] (-1,3.5) -- (bulk1);
\draw[fermion] (-0.5,3.5) -- (bulk2);
\draw[fermion] (0,3.5) -- (bulk2);
\draw[fermion] (bulk2) -- (bdry2);
\draw[fermion] (bulk1) -- (bdry1);
\draw[fermion] (bulk1) -- (bulk2);
\draw[fermion] (bulk2) -- (bulk3);
\draw[fermion] (bulk3) -- (bdry1);
\draw[fermion] (bulk3) -- (bulk1);
\draw[fermion] (5,3) -- (bdry2);
\draw[fermion] (5,2) -- (bdry2);
\draw[fermion] (5,1) -- (bdry2);
\draw[fermion] (-5,3) -- (bdry1);
\draw[fermion] (-5,2) -- (bdry1);
\draw[fermion] (-5,1) -- (bdry1);
\end{tikzpicture} 
\caption{An example of a subgraph collapsing as in the description. Here we have three incoming arrows to the boundary for the collapsing graph $\Gamma'$ on the right side corresponding to the index $S$, three incoming arrows to the boundary on the left side corresponding to the index $R$, three incoming arrows to $\Gamma'$ corresponding to the index $K$, two incoming arrows to $u_0$ from $\Gamma'$ corresponding to the index $I$ and one incoming arrow to $u_1$ from $\Gamma'$ corresponding to the index $J$.}
\label{fig:E-rep1}
\end{figure}

To analyze the BFV boundary operator, we introduce the notion of certain multiplication operators appearing from collapsing graphs on the boundary endowed with the $\mathbb{E}$-representation. Therefore we give the following definition:

\begin{defn}[Exponential multiplication operator]
\label{mult_op}
The \textsf{exponential multiplication operator} for the boundary field $\mathbb{E}$ is given by the map 
\begin{align}
\ee^{\frac{\I}{\hbar}[\mathbb{E}]y}\colon \Hat{\calH}_{\Sigma}&\to \Hat{\calH}_\Sigma[[y]]\\
\phi&\mapsto \ee^{\frac{\I}{\hbar}[\mathbb{E}]y}\phi:=\sum_{k\geq 0}\left(\frac{\I}{\hbar}\right)^k\sum_{\substack{I\\ \vert I\vert =k}}y^{I}\left(\int_{\de \Sigma}[\mathbb{E}_{I}]\right)\cdot\phi
\end{align}
On the total space $\Hat{\calH}_{\Sigma,\textnormal{tot}}$, the multiplication operator is given by a map 
\begin{equation}
\Hat{\calH}_{\Sigma,\textnormal{tot}}\to \Hat{\calH}_{\Sigma,\textnormal{tot}}\otimes \Hat{S}T^*\mathscr{P}.
\end{equation}
\end{defn}

\begin{rem}
Note that the exponential multiplication operator takes regular functionals to regular functionals. The construction in \cite{CF3}, recalled in Appendix \ref{app:PSM}, yields a bundle $\mathcal{E} = \widehat{S}T^*\mathscr{P}[[\hbar]]$ of $\star$-algebras on $\mathscr{P}$ by applying Kontsevich's deformation quantization in every tangent space.

Thus, we can define a map 
\begin{equation}
\star\colon \Gamma(\Hat{\calH}_{\Sigma,\textnormal{tot}}\otimes \Hat{S}T^*\mathscr{P})\otimes\Gamma(\Hat{\calH}_{\Sigma,\textnormal{tot}}\otimes \Hat{S}T^*\mathscr{P})\to \Gamma(\Hat{\calH}_{\Sigma,\textnormal{tot}}\otimes \Hat{S}T^*\mathscr{P}),
\end{equation}
given by multiplication in $\Gamma(\Hat{\calH}_{\Sigma,\textnormal{tot}})$ and the fiber wise star product in $\Hat{S}T^*\mathscr{P}[[\hbar]]$, i.e. we consider the tensor product of the two algebra bundles $\Hat{\calH}_{\Sigma,\textnormal{tot}}$ and $\Hat{S}T^*\mathscr{P}$ over $\mathbb{C}[[\hbar]]$. 
\end{rem}

\begin{rem}
Note that we can define a map from $\Gamma(\Hat{\calH}_{\Sigma,\textnormal{tot}}\otimes \Hat{S}T^*\mathscr{P})$ to the space of operators, by replacing the fiber coordinates $y^{I}$ by functional derivatives $\frac{\delta}{\delta[\E_{I}]}$. Thus, if we have a section $\sigma$ of $\Hat{\calH}_{\Sigma,\textnormal{tot}}$, we can define the boundary operator $\boldsymbol{\Omega}_{\textnormal{pert}}^\E$ by 
\begin{equation}
\label{boundary_op_exp}
\boldsymbol{\Omega}_{\textnormal{pert}}^\E\sigma=\left(\ee^{\frac{\I}{\hbar}[\E]y}\star \ee^{\frac{\I}{\hbar}[\E]y}\right)\Big|_{y=\frac{\delta}{\delta[\E]}}\sigma.
\end{equation}
Then one can check that \eqref{eq:E-rep} is given by the standard quantization\footnote{Choosing a leaf $b\in\calB^\calP_{\de \Sigma}$ one considers it's conjugated momentum $-\I\hbar \frac{\delta}{\delta b}$.} of the boundary action 
\begin{equation}
\label{boundary_action_exp}
\calS^\de_{\de\Sigma}=\int_{\de\Sigma}\left(\langle\Hat{\boldsymbol{\eta}},\dd \Hat{\mathsf{X}}\rangle+\frac{1}{2}\left[\ee^{\frac{\I}{\hbar}\Hat{\boldsymbol{\eta}}},\ee^{\frac{\I}{\hbar}\Hat{\boldsymbol{\eta}}}\right]_\star(\Hat{\mathsf{X}})\right),
\end{equation}
where $\langle\enspace,\enspace\rangle$ denotes the canoncial pairing of $T_x\mathscr{P}$ with $T^*_x\mathscr{P}$, where $x$ is the constant background field $\Sigma\to\mathscr{P}$, $\gls{star_commutator}$ is the star commutator, and $\star$ is the star product in $T_x\mathscr{P}$. Note that the interesting part here is that we can view the BFV boundary operator as the standard quantization of a deformed boundary action. 
\end{rem}

\begin{rem}The fact that $(\boldsymbol{\Omega}^\mathbb{E}_{\de \Sigma})^2 = 0$ is equivalent to the associativity of the Kontsevich star product. 
\end{rem}

\subsubsection{$\mathbb{X}$-representation}
Next, we consider the $\mathbb{X}$-representation.
\begin{prop}
\label{X_rep}
In the $\mathbb{X}$-representation, the boundary operator is given by
\begin{equation}
\label{eq:X-rep}
\boldsymbol{\Omega}^\mathbb{X}_{\de\Sigma}=\Omega^\mathbb{X}_0+\boldsymbol{\Omega}_{\textnormal{pert}}^\mathbb{X},
\end{equation}
where 
\begin{multline}
\boldsymbol{\Omega}_{\textnormal{pert}}^\mathbb{X}=\sum_{k=0}^\infty\frac{1}{k!}\int_{\de \Sigma} \sum_{L,I_1,\ldots,I_k,R_1,\ldots,R_k}\frac{(-\I\hbar)^{\vert L\vert-(\vert I_1\vert+\dotsm+\vert I_k\vert)+1}}{(\vert L\vert+\vert R_1\vert+\dotsm+\vert R_k\vert)!}\cdot\\
\cdot\partial^L a_{I_1,\ldots,I_k}(\mathsf{T}\varphi_x^*\Pi)\prod_{j=1}^k[\mathbb{X}^{I_j}\mathbb{X}^{R_j}]\frac{\delta^{\vert L\vert+\vert R_1\vert+\dotsm +\vert R_k\vert}}{\delta[\mathbb{X}^L]\delta[\mathbb{X}^{R_1}]\dotsm\delta[\mathbb{X}^{R_k}]},
\end{multline}
where $a_{I_1,\ldots,I_k}$ are given by the sum of the weights over all Feynman graphs with $k$ boundary vertices and $\vert I_j\vert$ outgoing arrows for $1\leq j\leq k$.
\end{prop}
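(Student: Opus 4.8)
The plan is to mirror the proof of Proposition \ref{E_rep}, analysing the local contribution of a subgraph $\Gamma'$ that collapses onto the $\mathbb{X}$-boundary $\de_1\Sigma$, while keeping track of the features that change because the propagator now satisfies the opposite boundary condition. As in the $\E$-representation, the weight of such a graph is $\sigma_{\Gamma'}=\int_{\Tilde{\mathsf{C}}_{\Gamma'}(\mathbb{H}^2)}\omega_{\Gamma'}$, and it is nonzero only when the form degree of $\omega_{\Gamma'}$ — that is, the number of internal propagators, each a $1$-form — equals $\dim\Tilde{\mathsf{C}}_{\Gamma'}(\mathbb{H}^2)=2m+k-2$, where $m$ is the number of bulk (Poisson) vertices and $k$ the number of boundary vertices. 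First I would note that every bulk vertex still carries the two $\boldsymbol{\eta}$-legs of the Poisson tensor together with an arbitrary number of $\mathbb{X}$-legs coming from the Taylor expansion of $\mathsf{T}\varphi_x^*\Pi$, so that the vertex tensors are the derivatives of $\mathsf{T}\varphi_x^*\Pi$.

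The crucial structural point to extract is why, in contrast with Proposition \ref{E_rep}, the number $k$ of boundary vertices is not forced to equal $2$. In the $\E$-representation the boundary condition kept the two legs of each Poisson vertex internal, pinning the number of internal edges to $2m$ and hence $k$ to $2$; in the $\mathbb{X}$-representation the propagator instead survives when an arrow terminates on $\de_1\Sigma$, so arrows may join bulk vertices to any number of boundary vertices, and the degree-matching equation $\deg\omega_{\Gamma'}=2m+k-2$ acquires admissible solutions for every $k\geq 0$ — the extra configuration-space dimension carried by each additional boundary point being compensated by the additional bulk-to-boundary propagators. This is the origin of the sum over $k$. I would then record that, for fixed $k$ and fixed numbers $\vert I_j\vert$ of arrows incident to the $j$-th boundary vertex, summing the leafless weights $\sigma_{\Gamma'}$ over all admissible graphs is, since the limiting propagator agrees with Kontsevich's propagator (Appendix \ref{app:PSM}), exactly the definition of the coefficients $a_{I_1,\ldots,I_k}$.

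It then remains to reinstate the leaves and read off the operator using the composite-field mechanism of Definition \ref{full_BFV}. I would sort the leaves into those, labelled $L$, attached to bulk vertices — each such leaf adds an $\mathbb{X}$-derivative to a Poisson vertex, turning $a_{I_1,\ldots,I_k}$ into $\partial^L a_{I_1,\ldots,I_k}$, and simultaneously produces a functional derivative $\frac{\delta}{\delta[\mathbb{X}^L]}$ — and those, labelled $R_j$, attached to the $j$-th boundary vertex, which enlarge its decoration to the composite field $[\mathbb{X}^{I_j}\mathbb{X}^{R_j}]$ and produce the derivatives $\frac{\delta}{\delta[\mathbb{X}^{R_j}]}$. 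Collecting the symmetry factors $\frac{1}{k!}$ for the boundary vertices and $\frac{1}{(\vert L\vert+\vert R_1\vert+\dotsm+\vert R_k\vert)!}$ for the leaves, and converting the powers of $\I\hbar$ attached to loops and to the functional derivatives into a single power of $-\I\hbar$ (whose exponent $\vert L\vert-(\vert I_1\vert+\dotsm+\vert I_k\vert)+1$ is produced, exactly as in Proposition \ref{E_rep}, by rewriting the loop number through the edge and leaf counts of $\Gamma'$), assembles the stated formula for $\boldsymbol{\Omega}^{\mathbb{X}}_{\textnormal{pert}}$. The part I expect to be the main obstacle is precisely this bookkeeping: matching the two families of leaves to the composite-field functional derivatives and confirming that the leafless graphs contribute nothing beyond $a_{I_1,\ldots,I_k}$, with the signs and $\hbar$-powers tracked correctly; the geometric input, namely the degree count and the boundary behaviour of the propagator, is routine once it has been set up.
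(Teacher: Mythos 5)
Your proposal is correct and follows essentially the same route as the paper's proof: the dimension count $2m+k-2$ against the form degree of $\omega_{\Gamma'}$, the observation that on the $\mathbb{X}$-boundary the propagator's boundary condition no longer pins the number of internal edges to $2m$ (so that arrows may leave the collapsing subgraph and $k$ is unconstrained), the identification of the leafless sums with the coefficients $a_{I_1,\ldots,I_k}$, and the reinstatement of the $L$- and $R_j$-leaves via composite fields. The only cosmetic discrepancy is the orientation of the boundary--bulk edges — in the paper's conventions the $I_j$-arrows start at the boundary vertices $u_j$ and end at bulk vertices of $\Gamma'$, rather than the reverse — which affects neither the degree count nor the final formula.
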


\begin{proof}
In the $\mathbb{X}$-representation there can be arbitrarily many vertices on the boundary, since the arrows emanating from the bulk vertices can now leave the graph. Denote the number of vertices on the boundary by $k$. Then we have a similar construction as for the $\E$-representation, only with the difference that for each boundary vertex we can have arbitrarily many outgoing arrows either out of the collapsing graph $\Gamma'$ (left or right) and arbitrarily many outgoing arrows going into $\Gamma'$. Label the vertices $u$ on the boundary by $1,\ldots,k$.  We denote by $L$ the multiindex labeling the leaves, which emanate from bulk vertices of $\Gamma'$, by $I_j$ the multiindices labeling the arrows which start at $u_j$ and end at some bulk vertices of $\Gamma'$ for $1\leq j\leq k$ and by $R_j$ the multiindices labeling the outward leaves which start at $u_j$ for $1\leq j\leq k$ (see Figure \ref{fig:X-rep1}). Summing over all such graphs $\Gamma'$, we get \eqref{eq:X-rep}.
\end{proof}

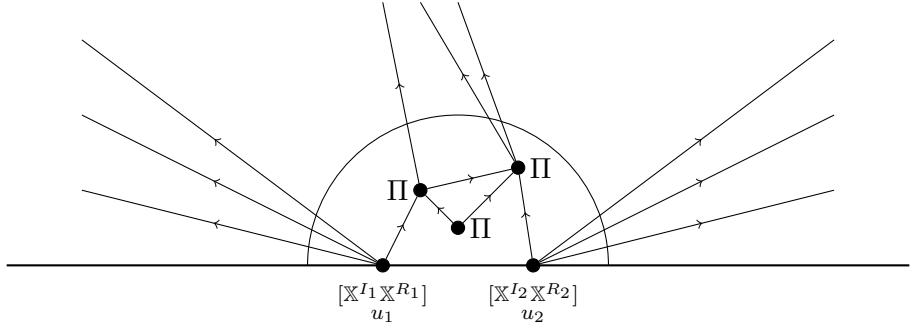
\begin{figure}[ht]
\begin{tikzpicture}

\draw[thick] (-6,0) -- (6,0); 
\node[vertex] (bdry1) at (-1,0) {};
\node[coordinate, label=below:{$\substack{[\mathbb{X}^{I_1}\mathbb{X}^{R_1}]\\u_1}$}] at (bdry1.south) {};
\node[vertex] (bdry2) at (1,0) {};
\node[coordinate, label=below:{$\substack{[\mathbb{X}^{I_2}\mathbb{X}^{R_2}]\\ u_2}$}] at (bdry2.south) {};

\node[vertex] (bulk1) at (-0.5,1) {}; 
\node[] (pi1) at (-0.8,1){$\Pi$};
\node[vertex] (bulk2) at (0.8,1.3) {};
\node[] (pi2) at (1.1,1.3){$\Pi$};
\node[vertex] (bulk3) at (0,0.5) {};
\node[] (pi3) at (0.3,0.5){$\Pi$};


\draw[] (0:2) arc (0:180:2); 
\draw[fermion] (bulk1) -- (-1,3.5);
\draw[fermion] (bulk2) -- (-0.5,3.5);
\draw[fermion] (bulk2) -- (0,3.5);
\draw[fermion] (bdry2) -- (bulk2);
\draw[fermion] (bdry1) -- (bulk1);
\draw[fermion] (bulk1) -- (bulk2);
\draw[fermion] (bulk3) -- (bulk2);
\draw[fermion] (bulk3) -- (bulk1);
\draw[fermion] (bdry2) -- (5,3);
\draw[fermion] (bdry2) -- (5,2);
\draw[fermion] (bdry2) -- (5,1);
\draw[fermion] (bdry1) -- (-5,3);
\draw[fermion] (bdry1) -- (-5,2);
\draw[fermion] (bdry1) -- (-5,1);
\end{tikzpicture} 
\caption{An example of a subgraph collapsing as in the description. We consider a term for $k=2$ as before and we label them by $u_1$ and $u_2$. Here we have three outgoing arrows for the collapsing graph $\Gamma'$ on the right side corresponding to the index $R_1$, three outgoing arrows on the left side corresponding to the index $R_1$, three outgoing arrows to $\Gamma'$ corresponding to the index $L$ and one incoming to $\Gamma'$ out of each of the two boundary points corresponding to the indices $I_1$ and $I_2$.}
\label{fig:X-rep1}
\end{figure}

\begin{rem}
The coefficients $a_{I_1,\ldots,I_k}$ can be regarded as the coefficients of an $A_\infty$-algebra \cite{CF6}. The fact that $\boldsymbol{\Omega}_{\de \Sigma}^\mathbb{X}$ squares to zero corresponds to the $A_\infty$-relations. 
\end{rem}

\subsection{The globalized BFV operator}
\label{subsec:BFV_op}
We now give a formulation for $\boldsymbol{\Omega}_{\de \Sigma}$ where we also consider the globalization term $\Hat{\calS}_{\Sigma,x,R}$. Recall that graphically this amounts to introducing new vertices emanating only a single arrow, representing the vector field $R$ as explained in the Feynman rules of Section \ref{sec:AKSZ}. This means that $\boldsymbol{\Omega}_{\de\Sigma}$ now becomes an inhomogeneous form on the Poisson manifold $\mathscr{P}$, since $R$ is a 1-form on $\mathscr{P}$. As before, we distinguish between the $\bbE$- and the $\bbX$-representation. 
\subsubsection{$\bbE$-representation}
\label{subsubsec:E-rep2} 
We start with the $\E$-representation.
\begin{prop}
\label{E_rep2}
In the $\mathbb{E}$-representation, the globalized boundary operator is given by
\begin{equation}
\label{eq:E-rep2}
\gls{tildeEOmega}=\boldsymbol{\Omega}^\mathbb{E}_{(0)}+\boldsymbol{\Omega}_{(1)}^\mathbb{E}+\boldsymbol{\Omega}_{(2)}^\E,
\end{equation}
where 
\begin{align}
\boldsymbol{\Omega}^\mathbb{E}_{(0)}&:=\boldsymbol{\Omega}^\mathbb{E}_{\de \Sigma},\\
\boldsymbol{\Omega}_{(1)}^\mathbb{E}&=\sum_{I,J,L} \int_{\de \Sigma} \frac{(-\I\hbar)^{|L| - |I|}}{(|L|+|J|)!}\de_LA^I(R,\mathsf{T}\varphi_x^*\Pi) [\E_I\E_J] \frac{\delta^{|L|+|J|}}{\delta[\E_ L]\delta[\E_J]}, \label{eq:E-rep2_1}\\
\boldsymbol{\Omega}_{(2)}^\E&= \sum_L \int_{\de \Sigma} \frac{(-\I\hbar)^{|L|+1}}{|L|!}\de_LF(R,R,\mathsf{T}\varphi_x^*\Pi)\frac{\delta^{|L|}}{\delta [\E_L]}, \label{eq:E-rep2_2}
\end{align}
where $A^I$ denotes the sum of weights of graphs with a single boundary vertex, where the incoming arrows at the boundary vertex are labeled by $I$, and $F$ denotes the sum of weights of graphs with no boundary vertices.
\end{prop}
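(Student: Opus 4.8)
The plan is to extend the configuration-space dimension count used in the proof of Proposition~\ref{E_rep}, now allowing the new $R$-vertices produced by the globalization term $\Hat{\calS}_{\Sigma,x,R}$ in \eqref{eq:GlobalPSMAction}. Following Definition~\ref{full_BFV}, the operator $\Tilde{\boldsymbol{\Omega}}^\E_{\de\Sigma}$ is assembled from collapsing subgraphs $\Gamma'$ on the $\E$-boundary; in the globalized theory such a $\Gamma'$ may contain, besides $n$ bulk $\Pi$-vertices and $k$ boundary vertices, some number $m$ of bulk $R$-vertices. Each $R$-vertex emits a single $\hateta$-arrow and carries one factor of the de Rham one-form $\dd x^i$ on $\calM_0\cong\mathscr{P}$, so the de Rham degree on $\mathscr{P}$ of a given contribution equals its number $m$ of $R$-vertices. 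This is exactly the mechanism by which $\Tilde{\boldsymbol{\Omega}}^\E_{\de\Sigma}$ becomes an inhomogeneous form on $\mathscr{P}$, and organizing the sum by $m$ yields its decomposition into form-degrees $0$, $1$ and $2$.

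First I would carry out the dimension count. The reduced compactified configuration space $\Tilde{\mathsf{C}}_{\Gamma'}(\mathbb{H}^2)$ of $n+m$ bulk and $k$ boundary points has dimension $2(n+m)+k-2$. On the other hand, as in Proposition~\ref{E_rep}, each $\Pi$-vertex contributes two propagators that must stay inside the collapsing subgraph (else the contribution is killed by the boundary condition on the propagator), and each $R$-vertex contributes one such propagator, so the configuration-space form degree of $\omega_{\Gamma'}$ is $2n+m$. For $\sigma_{\Gamma'}=\int_{\Tilde{\mathsf{C}}_{\Gamma'}(\mathbb{H}^2)}\omega_{\Gamma'}$ to be nonzero these must agree:
\begin{equation}
2n+m=2(n+m)+k-2 \quad\Longleftrightarrow\quad m+k=2.
\end{equation}
Hence only the three types $(m,k)\in\{(0,2),(1,1),(2,0)\}$ contribute, and these produce precisely the three summands in \eqref{eq:E-rep2}, of de Rham degrees $0$, $1$ and $2$ respectively.

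Next I would identify the weight of each type. The case $(0,2)$ has no $R$-vertices and reproduces verbatim the computation of Proposition~\ref{E_rep}, giving $\boldsymbol{\Omega}^\E_{(0)}=\boldsymbol{\Omega}^\E_{\de\Sigma}$. For $(1,1)$ there is a single boundary vertex carrying the composite field $[\E_I\E_J]$ and a single $R$-vertex; summing the weights of all such graphs over the configuration space (with $I$ the arrows arriving at the boundary vertex, $J$ its direct leaves, and $L$ the leaves landing on bulk vertices) reproduces the coefficient $A^I(R,\mathsf{T}\varphi_x^*\Pi)$, the connection term of Kontsevich's $L_\infty$-morphism recalled in Appendix~\ref{app:globalization}, and the same $\hbar$- and leaf-bookkeeping as in Proposition~\ref{E_rep} yields \eqref{eq:E-rep2_1}. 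Finally $(2,0)$ has two $R$-vertices and no boundary vertex, so the weight sum is the curvature term $F(R,R,\mathsf{T}\varphi_x^*\Pi)$, producing \eqref{eq:E-rep2_2}.

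The main obstacle will be matching the weight sums in the $(1,1)$ and $(2,0)$ cases to the specific tensors $A^I$ and $F$ as defined in the appendix. This requires checking that the limiting propagators at the point of collapse, together with the $R$-vertex tensors $Y^j_i(x;\hatX)$ expanded in $\hatX$, assemble into exactly the graphs and weights defining the connection and curvature of $\calD_\mathsf{G}$ in the globalized star product, and that the powers of $-\I\hbar$ and the symmetry factors $(\vert L\vert+\vert J\vert)!^{-1}$ and $\vert L\vert!^{-1}$ correctly track the leaf multiplicities and $\vert\textnormal{Aut}(\Gamma')\vert$. Both are the natural globalized analogues of the index- and $\hbar$-bookkeeping already carried out in Proposition~\ref{E_rep}.
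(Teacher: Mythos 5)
Your proposal is correct and follows essentially the same route as the paper: the identical dimension count on $\Tilde{\mathsf{C}}_{\Gamma'}(\mathbb{H}^2)$ (the paper writes it with $n=m+r$ total bulk vertices and arrives at $r+k=2$, which is your $m+k=2$ after relabeling), the same three-case analysis $(0,2)$, $(1,1)$, $(2,0)$, and the same identification of the resulting weight sums with $\boldsymbol{\Omega}^\E_{\de\Sigma}$, the connection term $A$, and the curvature term $F$ from Appendix~\ref{app:globalization}.
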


\begin{rem} 
Recall that in the globalization of the Poisson Sigma Model after \cite{CF3}, briefly reviewed in Appendix \ref{app:globalization}, the choice of a formal exponential map on $\mathscr{P}$ induces a Fedosov connection $\gls{calD_G} = \dd + \gls{A}$ on the bundle of $\star$-algebras given by applying Kontsevich formality for $(T_x\mathscr{P},\mathsf{T}\varphi_x^*\mathscr{P})$ for every $x \in \mathscr{P}$. Here $\mathcal{D}_\mathsf{G}$ arises by ``quantizing'' the Grothendieck connection $D_\mathsf{G}$. 
In particular, the graphs appearing in $\boldsymbol{\Omega}^\E_{(1)}$ are exactly the ones appearing in the definition of the connection $1$-form $A$ as in \eqref{eq:defn_A}. 
The connection $\mathcal{D}_\mathsf{G}$ is not flat, $(\mathcal{D}_\mathsf{G})^2\sigma = [F,\sigma]_\star$. The graphs appearing in $\boldsymbol{\Omega}^\E_{(2)}$ are exactly the ones appearing in the Definition of the curvature $2$-form $\gls{F}$ as in \eqref{eq:defn_F}. Note that, by the notation as before, we can also write 
\begin{align}
\label{boundary_op_exp_con}
\boldsymbol{\Omega}_{(1)}^\mathbb{E}&=\left(A(R,\mathsf{T}\varphi^*_x\Pi)(\ee^{\frac{\I}{\hbar}[\E]y})\right)\Big|_{y=\frac{\delta}{\delta[\E]}},\\
\label{boundary_op_exp_curv}
\boldsymbol{\Omega}_{(2)}^\E&=F(R,R,\mathsf{T}\varphi_x^*\Pi)\left(\frac{\delta}{\delta[\E]}\right).
\end{align}
\end{rem}

\begin{proof}[Proof of Proposition \ref{E_rep2}]
We have seen that degree counting implies that there are exactly two boundary vertices in a collapsing graph. Now we have to take the $R$ vertices into account. Consider a collapsing graph with $n$ bulk and $k$ boundary vertices. Then the dimension of the corresponding configuration space is $2n + k -2$. On the other hand, there are now two types of bulk vertices: Suppose there are $m$ vertices labeled by the Poisson bivector field (emitting two arrows) and $r$ vertices labeled by the vector field $R$ (emitting one arrow). Since arrows cannot leave the collapsing graph, the total form degree is $2m + r$, which has to be equal to $2n + k -2$. Since $n = m+r$, this implies that $r + k = 2$. This means there can be either zero, one or two vertices labeled by $R$ with two, one or zero boundary vertices respectively, as shown in Figure \ref{fig:Erepresentationgraphs}. 
\begin{figure}[ht]
\subfigure[$r=0, k=2$]{
\begin{tikzpicture}

\draw (-3,0) -- (3,0); 
\node[vertex] (bdry1) at (-1,0) {};
\node[coordinate, label=below:{$[\mathbb{E}^{i}\mathbb{E}^{j}]$}] at (bdry1.south) {};
\node[vertex] (bdry2) at (1,0) {};
\node[coordinate, label=below:{$[\mathbb{E}^{k}\mathbb{E}^{l}]$}] at (bdry2.south) {};

\node[vertex] (bulk1) at (30:1) {};
 \node[] (r1) at (20:1.3) {$\Pi$};
\node[vertex] (bulk2) at (60:1) {};
\node[] (r2) at (80:1.1) {$\Pi$};
\node[vertex] (bulk3) at (130:1) {};
\node[] (p1) at (130:1.3) {$\Pi$};


\node[coordinate, label=above:{$i_1$}] (b2) at (60:2.5) {$i_2$};
\node[coordinate, label=below:{$i_2$}] (b3) at (160:2.5) {$i_3$};
\draw[] (0:1.5) arc (0:180:1.5); 
\draw[fermion] (b2) -- (bulk2);
\draw[fermion] (b3) -- (bulk3);
\draw[fermion] (bulk1) -- (bulk2);   
\draw[fermion] (bulk2) -- (bulk3); 
\draw[fermion] (bulk3) -- (bdry1); 
\draw[fermion] (bulk1) -- (bdry2);
\draw[fermion] (bulk1) -- (bdry1);
\draw[fermion] (bulk3) -- (bdry2);
\end{tikzpicture} 
}
\subfigure[$r=1,k=1$]{
\begin{tikzpicture}

\draw (-3,0) -- (3,0); 
\node[vertex] (bdry1) at (-1,0) {};
\node[coordinate, label=below:{$[\mathbb{E}^{i}]$}] at (bdry1.south) {};


\node[vertex] (bulk1) at (30:1) {};
 \node[] (r1) at (20:0.7) {$R$};
\node[vertex] (bulk2) at (60:1) {};
\node[] (r2) at (80:1.1) {$\Pi$};
\node[vertex] (bulk3) at (130:1) {};
\node[] (p1) at (130:0.6) {$\Pi$};


\node[coordinate, label=above:{$i_1$}] (b2) at (60:2.2) {$i_2$};
\draw[] (0:1.5) arc (0:180:1.5); 
\draw[fermion] (b2) -- (bulk2);
\draw[fermion] (bulk2) -- (bulk1);   
\draw[fermion] (bulk3) -- (bulk1);
\draw[fermion] (bulk2) -- (bulk3); 
\draw[fermion] (bulk3) -- (bdry1); 
\draw[fermion] (bulk1) -- (bdry1);
\end{tikzpicture} 
}
\subfigure[$r=2,k=0$]{
\begin{tikzpicture}

\draw (-3,0) -- (3,0); 


\node[vertex] (bulk1) at (30:1) {};
 \node[] (r1) at (20:0.7) {$R$};
\node[vertex] (bulk2) at (60:1) {};
\node[right] (r2) at (bulk2.east) {$R$};
\node[vertex] (bulk3) at (130:1) {};
\node[] (p1) at (130:0.6) {$\Pi$};


\node[coordinate, label=below:{$i_1$}] (b3) at (145:2.3) {$i_3$};
\draw[] (0:1.5) arc (0:180:1.5); 
\draw[fermion] (b3) -- (bulk3);
\draw[fermion] (bulk1) -- (bulk2);  
\draw[fermion] (bulk3) to [bend left=60] (bulk2);
\draw[fermion] (bulk2) -- (bulk3); 
\draw[fermion] (bulk3) -- (bulk1);
\end{tikzpicture} 
}
\caption{Possible graphs in the $\E$-representation. }\label{fig:Erepresentationgraphs}
\end{figure}
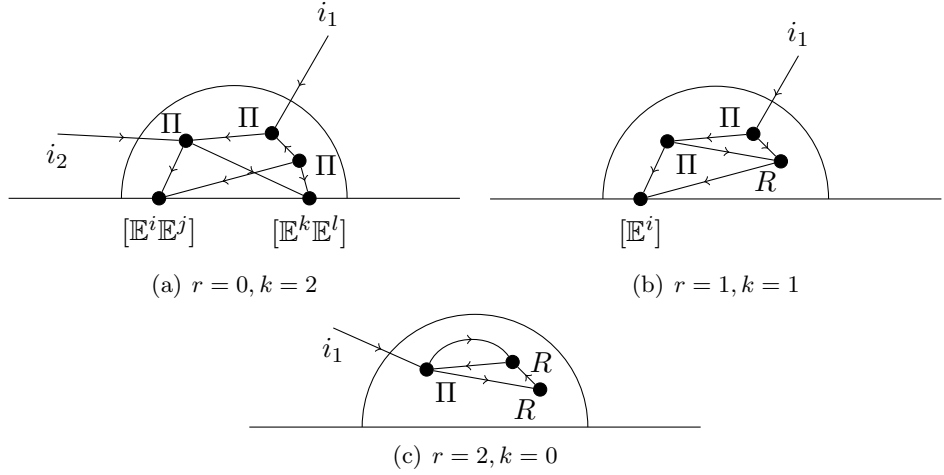

The first contribution $r=0$ and $k=2$ is exactly the operator $\boldsymbol{\Omega}^\E_{(0)}$ given in \eqref{eq:E-rep} from the non-globalized case. We get graphs with exactly one boundary vertex labeled by $R$ and graphs with exactly two boundary vertices labeled by $R$.

In the case $r=1$ and $k=1$ we obtain precisely the graphs with a single boundary vertex and a single $R$ bulk vertex (there can be an arbitrary number of vertices labeled by $\Pi$). This proves Equation \eqref{eq:E-rep2_1}.
In the case $r=2$ and $k=0$ we obtain Equation \eqref{eq:E-rep2_2}. 
\end{proof}
\subsubsection{$\bbX$-representation}
Next, we consider the $\mathbb{X}$-representation.
\begin{prop}
\label{X_rep2}
In the $\mathbb{X}$-representation, the globalized boundary operator is given by
\begin{equation}
\label{eq:X-rep2}
\gls{tildeXOmega}=\sum_{j=0}^{\dim(\mathscr{P})}\boldsymbol{\Omega}_{(j)}^\mathbb{X},
\end{equation}
where $\boldsymbol{\Omega}^\mathbb{X}_{(0)}:=\boldsymbol{\Omega}_{\de \Sigma}^\mathbb{X}$ and $\boldsymbol{\Omega}_{(j)}^\mathbb{X}$ is the sum of all graphs with $j$ vertices labeled by $R$ for $1\leq j\leq \dim(\mathscr{P})$.
\end{prop}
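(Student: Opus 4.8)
The plan is to proceed exactly as in the proof of Proposition \ref{E_rep2}, applying the construction of the full BFV boundary operator (Definition \ref{full_BFV}) to the globalized action \eqref{eq:GlobalPSMAction}, whose only new ingredient compared to the non-globalized case is the term $\Hat{\calS}_{\Sigma,x,R}=\int_\Sigma Y_i^j(x;\hatX)\hateta_j\land\dd x^{i}$. Graphically this introduces the $R$-vertices of Figure \ref{fig:FeynmanRules}, each emitting a single arrow and carrying the weight $Y^j_{i;\dots}(x)\,\dd x^{i}$. The first step is therefore to collect the collapsing subgraphs $\Gamma'$ contributing to $\boldsymbol{\Omega}_{\de\Sigma}$ and to sort them by the number $j$ of $R$-vertices they contain, setting $\boldsymbol{\Omega}^{\mathbb{X}}_{(j)}$ equal to the sum over those graphs with exactly $j$ such vertices. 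By construction $\boldsymbol{\Omega}^{\mathbb{X}}_{(0)}$ contains no $R$-vertices and hence coincides with the non-globalized operator $\boldsymbol{\Omega}^{\mathbb{X}}_{\de\Sigma}$ of Proposition \ref{X_rep}.

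The decisive observation is that every $R$-vertex contributes exactly one factor $\dd x^{i}$, so that a graph with $j$ of them produces a differential form of degree $j$ on $\mathscr{P}$; more precisely $\boldsymbol{\Omega}^{\mathbb{X}}_{(j)}$ takes values in $\Omega^j(\mathscr{P})$, the $\dd x$'s arising from the $\dd_{\mathcal{M}_0}x^{i}$ in \eqref{eq:GlobalPSMAction} together with the identification $\mathcal{M}_0\cong\mathscr{P}$. Since $\Omega^j(\mathscr{P})=0$ for $j>\dim(\mathscr{P})$, every graph with more than $\dim(\mathscr{P})$ many $R$-vertices contributes zero, and the sum truncates to $\sum_{j=0}^{\dim(\mathscr{P})}\boldsymbol{\Omega}^{\mathbb{X}}_{(j)}$, which is the claim.

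The one point that genuinely requires care — and which I expect to be the main obstacle — is to verify that, contrary to the $\E$-representation, the configuration-space dimension count does not already cut the sum off below $\dim(\mathscr{P})$. In the proof of Proposition \ref{E_rep2} the rigidity that no arrow may leave the collapsing subgraph forced the internal form degree to equal $2m+r$ and hence produced the sharp relation $r+k=2$, bounding the number of $R$-vertices by two. In the $\mathbb{X}$-representation this rigidity is absent: as already used in the proof of Proposition \ref{X_rep}, the arrows emanating from the bulk vertices may now leave $\Gamma'$, and the boundary vertices may carry arbitrarily many outgoing arrows, both into $\Gamma'$ and out of it. I would make this precise by repeating the dimension balance of Proposition \ref{E_rep2} while carefully bookkeeping the leaves $L$ and $R_j$ that leave the subgraph, and checking that for each $j\le\dim(\mathscr{P})$ there remain admissible graphs with $j$ $R$-vertices for which $\deg\omega_{\Gamma'}=\dim\Tilde{\mathsf{C}}_{\Gamma'}(\mathbb{H}^2)$, so that $\sigma_{\Gamma'}$ need not vanish. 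Establishing that no configuration-space constraint caps $j$, so that the exterior-degree bound on $\mathscr{P}$ is the only obstruction, is the crux that completes the argument.
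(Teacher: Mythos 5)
Your proposal is correct and follows essentially the same route as the paper's (very brief) proof: the paper simply notes that in the $\mathbb{X}$-representation arrows may leave the collapsing subgraph, so the $\E$-representation degree count giving $r+k=2$ is unavailable, and the number of $R$-vertices is bounded only by $\dim(\mathscr{P})$. Your additional remarks — that the bound comes from the antisymmetry of the $\dd x^{i}$'s attached to $R$-vertices, and that one should check no hidden configuration-space constraint caps $j$ below $\dim(\mathscr{P})$ — merely spell out what the paper leaves implicit.
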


\begin{proof}
In the $\bbX$-representation, arrows can leave the collapsing graph, so we cannot do a degree count like in the $\E$-representation; in particular, the number of $R$ vertices in a collapsing graph is only bounded by the dimension of $\mathscr{P}$.
\end{proof}

\subsection{Algebraic structure in the flatness conditon for the BFV operator}
We know from \cite{CMW4} that $(\qtconn)^2 = 0$, and that this is equivalent to $\dd_x\Tilde{\boldsymbol{\Omega}}_{\de \Sigma}+\frac{1}{2}[\Tilde{\boldsymbol{\Omega}}_{\de \Sigma},\Tilde{\boldsymbol{\Omega}}_{\de\Sigma}]=0$. For the Poisson Sigma Model it is interesting to see how this condition can be derived by looking at the explicit structure of $\Tilde{\boldsymbol{\Omega}}_{\de\Sigma}$ as discussed in \ref{subsec:BFV_op}. We again consider the two different representations separately.

\subsubsection{$\E$-represention}
Recall that 
\begin{equation}
\Tilde{\boldsymbol{\Omega}}^\mathbb{E}_{\de\Sigma}=\boldsymbol{\Omega}^\mathbb{E}_{(0)}+\boldsymbol{\Omega}_{(1)}^\mathbb{E}+\boldsymbol{\Omega}_{(2)}^\E,
\end{equation}
where $\boldsymbol{\Omega}^\mathbb{E}_{(j)}$ denotes the part of form degree $j$ for $j\in\{0,1,2\}$. 
\begin{prop}
\label{alg_str}
We have the following equations:
\begin{align}
[\boldsymbol{\Omega}^\E_{(0)},\boldsymbol{\Omega}^\E_{(0)}]&=0,\label{deg0}\\
\dd_x\boldsymbol{\Omega}^\E_{(0)}+[\boldsymbol{\Omega}^\E_{(0)},\boldsymbol{\Omega}^\E_{(1)}]&=0,\label{deg1}\\
\dd_x\boldsymbol{\Omega}^\E_{(1)}+[\boldsymbol{\Omega}^\E_{(0)},\boldsymbol{\Omega}^\E_{(2)}]+\frac12[\boldsymbol{\Omega}^\E_{(1)},\boldsymbol{\Omega}^\E_{(1)}]&=0,\label{deg2}\\
\dd_x\boldsymbol{\Omega}^\E_{(2)}+[\boldsymbol{\Omega}^\E_{(1)},\boldsymbol{\Omega}^\E_{(2)}]&=0,\label{deg3}\\
[\boldsymbol{\Omega}^\E_{(2)},\boldsymbol{\Omega}^\E_{(2)}]&=0\label{deg4},
\end{align}
\end{prop}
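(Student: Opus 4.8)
The plan is to obtain the five identities as the homogeneous components, with respect to form degree on $\mathscr{P}$, of the single flatness equation already at our disposal. Concretely, Theorem~\ref{thm:flatness_1} gives $(\nabla_\mathsf{G})^2=0$, which the discussion preceding the proposition records as being equivalent, in the $\E$-representation, to
\begin{equation*}
\dd_x\Tilde{\boldsymbol{\Omega}}^\E_{\de\Sigma}+\frac12\left[\Tilde{\boldsymbol{\Omega}}^\E_{\de\Sigma},\Tilde{\boldsymbol{\Omega}}^\E_{\de\Sigma}\right]=0.
\end{equation*}
Thus there is nothing to prove beyond unpacking this equation degree by degree; no new input about the propagator or the configuration-space integrals is required.

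First I would substitute the decomposition $\Tilde{\boldsymbol{\Omega}}^\E_{\de\Sigma}=\boldsymbol{\Omega}^\E_{(0)}+\boldsymbol{\Omega}^\E_{(1)}+\boldsymbol{\Omega}^\E_{(2)}$ established in Proposition~\ref{E_rep2}, recalling that the degree-counting argument there (which yielded $r+k=2$) guarantees that no component of form degree larger than $2$ occurs. I would then use that $\dd_x$ raises form degree by one while the bracket adds the form degrees of its two arguments, so that the left-hand side splits into pieces of pure form degree $0,1,2,3,4$. Since differential forms of distinct degree on $\mathscr{P}$ are linearly independent, the vanishing of the whole expression forces each of these five pieces to vanish separately.

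Collecting the terms, the degree-$0$ part is $\frac12[\boldsymbol{\Omega}^\E_{(0)},\boldsymbol{\Omega}^\E_{(0)}]$, giving \eqref{deg0}; the degree-$1$ part is $\dd_x\boldsymbol{\Omega}^\E_{(0)}+[\boldsymbol{\Omega}^\E_{(0)},\boldsymbol{\Omega}^\E_{(1)}]$, giving \eqref{deg1}; the degree-$2$ part collects $\dd_x\boldsymbol{\Omega}^\E_{(1)}$ together with the two bracket contributions $[\boldsymbol{\Omega}^\E_{(0)},\boldsymbol{\Omega}^\E_{(2)}]$ and $\frac12[\boldsymbol{\Omega}^\E_{(1)},\boldsymbol{\Omega}^\E_{(1)}]$, giving \eqref{deg2}; the degree-$3$ part is $\dd_x\boldsymbol{\Omega}^\E_{(2)}+[\boldsymbol{\Omega}^\E_{(1)},\boldsymbol{\Omega}^\E_{(2)}]$, giving \eqref{deg3}; and the degree-$4$ part is $\frac12[\boldsymbol{\Omega}^\E_{(2)},\boldsymbol{\Omega}^\E_{(2)}]$, giving \eqref{deg4}.

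The only point requiring genuine care — and the one I expect to be the main obstacle, though it is bookkeeping rather than a real difficulty — is the sign and symmetry convention for the bracket. Since each $\boldsymbol{\Omega}^\E_{(j)}$ carries ghost number one, the ghost-degree part of the graded commutator is symmetric, and one must verify that the accompanying form-degree signs conspire so that the off-diagonal brackets $[\boldsymbol{\Omega}^\E_{(i)},\boldsymbol{\Omega}^\E_{(j)}]$ with $i\neq j$ appear with coefficient one while the diagonal brackets $[\boldsymbol{\Omega}^\E_{(i)},\boldsymbol{\Omega}^\E_{(i)}]$ appear with coefficient $\frac12$, exactly as written in \eqref{deg0}--\eqref{deg4}. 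Once this convention is fixed consistently with Proposition~\ref{E_rep2}, the expansion reproduces the five equations verbatim.
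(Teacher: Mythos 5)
Your argument is correct, but it is not the proof the paper gives: the paper explicitly notes that the proposition ``follows from general arguments in \cite{CMW4}'' --- which is precisely your route of expanding $\dd_x\Tilde{\boldsymbol{\Omega}}^\E_{\de\Sigma}+\frac12[\Tilde{\boldsymbol{\Omega}}^\E_{\de\Sigma},\Tilde{\boldsymbol{\Omega}}^\E_{\de\Sigma}]=0$ into its homogeneous form-degree components --- and then deliberately gives an \emph{independent} proof instead. The paper's proof works diagrammatically, identifying each equation with a known algebraic fact about the globalization of Kontsevich's star product: \eqref{deg0} is the associativity of the fibrewise star product (equivalently, the Classical Master Equation); \eqref{deg1} is shown, by matching collapsing-graph configurations against the Feynman-diagram definitions of $A$ and $\star$, to be equivalent to the statement that $\calD_\mathsf{G}$ is a derivation of $\star$; \eqref{deg2} corresponds to $\dd_xA+\frac12[A,A]=[F,\enspace]_\star$; \eqref{deg3} is the Bianchi identity; and \eqref{deg4} holds simply because $\boldsymbol{\Omega}^\E_{(2)}$ contains no $\E$-fields. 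Your approach buys brevity and rests only on the already-established flatness of $\nabla_\mathsf{G}$ together with the linear independence of forms of distinct degree (your care about the $\tfrac12$ on diagonal brackets for ghost-degree-one operators is exactly the right bookkeeping point); the paper's approach buys the structural insight that is the actual purpose of the section, namely that the globalized BFV operator encodes the Fedosov-type connection, its curvature, and their compatibility relations. Both are valid; yours is the ``soft'' proof the authors mention and decline to write out.
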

\begin{proof}
Proposition \ref{alg_str} follows from general arguments in \cite{CMW4}, but here we give an independent proof. First we look at Equation \eqref{deg1}.

The construction in \cite{CF3}, recalled in Appendix \ref{app:PSM}, yields a bundle $\mathcal{E} = \widehat{S}T^*\mathscr{P}[[\hbar]]$ of $\star$-algebras on $\mathscr{P}$ by applying Kontsevich's deformation quantization in every tangent space. Picking a Grothendieck connection $D_\mathsf{G} = \dd_x + R$ on $\mathscr{P}$, and applying the Kontsevich formality map to $R$, one obtains a connection $\mathcal{D}_\mathsf{G} = \dd_x + A$ on $\mathcal{E}$. In \cite{CF3} it is shown that this connection is a derivation of $\Gamma(\mathcal{E})$, i.e. for $\sigma, \tau \in \Gamma(\mathcal{E})$, we have \begin{equation} \mathcal{D}_\mathsf{G}(\sigma \star \tau) = (\mathcal{D}_\mathsf{G}\sigma) \star \tau + \sigma \star (\mathcal{D}_\mathsf{G}\tau).\label{eq:der_in_proof} \end{equation}
We claim that this equation is equivalent to \eqref{deg1}. This can be done directly by writing out \eqref{deg1} and \eqref{eq:der_in_proof} in coefficients, but it is best seen through Feynman diagrams (after all, $A$ and the star product are defined through Feynman diagrams). First, rewrite \eqref{eq:der_in_proof} into 
\begin{equation}
\dd_x (\sigma \star \tau) - \dd_x\sigma \star \tau - \sigma \star \dd_x\tau = -A(\sigma \star \tau) + (A\sigma) \star \tau + \sigma \star (A\tau).
\end{equation}
The left hand side of this equation is given by applying $\dd_x$ to the coefficients of the star product. Schematically, we represent the diagrammatic content as in Figure \ref{fig:proof_Omega_E}. 
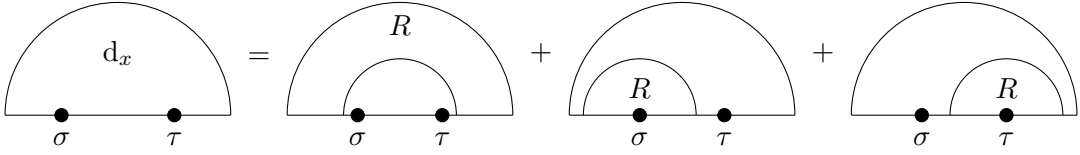
\begin{figure}
\begin{tikzpicture}[scale=0.75]
\begin{scope}[shift={(-10,0)}]
\draw (-2,0) -- (2,0);
\node[vertex] (bdry1) at (-1,0) {};
\node[vertex] (bdry2) at (1,0) {};
\node[coordinate, label=below:{$\sigma$}] at (bdry1.south) {};
\node[coordinate, label=below:{$\tau$}] at (bdry2.south) {};
\node[coordinate, label=below:{$\dd_x$}] at (0,1.5) {};
\draw (0:2) arc (0:180:2);  
\node[coordinate, label={$=$}] at (2.5,0.75) {};
\end{scope}
\begin{scope}[shift={(-5,0)}]
\draw (-2,0) -- (2,0);
\node[vertex] (bdry1) at (-0.75,0) {};
\node[vertex] (bdry2) at (0.75,0) {};
\node[coordinate, label=below:{$\sigma$}] at (bdry1.south) {};
\node[coordinate, label=below:{$\tau$}] at (bdry2.south) {};
\node[coordinate, label={$R$}] at (0,1.25) {};
\draw (0:2) arc (0:180:2);  
\draw (0:1) arc (0:180:1);
\node[coordinate, label={$+$}] at (2.5,0.75) {};
\end{scope}
\draw (-2,0) -- (2,0);
\node[vertex] (bdry1) at (-0.75,0) {};
\node[vertex] (bdry2) at (0.75,0) {};
\node[coordinate, label=below:{$\sigma$}] at (bdry1.south) {};
\node[coordinate, label=below:{$\tau$}] at (bdry2.south) {};
\node[coordinate, label=above:{$R$}] at (bdry1.north) {};
\draw (0:2) arc (0:180:2);  
\draw[shift={(-0.75,0)}] (0:1) arc (0:180:1);
\node[coordinate, label={$+$}] at (2.5,0.75) {};
\begin{scope}[shift={(5,0)}]
\draw (-2,0) -- (2,0);
\node[vertex] (bdry1) at (-0.75,0) {};
\node[vertex] (bdry2) at (0.75,0) {};
\node[coordinate, label=below:{$\sigma$}] at (bdry1.south) {};
\node[coordinate, label=below:{$\tau$}] at (bdry2.south) {};
\node[coordinate, label=above:{$R$}] at (bdry2.north) {};
\draw (0:2) arc (0:180:2);  
\draw[shift={(0.75,0)}] (0:1) arc (0:180:1);
\end{scope}
\end{tikzpicture} 
\caption{Schematics of the diagrammatic content of \eqref{eq:der_in_proof}. $\sigma$ and $\tau$ are arbitrary sections of $\Gamma(\mathcal{E})$. We sum over all possible graphs. By $\dd_x$ we mean that we apply $\dd_x$ to the result of the integration. An $R$ means that there is precisely one vertex labeled by $R$ in every graph.  }\label{fig:proof_Omega_E}
\end{figure} 

Recall from \cite{CMW4} that the commutator $[\boldsymbol{\Omega}^\E_{(0)},\boldsymbol{\Omega}^\E_{(1)}]$ can be expressed by replacing the boundary vertices in the graphs defining $\boldsymbol{\Omega}^\E_{(1)}$ by the graphs appearing in $\boldsymbol{\Omega}^\E_{(0)}$ and vice versa. If we ignore possible arrows arriving at the boundary vertices from outside the graph, this yields precisely the graphs on the right hand side of Figure \ref{fig:proof_Omega_E}: The first term are the graphs of $\boldsymbol{\Omega}^\E_{(0)}$ placed at the boundary vertex of graphs appearing in  $\boldsymbol{\Omega}^\E_{(1)}$, and the second and the third term represent the graphs of $\boldsymbol{\Omega}^\E_{(1)}$ placed at one of the boundary vertices of $\boldsymbol{\Omega}^\E_{(0)}$. Arriving arrows from outside the graph corresponds to taking derivatives of $\sigma$ and $\tau$. 
On the other hand, the left hand side yields precisely $\dd_x\boldsymbol{\Omega}^\E_{(0)}$. \\
The other equations are proven in a similar fashion, using the following relations:
\begin{itemize}
\item{Equation \eqref{deg0} holds, since the non-globalized boundary operator squares to zero (which is in turn a consequence of the Classical Master Equation, see \cite{CMR2} and \cite{CMW4}).
}
\item{Equation \eqref{deg4} holds, since there are no $\E$-field contributions in $\boldsymbol{\Omega}^\E_{(2)}$.
}
\item{Equation \eqref{deg3} corresponds to the \textsf{Bianchi identity}.}
\item{Equation \eqref{deg2} corresponds to the equation $\dd_xA+\frac{1}{2}[A,A]=[F,\enspace]_\star$.}
\end{itemize}
In the last two points, the proof is similar to the proof of the degree 1 case \eqref{deg1}.
\end{proof}

\begin{rem}
\label{rem:globalization}
Note that the fact that the non-globalized BFV operator $\boldsymbol{\Omega}_{\de\Sigma}$ (which depends on the constant background field $x\colon \Sigma\to\mathscr{P}$) gives rise to a family of star products, constructed from the Poisson structure $\mathsf{T}\varphi_x^*\Pi$ on $T_x\mathscr{P}$. Moreover, the fact that it squares to zero corresponds to the associativity of these star products. Similarly the globalized BFV operator contains the data of a connection and its curvature and the fact that it is flat corresponds to the structural equations relating these objects. Hence we naturally recover the construction of the globalized version of Kontsevich's star product as it was discussed in \cite{CF3}. In \cite{CF3} the connection was twisted by a $1$-form $\gls{gamma}$ with values in the deformed jet bundle of $\star$-algebras to obtain a flat connection $\gls{barD_G}$, i.e. we have the following chain (see also Appendix \ref{app:globalization})
\begin{equation}
D_\mathsf{G}\xrightarrow{\Pi}\calD_\mathsf{G}\xrightarrow{\gamma}\overline{\calD}_\mathsf{G}.
\end{equation}
This motivates the introduction of an additional term $\calS_{\Sigma,\gamma}$ in the action to obtain $\Tilde{\boldsymbol{\Omega}}_{\de\Sigma}$ corresponding to the connection $\overline{\calD}_\mathsf{G}$ (see Section \ref{subsec:mod_action}).
\end{rem}

\subsubsection{$\bbX$-representation} In the $\bbX$-representation, one can similarly decompose the boundary operator into form degrees $\Tilde{\boldsymbol{\Omega}}^\mathbb{X}_{\de\Sigma} = \sum_{j=0}^{\dim \mathscr{P}} \boldsymbol{\Omega}^\bbX_{(j)}$, and for every $k=0,\ldots,r$ one obtains the equations 
\begin{equation}
\dd_x\boldsymbol{\Omega}^\bbX_{(k-1)} + \frac12\sum_{i+j=k} [\boldsymbol{\Omega}^\bbX_{(i)},\boldsymbol{\Omega}^\bbX_{(j)}]=0.
\end{equation} 
The form degree zero part is again the fact that the non-globalized boundary operator squares to zero. It would be interesting to investigate whether there is an algebraic structure underlying the equations in the other form degrees, similar to the $\E$-representation. 

\subsection{Modification of the action}
\label{subsec:mod_action}
We modify the classical BV action by using results of \cite{BCM,CF3,CFT} as we also discuss in Appendix \ref{app:PSM}.  Let $\gls{calE}:=\Hat{S}T^*\mathscr{P}[[\varepsilon]]$ for some deformation parameter $\gls{epsilon}$.
Recall from Appendix \ref{app:globalization}, that given $\gls{omega}\in\Omega^2(\mathscr{P},\calE)$ such that $\calD_\mathsf{G}\omega=0$ and $[\omega,\enspace]_\star=0$, we can always find $\gamma\in\Omega^1(\mathscr{P},\calE)$ such that 
\begin{equation}
\overline{F}^\mathscr{P}_\omega:=F^\mathscr{P}+\varepsilon\omega+\mathcal{D}_\mathsf{G}\gamma+\gamma\star\gamma=0.
\end{equation}
This is equivalent to equation \eqref{gen_curv}.\\
According to Remark \ref{rem:globalization}, we now formulate a new ``modified'' action.
\begin{defn}[Modified formal globalized action]
Let $\gamma \in \Omega^1(\mathscr{P},\mathcal{E})$ be a solution of equation $\eqref{gen_curv}$ for $\omega \in \Omega^2(\mathscr{P},\mathcal{E})$ as above (here the formal parameter $\varepsilon$ is given by $(-\I\hbar)/2$).
Then the \textsf{modified formal globalized action} $\mathbb{S}_{\Sigma,x}$ is given by 
\begin{equation}
\label{whole_action}
\gls{modifiedformalglobalizedaction}=\Tilde{\calS}_{\Sigma,x}+\calS_{\Sigma,\gamma}+\calS_{\Sigma,\omega},
\end{equation}
where\footnote{The reason why such counter terms always exists is due to the fact that the cohomology which would provide obstructions is trivial \cite{CF3} (see also Remark \ref{rem:counterterm}).} 
\begin{align}
\label{Sgamma}
\gls{gammaaction}&=\int_{\partial\Sigma}\gamma\left(x;\widehat{\sfX}\right) =\sum_{k\geq 1}(-1)^k\left(\frac{\I\hbar}{2}\right)^k\sum_{I} \dd x^i\int_{\de\Sigma}\gamma^{(k)}_{i,I}(x)\hatX^I,\\
\label{Somega}
\gls{omegaaction}&=\int_{\Sigma}\omega\left(x;\hatX\right)=\sum_{k\geq 1}(-1)^k\left(\frac{\I\hbar}{2}\right)^k\sum_J\dd x^i\land\dd x^j\int_{ \Sigma}\omega^{(k)}_{ij,J}(x)\hatX^J.
\end{align}
\end{defn}
\begin{rem}
Here we integrate the source $1$-form part of $\hatX$ along the boundary, which, since the $\hatX$-fluctuation vanishes on components of the boundary in $\bbX$-representation, implies that for a single boundary with $\bbX$-representation $\calS_{\Sigma,\gamma}$ does not give any contribution to $\Tilde{\boldsymbol{\Omega}}^\bbX_{\de \Sigma}$. Therefore we only need to look at the $\E$-representation.
Moreover, note that $\gamma = O(\hbar)$, i.e. it is already a type of quantum counterterm which is not present classically, so it does not violate the modified Classical Master Equation. 
\end{rem}

\begin{prop}
The BFV boundary operator $\Tilde{\boldsymbol{\Omega}}^{\E,\gamma}_{\de \Sigma}$ for the modified formal globalized action $\mathbb{S}_{\Sigma,x}$ is given by 
\begin{equation}
\Tilde{\boldsymbol{\Omega}}^{\E,\gamma}_{\de \Sigma}=\boldsymbol{\Omega}^\E_{\de\Sigma}+\boldsymbol{\Omega}^\E_{(1)}+\left([\ee^{\frac{\I}{\hbar}[\E]y},\gamma]_\star\right)\Big|_{y=\frac{\delta}{\delta[\E]}},
\end{equation}
where $\star$ denotes again the fiberwise star product on $\calE$ as in \ref{subsec:operator}.
\end{prop}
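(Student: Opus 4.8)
The plan is to compute the BFV boundary operator of $\mathbb{S}_{\Sigma,x}$ directly from its Feynman rules via the collapsing-subgraph construction of Definition \ref{full_BFV}, exactly as in the proofs of Propositions \ref{E_rep} and \ref{E_rep2}, but now tracking the two additional types of vertices produced by the modification: the boundary vertices coming from $\calS_{\Sigma,\gamma}$ (decorated by the components $\gamma^{(k)}_{i,I}(x)\,\dd x^i$, hence carrying form degree $1$ on $\mathscr{P}$ and emitting $|I|$ half-edges) and the bulk vertices coming from $\calS_{\Sigma,\omega}$ (decorated by $\omega^{(k)}_{ij,J}(x)\,\dd x^i\wedge\dd x^j$, carrying form degree $2$ on $\mathscr{P}$). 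Since the $\hatX$-source vanishes on a boundary component carrying the $\bbX$-representation, the $\gamma$-vertices contribute only in the $\E$-representation, as noted in the preceding remark, so it suffices to analyze that case.

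First I would redo the dimensional/valence count of Proposition \ref{E_rep2}. As there, the requirement that $\omega_{\Gamma'}$ be of top degree on $\Tilde{\mathsf{C}}_{\Gamma'}(\mathbb{H}^2)$, together with the fact that arrows cannot leave a subgraph collapsing onto an $\E$-boundary, forces the total form degree on $\mathscr{P}$ --- which equals (number of $R$-vertices) $+$ (number of $\gamma$-vertices) $+\,2\,$(number of $\omega$-vertices) --- to be at most $2$. Sorting the resulting graphs by this form degree, the degree-$0$ graphs are precisely those of the non-globalized theory and reproduce $\boldsymbol{\Omega}^\E_{\de\Sigma}=\boldsymbol{\Omega}^\E_{(0)}$; at degree $1$ one finds, on the one hand, the graphs with a single $R$-vertex, giving $\boldsymbol{\Omega}^\E_{(1)}$ as in \eqref{eq:E-rep2_1}, and, on the other hand, the graphs with a single $\gamma$-boundary vertex joined to one ordinary $\E$-boundary vertex through Kontsevich-type $\Pi$-vertices. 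By the very definition of the star product, the latter family is nothing but the star-commutator of the boundary exponential with $\gamma$, i.e. $\bigl([\ee^{\frac{\I}{\hbar}[\E]y},\gamma]_\star\bigr)\big|_{y=\delta/\delta[\E]}$, which is the claimed new term.

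The crux is then the degree-$2$ part, which is \emph{a priori} nonzero: it collects the two-$R$-vertex graphs (the curvature $\boldsymbol{\Omega}^\E_{(2)}$ of \eqref{eq:E-rep2_2}, encoding $F^\mathscr{P}$), the mixed $R$-$\gamma$ graphs (which, together with the de Rham piece $\dd_x\gamma$, assemble into $\calD_\mathsf{G}\gamma$), the two-$\gamma$-vertex graphs ($\gamma\star\gamma$), and the single-$\omega$-vertex graphs ($\varepsilon\omega$). I would show that their sum is exactly the curvature two-form $\overline{F}^\mathscr{P}_\omega=F^\mathscr{P}+\varepsilon\omega+\calD_\mathsf{G}\gamma+\gamma\star\gamma$ of the twisted connection $\overline{\calD}_\mathsf{G}$, now acting as the operator obtained through the substitution $y=\delta/\delta[\E]$. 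Since $\gamma$ and $\omega$ are chosen to solve $\overline{F}^\mathscr{P}_\omega=0$ (equation \eqref{gen_curv}), this entire degree-$2$ contribution vanishes, leaving only the degree-$0$ and degree-$1$ terms and yielding the stated formula.

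The identification in this last step is where the real work lies, and I would carry it out graph by graph in the style of the proof of Proposition \ref{alg_str}: each algebraic operation ($\calD_\mathsf{G}$, $\star$, the curvature) is itself defined through the same configuration-space integrals, so matching the collapsing graphs of $\mathbb{S}_{\Sigma,x}$ to the diagrams defining $A$, $F^\mathscr{P}$, $\gamma\star\gamma$ and $\omega$ amounts to bookkeeping of combinatorial factors, loop orders (powers of $\hbar$), and signs. The main obstacle I anticipate is precisely this bookkeeping --- in particular, confirming that the $\omega$-bulk vertices contribute the single term $\varepsilon\omega$ at the boundary and that the de Rham differential $\dd_x\gamma$ is correctly produced, so that the degree-$2$ sum closes up to the full twisted curvature $\overline{F}^\mathscr{P}_\omega$ rather than to a curvature missing the $\dd_x\gamma$ term.
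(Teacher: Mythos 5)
Your proposal follows essentially the same route as the paper's proof: the same dimension/valence count for subgraphs collapsing on an $\E$-boundary (yielding the cases $r+k=2$ plus the single-$\omega$-vertex case), the same identification of the surviving terms ($\boldsymbol{\Omega}^\E_{(0)}$, the single-$R$ graphs giving $\boldsymbol{\Omega}^\E_{(1)}$, and the mixed $\E$--$\gamma$ graphs giving the star commutator), and the same use of equation \eqref{gen_curv} to dispose of the form-degree-$2$ contributions $F$, $A(\gamma)$, $\gamma\star\gamma$ and $\omega$, whose sum reduces to $\dd_x\gamma$ and is absorbed by the de Rham differential acting on the $\gamma$-dependence of the state. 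Your closing worry about the $\dd_x\gamma$ term is exactly the point the paper addresses, so the plan is correct and matches the published argument.
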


\begin{proof}

Considering again a degree counting, we get different cases of boundary vertex configurations. For the case $(r=0,k=2)$, we can either have two $\E$-field boundary vertices, one $\E$-field and one $\gamma$ boundary vertex or two $\gamma$ boundary vertices. For the case $(r=1,k=1)$, we can have either one $\E$-field boundary vertex or one $\gamma$ boundary vertex. For the case $(r=2,k=0)$ we have the same contribution as before. In the case $\omega \neq 0$, there is a configuration where $(r=k=0)$, but there is a single $\omega$ vertex. These different diagrams contribute to different terms for the new boundary operator, which are:
\begin{itemize}
\item{$r=0,k=2$ ($\E,\E$ on the boundary): Summing over all these graphs, this corresponds to the term 
\begin{equation}
\label{term1}
\boldsymbol{\Omega}^\E_{\de\Sigma},
\end{equation}
}
\item{$r=0,k=2$ ($\gamma,\gamma$ on the boundary): Summing over all these graphs, this corresponds to 
\begin{equation}
\label{term2}
(\gamma\star\gamma)\left(\frac{\delta}{\delta[\E]}\right),
\end{equation}
}
\item{$r=0,k=2$ ($\E,\gamma$ on the boundary): Summing over all these graphs, this corresponds to 
\begin{equation}
\label{term3}
\left([\ee^{\frac{\I}{\hbar}[\E]y},\gamma]_\star\right)\Big|_{y=\frac{\delta}{\delta[\E]}},
\end{equation}
}
\item{$r=1,k=1$ ($\E$ on the boundary): Summing over all these graphs, this corresponds to the term 
\begin{equation}
\label{term4}
\left(A(R,\mathsf{T}\varphi^*_x\Pi)(\ee^{\frac{\I}{\hbar}[\E]y})\right)\Big|_{y=\frac{\delta}{\delta[\E]}},
\end{equation}
}
\item{$r=1,k=1$ ($\gamma$ on the boundary): Summing over all these graphs, this corresponds to the connection term 
\begin{equation}
\label{term5}
A(R,\mathsf{T}\varphi^*_x\Pi)(\gamma)=A(R,\mathsf{T}\varphi^*_x\Pi)\left(\gamma_i\left(\frac{\delta}{\delta[\E]}\right)\right)\dd x^{i},
\end{equation}
}
\item{$r=2,k=0$ (nothing on the boundary): Summing over all these graphs, this corresponds to the curvature term 
\begin{equation}
\label{term6}
F(R,R,\mathsf{T}\varphi_x^*\Pi),
\end{equation}
}
\item{$r=k=0$ (just one $\omega$ vertex in the bulk): Summing over all graphs this just yields $\omega$.}
\end{itemize}
By Equation \eqref{gen_curv} and \eqref{AppB:connection_G}, we obtain that the terms \eqref{term2}, \eqref{term3}, \eqref{term6} and possibly $\omega$, can be put together as 
\begin{equation}
A(R,\mathsf{T}\varphi^*_x\Pi)(\gamma)-F(R,R,\mathsf{T}\varphi^*_x\Pi)-\gamma\star\gamma-\omega=\dd_x\gamma.
\end{equation}
Hence they do not contribute to the boundary operator, since they cancel the terms in $\dd_x\btpsi_{\Sigma,x}$ in the modified differential Quantum Master Equation, where the full state is defined by the action $\mathbb{S}_{\Sigma,x}$.
\end{proof}

\begin{rem}
By equation \eqref{def_conn} and the fact that $\dd_x\ee^{\frac{\I}{\hbar}\E}=0$, the surviving terms will correspond to 
\begin{equation}
\left(\overline{\calD}_\mathsf{G}\ee^{\frac{\I}{\hbar}[\E]y}\right)\Big|_{y=\frac{\delta}{\delta[\E]}}=\left(\calD_\mathsf{G}\ee^{\frac{\I}{\hbar}[\E]y}\right)\Big|_{y=\frac{\delta}{\delta[\E]}}+\left([\ee^{\frac{\I}{\hbar}[\E]y},\gamma]_\star\right)\Big|_{y=\frac{\delta}{\delta[\E]}},
\end{equation}
where $\calD_\mathsf{G}\ee^{\frac{\I}{\hbar}[\E]y}$ means that we apply $\calD_\mathsf{G}$ to the fiber coordinates $y$ of $\ee^{\frac{\I}{\hbar}[\E]y}$.
Hence the boundary operator is given by 
\begin{equation}
\gls{twistedEOmega}= \boldsymbol{\Omega}^{\E}_{\de\Sigma}+\left(\overline{\calD}_\mathsf{G}\ee^{\frac{\I}{\hbar}[\E]y}\right)\Big|_{y=\frac{\delta}{\delta[\E]}}.
\end{equation}
\end{rem}

\subsubsection{The twisted state}

Using the modified action \eqref{whole_action} one can define a state twisted by $\gamma$ as follows. 
\begin{defn}[Twisted full covariant quantum state]
\label{full_covariant_state}
Let $\Sigma$ be a manifold, possibly with boundary. 
Given a $BF$-like BV-BFV theory $\pi_\Sigma\colon\calF_\Sigma \to \calF^\de_{\de \Sigma}$, a polarization $\calP$ on $\calF^\de_{\de \Sigma}$, a splitting $\calF_\Sigma \cong \mathcal{B}^\calP_{\de \Sigma}\oplus \calV^\calP_\Sigma \oplus \calY'$, and a gauge-fixing Lagrangian $\calL_{\Sigma,x} \subset \calY'$, we define the \textsf{twisted full covariant quantum state} by the formal perturbative expansion of the BV integral
\begin{equation}
\label{full_state1}
\btpsi^\gamma_{\Sigma,x}(\mathbb{X},\E;\mathsf{x},\mathsf{e}):=\int_{\calL_{\Sigma,x}\subset\calY'}\ee^{\frac{\I}{\hbar}\mathbb{S}_{\Sigma,x}[(\Hat{\mathsf{X}},\Hat{\boldsymbol{\eta}})]}\in\Omega^\bullet(\Sigma,\Hat{\calH}^\calP_{\Sigma,\textnormal{tot}}),
\end{equation}
using the Feynman rules in Figure \ref{fig:FeynmanRules0} and additionally with the rules for the boundary vertices as in Figure \ref{fig:composite_field_vertices} and Figure \ref{fig:new_vert}.
\end{defn}

The reason to introduce this state will become clear in the next two sections, when we analyze the anomaly arising from alternating boundary conditions. Essentially, the twist localizes the anomaly at the corners (Theorem \ref{thm_corners}), where it can be canceled by changing the boundary operator (Theorem \ref{thm_mdQME_corners}).

\begin{figure}[ht]
\centering
\subfigure[New interaction vertices in the bulk representing $\omega$]{
\centering
\begin{tikzpicture}
\node[vertex, label=below:{$\omega$}] (o) at (0,0) {};
\node[coordinate, label=below:{$i_1$}] at (30:1) {$i_1$}
edge[fermion] (o);
\node[coordinate, label=above:{$i_2$}] at (60:1) {$i_2$}
edge[fermion] (o);
\node[coordinate, label=below:{$i_s$}] at (145:1) {$i_k$}
edge[fermion] (o);
\draw[dotted] (90:0.5) arc (90:130:0.5);

\node[coordinate,label=right:{$\rightsquigarrow \sum_{k\geq 1}(-1)^k\left(\frac{\I\hbar}{2}\right)^k\dd x^i\land\dd x^j\omega^{(k)}_{ij,i_1\cdots i_k}(x)$}] at (2,0) {};
\end{tikzpicture}
}
\hspace{3cm}
\subfigure[New boundary vertices representing $\gamma$]{
\begin{tikzpicture}
\draw (1,-1) -- (3,-1);
\node[vertex, label=below:{$\gamma$}] (e) at (2,-1) {};
\node[coordinate, label=right:{$i_2$}] (b2) at (2,0) {}; 
\node[coordinate,label=right:{$i_1$}] (b3) at (2.7,0) {};
\node[coordinate,label=left:{$i_k$}] (b4) at (1,-0.5) {};
\draw[fermion] (b2) -- (e);
\draw[fermion] (b3) -- (e);
\draw[fermion] (b4) -- (e);
\draw[dotted] (2,-0.3) arc (90:150:0.5);

\node[coordinate,label=right:{$\rightsquigarrow \sum_{k\geq 1}(-1)^k\left(\frac{\I\hbar}{2}\right)^k\dd x^i\gamma^{(k)}_{i,i_1\cdots i_k}(x)$}] at (4,-1) {};
\end{tikzpicture}
\hspace{2cm}
}
\caption{New vertices appearing in the Feynman rules}
\label{fig:new_vert}
\end{figure}
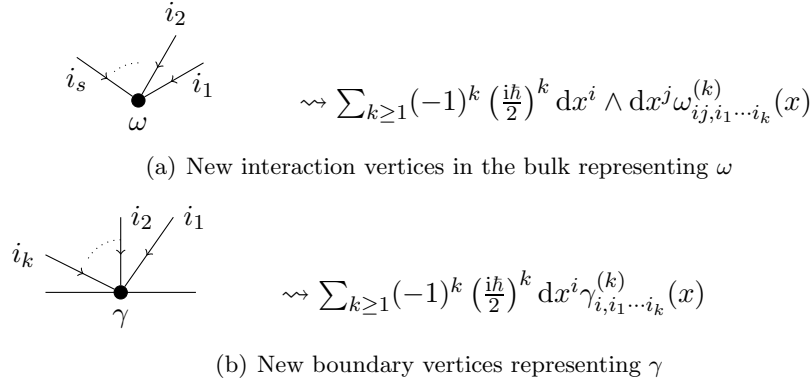

The twisted state is closed with respect to the operator
\begin{equation}
\label{nabla_gamma}
\nabla_{\mathsf{G}}^\gamma:=\left(\dd_x-\I\hbar\Delta_{{\calV}_\Sigma}+\frac{\I}{\hbar}\Tilde{\boldsymbol{\Omega}}^{\E,\gamma}_{\de \Sigma}\right).
\end{equation}

This is a consequence of Theorem \ref{thm_corners} below. 

\subsubsection{Flatness}
The following Proposition tells us that the twisted quantum Grothendieck BFV operator still remains a differential, i.e. squares to zero for $\Tilde{\boldsymbol{\Omega}}^{\E,\gamma}_{\de \Sigma}$.
\begin{prop}\label{prop:flatness_twisted}
The operator 
\begin{equation}
\nabla_{\mathsf{G}}^\gamma =\left(\dd_x-\I\hbar\Delta_{{\calV}_\Sigma}+\frac{\I}{\hbar}\Tilde{\boldsymbol{\Omega}}^{\E,\gamma}_{\de \Sigma}\right) 
\end{equation}
on $\Hat{\calH}_{\Sigma,\textnormal{tot}}$ squares to zero.
\end{prop}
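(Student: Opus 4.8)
The plan is to deduce the statement from the flatness of the untwisted operator (Theorem \ref{thm:flatness_1}) together with the defining equation of $\gamma$. As in the proofs of Theorem \ref{thm:flatness_1} and Proposition \ref{alg_str}, I would first expand the graded square of $\dd_x-\I\hbar\Delta_{\calV_\Sigma}+\frac{\I}{\hbar}\Tilde{\boldsymbol{\Omega}}^{\E,\gamma}_{\de \Sigma}$ and observe that $(\nabla_\mathsf{G}^\gamma)^2=0$ is equivalent to three families of relations: the automatic ones $\Delta_{\calV_\Sigma}^2=0$ and $[\dd_x,\Delta_{\calV_\Sigma}]=0$ (the latter because $\Delta_{\calV_\Sigma}$ acts only on residual fields while $\dd_x$ differentiates only the base $\mathscr{P}$); the compatibility $[\Delta_{\calV_\Sigma},\Tilde{\boldsymbol{\Omega}}^{\E,\gamma}_{\de\Sigma}]=0$; and the graded-flatness equation
\[
\dd_x \Tilde{\boldsymbol{\Omega}}^{\E,\gamma}_{\de\Sigma} + \frac{1}{2}[\Tilde{\boldsymbol{\Omega}}^{\E,\gamma}_{\de\Sigma},\Tilde{\boldsymbol{\Omega}}^{\E,\gamma}_{\de\Sigma}] = 0.
\]
The two $\Delta$-relations should survive the twist unchanged, since the additional terms $\calS_{\Sigma,\gamma}$ and $\calS_{\Sigma,\omega}$ only introduce vertices depending on $x$ and the boundary field $\hatX$ and hence do not couple to the residual-field Laplacian, so the unimodularity and anomaly-freeness inputs used in the untwisted case are preserved.

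The heart of the argument is then the graded-flatness equation, which I would establish by decomposing $\Tilde{\boldsymbol{\Omega}}^{\E,\gamma}_{\de\Sigma}$ into homogeneous form degree on $\mathscr{P}$, exactly as in Proposition \ref{alg_str}. Using the expression from the Remark preceding the Proposition,
\[
\Tilde{\boldsymbol{\Omega}}^{\E,\gamma}_{\de\Sigma} = \boldsymbol{\Omega}^\E_{\de\Sigma} + \left(\overline{\calD}_\mathsf{G}\, \ee^{\frac{\I}{\hbar}[\E]y}\right)\Big|_{y = \frac{\delta}{\delta[\E]}},
\]
the degree-$0$ part is the unchanged multiplication operator $\boldsymbol{\Omega}^\E_{\de\Sigma}$, whose square vanishes by associativity of the star product (Equation \eqref{deg0}), while the degree-$1$ part now encodes the twisted connection $\overline{\calD}_\mathsf{G}=\calD_\mathsf{G}+[\gamma,\enspace]_\star$. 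The crucial point is that, by the choice of $\gamma$ as a solution of $\overline{F}^\mathscr{P}_\omega=F^\mathscr{P}+\varepsilon\omega+\calD_\mathsf{G}\gamma+\gamma\star\gamma=0$ (Equation \eqref{gen_curv}), the degree-$2$ (curvature) component of the twisted operator vanishes identically. Consequently the degree-graded identities \eqref{deg0}--\eqref{deg4} collapse to three nontrivial ones: associativity in degree $0$; the statement that $\overline{\calD}_\mathsf{G}$ is a $\star$-derivation in degree $1$ (which holds because $\calD_\mathsf{G}$ is a derivation by \cite{CF3} and $[\gamma,\enspace]_\star$ is an inner derivation, so their sum is again a derivation); and in degree $2$ the flatness $\dd_x\overline{A}+\frac{1}{2}[\overline{A},\overline{A}]=[\overline{F}^\mathscr{P},\enspace]_\star=0$. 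Thus every structural equation is satisfied and the flatness condition follows.

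The main obstacle I anticipate is the faithful translation between the operator-level Feynman-graph combinatorics and these algebraic identities. Concretely, one must verify that the graded commutator $[\boldsymbol{\Omega}^\E_{\de\Sigma},\enspace]$ acting on the derivation-type term reproduces precisely the star-commutator and derivation structure --- including the composite-field subtleties (higher functional derivatives and the bullet product) already handled in the untwisted proof of Proposition \ref{alg_str} --- and that the new $\gamma$- and $\omega$-vertices contribute only through the combination $\calD_\mathsf{G}\gamma+\gamma\star\gamma+\varepsilon\omega$, which by \eqref{gen_curv} cancels $F^\mathscr{P}$ and thereby removes the very curvature term $F(R,R,\mathsf{T}\varphi^*_x\Pi)$ responsible for the anomaly in Proposition \ref{prop_curv}. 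Since this is exactly the diagrammatic bookkeeping already carried out in the proof of Proposition \ref{E_rep2} and the subsequent modification of the action, I would mirror that analysis, the only new input being the vanishing of the total twisted curvature $\overline{F}^\mathscr{P}$.
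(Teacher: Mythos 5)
Your proposal is correct and follows essentially the same route as the paper's proof: reduce the statement to the graded flatness equation $\dd_x \Tilde{\boldsymbol{\Omega}}^{\E,\gamma}_{\de\Sigma} + \frac{1}{2}[\Tilde{\boldsymbol{\Omega}}^{\E,\gamma}_{\de\Sigma},\Tilde{\boldsymbol{\Omega}}^{\E,\gamma}_{\de\Sigma}] = 0$, split it by form degree on $\mathscr{P}$, and identify the three components with associativity of the star product, the derivation/$\overline{\calD}_\mathsf{G}$-closedness property, and the flatness of $\overline{\calD}_\mathsf{G}$ coming from the choice of $\gamma$ as a solution of \eqref{gen_curv}. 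The extra remarks on the $\Delta_{\calV_\Sigma}$ commutators and the diagrammatic bookkeeping are consistent with the paper's treatment and do not change the argument.
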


\begin{proof}
Note that the flatness condition of $\qtconn^{\gamma}$, is equivalent to
\begin{equation}
\dd_x\Tilde{\boldsymbol{\Omega}}^{\E,\gamma}_{\de \Sigma}+ \frac{1}{2}\left[\Tilde{\boldsymbol{\Omega}}^{\E,\gamma}_{\de \Sigma}, \Tilde{\boldsymbol{\Omega}}^{\E,\gamma}_{\de \Sigma}\right] = 0. 
\end{equation}
Separating the equation by form degree in $\mathscr{P}$ this is equivalent to 
\begin{align}
\frac{1}{2}\left[\boldsymbol{\Omega}^{\E}_{\de\Sigma}, \boldsymbol{\Omega}^{\E}_{\de\Sigma}\right] &= 0 \label{eq:Omegagammaflatness1} \\
\dd_x\boldsymbol{\Omega}^{\E}_{\de\Sigma} + \left[\boldsymbol{\Omega}^{\E}_{\de\Sigma},  \left(\overline{\calD}_\mathsf{G}\ee^{\frac{\I}{\hbar}[\E]y}\right)\Big|_{y=\frac{\delta}{\delta[\E]}}\right] &= 0 \label{eq:Omegagammaflatness2}\\
\dd_x\left(\left(\overline{\calD}_\mathsf{G}\ee^{\frac{\I}{\hbar}[\E]y}\right)\Big|_{y=\frac{\delta}{\delta[\E]}}\right) + \left[\left(\overline{\calD}_\mathsf{G}\ee^{\frac{\I}{\hbar}[\E]y}\right)\Big|_{y=\frac{\delta}{\delta[\E]}}, \left(\overline{\calD}_\mathsf{G}\ee^{\frac{\I}{\hbar}[\E]y}\right)\Big|_{y=\frac{\delta}{\delta[\E]}}\right] &= 0. \label{eq:Omegagammaflatness3}
\end{align}
Equation \eqref{eq:Omegagammaflatness1} is just saying that the standard BFV boundary operator squares to zero. Equation \eqref{eq:Omegagammaflatness3} is true because $\overline{\calD}_\mathsf{G}$ is a flat connection. Equation \eqref{eq:Omegagammaflatness2} means that $\boldsymbol{\Omega}^{\E}_{\de\Sigma}$ is a $\overline{\calD}_\mathsf{G}$-closed section. This comes from the fact that the coefficients of $\boldsymbol{\Omega}^{\E}_{\de\Sigma}$ are the same as in the star product. 
\end{proof}

\section{Alternating boundary conditions and the modified differential Quantum Master Equation}
\label{sec:ABC}
\subsection{Consistent boundary conditions}
In \cite{CF1} it was shown that the perturbative expansion of the QFT given from of the Poisson Sigma Model on the disk coincides with Kontsevich's star product, where we expand around the gauge equivalent classical solutions of the given Euler--Lagrange equations, which are $\mathsf{X}=x=const.$, $\boldsymbol{\eta}=0$ (recall Subsection \ref{subsec:pert_quant} and see also Appendix \ref{app:PSM}). The boundary conditions on the disk $D$ are exactly set such that $\boldsymbol{\eta}\vert_{\partial D}=0$ in order to be consistent with these types of solutions. 

\subsection{Construction of boundary conditions}
In \cite{CMW4}, the globalization construction was only considered for boundaries with a single polarization. We want to extend the methods developed in the previous sections following \cite{CMW4} to describe deformation quantization of the relational symplectic groupoid \cite{CF5,CC1,CC2} extending what we did in \cite{CMW2} in the case of a constant Poisson structure. This requires that we perform the BV-BFV quantization in the presence of ``alternating'' boundary conditions, which we can formulate for any worldsheet $\Sigma$: Let $\partial\Sigma=\bigsqcup_\ell\partial\Sigma^{(\ell)}$ and consider a partition into two distinguished components for every connected component $\partial\Sigma^{(\ell)}$ of the boundary given by $\partial\Sigma^{(\ell)}=\partial_0\Sigma^{(\ell)}\sqcup\partial_\calP\Sigma^{(\ell)}$. Each $\partial\Sigma^{(\ell)}$ is given as a disjoint union of an even number of intervals $I_1^{(\ell)},\ldots,I_n^{(\ell)}$, such that $\partial_0\Sigma^{(\ell)}=\bigsqcup_{\substack{1\leq j \leq n\\\text{$j$ odd}}}I_j^{(\ell)}$ and $\partial_\calP\Sigma^{(\ell)}=\bigsqcup_{\substack{1\leq j \leq n\\\text{$j$ even}}}I_j^{(\ell)}$.
 Now the alternating condition is that on components of $\partial_0\Sigma^{(\ell)}$ we set $\hateta= 0$, and on components of  $\partial_\calP\Sigma^{(\ell)}$  we choose some polarization $\mathcal{P}_j$ for each $I_j$, and consider the corresponding boundary fields. We think of the endpoints of the intervals as ``corners''. Moreover, we denote by $\de_1\Sigma$ the components of $\de_\calP\Sigma$ with the $\frac{\delta}{\delta\E}$-polarization and by $\de_2\Sigma$ the components of $\de_\calP\Sigma$ with the $\frac{\delta}{\delta\mathbb{X}}$-polarization.

\begin{rem}
The choice of polarization imposes boundary conditions on the fluctuations. The boundary conditions corresponding to our polarizations for split AKSZ theories are some generalization of Dirichlet and Neumann conditions. Note that, even if fixing a field (to zero, in the case of a fluctuation) on the boundary looks like a Dirichlet boundary condition, it may also be thought as a Neumann one, for our theory is of first order.
\end{rem}

\begin{figure}[ht]
\begingroup%
  \makeatletter%
  \providecommand\color[2][]{%
    \errmessage{(Inkscape) Color is used for the text in Inkscape, but the package 'color.sty' is not loaded}%
    \renewcommand\color[2][]{}%
  }%
  \providecommand\transparent[1]{%
    \errmessage{(Inkscape) Transparency is used (non-zero) for the text in Inkscape, but the package 'transparent.sty' is not loaded}%
    \renewcommand\transparent[1]{}%
  }%
  \providecommand\rotatebox[2]{#2}%
  \ifx\svgwidth\undefined%
    \setlength{\unitlength}{258.04146255bp}%
    \ifx\svgscale\undefined%
      \relax%
    \else%
      \setlength{\unitlength}{\unitlength * \real{\svgscale}}%
    \fi%
  \else%
    \setlength{\unitlength}{\svgwidth}%
  \fi%
  \global\let\svgwidth\undefined%
  \global\let\svgscale\undefined%
  \makeatother%
  \begin{picture}(1,0.51358826)%
    \put(0,0){\includegraphics[width=\unitlength]{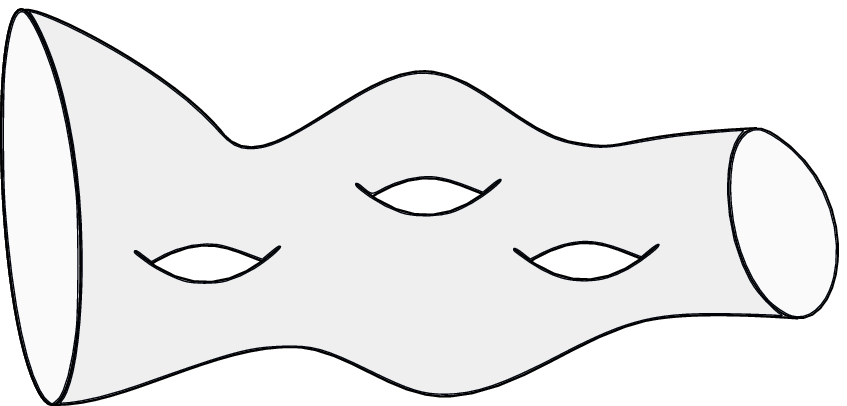}}%
    \put(0.02042569,0.4954046){\color[rgb]{0,0,0}\makebox(0,0)[lb]{\smash{$\de\Sigma^{(1)}$}}}%
    \put(0.91353114,0.04012224){\color[rgb]{0,0,0}\makebox(0,0)[lb]{\smash{$\de\Sigma^{(2)}$}}}%
    \put(0.40137113,0.12070307){\color[rgb]{0,0,0}\makebox(0,0)[lb]{\smash{$\Sigma$}}}%
  \end{picture}%
\endgroup%
%
%
%
\caption{Example of a worldsheet manifold $\Sigma$ with genus $g=3$ and two disjoint boundary components $\partial\Sigma^{(1)}$ and $\partial \Sigma^{(2)}$.}
\label{surface}
\end{figure}

\begin{figure}[ht]
\centering
\tikzset{
particle/.style={thick,draw=black},
particle2/.style={thick,draw=black, postaction={decorate},
    decoration={markings,mark=at position .9 with {\arrow[black]{triangle 45}}}},
gluon/.style={decorate, draw=black,
    decoration={coil,aspect=0}}
 }
\begin{tikzpicture}[x=0.04\textwidth, y=0.04\textwidth]
\node[](1) at (0,0){};
\node[](2) at (5,0){};
\node[](3) at (4,-2){};
\node[](8) at (1,2){};
\draw[fill=black] (3) circle (.07cm);
\node[](5) at (1,-2){};
\draw[fill=black] (5) circle (.07cm);
\node[](7) at (4,2){};
\draw[fill=black] (7) circle (.07cm);
\draw[fill=black] (8) circle (.07cm);
\node[](m) at (2.5,0){$\partial\Sigma^{(\ell)}$};
\node[](9) at (0,0){};
\draw[fill=black] (9) circle (.07cm);
\node[](10) at (5,0){};
\draw[fill=black] (10) circle (.07cm);
\node[](2) at (5,0){};
\node[](I_1) at (5.25,1.5){$I_1$};
\node[](I_2) at (5.25,-1.5){$I_2$};
\node[](I_3) at (2.5,-3){$I_3$};
\node[](I_4) at (-0.25,-1.5){$I_4$};
\node[](I_5) at (-0.25,1.5){$I_5$};
\node[](I_6) at (2.5,3){$I_6$};
\semiloop[particle]{1}{2}{0};
\semiloop[particle]{2}{1}{180};
\end{tikzpicture}
\caption{Example of a boundary component of $\Sigma$ as in Figure \ref{surface}, where the boundary $\partial \Sigma^{(\ell)}$ is split into $n=6$ disjoint components, i.e. $\partial \Sigma^{(\ell)}=\bigsqcup_{1\leq j\leq 6} I^{(\ell)}_j$ with $\partial_0\Sigma^{(\ell)}=\bigsqcup_{j\in\{1,3,5\}}I^{(\ell)}_j$ and $\partial_\calP\Sigma^{(\ell)}=\bigsqcup_{j\in\{2,4,6\}}I^{(\ell)}_j$. On $\de_0\Sigma^{(\ell)}$ we set $\hateta=0$. On $I^{(\ell)}_2,I^{(\ell)}_4,I^{(\ell)}_6$ we choose polarizations and take the corresponding boundary conditions. 
}
\end{figure}
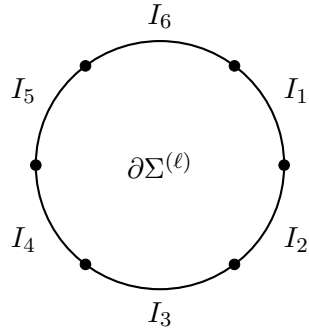

\subsection{Curvature Anomaly} 
Unlike in the constant case discussed in \cite{CMW2}, upon quantization\footnote{Note that we are not performing ``extended'' quantization of a manifold with corners in the sense of extended TQFTs, but simply apply BV-BFV quantization where we allow boundary conditions to change along connected components of the boundary.} the modified differential Quantum Master Equation fails to be satisfied. This effect arises from the curvature of the deformed Grothendieck connection.

\begin{prop}
\label{prop_curv}
Consider the full state $\btpsi_{\Sigma,x}$ defined by $\Tilde{\calS}_{\Sigma,x}$ as in Definition \ref{full_state_2}. Then 
\begin{equation}
\nabla_\mathsf{G}\btpsi_{\Sigma,x}=\exp\left(\frac{\I}{\hbar}\int_{\partial_0\Sigma}F(R,R,\mathsf{T}\varphi^*_x\Pi)(\mathscr{X})\right)\btpsi_{\Sigma,x},
\end{equation}
where we integrate the $\mathsf{X}$-fluctuation $\mathscr{X}$ in $F$ along $\de_0\Sigma$. Here $F$ denotes the curvature of the deformed Grothendieck connection $\calD_\mathsf{G}$ defined in Appendix \ref{app:globalization} and $\nabla_\mathsf{G}$ is the quantum Grothendieck BFV operator defined as in \eqref{AKSZ_mdQME}.
\end{prop}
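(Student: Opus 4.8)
The plan is to run the same Stokes'-theorem argument that establishes the modified differential Quantum Master Equation \eqref{AKSZ_mdQME} for split AKSZ theories, and to isolate the single codimension-one boundary stratum of configuration space that is absent in the purely polarized situation of \cite{CMW4}. Recall that $\btpsi_{\Sigma,x}$ is the perturbative expansion of $\int_{\calL}\ee^{\frac{\I}{\hbar}\Tilde{\calS}_{\Sigma,x}}$ as a sum over Feynman graphs of integrals $\int_{\mathsf{C}_\Gamma(\Sigma)}\omega_\Gamma$, and that the three pieces of $\nabla_\mathsf{G}=\dr_x-\I\hbar\Delta_{\calV_\Sigma}+\frac{\I}{\hbar}\boldsymbol{\Omega}_{\de\Sigma}$ act respectively on the $x$-dependence carried by the $R$-vertices and by $\mathsf{T}\varphi_x^*\Pi$, on the residual fields, and on the boundary. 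Applying $\nabla_\mathsf{G}$ amounts, via Stokes, to integrating $\dr\omega_\Gamma$ over $\mathsf{C}_\Gamma(\Sigma)$ and hence to summing the contributions of the boundary faces of the compactified configuration spaces.

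First I would classify these faces exactly as in \cite{CMW4}: (a) collapse of points in the bulk, which cancels against $\Delta_{\calV_\Sigma}$ and the Classical Master Equation satisfied by $\Hat{\calS}_{0,\Sigma}+\Hat{\calS}_{\Pi,\Sigma,x}$; (b) collapse at the polarized boundary $\partial_\calP\Sigma$, which reproduces $\boldsymbol{\Omega}_{\de\Sigma}$ and cancels against the corresponding term of $\nabla_\mathsf{G}$; and (c) collapse at $\partial_0\Sigma$. In the polarized case only (a) and (b) occur, and their pairwise cancellation is precisely \eqref{AKSZ_mdQME}; the whole content of the proposition lies in face (c).

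The key computation is the analysis of collapse at $\partial_0\Sigma$, where the boundary condition $\hateta=0$ forbids any $\E$-boundary vertex or $\E$-leaf and fixes the Dirichlet-type behaviour of the propagator inherited from the disk. Rerunning the dimensional count of the proof of Proposition \ref{E_rep2} for a subgraph $\Gamma'$ collapsing at a point of $\partial_0\Sigma$, with $m$ vertices labelled by $\mathsf{T}\varphi_x^*\Pi$, $r$ vertices labelled by $R$ and $k$ boundary vertices, the vanishing of arrows leaving $\Gamma'$ together with the absence of $\E$-vertices ($k=0$) forces $r=2$. Thus the only surviving stratum is the family of graphs that, in the polarized $\E$-representation, defines the curvature operator $\boldsymbol{\Omega}^\E_{(2)}=F(R,R,\mathsf{T}\varphi_x^*\Pi)(\tfrac{\delta}{\delta[\E]})$ of Proposition \ref{E_rep2}; the two $R$-vertices supply the $\dr x\wedge\dr x$ matching the two-form nature of the $\calD_\mathsf{G}$-curvature $F$. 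Because there is no $\E$-field on $\partial_0\Sigma$ to receive the open legs, these attach instead to the only surviving bulk field there, namely the $\mathscr{X}$-fluctuation, so the stratum contributes $\frac{\I}{\hbar}\int_{\partial_0\Sigma}F(R,R,\mathsf{T}\varphi_x^*\Pi)(\mathscr{X})$, with $\mathscr{X}$ still to be integrated out against the remaining fast fields. Since this is a new local term sitting in the exponentiated action rather than a term added linearly to the state, it factors out as the claimed multiplicative factor $\exp\!\big(\frac{\I}{\hbar}\int_{\partial_0\Sigma}F(R,R,\mathsf{T}\varphi_x^*\Pi)(\mathscr{X})\big)\btpsi_{\Sigma,x}$.

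The hard part will be the careful bookkeeping in face (c): one must check that every stratum at $\partial_0\Sigma$ other than the $r=2,k=0$ curvature configuration either vanishes on account of $\hateta=0$ or cancels exactly as in the polarized case, so that no residual $\Delta_{\calV_\Sigma}$- or $\dr_x$-contribution is left over; that the signs and the placement of the $\mathscr{X}$-arguments are such that the insertion is genuinely the Fedosov curvature $F$ of $\calD_\mathsf{G}$ (invoking $(\calD_\mathsf{G})^2\sigma=[F,\sigma]_\star$ and the structural equations of Proposition \ref{alg_str}); and that the resulting curvature insertion enters inside the exponent of the BV integral, which is what produces the exponential rather than a linear anomaly. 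I expect the degree count and the identification of the surviving legs with $\mathscr{X}$ to be the genuinely delicate point, the remainder being a transcription of the cancellations already established in \cite{CMW4}.
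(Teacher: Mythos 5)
Your proposal is correct and follows essentially the same route as the paper's own proof: rerun the Stokes'-theorem argument for the modified differential Quantum Master Equation from \cite{CMW4}, observe that the only new codimension-one strata are collapses at points of $\partial_0\Sigma$, and use the dimension count (no boundary vertices since $\hateta=0$, hence $r=2$) to identify the surviving contribution with the curvature $F(R,R,\mathsf{T}\varphi_x^*\Pi)$ evaluated on $\mathscr{X}$, which cannot be absorbed into the BFV boundary operator for lack of boundary fields on $\partial_0\Sigma$. Your additional remarks on the exponentiation of the anomaly and on the attachment of the open legs to $\mathscr{X}$ are consistent with, and slightly more explicit than, the paper's argument.
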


\begin{proof}
If we try to prove the modified differential Quantum Master Equation as in \cite{CMW4}, when integrating over the boundary of the compactified configuration space there are strata where a bulk graph collapses at a point $u \in \de_0\Sigma$, i.e. one of the boundary components where $\hateta=0$. The degree count as we have seen before, shows that we will only end up with graphs without any boundary vertices and precisely two $R$ vertices in the bulk. Summing over all these graphs one obtains the curvature of the Grothendieck connection as in Appendix \ref{app:PSM}. However, since there are no boundary fields on $\de_0 \Sigma$, these terms cannot be cancelled by a term in the BFV boundary operator. 
\end{proof}

\begin{rem}
This can be interpreted as a quantum anomaly, since this problem is not present at the classical level.
To restore the modified differential Quantum Master Equation, we can add additional terms to the action, reminiscent to the addition of counterterms. This will yield new boundary terms, but they can be cancelled by adding appropriate terms to the BFV boundary operator as we have already seen in Subsection \ref{subsec:mod_action}, if we allow for a slight extension of the space of states (see Subsection \ref{subsec:ex_states}). 
\end{rem}


\subsection{New boundary contributions in the proof of the modified differential Quantum Master Equation} 
\label{subsec_boundary}

To cancel this anomaly we add quantum counterterms to the action, specifically, the terms $\calS_{\Sigma,\gamma}$  and $\calS_{\Sigma,\omega}$ defined in \eqref{Sgamma} and \eqref{Somega} respectively. The new terms in the action give rise to additional vertices. Namely, we now have vertices of arbitrary valence on components of the boundary where $\hatX \neq 0$, i.e. on the $\hateta=0$ boundary components and the components of $\de_\calP\Sigma$ in $\E$-representation. At such a vertex we place the corresponding derivative of $\gamma$ in the formal directions. 
Also, there are new bulk vertices labeled by $\omega$, which are similarly labeled by derivatives of $\omega$ in the formal directions. 

 Let $\gls{scrC}$ denote the set of all corner points of $\Sigma$. There are two types of corners: Let $\mathscr{C}_2\subset\mathscr{C}$ denote the subset containing those corner points which connect a $\frac{\delta}{\delta\mathbb{X}}$-polarized connected component (i.e. a component in $\mathbb{E}$-representation) of $\partial_0\Sigma$ with a connected component of $\partial_\calP\Sigma$ and let $\mathscr{C}_1\subset\mathscr{C}$ denote the subset containing those corner points which connect a $\frac{\delta}{\delta\mathbb{E}}$-polarized connected component of $\partial_0\Sigma$ with a connected component of $\partial_\calP\Sigma$. 

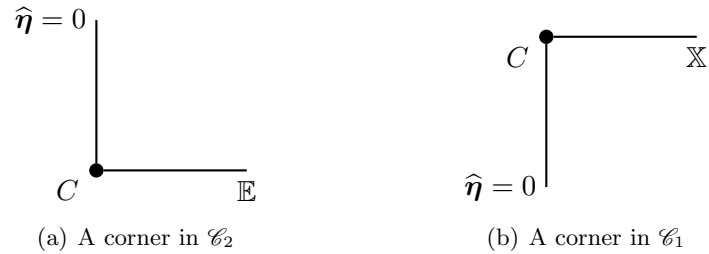
\begin{figure}[ht]
\subfigure[A corner in $\mathscr{C}_2$]{
\centering
\begin{tikzpicture}
\node[coordinate] (bdry1) at (0,2){};
\node[left] at (bdry1.west){$\widehat{\boldsymbol{\eta}}=0$};
\node[vertex] (corner) at (0,0){};
\node[below left] at (corner.west) {$C$};
\node[coordinate, below] (bdry2) at (2,0){};
\node[below] at (bdry2.south){$\mathbb{E}$}; 
\draw[thick] (bdry1.center) -- (corner) -- (bdry2.center);
\end{tikzpicture}
}
\hspace{2cm}
\subfigure[A corner in $\mathscr{C}_1$]{
\begin{tikzpicture}
\node[coordinate] (bdry1) at (0,-2){};
\node[left] at (bdry1.west){$\widehat{\boldsymbol{\eta}}=0$};
\node[vertex] (corner) at (0,0){};
\node[below left] at (corner.west) {$C$};
\node[coordinate, below] (bdry2) at (2,0){};
\node[below] at (bdry2.south){$\mathbb{X}$}; 
\draw[thick] (bdry1.center) -- (corner) -- (bdry2.center);
\end{tikzpicture}
}
\caption{The two types of corners.}
\end{figure}

\begin{defn}[Twisted quantum Grothendieck BFV operator]\label{defn_twisted_conn}
We define the \textsf{twisted quantum Grothendieck BFV operator} by
\begin{equation}
\label{twisted_conn}
\gls{nabla^gamma_G}:=\dd_x-\I\hbar\Delta_{\calV_\Sigma}+\frac{\I}{\hbar}\underbrace{\left(\Tilde{\boldsymbol{\Omega}}_{\de_1\Sigma}^{\mathbb{X}}+\Tilde{\boldsymbol{\Omega}}_{\de_2\Sigma}^{\E,\gamma}\right)}_{=:\Tilde{\boldsymbol{\Omega}}_{\de\Sigma}^{\calP,\gamma}}.
\end{equation}
\end{defn}
\begin{rem}\label{rem_flatness_twisted}
The twisted quantum Grothendieck BFV operator is also a coboundary operator. This follows from Proposition \ref{prop:flatness_twisted} and the fact that $
\Tilde{\boldsymbol{\Omega}}_{\de_1\Sigma}^{\mathbb{X}}$ also squares to zero. 
\end{rem}

We can now state the main theorem of this section.
\begin{thm}
\label{thm_corners}
Consider the twisted full state $\btpsi_{\Sigma,x}^\gamma$ defined in Definition \ref{full_covariant_state} and the twisted quantum Grothendieck BFV operator $\nabla^\gamma_\mathsf{G}$ defined in Definition \ref{defn_twisted_conn}.
Then
\begin{equation}
\label{corner}
\nabla_{\mathsf{G}}^\gamma\btpsi_{\Sigma,x}^\gamma=\sum_{C\in\mathscr{C}_1}T(C)\btpsi_{\Sigma,x}^\gamma,
\end{equation}
where $T(C)$ are functionals on $\calB^\calP_{\de\Sigma}$ with values in $\Omega^1(\mathscr{P})$, depending only on the values of the fields at the corner point $C$.
\end{thm}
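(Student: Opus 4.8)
The plan is to run the same Stokes--theorem argument on compactified configuration spaces that proves the modified differential Quantum Master Equation in \cite{CMW4}, now for the twisted action $\mathbb{S}_{\Sigma,x}$, and to track precisely which boundary strata survive. Writing $\btpsi^\gamma_{\Sigma,x}$ as in \eqref{full_state1} as a sum of integrals of forms $\omega_\Gamma$ over the FMAS compactifications $\mathsf{C}_\Gamma(\Sigma)$ (Appendix \ref{app:Conf}), the three pieces of $\nabla^\gamma_\mathsf{G}$ act as follows: $\dd_x$ differentiates the $x$-dependence carried by $\mathsf{T}\varphi_x^*\Pi$, by $R$, by $\gamma$ and by $\omega$; $-\I\hbar\Delta_{\calV_\Sigma}$ contracts residual-field leaves; and $\frac{\I}{\hbar}\Tilde{\boldsymbol{\Omega}}^{\calP,\gamma}_{\de\Sigma}$ inserts the collapsed boundary graphs. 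First I would use Stokes' theorem, $\int_{\mathsf{C}_\Gamma(\Sigma)}\dd\omega_\Gamma=\int_{\partial\mathsf{C}_\Gamma(\Sigma)}\omega_\Gamma$, to rewrite the combination as a sum over the codimension-one faces of $\mathsf{C}_\Gamma(\Sigma)$, organized into: (i) interior collisions of bulk points, (ii) collapse of a connected subgraph onto $\de_1\Sigma$ or $\de_2\Sigma$, (iii) collapse onto $\de_0\Sigma$, and (iv) collapse onto a corner $C\in\mathscr{C}$.

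Second, I would dispose of families (i)--(iii). Family (i) cancels by the Classical Master Equation for $\mathbb{S}_{\Sigma,x}$, exactly as on closed surfaces. Family (ii) is reproduced by the boundary operator: on $\de_2\Sigma$ the collapsing graphs are the $\star$-product, connection and $\gamma$ graphs identified in Section \ref{subsec:mod_action}, so they assemble into $\Tilde{\boldsymbol{\Omega}}^{\E,\gamma}_{\de_2\Sigma}$, while on $\de_1\Sigma$ they assemble into $\Tilde{\boldsymbol{\Omega}}^{\mathbb{X}}_{\de_1\Sigma}$; both cancel against the explicit $\frac{\I}{\hbar}\Tilde{\boldsymbol{\Omega}}^{\calP,\gamma}_{\de\Sigma}$ term. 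Family (iii) is the curvature anomaly of Proposition \ref{prop_curv}: the degree count $r+k=2$ with no boundary vertex ($k=0$) forces two $R$-vertices, producing $F(R,R,\mathsf{T}\varphi_x^*\Pi)$; but now the $\omega$-bulk vertices and the $\gamma$-vertices on $\de_0\Sigma$ supplied by the counterterms also contribute, and summing everything and invoking the Maurer--Cartan relation $\overline{F}^\mathscr{P}_\omega = F^\mathscr{P}+\varepsilon\omega+\calD_\mathsf{G}\gamma+\gamma\star\gamma=0$ of \eqref{gen_curv}---equivalently the identity $A(R,\mathsf{T}\varphi^*_x\Pi)(\gamma)-F(R,R,\mathsf{T}\varphi^*_x\Pi)-\gamma\star\gamma-\omega=\dd_x\gamma$ used in Section \ref{subsec:mod_action}---shows these $\de_0\Sigma$ contributions are exactly cancelled by $\dd_x$ acting on the $\gamma$- and $\omega$-dependence of $\btpsi^\gamma_{\Sigma,x}$.

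Third, the crux is family (iv). Near a corner, the FMAS stratification contains faces on which a subgraph collapses to the corner point, where two boundary conditions meet; there the relevant weight is computed with the changing--boundary--condition propagator of Appendix \ref{app:prop}. The decisive asymmetry is that the counterterm $\calS_{\Sigma,\gamma}=\int_{\partial\Sigma}\gamma(x;\hatX)$ contributes dynamically only where the $\hatX$-fluctuation survives, i.e.\ on $\de_0\Sigma$ and on the $\E$-representation boundary, but not on the $\mathbb{X}$-representation boundary. Hence at a corner $C\in\mathscr{C}_2$, adjacent to an $\E$-representation piece, the $\gamma$-vertices from both adjacent components are present and complete the collapse into a $\dd_x\gamma$-type term that cancels just as in family (iii). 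At a corner $C\in\mathscr{C}_1$, adjacent to an $\mathbb{X}$-representation piece, no such $\gamma$-vertex is available on the polarized side, the cancellation is obstructed, and an uncancelled local contribution remains. I would then perform the local degree count at $C$ to show the surviving graphs carry exactly one $R$-vertex---hence a single $\dd x^i$, so the contribution lies in $\Omega^1(\mathscr{P})$---and depend only on the fields evaluated at $C$; defining $T(C)$ to be this $\Omega^1(\mathscr{P})$-valued functional on $\calB^\calP_{\de\Sigma}$ and factoring the common state then gives \eqref{corner}.

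The hardest part will be the corner bookkeeping of family (iv): pinning down the exact corner faces in the FMAS compactification, controlling the limiting behaviour of the corner propagator there, and carrying out the localized degree count that simultaneously kills the $\mathscr{C}_2$ corners and extracts the $\Omega^1(\mathscr{P})$-valued functional $T(C)$ at the $\mathscr{C}_1$ corners. In particular one must check that the higher-codimension strata, where several collapses coincide at or near a corner, do not contribute, and that at $\mathscr{C}_2$ the $\gamma$-twist genuinely closes the computation into an exact term rather than leaving a residue.
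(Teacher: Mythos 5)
Your overall strategy coincides with the paper's: Stokes' theorem on the FMAS compactifications, with families (i)--(iii) cancelling against the Classical Master Equation, the boundary operator $\Tilde{\boldsymbol{\Omega}}^{\calP,\gamma}_{\de\Sigma}$, and the Maurer--Cartan identity \eqref{gen_curv} respectively; that part is correct. The gap is in family (iv), and it is not merely bookkeeping left for later: the mechanism you propose at the corners is the wrong one. At a corner the configuration space loses the translation symmetry along the boundary, so the dimension count becomes $2n+k-1=2m+r$ with $n=m+r$, i.e.\ $r+k=1$ rather than $r+k=2$. Consequently neither $F(R,R,\mathsf{T}\varphi_x^*\Pi)$ (which needs $r=2$) nor $\gamma\star\gamma$ (which needs $k=2$) nor $A(R,\mathsf{T}\varphi_x^*\Pi)(\gamma)$ (which needs $r=k=1$) can ever appear on a corner face, so your claim that at $C\in\mathscr{C}_2$ the $\gamma$-vertices from both sides ``complete the collapse into a $\dd_x\gamma$-type term that cancels just as in family (iii)'' cannot work. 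What actually kills the $\mathscr{C}_2$ faces is different and more elementary: since the boundary conditions agree on both sides of such a corner, the quadrant maps to the upper half-plane (via $z\mapsto z^2$) with the standard Kontsevich propagator but \emph{without} quotienting by real translations; the propagator depends on real parts only through differences, so integrating over the real part of the center of mass annihilates every collapsing subgraph containing a bulk vertex. The only survivors are single $\gamma$-vertices approaching $C$ from either adjacent component, and these two faces carry opposite orientations and cancel pairwise.

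At $C\in\mathscr{C}_1$ the same count $r+k=1$ leaves \emph{two} surviving families, $(r=1,k=0)$ and $(r=0,k=1)$, i.e.\ one $R$-vertex in the bulk \emph{or} one $\gamma$-vertex on the boundary (plus arbitrarily many $\mathbb{X}$-vertices on the polarized side); your statement that the surviving graphs carry ``exactly one $R$-vertex'' drops the $\gamma$ contribution. The conclusion $T(C)\in\Omega^1(\mathscr{P})$ is unharmed only because $\gamma$, like $R$, is a $1$-form on $\mathscr{P}$, but the functional $T(C)$ itself contains both types of terms. You should also make explicit the observation that at $\mathscr{C}_1$ the changing--boundary--condition propagator vanishes when \emph{either} endpoint sits at $C$, so no leaves can enter or leave the collapsing subgraph and only entire connected components collapse --- this is precisely what makes $T(C)$ act as a multiplication operator on $\btpsi^\gamma_{\Sigma,x}$ and hence yields the factorized form \eqref{corner}.
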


\begin{rem}
In particular, $T(C)$ are non-regular functionals in the sense of \ref{regular_fun}. In Section \ref{sec:mdQME}, we discuss an extension of the space of operators and states, which will allow us to rewrite equation \eqref{corner} as a closedness condition with respect to a differential on this extended space.
\end{rem}

\subsection{Proof of Theorem \ref{thm_corners}}
If we try to proceed with the proof of the modified differential Quantum Master Equation as in \cite{CMW4}, we get terms where a part of a graph collapses on $\de_0\Sigma$, i.e. the part of the boundary where $\Hat{\boldsymbol{\eta}}= 0$. We will now analyze these terms more closely. Let $\Gamma' \subset \Gamma$ be a subgraph that collapses on a point of the boundary, and denote by $\Gamma/\Gamma'$ the resulting graph. Suppose $\Gamma'$ has $n$ bulk and $k$ boundary vertices on $\de_0\Sigma$. Then the dimension of the corresponding boundary stratum is $2n+k-2$ as we have seen before. The contribution of the graph is non-vanishing only if the form degree of $\omega_{\Gamma'}$ is also $2n+k-2$. The bulk vertices correspond to either $\Pi$ or $R$, the former has two outgoing arrows, the latter only one. If one of these arrows points out of $\Gamma'$, then $\omega_{\Gamma/\Gamma'} = 0$, since it contains a propagator with the tail evaluated on the $\Hat{\boldsymbol{\eta}}=0$ boundary component. Hence all these arrows must point to another vertex in $\Gamma'$. Suppose there are $m$ vertices with two outgoing arrows and $r$ vertices with one outgoing arrow. Then we must have the following system of equations:
\begin{align}
\label{system1}
2n+k-2&=2m +r \\
n&=m+r,
\end{align}
which is equivalent to $r = 2-k$ ($m$ is arbitrary, and $n = m+r$). Since $r \geq 0$, we conclude that $k$ is either $0,1$, or $2$. Let us analyze these possibilities in more detail. 
\subsubsection{Terms with $k=0$}
\label{sec:k0}
In these terms there are no boundary vertices. They are also present if we do not add $\mathcal{S}_{\Sigma,\gamma}$ to the action. We have $r = 2-k = 2$, so these terms are given by graphs with $R$ at two vertices. Summing over all these terms yields the curvature of the Grothendieck connection, $F$ (again, see Appendix \ref{app:PSM} for details). 

This is what spoils the modified differential Quantum Master Equation, since we cannot cancel it with terms in the BFV boundary operator, which can only cancel the boundary contributions on boundary components with free boundary fields. We are thus forced to add other terms to the action to cancel the appearance. 
\subsubsection{Terms with $k=1$}
In these terms there is one boundary vertex labeled by $\gamma$, and one bulk vertex labeled by the vector field $R$. If we sum over all such graphs, we get 
\begin{equation}
A(R,\mathsf{T}\varphi_x^*\Pi)(\gamma) = A(R,\mathsf{T}\varphi_x^*\Pi)(\gamma_i)\dd x^i
\end{equation}
by the Definition of $A$ as in Appendix \ref{app:PSM}.  
\subsubsection{Terms with $k=2$}  
In these terms there are two boundary vertices labeled by $\gamma$, and no vertices labeled by the vector field $R$. If we sum over all such terms, we get precisely the star product $\gamma \star \gamma$.  
\begin{figure}[ht]
\centering
\subfigure[Graph with $r=2$, $k=0$]{
\begin{tikzpicture}[x=0.04\textwidth, y=0.04\textwidth]
\node[vertex] (R1) at (1,1) {};
\node[left] at (R1.west) {$R$};
\node[vertex] (R2) at (1,-1) {};
\node[left] at (R2.west) {$R$};
\node[vertex] (pi) at (2,0) {};
\node[right] at (pi.east) {$\Pi$};
\node[coordinate] (top) at (0,3) {}; 
\node[coordinate] (bottom) at (0,-3){};
\semiloop[black]{top}{bottom}{-90}[];
\draw[thick,black] (top)  -- (bottom);
\draw[fermion] (pi) -- (R1);
\draw[fermion] (pi) -- (R2);
\draw[fermion] (R2) to [bend left](R1);
\draw[fermion] (R1) to [bend left](R2);
\end{tikzpicture} 
}\hspace{2cm}
\subfigure[Graph with $r=k=1$ ]{
\begin{tikzpicture}
\node[vertex] (center) at (0,0) {};
\node[left] at (center.west) {$\gamma$}; 
\node[vertex] (R) at (1,0) {};
\node[right] at (R.east) {$R$};
\node[coordinate] (top) at (0,2) {}; 
\node[coordinate] (bottom) at (0,-2){};
\semiloop[black]{top}{bottom}{-90}[];
\draw[thick,black] (top) -- (center) -- (bottom);
\draw[fermion] (R) -- (center);
\end{tikzpicture} 
}\hspace{2cm}
\subfigure[Graph with $r=0$, $k=2$]{
\begin{tikzpicture}
\node[vertex] (bdry1) at (0,1) {};
\node[left] at (bdry1.west) {$\gamma$}; 
\node[vertex] (bdry2) at (0,-1) {};
\node[left] at (bdry2.west) {$\gamma$};
\node[vertex] (pi) at (1,0) {};
\node[right] at (pi.east) {$\Pi$};
\node[coordinate] (top) at (0,2) {}; 
\node[coordinate] (bottom) at (0,-2){};
\semiloop[black]{top}{bottom}{-90}[];
\draw[thick,black] (top) -- (bdry1) -- (bdry2) -- (bottom);
\draw[fermion] (pi) -- (bdry1);
\draw[fermion] (pi) -- (bdry2);
\end{tikzpicture} 
}
\caption{Different contributions at the boundary}
\end{figure}
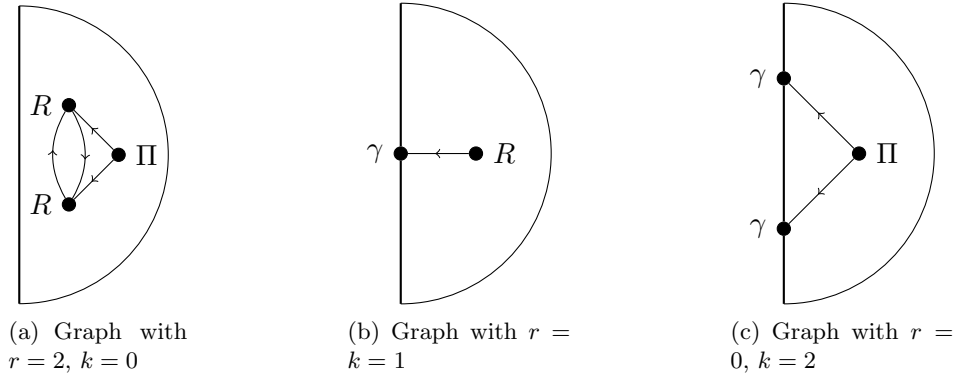

\subsubsection{New contributions at the corners}
Introducing alternating boundary conditions means that the compactification of the configuration space changes. Namely, there are new boundary strata corresponding to the collapse of vertices at one of the corners. Such a collapse can be modeled on a configuration of points on the upper right quadrant, with a choice of boundary conditions on both sides. Here there is no translation symmetry, so the dimension of the boundary stratum is different. Adding $\calS_{\Sigma,\gamma}$ to the action cancels the anomaly that comes from allowing for alternating boundary conditions. However, it results in new boundary contributions that come from graphs collapsing at the corners, as we will show presently. The propagator still vanishes when its tail is evaluated at one of the corners (this can be checked from the explicit formula for the propagator in Appendix \ref{app:prop}). For this reason, as above if some subgraph $\Gamma'$ of a graph $\Gamma$ collapses at a corner, the contribution is only non-vanishing if no arrows leave $\Gamma'$. Let us start at a corner $C$ in $\mathscr{C}_2$. Then we cannot have propagators ending at the $\frac{\delta}{\delta\mathbb{X}}$-polarized boundary, since otherwise we need to evaluate the $\mathbb{E}$-field at the corner point, which is equal to zero because of its boundary condition. So, any subgraph collapsing at $C$ can only have bulk vertices, say $n = m+r$ of them, where $m$ denotes the number of interaction and $r$ the number of $R$ vertices, and vertices and $\de_\calP\Sigma$, say $k$ of them. Counting the dimensions we arrive at the following system of equations: 
\begin{align}
\label{eq:system2}
2n+k-1&=2m +r \\
n&=m+r,
\end{align}
which has the solutions $k=0,r=1$ and $k=1,r=0$, with $m$ arbitrary. However, at these corners, graphs with bulk vertices do not contribute, this is the statement of the following lemma. 
\begin{lem}
If $\Gamma'$ is a subgraph of $\Gamma$ containing bulk points, then the integral of $\omega_{\Gamma}$, defined as in \eqref{eq:omega_gamma}, over the boundary face of $\mathsf{C}_{\Gamma}$ where $\Gamma'$ collapses at a corner $C \in \mathscr{C}_2$ vanishes.
\end{lem}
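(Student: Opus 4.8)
The plan is to study the integral of $\omega_\Gamma$ over the boundary face of $\mathsf{C}_\Gamma$ where $\Gamma'$ collapses at $C$ by exploiting the product structure of the FMAS compactification to factorize it into a local corner contribution and a contribution from the complementary graph $\Gamma/\Gamma'$. On this face the vertices of $\Gamma'$ live in the tangent cone of $\Sigma$ at $C$ — a quadrant bounded by one ray carrying the condition $\hateta=0$ and one ray carrying the $\frac{\delta}{\delta\mathbb{X}}$ (i.e. $\mathbb{E}$) polarization — modulo the overall scaling that accounts for the dimension $2n+k-1$ recorded in \eqref{eq:system2}. The propagators internal to $\Gamma'$ get replaced by their limiting values on this quadrant, while the edges joining $\Gamma/\Gamma'$ to $\Gamma'$ become propagators with one endpoint pinned at $C$. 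Thus $\int_{\mathrm{face}}\omega_\Gamma$ equals, up to sign, an integral over $\mathsf{C}_{\Gamma/\Gamma'}$ times a purely local integral $\int \omega_{\Gamma'}^{\mathrm{lim}}$, and it suffices to prove that this local factor vanishes as soon as $\Gamma'$ contains at least one bulk vertex, i.e. $n=m+r\geq 1$.

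First I would record the bookkeeping already established before the lemma: no arrow may leave $\Gamma'$ (otherwise some propagator has its tail on the $\hateta=0$ ray and vanishes), so the admissible configurations are exactly $(r,k)=(1,0)$ and $(r,k)=(0,1)$ with $m$ arbitrary, and the hypothesis of the lemma is precisely $n\geq 1$. The crucial input is the explicit propagator for the PSM with changing boundary conditions constructed in Appendix~\ref{app:prop}: its limiting form at a $\mathscr{C}_2$ corner is the derivative of a harmonic angle function, it vanishes when its tail lies on the $\hateta=0$ ray or at $C$, and a head on the $\mathbb{E}$-ray is forbidden because it would force the evaluation of the $\mathbb{E}$-field at $C$, which is zero by the boundary condition.

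The heart of the argument is then to eliminate the position of a single bulk vertex $v$ of $\Gamma'$ in this local model. Since every edge incident to $v$ is an angle (hence closed, locally exact) $1$-form, the integrand is, as a form in the $v$-variable, a differential of a product of angle functions; applying Stokes' theorem in $v$ over the quadrant reduces the local integral to contributions from its two bounding rays, the faces at the corner and at infinity being suppressed by scaling. On each ray the relevant propagators vanish by the boundary behavior just recalled, so the reduced integral is zero. Conceptually this is the same phenomenon as a reflection antisymmetry of the mixed–boundary propagator, under which the top-degree integrand is odd; either route yields $\int\omega_{\Gamma'}^{\mathrm{lim}}=0$ whenever a bulk vertex is present, establishing the lemma.

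The main obstacle I anticipate is exactly this local vanishing step. Because the degree count already matches — that is what pins $(r,k)$ to $\{(1,0),(0,1)\}$ — the vanishing cannot be read off from a dimension mismatch and must be extracted from the fine structure of the limiting propagator of Appendix~\ref{app:prop}. Making the vertex-integration precise requires controlling the angle forms on both rays and near the corner and verifying that no hidden face survives; in particular one must check that collapsing $v$ further onto a ray, or onto $C$ together with other vertices, does not generate an anomalous boundary term that escapes the vanishing of the propagator.
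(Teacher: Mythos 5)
Your setup---factorizing the boundary face into a local corner model times a configuration space for $\Gamma/\Gamma'$, the $(r,k)\in\{(1,0),(0,1)\}$ bookkeeping, and the recognition that the local factor must be killed by something finer than a dimension count---is correct and matches the paper's preamble. The mechanism you propose for the local vanishing, however, does not work. Writing the $\dd v$-part of the integrand as $\dd_v\bigl(\phi_{e_1}\,\dd\phi_{e_2}\wedge\cdots\bigr)$ and applying Stokes' theorem in a single bulk vertex $v$ is not legitimate on the open quadrant: the angle functions are multivalued around the other vertices of $\Gamma'$, so Stokes must be applied on the compactified fiber, whose boundary contains, besides the two rays, the corner and infinity, also the blow-up faces where $v$ collides with another vertex of $\Gamma'$. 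On such a face the primitive $\phi_{e_1}\,\dd\phi_{e_2}\wedge\cdots$ does not vanish (the angle of the collapsing edge sweeps out $2\pi$ while the remaining forms stay regular); these are precisely the contributions that make Kontsevich weights nonzero in general, so an argument of this shape would prove that every graph weight with an internal vertex vanishes, and it would apply equally at corners in $\mathscr{C}_1$, contradicting Theorem \ref{thm_corners}, where exactly such configurations survive and produce the operators $T(C)$. Even on the rays your control is over $\dd\phi_{e_i}$, whereas the boundary term of Stokes involves $\phi_{e_1}$ itself, which is merely constant, not zero, where the propagator vanishes. The ``reflection antisymmetry'' you offer as a conceptual alternative also fails: the mixed-boundary kernels of Appendix \ref{app:prop} are \emph{symmetric}, not antisymmetric, under the relevant reflections.

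The paper's proof rests on a mechanism your argument never invokes and which is specific to $\mathscr{C}_2$: because the boundary conditions agree on the two rays of such a corner, the map $h\colon z\mapsto z^2$ identifies the corner model with the upper half plane carrying the \emph{standard} Kontsevich propagator, with the corner sent to $0$ and, crucially, no quotient by horizontal translations. Since that propagator depends on the real parts of its arguments only through their differences, the top-degree integrand has no component along the simultaneous real translation of the configuration and therefore vanishes identically whenever there are at least two points to translate; the residual one-bulk-vertex cases are excluded combinatorially (a tadpole for $k=0$, $m=0$, $r=1$ and a double edge for $k=1$, $m=1$, $r=0$). To repair your proof you would need to replace the Stokes-in-one-vertex step by this global translation-invariance argument, or by some other symmetry that genuinely uses the coincidence of boundary conditions at a $\mathscr{C}_2$ corner.
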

\begin{proof} The point is that at these corners the boundary conditions are the same on both sides, so we can map the configuration to a configuration of points on the upper half plane, where we use the usual Kontsevich propagator, but without taking the quotient with respect to translations along the real axis. Instead we fix the image of the corner point to be a given point, e.g. $0$. See also Figure \ref{fig:mapping1}. Now, observe that configurations with one bulk point evaluate to 0: These are either $k=0,m=0,r=1$, but this case is ruled out because there are no tadpoles, or $k=1,m=1,r=0$, but this is 0 because graphs cannot double edges. For more than two bulk points, note that the Kontsevich propagator depends on the the real parts of the points in the configuration only through their differences. Hence the product of propgators that is to be integrated has no component in the real part of the center of mass of the configuration, so integrating along this direction yields 0. 
\end{proof}
\begin{figure}[ht]
\centering
\tikzset{
particle/.style={thick,draw=black},
particle2/.style={thick,draw=blue},
avector/.style={thick,draw=black, postaction={decorate},
    decoration={markings,mark=at position 1 with {\arrow[black]{triangle 45}}}},
    avector3/.style={thick,draw=black, postaction={decorate},
    decoration={markings,mark=at position 0.5 with {\arrow[black]{triangle 45}}}},
gluon/.style={decorate, draw=black,
    decoration={coil,aspect=0}}
 }
\begin{tikzpicture}[x=0.07\textwidth, y=0.07\textwidth]
\node[](1) at (0,0){};
\node[](3) at (-0.8,-0.3){};
\node[](4) at (0.8,-0.3){};
\node[](g1) at (6,-1.8){$\gamma$};
\node[](g2) at (-0.3,-0.5){$\gamma$};
\node[vertex] (v1) at (0,-1.5){};
\node[vertex] (v2) at (7,-1.5){};
\node[](corner) at (-0.5,-1.5){$C$};
\node[](corner) at (7,-2){$C$};
\node[] (v3) at (6,0.5){};
\node[] (v4) at (1,0.5){};
\node[vertex] (g11) at (6,-1.5){};
\node[vertex] (g22) at (0,-0.5){};
\node[] (d1) at (-1,-2){};
\node[] (s1) at (-2.5,-1.5){};
\node[] (s2) at (2.5,-1.5){};
\draw[dashed,fill=black] (v1) circle (0.05cm);
\draw[dashed,fill=black] (v2) circle (0.05cm);
\draw[dashed,fill=white] (v3) circle (0.6cm);
\draw[dashed,fill=white] (v4) circle (0.6cm);
\draw[dashed,fill=black] (g11) circle (0.05cm);
\draw[dashed,fill=black] (g22) circle (0.05cm);
\node[] (f) at (3.5,0.5){$h$};
\node[] (E1) at (-1.3,-0.5){$\mathscr{E}=0$};
\node[] (E2) at (1,-2.3){$\mathscr{E}=0$};
\node[] (E3) at (6,-2.3){$\mathscr{E}=0$};
\node[] (E4) at (8,-2.3){$\mathscr{E}=0$};
\draw[avector] (3,0) to [bend left](4,0);
\draw[fermion] (5.7,0) to [bend right](6,-1.5);
\draw[fermion] (0.7,0) to [bend right](0,-0.5);
\draw[fermion] (6,0)--(6,-1.5);
\draw[fermion] (1,0)--(0,-0.5);
\draw[fermion] (6.55,0.5) to [bend left](6,-1.5);
\draw[fermion] (1.55,0.5) to [bend left](0,-0.5);
\draw[particle] (0,-1.5)--(0,0.5);
\draw[particle] (0,-1.5)--(2,-1.5);
\draw[particle] (5,-1.5)--(9,-1.5);
\end{tikzpicture}

\caption{Here $h$ represents the mapping of the corner with the interior to the upper half plane, where the corner point is mapped to zero (with the same boundary conditions). The dashed circle represents some graph in the bulk with vertices corresponding to the Poisson structure $\Pi$ and the globalization term $R$, with some outgoing arrows deriving $\gamma$ on the boundary. In particular, the map $h$ is given by $z\mapsto z^2$ on the upper half plane.}
\label{fig:mapping1}
\end{figure}
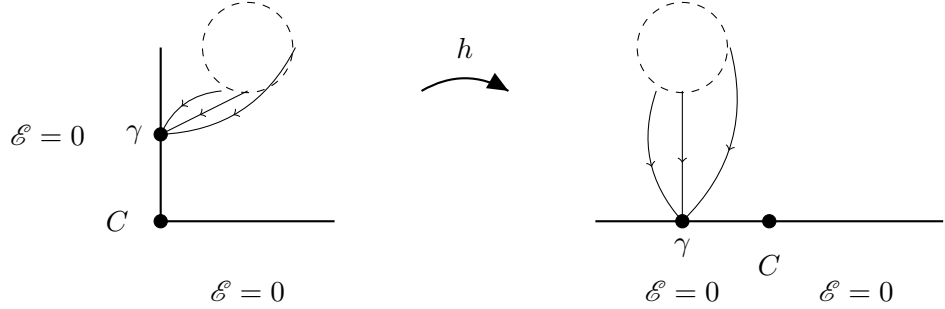

This means the only possibly nonzero contributions are those with $k=1,n=0$, i.e. subgraphs $\Gamma'$ consisting of a single $\gamma$ vertex - possibly with any number of inward leaves - approaching the corner. This vertex can either lie on the $\de_0\Sigma$ or $\de_\calP\Sigma$ component and the corresponding boundary faces have opposite orientation. Hence all terms cancel out: there are no extra contributions from corners in $\mathscr{C}_2$. \\
Next let us turn to corners $C \in \mathscr{C}_1$. Here the boundary conditions change, so the propagator does not have translation symmetry along the axis. However, by continuity, now it vanishes when either the head or the tail are evaluated at the point of collapse. This implies that a subgraph collapsing at $C$ can have neither inward nor outward leaves, i.e. only entire connected components of graphs can collapse at corner $C \in \mathscr{C}_1$. Counting dimensions as above, we see that there are again the two possibilities $r=0,k=1$ and $r=1,k=0$, with $m$ arbitrary; in addition now we can have an arbitrary number $b$ of vertices at the boundary with $\mathbb{X}$-representation. 

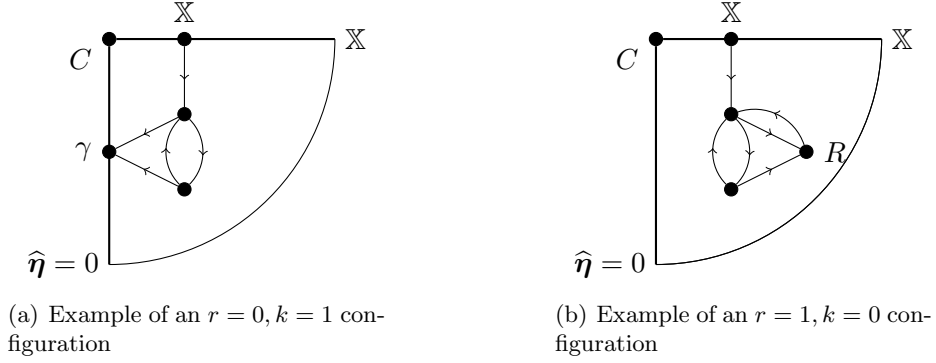
\begin{figure}[ht]

\subfigure[Example of an $r=0,k=1$ configuration]{
\begin{tikzpicture}
\node[coordinate] (bdry1) at (0,-3){};
\node[left] at (bdry1.west){$\widehat{\boldsymbol{\eta}}=0$};
\node[vertex] (corner) at (0,0){};
\node[below left] at (corner.west) {$C$};
\node[coordinate, below] (bdry2) at (3,0){};
\node[right] at (bdry2.east){$\mathbb{X}$}; 
\draw[thick] (bdry1.center) -- (corner) -- (bdry2.center);


\node[vertex] (bdry3) at (1,0) {};
\node[coordinate, label=above:{$\mathbb{X}$}] at (bdry3.north) {};
\node[vertex] (bulk1)  at (1,-1) {}; 
\node[vertex] (bulk2) at (1,-2) {}; 
\node[vertex] (bdry4) at (0,-1.5) {};
\node[coordinate, label=left:{$\gamma$}] at (bdry4.west) {};
\draw[fermion] (bdry3) -- (bulk1); 
\draw[fermion] (bulk1) -- (bdry4);
\draw[fermion] (bulk2) -- (bdry4);
\draw[fermion, bend angle=45, bend left] (bulk1) to (bulk2);
\draw[fermion, bend angle=45, bend left] (bulk2) to (bulk1);

\draw (270:3) arc (270:360:3);
\end{tikzpicture}
}
\hspace{2cm}
\subfigure[Example of an $r=1,k=0$ configuration]{
\begin{tikzpicture}
\node[coordinate] (bdry1) at (0,-3){};
\node[left] at (bdry1.west){$\widehat{\boldsymbol{\eta}}=0$};
\node[vertex] (corner) at (0,0){};
\node[below left] at (corner.west) {$C$};
\node[coordinate, below] (bdry2) at (3,0){};
\node[right] at (bdry2.east){$\mathbb{X}$}; 
\draw[thick] (bdry1.center) -- (corner) -- (bdry2.center);
\draw (270:3) arc (270:360:3);

\node[vertex] (bdry3) at (1,0) {};
\node[coordinate, label=above:{$\mathbb{X}$}] at (bdry3.north) {};
\node[vertex] (bulk1)  at (1,-1) {}; 
\node[vertex] (bulk2) at (1,-2) {}; 
\node[vertex] (bulk3) at (2,-1.5) {};
\node[coordinate, label=right:{$R$}] at (bulk3.east) {};
\draw[fermion] (bdry3) -- (bulk1); 
\draw[fermion] (bulk1) -- (bulk3);
\draw[fermion] (bulk2) -- (bulk3);
\draw[fermion, bend angle=45, bend right] (bulk3) to (bulk1);
\draw[fermion, bend angle=45, bend left] (bulk1) to (bulk2);
\draw[fermion, bend angle=45, bend left] (bulk2) to (bulk1);
\draw (270:3) arc (270:360:3);
\end{tikzpicture}
}
\caption{Possibilities for graphs collapsing at $C \in
\mathscr{C}_1$.}
\label{fig:corners}
\end{figure}

Since only connected components of a graph can collapse, the corresponding  action on the state is a multiplication operator $T(C)$ that multiplies states with a functional of the values of $\mathbb{X}$ at corners in $\mathscr{C}_1$, given by summing over all possible boundary contributions. Since $\gamma$ and $R$ are both $1$-forms on $\mathscr{P}$, $T(C)$ takes values in $1$-forms on $\mathscr{P}$.
This is not a regular functional as in \ref{regular_fun}, as it contains evaluation of fields on the corners. This completes the proof of Theorem \ref{thm_corners}.

\section{The modified differential Quantum Master Equation for the globalized Poisson Sigma Model with alternating boundary conditions}
\label{sec:mdQME}
We have seen that the modified differential Quantum Master Equation fails if we impose alternating boundary conditions as in Proposition \ref{prop_curv} and Theorem \ref{thm_corners}. Hence we need to extend the quantum Grothendieck BFV operator on an extended space of operators and states such that the modified differential Quantum Master Equation holds for the extended connection. The plan is to promote the corner terms $T(C)$ to multiplication operators on the state space. This requires the extensions of the state space to include functionals which evaluate fields at the corners.

\subsection{Extension of states}
\label{subsec:ex_states}
There are two different terms in $T(C)$, namely the one where we have a single $\gamma$ on the boundary approaching the corner and no vector field $R$, or no boundary vertex on the $\hateta=0$ component and one single vector field $R$ included in the graph from the bulk (see also Figure \ref{fig:corners}). To interpret them as multiplication operators we have to enlarge the space of states to allow for functionals evaluating boundary fields at corners.


\begin{defn}[Space of corner states]
For $C\in\mathscr{C}_1$, we define the space of corner states by 
\begin{equation}
\gls{calH_C}:=\left\{F\colon \calB^\calP_{\partial \Sigma}\to \mathbb{C}[[\hbar]]\Big| F(\mathbb{X})=\sum_{J} B_J[\mathbb{X}^{J}(C)],\textnormal{ where } B_J\in\C[[\hbar]]\right\}
\end{equation}
\end{defn}

\begin{defn}[Extended state space]\label{defn_ext_state_space}
We define the extended state space by
\begin{equation}
\gls{ext_state_space}:=\Hat{\calH}_{\de\Sigma,x}\otimes \bigotimes_{C\in\mathscr{C}_1}\calH_C.
\end{equation}
Moreover, the total space is given by $\gls{tot_ext_state_space}:=\bigsqcup_{x\in\mathscr{P}}\Hat{\calH}_{\de\Sigma,x}^\mathscr{C}$.
\end{defn}


Now we can define a state to be given as a nonhomogeneous differential form on $\mathscr{P}$ with values in $\Hat{\calH}_{\de\Sigma,\textnormal{tot}}^\mathscr{C}$, i.e. an element of $\Omega^\bullet(\mathscr{P},\Hat{\calH}_{\de\Sigma,\textnormal{tot}}^\mathscr{C})$.


\subsection{Extension of operators}
\label{subsec:ex_op}
Recall from \cite{CMR2} that the algebra of the operators is generated by $\Omega_0^\textnormal{princ}$, which is the standard quantization of $\calS_{0,\Sigma}$, and simple operators, which are of the form
\begin{equation}
\int_{\partial\Sigma}L_{I^1\dotsm I^r}^J[\mathbb{X}^{I^1}]\dotsm [\mathbb{X}^{I^r}]\frac{\delta^{\vert J\vert}}{\delta\mathbb{X}^J},
\end{equation}
where $L_{I^1\dotsm I^r}^J\in\Omega^\bullet(\de\Sigma)$ are some coefficients. Note that we can also have a similar expression for $\E$. We want to extend this space by the multiplication operators coming from the corners as described above. The space of operators is extended by the multiplication operators that appear in the case of corners. The algebra of boundary operators acts on the algebra of corner operators by commutators. E.g. $\partial_k\Pi^{ij}\mathbb{X}^{k}\frac{\delta}{\delta[\mathbb{X}^{i}\mathbb{X}^j]}$ is a boundary operator and $[\mathbb{X}^{i}\mathbb{X}^j](C)\partial_i\gamma\partial_j\gamma$ is a corner operator, with $C\in\mathscr{C}_1$. Then the commutator is given by 
\begin{equation}
\left[\partial_k\Pi^{ij}\mathbb{X}^{k}\frac{\delta}{\delta[\mathbb{X}^{i}\mathbb{X}^j]},[\mathbb{X}^{i}\mathbb{X}^j](C)\partial_i\gamma\partial_j\gamma\right]=\partial_k\Pi^{ij}\mathbb{X}^k(C)\partial_i\gamma\partial_j\gamma.
\end{equation}

The extended space now consists of operators taking a state in $\Omega^\bullet(\mathscr{P},\Hat{\calH}^\mathscr{C}_{\de\Sigma,\textnormal{tot}})$ and multiplying it with an element in $\Omega^\bullet(\mathscr{P},\calH_C)$.

\subsection{modified differential Quantum Master Equation and Flatness} 
Now we are able to define the extended operator as follows. Let $\gls{Omega_scrC}:=-\sum_{C\in\mathscr{C}_1}T(C)$, where $T(C)$ is as in Theorem \ref{thm_corners}. The new operator $\Tilde{\nabla}^\gamma_\mathsf{G}$ is then defined by 
\begin{equation}
\label{eq:conntilde}
\gls{tildenabla^gamma_G}:=\dd_x-\I\hbar\Delta_{\calV_{\Sigma}}+\frac{\I}{\hbar}\left(\Tilde{\boldsymbol{\Omega}}_{\de\Sigma}^{\calP,\gamma}+\boldsymbol{\Omega}_\mathscr{C}\right).
\end{equation}

\subsubsection{The modified differential Quantum Master Equation}
We have the following theorem.
\begin{thm}[modified differential Quantum Master Equation for alternating boundary conditions]\label{thm_mdQME_corners}
Let $\Tilde{\nabla}^\gamma_\mathsf{G}$ be given as before, and consider the twisted full state $\btpsi_{\Sigma,x}^\gamma$. Then 
\begin{equation}
\label{mdQME_alt_boundary}
\Tilde{\nabla}^\gamma_\mathsf{G}\btpsi_{\Sigma,x}^\gamma=0
\end{equation}
\end{thm}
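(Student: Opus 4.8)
The plan is to deduce the statement directly from Theorem~\ref{thm_corners}, once both sides are placed on the extended state space of Definition~\ref{defn_ext_state_space}. Comparing the definition \eqref{eq:conntilde} of $\Tilde{\nabla}^\gamma_\mathsf{G}$ with the twisted quantum Grothendieck BFV operator $\nabla^\gamma_\mathsf{G}$ of \eqref{twisted_conn}, the only difference is the corner term, so that
\begin{equation}
\Tilde{\nabla}^\gamma_\mathsf{G}=\nabla^\gamma_\mathsf{G}+\frac{\I}{\hbar}\boldsymbol{\Omega}_\mathscr{C},
\end{equation}
now viewed as acting on $\Omega^\bullet(\mathscr{P},\Hat{\calH}^\mathscr{C}_{\de\Sigma,\textnormal{tot}})$ rather than on $\Hat{\calH}_{\Sigma,\textnormal{tot}}$. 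The whole purpose of the enlargement in Sections~\ref{subsec:ex_states}--\ref{subsec:ex_op} is that the anomalous corner functionals $T(C)$, which are \emph{not} regular functionals in the sense of Definition~\ref{regular_fun}, become honest multiplication operators on this larger space. I would therefore begin by recording two well-definedness points: first, that $\btpsi^\gamma_{\Sigma,x}$, being assembled from regular functionals, embeds into the extended space with trivial corner components $1\in\calH_C$ at each $C\in\mathscr{C}_1$; and second, that $\boldsymbol{\Omega}_\mathscr{C}=-\sum_{C\in\mathscr{C}_1}T(C)$ is a well-defined degree-preserving operator sending a state to its product with $-\sum_{C}T(C)$, landing in a genuinely non-trivial corner component.

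Next I would carry out the cancellation. Applying the two pieces of $\Tilde{\nabla}^\gamma_\mathsf{G}$ separately and invoking Theorem~\ref{thm_corners} for the first,
\begin{equation}
\Tilde{\nabla}^\gamma_\mathsf{G}\btpsi^\gamma_{\Sigma,x}=\nabla^\gamma_\mathsf{G}\btpsi^\gamma_{\Sigma,x}+\frac{\I}{\hbar}\boldsymbol{\Omega}_\mathscr{C}\btpsi^\gamma_{\Sigma,x}=\sum_{C\in\mathscr{C}_1}T(C)\btpsi^\gamma_{\Sigma,x}+\frac{\I}{\hbar}\boldsymbol{\Omega}_\mathscr{C}\btpsi^\gamma_{\Sigma,x}.
\end{equation}
Since $\boldsymbol{\Omega}_\mathscr{C}$ is by construction minus the sum of exactly those corner functionals that appear on the right-hand side of Theorem~\ref{thm_corners}, and is inserted into $\Tilde{\nabla}^\gamma_\mathsf{G}$ with precisely the overall normalization under which those functionals were produced from $\Tilde{\boldsymbol{\Omega}}^{\calP,\gamma}_{\de\Sigma}$ as Stokes-theorem boundary contributions, the second summand cancels the first and the right-hand side vanishes. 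A point I would make explicit is that $\boldsymbol{\Omega}_\mathscr{C}$ is a \emph{pure} multiplication operator, carrying no functional derivatives in $\mathbb{X}$ or $\E$; hence acting with it generates no further boundary or corner strata, and the equation genuinely closes instead of producing a new anomaly.

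The substantive step, and the main obstacle, is not this formal cancellation — which is engineered into the definition of $\Tilde{\nabla}^\gamma_\mathsf{G}$ — but the verification that the two occurrences of $T(C)$ really denote the same functional. Concretely, one must check that the corner operator assembled abstractly in Section~\ref{subsec:ex_op} from the graphs of Figure~\ref{fig:corners} (a single $\gamma$-vertex, or a single $R$-vertex, collapsing at $C\in\mathscr{C}_1$, each valued in $\Omega^1(\mathscr{P})$) reproduces, vertex by vertex and with matching weights and signs, the boundary stratum computed in the proof of Theorem~\ref{thm_corners}. This rests on the facts established there: that collapses at $\mathscr{C}_2$-corners cancel in pairs of opposite orientation, that at $\mathscr{C}_1$-corners only entire connected components collapse (so no propagator tails or heads survive), and that consequently $T(C)$ depends only on the field values at $C$ with no residual bulk contamination. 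Granting this identification, the theorem follows.
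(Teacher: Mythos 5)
Your proposal is correct and follows essentially the same route as the paper, whose entire proof of Theorem \ref{thm_mdQME_corners} is the single sentence that it ``follows immediately from Theorem \ref{thm_corners}''; you are simply spelling out that immediate deduction, namely that $\boldsymbol{\Omega}_\mathscr{C}=-\sum_{C\in\mathscr{C}_1}T(C)$ is engineered to cancel the right-hand side of \eqref{corner} once everything is viewed on the extended state space. Your additional remarks (that $\btpsi^\gamma_{\Sigma,x}$ embeds with trivial corner components, that $\boldsymbol{\Omega}_\mathscr{C}$ is a pure multiplication operator generating no new strata, and that the two occurrences of $T(C)$ must be checked to agree in weights and normalization) are all sensible elaborations of what the paper leaves implicit, not a different argument.
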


\begin{proof}
This follows immediatley from Theorem \ref{thm_corners}.
\end{proof}


\subsubsection{Flatness}
We have the following theorem.
\begin{thm}
\label{thm:flatness}
The operator $\Tilde{\nabla}^\gamma_\mathsf{G}$ is a coboundary operator, i.e. $(\Tilde{\nabla}^\gamma_{\mathsf{G}})^2=0$.
\end{thm}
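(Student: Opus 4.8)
The plan is to treat $\Tilde{\nabla}^\gamma_\mathsf{G}$ as a perturbation of the twisted operator of Remark \ref{rem_flatness_twisted} by the corner term, and to expand the square. Writing
\begin{equation}
\Tilde{\nabla}^\gamma_\mathsf{G}=\nabla^\gamma_\mathsf{G}+\frac{\I}{\hbar}\boldsymbol{\Omega}_\mathscr{C},\qquad \nabla^\gamma_\mathsf{G}=\dd_x-\I\hbar\Delta_{\calV_\Sigma}+\frac{\I}{\hbar}\Tilde{\boldsymbol{\Omega}}^{\calP,\gamma}_{\de\Sigma},
\end{equation}
and using that both summands are odd, one obtains
\begin{equation}
(\Tilde{\nabla}^\gamma_\mathsf{G})^2=(\nabla^\gamma_\mathsf{G})^2+\frac{\I}{\hbar}\left[\nabla^\gamma_\mathsf{G},\boldsymbol{\Omega}_\mathscr{C}\right]+\left(\frac{\I}{\hbar}\right)^2(\boldsymbol{\Omega}_\mathscr{C})^2,
\end{equation}
where the middle bracket is the graded commutator. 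The first term vanishes by Remark \ref{rem_flatness_twisted} (equivalently Proposition \ref{prop:flatness_twisted} together with the flatness of $\Tilde{\boldsymbol{\Omega}}^{\mathbb{X}}_{\de_1\Sigma}$), so it remains to kill the last two terms separately.

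For $(\boldsymbol{\Omega}_\mathscr{C})^2$ I would use that $\boldsymbol{\Omega}_\mathscr{C}=-\sum_{C\in\mathscr{C}_1}T(C)$, where each $T(C)$ acts by multiplication by an $\Omega^1(\mathscr{P})$-valued functional supported at the corner $C$. Multiplication operators attached to distinct corners commute up to the Koszul sign, and each $T(C)$ carries form degree one on $\mathscr{P}$; hence the diagonal terms vanish because a one-form wedged with itself is zero, and the off-diagonal terms cancel pairwise by the antisymmetry of the wedge of one-forms, giving $(\boldsymbol{\Omega}_\mathscr{C})^2=\sum_{C,C'}T(C)\wedge T(C')=0$.

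The substantive part is the cross term $[\nabla^\gamma_\mathsf{G},\boldsymbol{\Omega}_\mathscr{C}]$, which I would split along the three summands of $\nabla^\gamma_\mathsf{G}$. The BV Laplacian $\Delta_{\calV_\Sigma}$ differentiates only the residual fields, whereas by Theorem \ref{thm_corners} each $T(C)$ is a functional on $\calB^\calP_{\de\Sigma}$ depending only on the boundary values $\mathbb{X}(C)$; these variables are independent, so $[\Delta_{\calV_\Sigma},\boldsymbol{\Omega}_\mathscr{C}]=0$. The surviving piece $[\dd_x+\frac{\I}{\hbar}\Tilde{\boldsymbol{\Omega}}^{\calP,\gamma}_{\de\Sigma},\boldsymbol{\Omega}_\mathscr{C}]$ is a corner-supported operator of form degree two on $\mathscr{P}$: here $\dd_x$ differentiates the one-form coefficients $R$ and $\gamma$ entering $T(C)$, while the commutator with the boundary operator produces exactly the corner contractions described in Subsection \ref{subsec:ex_op}. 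The claim I would then establish, corner by corner, is that this combination reassembles the $\overline{\calD}_\mathsf{G}$-covariant expression built from $A$, $\gamma$ and $F$, and therefore vanishes by virtue of the structural identity
\begin{equation}
\overline{F}^\mathscr{P}_\omega=F^\mathscr{P}+\varepsilon\omega+\calD_\mathsf{G}\gamma+\gamma\star\gamma=0
\end{equation}
expressing the flatness of $\overline{\calD}_\mathsf{G}$, the very identity already invoked in Proposition \ref{prop:flatness_twisted}.

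The hardest step is precisely this last identification: matching the purely combinatorial commutator of the boundary operator with the corner multiplication operators against the star-product quantity $\calD_\mathsf{G}\gamma+\gamma\star\gamma+F$ out of which $T(C)$ is built. As in the proof of Theorem \ref{thm_corners}, I expect this to be cleanest diagrammatically: one tracks the collapse of subgraphs at a corner $C\in\mathscr{C}_1$, uses that the propagator vanishes whenever either endpoint is evaluated at $C$, and lets the degree count force the contributing graphs to reproduce exactly the terms of $\overline{F}^\mathscr{P}_\omega$, whose vanishing then closes the argument.
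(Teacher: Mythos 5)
Your overall strategy --- expanding $(\Tilde{\nabla}^\gamma_\mathsf{G})^2$ around $(\nabla^\gamma_\mathsf{G})^2=0$ and reducing everything to a corner-supported identity proved diagrammatically via the flatness of $\overline{\calD}_\mathsf{G}$ --- matches the paper's, which reformulates the claim as the Maurer--Cartan equation
\begin{equation*}
\dd_x\boldsymbol{\Omega}_{\mathscr{C}}+\tfrac{1}{2}\left[\boldsymbol{\Omega}_{\mathscr{C}},\boldsymbol{\Omega}_{\mathscr{C}}\right]+\left[\boldsymbol{\Omega}_{\mathscr{C}},\Tilde{\boldsymbol{\Omega}}_{\de\Sigma}^{\calP,\gamma}\right]=0
\end{equation*}
and proves it by Stokes' theorem on the corner configuration spaces $\mathsf{C}^\mathscr{C}_{\Gamma'}(\Sigma)$. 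The gap is in your attempt to kill the pieces of the expansion \emph{separately}. First, $(\boldsymbol{\Omega}_\mathscr{C})^2=0$ does not follow from ``a one-form wedged with itself is zero'': $T(C)=\sum_J B_J\,[\mathbb{X}^J(C)]$ is a field-dependent functional, so $T(C)T(C')$ contains cross terms $B_J\wedge B_{J'}\,[\mathbb{X}^J(C)][\mathbb{X}^{J'}(C')]$ whose cancellation depends on the parities of the composite fields and, for $C=C'$, on the bullet-product regularization at coinciding corner points; in the paper this term is not discarded but arises as the genuine contribution of the boundary stratum where a subgraph collapses further onto the corner. Second, and more seriously, the bracket $[\dd_x+\frac{\I}{\hbar}\Tilde{\boldsymbol{\Omega}}_{\de\Sigma}^{\calP,\gamma},\boldsymbol{\Omega}_\mathscr{C}]$ does \emph{not} vanish on its own by the identity $F^\mathscr{P}+\varepsilon\omega+\calD_\mathsf{G}\gamma+\gamma\star\gamma=0$. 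In the paper's Stokes argument that identity is used only to cancel the contribution of one particular boundary face (a subgraph collapsing on the $\hateta=0$ part of the boundary adjacent to the corner); the faces where a subgraph collapses in the bulk, at the corner itself, or on the polarized boundary produce, respectively, terms cancelled via the Classical Master Equation together with $[R,R]=0$, the term $\frac12[\boldsymbol{\Omega}_\mathscr{C},\boldsymbol{\Omega}_\mathscr{C}]$, and the term $[\Tilde{\boldsymbol{\Omega}}_{\de\Sigma}^{\calP,\gamma},\boldsymbol{\Omega}_\mathscr{C}]$. The cancellation therefore only closes for the sum of all the terms, not term by term as you propose.

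That said, your reduction is otherwise sound: $(\nabla^\gamma_\mathsf{G})^2=0$ by Remark \ref{rem_flatness_twisted}, and $[\Delta_{\calV_\Sigma},\boldsymbol{\Omega}_\mathscr{C}]=0$ because $T(C)$ depends only on boundary values at $C$ and not on residual fields. To complete the argument you should not decouple the remaining terms but run the Stokes computation once for the full expression $\dd_x\boldsymbol{\Omega}_{\mathscr{C}}+\frac12[\boldsymbol{\Omega}_{\mathscr{C}},\boldsymbol{\Omega}_{\mathscr{C}}]+[\boldsymbol{\Omega}_{\mathscr{C}},\Tilde{\boldsymbol{\Omega}}_{\de\Sigma}^{\calP,\gamma}]$, classifying the boundary faces of $\mathsf{C}^\mathscr{C}_{\Gamma'}(\Sigma)$ by where the subgraph collapses (bulk, $\hateta=0$ boundary, polarized boundary, corner) and matching each face with one of the summands --- which is exactly what you sketch in your final paragraph, and is the content of Proposition \ref{prop:maurer-cartan}.
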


\begin{proof}
The flatness condition is equivalent to the fact that $\boldsymbol{\Omega}^{\text{ext}}=\Tilde{\boldsymbol{\Omega}}_{\de\Sigma}^{\calP,\gamma}+\boldsymbol{\Omega}_\mathscr{C}$ is a Maurer--Cartan element of the differential graded Lie algebra of differential forms with values in $\End(\Hat{\calH}^\mathscr{C}_{\de\Sigma,\textnormal{tot}})$. Hence the proof of Theorem \ref{thm:flatness} is given by the Proposition \ref{prop:maurer-cartan}.
\end{proof}

\begin{prop}
\label{prop:maurer-cartan}
$\dd_x\boldsymbol{\Omega}^{\textnormal{ext}}+\frac{1}{2}\left[\boldsymbol{\Omega}^{\textnormal{ext}},\boldsymbol{\Omega}^{\textnormal{ext}}\right]=0$.
\end{prop}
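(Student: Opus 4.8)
The plan is to prove Theorem \ref{thm:flatness} in the form stated here, namely the Maurer--Cartan equation for $\boldsymbol{\Omega}^{\textnormal{ext}}=\Tilde{\boldsymbol{\Omega}}_{\de\Sigma}^{\calP,\gamma}+\boldsymbol{\Omega}_\mathscr{C}$. First I would expand the bracket bilinearly and sort the result into the part built only from the twisted boundary operator, the mixed part, and the part built only from the corner operators:
\begin{align*}
\dd_x\boldsymbol{\Omega}^{\textnormal{ext}}+\frac{1}{2}\left[\boldsymbol{\Omega}^{\textnormal{ext}},\boldsymbol{\Omega}^{\textnormal{ext}}\right]
&=\left(\dd_x\Tilde{\boldsymbol{\Omega}}_{\de\Sigma}^{\calP,\gamma}+\frac{1}{2}\left[\Tilde{\boldsymbol{\Omega}}_{\de\Sigma}^{\calP,\gamma},\Tilde{\boldsymbol{\Omega}}_{\de\Sigma}^{\calP,\gamma}\right]\right)\\
&\quad+\left(\dd_x\boldsymbol{\Omega}_\mathscr{C}+\left[\Tilde{\boldsymbol{\Omega}}_{\de\Sigma}^{\calP,\gamma},\boldsymbol{\Omega}_\mathscr{C}\right]\right)+\frac{1}{2}\left[\boldsymbol{\Omega}_\mathscr{C},\boldsymbol{\Omega}_\mathscr{C}\right].
\end{align*}
The first group vanishes by Remark \ref{rem_flatness_twisted} (equivalently, by Proposition \ref{prop:flatness_twisted} together with $(\Tilde{\boldsymbol{\Omega}}_{\de_1\Sigma}^{\mathbb{X}})^2=0$): this is exactly the assertion that the twisted operator $\nabla^\gamma_\mathsf{G}$ squares to zero \emph{before} the corner contributions are added.

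Next I would dispose of the purely corner term. By Theorem \ref{thm_corners} each $T(C)$, for $C\in\mathscr{C}_1$, is a multiplication operator on $\Hat{\calH}^\mathscr{C}_{\de\Sigma,\textnormal{tot}}$ valued in $\Omega^1(\mathscr{P})$, acting on the factor $\calH_C$ by evaluating boundary fields at $C$. Operators attached to distinct corners act on distinct tensor factors and hence commute as endomorphisms, and for a single corner $T(C)$ is multiplication by a $1$-form which squares to zero. Consequently the associative product $\boldsymbol{\Omega}_\mathscr{C}\cdot\boldsymbol{\Omega}_\mathscr{C}=\sum_{C,C'}T(C)\wedge T(C')$ collapses, through the antisymmetry $T(C)\wedge T(C')+T(C')\wedge T(C)=0$ of $1$-forms, to $0$; since $\boldsymbol{\Omega}_\mathscr{C}$ is odd of form degree one this gives $\frac{1}{2}\left[\boldsymbol{\Omega}_\mathscr{C},\boldsymbol{\Omega}_\mathscr{C}\right]=0$. (The Koszul signs coming from the internal degrees of the field evaluations are the same for both orderings, so they do not spoil this cancellation.)

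This leaves the mixed term, which is the heart of the matter: I must show
\[
\dd_x\boldsymbol{\Omega}_\mathscr{C}+\left[\Tilde{\boldsymbol{\Omega}}_{\de\Sigma}^{\calP,\gamma},\boldsymbol{\Omega}_\mathscr{C}\right]=0,
\]
i.e.\ that the corner contribution is covariantly closed. Since each $T(C)$ depends only on the $\mathbb{X}$-field at the corner, only the $\mathbb{X}$-part $\Tilde{\boldsymbol{\Omega}}_{\de_1\Sigma}^{\mathbb{X}}$ of the twisted boundary operator has a nontrivial commutator with $\boldsymbol{\Omega}_\mathscr{C}$, the $\E$-part $\Tilde{\boldsymbol{\Omega}}_{\de_2\Sigma}^{\E,\gamma}$ acting on the complementary polarization. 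I would then invoke the identification, made in the proof of Theorem \ref{thm_corners}, of $T(C)$ as the sum of the single-$\gamma$-vertex contribution and the single-$R$-vertex contribution $A(R,\mathsf{T}\varphi_x^*\Pi)$, both read in the $\mathbb{X}$-representation at $C$; these are precisely the components of the flat connection $\overline{\calD}_\mathsf{G}=\calD_\mathsf{G}+[\gamma,\enspace]_\star$ restricted to the corner. The closedness identity above is the corner image of the structure equations of Proposition \ref{alg_str} (the derivation property in form degree one and the Bianchi identity in the higher degrees), which in turn encode the flatness $\overline{F}^\mathscr{P}_\omega=0$ of $\overline{\calD}_\mathsf{G}$ from \cite{CF3}; concretely I would expand both sides by form degree on $\mathscr{P}$ and match term by term against these equations. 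The main obstacle is the bookkeeping of this last step: pinning down the exact dictionary between the collapsing graphs that define $T(C)$ and the connection/curvature data $A,F,\gamma$, with the correct signs and powers of $\I\hbar$, and verifying that the codimension-two degenerations at the corners (which organize the relevant Stokes boundary terms) reproduce $\dd_x\boldsymbol{\Omega}_\mathscr{C}+[\Tilde{\boldsymbol{\Omega}}_{\de\Sigma}^{\calP,\gamma},\boldsymbol{\Omega}_\mathscr{C}]$ with no residual anomaly.
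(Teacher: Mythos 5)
Your opening move coincides with the paper's: split off $\dd_x\Tilde{\boldsymbol{\Omega}}_{\de\Sigma}^{\calP,\gamma}+\frac{1}{2}[\Tilde{\boldsymbol{\Omega}}_{\de\Sigma}^{\calP,\gamma},\Tilde{\boldsymbol{\Omega}}_{\de\Sigma}^{\calP,\gamma}]=0$, which is Remark \ref{rem_flatness_twisted}. After that, however, your argument diverges in a way that leaves the substance unproven. First, you discard $\frac{1}{2}[\boldsymbol{\Omega}_\mathscr{C},\boldsymbol{\Omega}_\mathscr{C}]$ by an anticommutativity-of-one-forms argument. This is not safe: the $T(C)$ are built from composite fields $[\mathbb{X}^{J}(C)]$, and products of such functionals at the same corner are governed by the bullet product, which produces merged composite-field terms in addition to the naive graded-commutative product; the cancellation you invoke is not automatic there. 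More importantly, the paper does not discard this term at all --- it \emph{derives} it as the contribution of the boundary stratum of $\mathsf{C}^\mathscr{C}_{\Gamma'}(\Sigma)$ where a subgraph collapses to the corner, so it is an essential piece of the Stokes bookkeeping rather than something that vanishes for free.

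Second, and this is the real gap, the mixed identity $\dd_x\boldsymbol{\Omega}_\mathscr{C}+[\Tilde{\boldsymbol{\Omega}}_{\de\Sigma}^{\calP,\gamma},\boldsymbol{\Omega}_\mathscr{C}]=0$ is asserted by appeal to the structure equations of Proposition \ref{alg_str}, but those equations are obtained from collapses at interior boundary points modeled on $\mathbb{H}^d$ modulo translations and scalings; corner collapses are modeled on a quadrant, where the translation symmetry along the boundary is absent and the dimension count is different (Equation \eqref{eq:system2} versus \eqref{system1}). The structure equations therefore do not transfer to the corner by fiat. The paper's actual mechanism is Stokes' theorem applied to the corner configuration spaces $\mathsf{C}^\mathscr{C}_{\Gamma'}(\Sigma)$, followed by a case analysis of the four types of boundary strata: bulk collapses (cancelled via the Classical Master Equation, $[R,R]=0$, and Kontsevich vanishing lemmas), collapses at the $\hateta=0$ boundary (cancelled using $\dd_x\gamma + A(R,\mathsf{T}\varphi^*_x\Pi)(\gamma)+\gamma\star\gamma+F(R,R,\mathsf{T}\varphi^*_x\Pi)=0$), collapses at the corner (yielding $[\boldsymbol{\Omega}_\mathscr{C},\boldsymbol{\Omega}_\mathscr{C}]$), and collapses at the polarized boundary (yielding $[\Tilde{\boldsymbol{\Omega}}_{\de\Sigma}^{\calP,\gamma},\boldsymbol{\Omega}_\mathscr{C}]$). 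You name several of the right ingredients --- the Maurer--Cartan equation for $\gamma$, the flatness of $\overline{\calD}_\mathsf{G}$, the Stokes boundary terms --- but the step you defer as ``bookkeeping'' is precisely the proof: without setting up the configuration-space integral for $\boldsymbol{\Omega}_\mathscr{C}$ and classifying its boundary strata, the identity is not established.
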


\begin{proof}
First of all note that $\dd_x\Tilde{\boldsymbol{\Omega}}_{\de\Sigma}^{\calP,\gamma}+\frac{1}{2}\left[\Tilde{\boldsymbol{\Omega}}_{\de\Sigma}^{\calP,\gamma},\Tilde{\boldsymbol{\Omega}}_{\de\Sigma}^{\calP,\gamma}\right]=0$. This means we only need to prove
\begin{equation}
\dd_x\boldsymbol{\Omega}_{\mathscr{C}}+\frac{1}{2}\left[\boldsymbol{\Omega}_{\mathscr{C}},\boldsymbol{\Omega}_{\mathscr{C}}\right]+\left[\boldsymbol{\Omega}_{\mathscr{C}},\Tilde{\boldsymbol{\Omega}}_{\de\Sigma}^{\calP,\gamma}\right]=0.
\end{equation}
We can show this similarly to \cite{CMR2, CMW4}. Namely, since $\Tilde{\boldsymbol{\Omega}}_{\de\Sigma}^{\calP,\gamma}$ and $\boldsymbol{\Omega}_{\mathscr{C}}$ are given as sum of integrals over the boundary of the configuration space of collapsing graphs, we can use Stokes' Theorem: 
\begin{equation}
\dd_x \boldsymbol{\Omega}_{\mathscr{C}} = \dd_x \sum_{\Gamma' \leq \Gamma} \int_{\mathsf{C}^\mathscr{C}_{\Gamma'}(\Sigma)}\sigma_{\Gamma'} = \sum_{\Gamma' \leq \Gamma} \int_{\mathsf{C}^\mathscr{C}_{\Gamma'}(\Sigma)}\dd\sigma_{\Gamma'} - \int_{\de \mathsf{C}^\mathscr{C}_{\Gamma'}(\Sigma)} \sigma_{\Gamma'}
\end{equation}

Here $\mathsf{C}^\mathscr{C}_{\Gamma'}(\Sigma)$ is the configuration space describing the relative position of the vertices of the subgraph $\Gamma$ collapsing to the corner. In the first, the differential can act on the propagators, the boundary fields, or the vertex tensors $\mathsf{T}\varphi_x^*\Pi,\gamma, R$. The restriction of the propagators to this boundary face is closed, see Appendix \ref{app:prop}. If the differential acts on the boundary fields, this yields $[\Omega_0^\textnormal{princ},\boldsymbol{\Omega}_{\mathscr{C}}]$. The differential acting on vertex tensors will be cancelled by boundary terms. Notice that on the boundary faces the dimension counting is different and we can have either two vertices labeled by $R$, one $R$ vertex and one $\gamma$ vertex on the boundary or two $\gamma$ vertices on the boundary. A boundary face of $\mathsf{C}^\mathscr{C}_{\Gamma'}(\Sigma)$ corresponds to a collapse of a subgraph $\Gamma'' \leq \Gamma'$ to a single point. There are four distinct possibilities for that point (see Figure \ref{fig:diff_collapse}): 
\begin{itemize}
\item The point can be in the bulk. If $\Gamma''$ contains more than two vertices then the contribution is zero by a Kontsevich vanishing lemma. 
If it contains exactly two vertices, there is a cancellation similar to the proof of the modified differential Quantum Master Equation using the classical master equation, the fact that vertex tensors are $\dd_x + R$ closed, and that $[R,R]=0$. 
\item The point can be the corner. These terms yield $[\boldsymbol{\Omega}_{\mathscr{C}},\boldsymbol{\Omega}_{\mathscr{C}}]$.
\item The point can be at the boundary with the $\hateta \equiv 0$ boundary condition. In that case there is a cancellation similar to one in the proof of the modified differential Quantum Master Equation in section \ref{subsec_boundary} using the equation $\dd_x\gamma + A(R,\mathsf{T}\varphi^*_x\Pi)(\gamma) + \gamma \star \gamma + F(R,R,\mathsf{T}\varphi^*_x\Pi) = 0.$ 
\item The point can be on the upper boundary, this corresponds to $\left[\Tilde{\boldsymbol{\Omega}}_{\de\Sigma}^{\calP,\gamma},\boldsymbol{\Omega}_{\mathscr{C}}\right]$, the action of the algebra of boundary operators on the algebra of corner operators. 
\end{itemize}

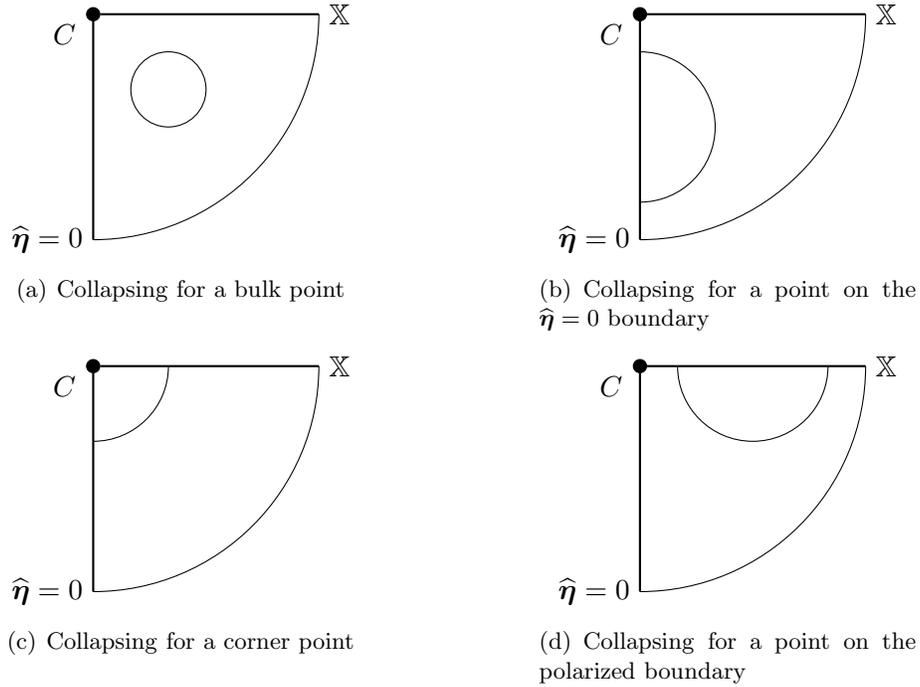
\begin{figure}[ht]

\subfigure[Collapsing for a bulk point]{
\begin{tikzpicture}
\node[coordinate] (bdry1) at (0,-3){};
\node[left] at (bdry1.west){$\widehat{\boldsymbol{\eta}}=0$};
\node[vertex] (corner) at (0,0){};
\node[below left] at (corner.west) {$C$};
\node[coordinate, below] (bdry2) at (3,0){};
\node[right] at (bdry2.east){$\mathbb{X}$}; 
\draw[thick] (bdry1.center) -- (corner) -- (bdry2.center);
\draw (270:3) arc (270:360:3);
\draw (1,-1) circle (.5cm);
\end{tikzpicture}
}
\hspace{2cm}
\subfigure[Collapsing for a point on the $\Hat{\boldsymbol{\eta}}=0$ boundary]{
\begin{tikzpicture}
\node[coordinate] (bdry1) at (0,-3){};
\node[left] at (bdry1.west){$\widehat{\boldsymbol{\eta}}=0$};
\node[vertex] (corner) at (0,0){};
\node[below left] at (corner.west) {$C$};
\node[coordinate, below] (bdry2) at (3,0){};
\node[right] at (bdry2.east){$\mathbb{X}$}; 
\draw[thick] (bdry1.center) -- (corner) -- (bdry2.center);
\draw (270:3) arc (270:360:3);
\draw (0,-2.5) arc (-90:90:1cm);
\end{tikzpicture}
}
\hspace{2cm}
\subfigure[Collapsing for a corner point]{
\begin{tikzpicture}
\node[coordinate] (bdry1) at (0,-3){};
\node[left] at (bdry1.west){$\widehat{\boldsymbol{\eta}}=0$};
\node[vertex] (corner) at (0,0){};
\node[below left] at (corner.west) {$C$};
\node[coordinate, below] (bdry2) at (3,0){};
\node[right] at (bdry2.east){$\mathbb{X}$}; 
\draw[thick] (bdry1.center) -- (corner) -- (bdry2.center);
\draw (270:3) arc (270:360:3);
\draw (0,-1) arc (-90:0:1cm);
\end{tikzpicture}
}
\hspace{2cm}
\subfigure[Collapsing for a point on the polarized boundary]{
\begin{tikzpicture}
\node[coordinate] (bdry1) at (0,-3){};
\node[left] at (bdry1.west){$\widehat{\boldsymbol{\eta}}=0$};
\node[vertex] (corner) at (0,0){};
\node[below left] at (corner.west) {$C$};
\node[coordinate, below] (bdry2) at (3,0){};
\node[right] at (bdry2.east){$\mathbb{X}$}; 
\draw[thick] (bdry1.center) -- (corner) -- (bdry2.center);
\draw (270:3) arc (270:360:3);
\draw (0.5,0) arc (-180:0:1cm);
\end{tikzpicture}
}
\caption{Illustration of the different cases for the collapsing}
\label{fig:diff_collapse}
\end{figure}

\end{proof}

\begin{rem}
The failure of the (modifed) differential Quantum Master Equation and its resolution is somehow similar to what happens in the Landau--Ginzburg model \cite{KL,BHLS,Laz}. Namely, the classical boundary conditions turn out not to be compatible with quantization. The resolution consists in coupling the bulk theory with a boundary theory with action $\calS_{\Sigma,\gamma}$.  
\end{rem}

\section{Outlook}
\label{sec:Outlook}
\subsection{Relational Symplectic Groupoid}

\subsubsection{Kontsevich's star product}
One can construct the Moyal product \cite{Moy} (deformation quantization) as the gluing of canonical relations as it was shown in \cite{CMW2}. It still remains to show that one can also use the gluing of the relational symplectic groupoid to construct a globalized version of Kontsevich's star product using the gluing formulas of the BV-BFV formalism. One can use the results of this paper to deal with the $L_3$ worldsheet structure, which is given as in Figure \ref{RSG1} with mixed boundary conditions.

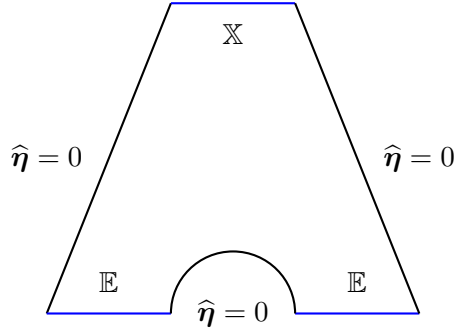
\begin{figure}[ht]
\centering
\tikzset{
particle/.style={thick,draw=black},
particle2/.style={thick,draw=blue},
avector/.style={thick,draw=black, postaction={decorate},
    decoration={markings,mark=at position 1 with {\arrow[black]{triangle 45}}}},
gluon/.style={decorate, draw=black,
    decoration={coil,aspect=0}}
 }
\begin{tikzpicture}[x=0.05\textwidth, y=0.05\textwidth]
\draw[particle] (0,0)--(2,5) node[above]{};
\draw[particle2] (2,5)--(4,5) 
node[right]{};
\draw[particle2] (0,0)--(2,0) node[right]{};
\node[](u1) at (2,0){};
\node[](u2) at (4,0){};
\draw[particle2] (4,0)--(6,0)
node[right]{};
\draw[particle] (6,0)--(4,5) node[right]{};
\node[](2) at (0.7,0){};
\node[](3) at (1.4,0){};
\node[](eta1) at (0,2.5){$\widehat{\boldsymbol{\eta}}=0$};
\node[](eta2) at (6,2.5){$\widehat{\boldsymbol{\eta}}=0$};
\node[](eta3) at (3,0){$\widehat{\boldsymbol{\eta}}=0$};
\node[](eta1) at (1,0.5){$\E$};
\node[](e2) at (5,0.5){$\E$};
\node[](x) at (3,4.5){$\mathbb{X}$};
\semiloop[particle]{u1}{u2}{0};
\end{tikzpicture}
\caption{The canonical relation $L_3$ with its boundary structure. Here we have two $\frac{\delta}{\delta\mathbb{X}}$-polarized boundaries (the lower) and one $\frac{\delta}{\delta\E}$-polarized boundary (the upper), which would correspond to $\partial_2 L_3$ and the $\widehat{\boldsymbol{\eta}}=0$ boundaries which are components of $\partial_1 L_3$.}
\label{RSG1}
\end{figure}

\subsubsection{Relational symplectic groupoid with handles}
Another interesting aspect would be to consider the relational symplectic groupoid with handles. That is, one considers canonical relations $L_3$ with non vanishing genus. Since our theory is topological, we are able to move the handle in arbitrary directions, which means that one has to understand what happens when a hole will approach an observable for the gluing of the disk in \cite{CMW4}. Moreover, one has to check what kind of structures appear for associativity.

\subsubsection{Generalization of Kontsevich's star product}
Kontsevich's star product arises from the computation of expectation values of observables in the Poisson Sigma Model for a genus zero wordsheet surface. As in string theory, one expects that we should sum over all genera. Since a particular gluing of the relational symplectic groupoid gives rise to Kontsevich's star product, one can relate this structure to the relational symplectic groupoid construction with handles. 

\vspace{0.5cm}
We will return on these questions in a forthcoming paper.

\subsection{Manifolds with corners}
The methods developed in this paper can be useful to give a description for the the quantization of manifolds with corners. Here the corners arose from the structure of mixed boundary conditions, but in principle the methods that we develop might be adapted to the general case. Another paper in this direction is \cite{IM}.

\subsection{Globalization of other theories} 
AKSZ theories have a particularly nice subset of classical solutions, the space of constant maps. This subset admits for a natural globalization, as was shown in \cite{CMW4}. It would be interesting to see whether the methods we used carry over to more complicated moduli spaces of classical solutions. E.g. in Chern-Simons theory, this subset is just the trivial connection, since the body of the target in that case is just a point, but one would like to take non-trivial connections into account as well.  



\begin{appendix}
\section{Configuration spaces and their compactifications}\label{app:Conf}

To define the quantum state, we need to recall the notion of configuration spaces and their compactification as in \cite{AS2,FulMacPh} due to Fulton--MacPherson and Axelrod--Singer. 
\subsection{FMAS-compactification}
We start with the definition of the configuration space. 
\begin{defn}
Let $M$ be a manifold and $S$ a finite set. The \textsf{open configuration space} of $S$ in $M$ is defined as
\begin{equation}
\mathsf{Conf}_S(M) := \{\iota\colon S \hookrightarrow M |\iota \hspace{0.2cm}\text{injection}\}
\end{equation}
\end{defn}
Elements of $\textsf{Conf}_{S}(M)$ are called $S$-configurations. To give an explicit definition of the compactification that can be extended to manifolds with boundaries and corners, we introduce the concept of \textsf{collapsed configurations}. Intuitively, a collapsed $S$-configuration is the result of a collapse of a subset of the points in the $S$-configuration. However, we remember the relative configuration of the points before the collapse by directions in the tangent space. This is a configuration in the tangent space that is well-defined only up to translations and scaling. The difficulty is that one can imagine a limiting configuration where two points collapse first together and then with a third (see Figure \ref{fig:collapsing_conf}).
This explains the recursive nature of the following definition. Recall that if $X$ is a vector space, then $X\times \R_{>0}$ acts on $X$ by translations and scaling. 
\begin{defn}[Collapsed configuration in $M$]
Let $M$ be a manifold, $S$ a finite set and $\mathfrak{P} = \{S_1,\ldots,S_k\}$ be a partition of $S$. A \textsf{$\mathfrak{P}$-collapsed configuration in $M$} is a $k$-tuple $(p_{\sigma},c_{\sigma})$ such that 
$((p_{\sigma},c_{\sigma}))_{\sigma = 1}^k$ satisfies 
\begin{enumerate}
\item $p_{\sigma} \in M$ and $p_{\sigma} \neq p_{\sigma'}$, for $\sigma \neq \sigma'$, 
\item $c_{\sigma} \in \Tilde{\mathsf{C}}_{S_{\sigma}}(T_{p_{\sigma}}M)$, where for $|S| = 1$, 
$\Tilde{\mathsf{C}}_S(X) := \{pt\}$  and for $|S| \geq 2$
\begin{equation}
\Tilde{\mathsf{C}}_S(X) := \coprod_{\substack{\mathfrak{P}=\{S_1,\ldots,S_k\} \\ S = \sqcup_\sigma S_\sigma, k\geq 2}}\left\lbrace \left(x_\sigma, c_\sigma \right)_{1\leq \sigma \leq k}\ \bigg|\ (x_\sigma, c_\sigma)\ \mathfrak{P}\text{-collapsed $S$-configuration in $X$}\right\rbrace \bigg/(X \times \R_{>0})
\end{equation}
\end{enumerate}

Here, $\varphi \in X \times \R_{>0}$ acts on $(x_\sigma,c_\sigma)$ by $(x_\sigma,c_\sigma) \mapsto (\varphi(x_\sigma), \dr\varphi_{x_\sigma}c_\sigma)$. 
\end{defn}
Intuitively, given a partition $\mathfrak{P}= \{S_1,\ldots,S_k\}$, a $k$-tuple $(p_\sigma,c_\sigma)$ describes the collapse of the points in $S_\sigma$ to $p_\sigma$. $c_\sigma$ remembers the relative configuration of the collapsing points. This relative configuration can itself be the result of a collapse of some points. 

\begin{defn}[FMAS compactification]
The \textsf{compactified configuration space} $\mathsf{C}_S(M)$ of $S$ in $M$ is given by 
\begin{equation}
\mathsf{C}_S(M) := \coprod_{\substack{S_1,\ldots,S_k \\ S = \sqcup_\sigma S_\sigma}}\left\lbrace (p_\sigma, c_\sigma )_{1\leq \sigma \leq k}\ \bigg|\ (p_\sigma, c_\sigma)\ \mathfrak{P}\text{-collapsed $S$-configuration in $M$}\right\rbrace.
\end{equation}
%
\end{defn}

\subsection{Boundary strata}
A precise description of the combinatorics of the stratification can be found in \cite{FulMacPh}, where it is also shown that $\mathsf{C}_S(M)$ is a manifold with corners and is compact if $M$ is compact. For us, only strata in low codimensions are interesting. Let $S=\{s_1,\ldots,s_k\}$. 
The stratum of codimension 0 corresponds to the partition $\mathfrak{P}= \{\{s_1\},\ldots,\{s_k\}\}$. For $\ell>1$, strata of codimension 1 correspond to the collapse of exactly one subset $S'=\{s_1,\ldots,s_\ell\} \subset S$ with no further collapses, i.e a partition $\mathfrak{P}=\{\{s_1,\ldots,s_\ell\},\{s_{\ell+1}\},\ldots,\{s_k\}\}$ and configuration $(p_\sigma,c_\sigma)$ with $c_\sigma$ in the component of $\Tilde{\mathsf{C}}_{S'}(X)$ given by the partition $\mathfrak{P} =  \{\{s_1\},\ldots,\{s_\ell\}\}$. This boundary stratum will be denoted by $\de_{S'}\mathsf{C}_S(M)$, in particular, we have
\begin{equation}
\de\mathsf{C}_S(M) = \coprod_{S' \subset S}\de_{S'}\mathsf{C}_S(M).
\end{equation} 
There is a natural fibration $\de_{S'}\mathsf{C}_S(M) \to \mathsf{C}_{S \setminus S'\cup \{pt\}}(M)$ whose fiber is $\Tilde{\mathsf{C}}_S(\R^{\dim M})$. Finally, we note that if $|S| = 2$, then $\mathsf{C}_{S}(M) \cong Bl_{\overline{\Delta}}(M \times M)$, the \textsf{differential-geometric blow-up} of the diagonal $\overline{\Delta} \subset M \times M$, and $\Tilde{\mathsf{C}}_S(X) \cong S^{\dim X-1}$. See Figure \ref{fig:collapsing_conf} for an example of a configuration of points and coresponding boundary strata. 

\begin{figure}[ht]
\begingroup%
  \makeatletter%
  \providecommand\color[2][]{%
    \errmessage{(Inkscape) Color is used for the text in Inkscape, but the package 'color.sty' is not loaded}%
    \renewcommand\color[2][]{}%
  }%
  \providecommand\transparent[1]{%
    \errmessage{(Inkscape) Transparency is used (non-zero) for the text in Inkscape, but the package 'transparent.sty' is not loaded}%
    \renewcommand\transparent[1]{}%
  }%
  \providecommand\rotatebox[2]{#2}%
  \ifx\svgwidth\undefined%
    \setlength{\unitlength}{177.26514562bp}%
    \ifx\svgscale\undefined%
      \relax%
    \else%
      \setlength{\unitlength}{\unitlength * \real{\svgscale}}%
    \fi%
  \else%
    \setlength{\unitlength}{\svgwidth}%
  \fi%
  \global\let\svgwidth\undefined%
  \global\let\svgscale\undefined%
  \makeatother%
  \begin{picture}(1,1.22602258)%
    \put(0,0){\includegraphics[width=\unitlength]{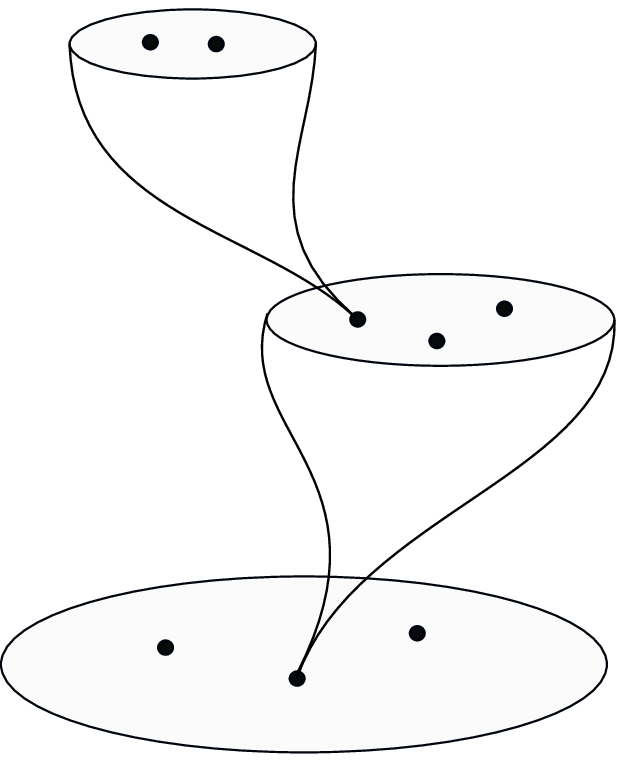}}%
    \put(0.28504216,0.22615565){\color[rgb]{0,0,0}\makebox(0,0)[lb]{\smash{$p_1$}}}%
    \put(7.45141911,-5.32081804){\color[rgb]{0,0,0}\makebox(0,0)[lt]{\begin{minipage}{2.35581904\unitlength}\raggedright \end{minipage}}}%
    \put(0.69011264,0.72199957){\color[rgb]{0,0,0}\makebox(0,0)[lb]{\smash{$p_3$}}}%
    \put(0.7144357,0.19405763){\color[rgb]{0,0,0}\makebox(0,0)[lb]{\smash{$p_2$}}}%
    \put(0.8659863,0.0058981){\color[rgb]{0,0,0}\makebox(0,0)[lb]{\smash{$M$}}}%
    \put(0.84598042,0.72266449){\color[rgb]{0,0,0}\makebox(0,0)[lb]{\smash{$p_4$}}}%
    \put(0.37765289,1.15553808){\color[rgb]{0,0,0}\makebox(0,0)[lb]{\smash{$p_5$}}}%
    \put(0.17364444,1.16429723){\color[rgb]{0,0,0}\makebox(0,0)[lb]{\smash{$p_6$}}}%
  \end{picture}%
\endgroup%

\caption{An element of $\mathsf{C}_S(M)$.}\label{fig:collapsing_conf}
\end{figure}

\subsection{Configuration spaces for manifolds with boundary}
We proceed to recall the definition of a compactified configuration space for manifolds with boundary. Let $M$ be a compact manifold with boundary $\de M$. Recall that for a manifold $M$ with boundary $\de M$, at points $p \in \de M$ there is a well-defined notion of inward and outward half-space in $T_pM$. If $H \subset X$ is a half-space, then $\de H \subset X$ is a hyperplane. $\de H \times \R_{>0}$ acts on $H$ by translations and scaling. 
%
%
\begin{defn}[Configuration spaces for manifolds with boundary]
Let $M$ be a manifold with boundary $\de M$. For $S,T$ finite sets, we define the \textsf{open configuration space} by
\begin{equation}
\mathsf{Conf}_{S,T}(M,\de M) := \{(\iota,\iota')\colon S \times T \hookrightarrow M \times \de M\}
\end{equation}
\end{defn}
\begin{defn}[Collapsed configuration on manifolds with boundary]
Let $(M, \de M)$ be a manifold with boundary. Let $S, T$ be finite sets and $\mathfrak{P}=\{S_1, \ldots, S_k\}$ a partition of $S \sqcup T$.  Then, a $\mathfrak{P}$-collapsed $(S,T)$-configuration in $M$ is a $k$-tuple of pairs $(p_\sigma,c_\sigma)$ such that
\begin{enumerate}
\item $p_\sigma \in M$ and $p_\sigma\neq p_{\sigma'}$, for all $\sigma\neq \sigma'$, 
\item $S_\sigma \cap T \neq \varnothing \Rightarrow p_\sigma \in \de M$,
\item $$c_\sigma \in \begin{cases} 
\Tilde{\mathsf{C}}_{S_\sigma}(T_{p_\sigma}M)  & p_\sigma \in M \setminus \de M \\  \Tilde{\mathsf{C}}_{S \cap S_\sigma, T \cap S_\sigma}(\mathbb{H}(T_{p_\sigma}M)) & p_\sigma \in \de M\end{cases}$$
\end{enumerate}
where $\mathbb{H}(T_{p_\sigma}M) \subset T_{p_\sigma}M$ denotes the inward half-space in $T_{p_\sigma}M$. Here, for a vector space $X$ and a half-space $H \subset X$, $\Tilde{\mathsf{C}}_{\varnothing,\{pt\}}(H) := \Tilde{\mathsf{C}}_{\{pt\},\varnothing}(H) := \{pt\}$, and for $\vert S \sqcup T\vert \geq 2$, 
\begin{equation}
\Tilde{\mathsf{C}}_{S,T}(H) := \coprod_{\substack{\mathfrak{P}=\{S_1,\ldots,S_k\}\\ S\sqcup T = \sqcup_\sigma S_\sigma, k \geq 2 }}\left\lbrace (v_\sigma,c_\sigma)\ \bigg|\ (v_\sigma, c_\sigma)\ \mathfrak{P}\text{-collapsed $(S,T)$-configuration in $H$} \right\rbrace \bigg/ (\de H \times \R_{>0}) 
\end{equation}

\end{defn}
\begin{defn}[FMAS compactification for manifolds with boundary]
We define the  \textsf{compactification} $\mathsf{C}_{S,T}(M,\de M)$ of $\mathsf{Conf}_{S,T}(M,\de M)$ by 
\begin{equation}
\mathsf{C}_{S,T}(M,\de M) = 
\coprod_{\substack{ \mathfrak{P} =\{S_1,\ldots,S_k\} \\ S\sqcup T= \sqcup_\sigma S_\sigma}}\left\lbrace \left(p_\sigma , c_\sigma\right)_{1\leq \sigma \leq k}\ \bigg|\ (p_\sigma,c_\sigma)\  \mathfrak{P}\text{-collapsed $(S,T)$-configuration}\right\rbrace
\end{equation}

\end{defn}
 Again, this is a manifold with corners and is compact if $M$ is compact. We proceed to describe the strata of low codimension. Let $U= \{u_1,\ldots,u_k\}, V = \{v_1,\ldots,v_k\}.$ The codimension 0 stratum again is given by the partition $\mathfrak{P} = \{\{u_1\},\ldots,\{u_k\},\{v_1\},\ldots, \{v_\ell\}\}.$ Let us describe the strata of codimension 1. We denote by $\de^\textnormal{I}_S\mathsf{C}_{U,V}(M,\de M)$ a boundary stratum where a subset $S \subset U$ collapses in the bulk, described in the same way as above.  On manifolds with boundary, there are new boundary strata in the compactified configuration space given by the collapse of a subset of points to a point in the boundary. Concretely, given a subset $S=\{u_1,\ldots,u_{k'},v_1,\ldots,v_{\ell'}\} \subset U \sqcup V$, there is a boundary stratum $\de^{\textnormal{II}}_S\mathsf{C}_{U,V}(M,\de M)$ corresponding to the partition $\mathfrak{P}=\{S,\{u_{k'+1}\},\ldots, \{u_k\},\{v_{\ell'+1}\},\ldots,\{v_\ell\}\}$ and collapsed configurations $(p_\sigma,c_\sigma)$ with $p_{\sigma} \in \de M$ and $c_\sigma$ corresponding to the partition $\mathfrak{P'} = \{\{u_1\},\ldots,\{u_k\},\{v_1\},\ldots,\{v_\ell\}\}$. The boundary decomposes as 
\begin{equation}
\de \mathsf{C}_{U,V}(M,\de M) = \coprod_{S \subseteq U} \de^\textnormal{I}_S\mathsf{C}_{U,V}(M,\de M) \amalg\coprod_{S \subseteq U \sqcup V} \de^{\textnormal{II}}_S\mathsf{C}_{U,V}(M,\de M)
\end{equation} 

\subsection{Configuration spaces for manifolds with corners}
Finally, we consider a manifold $M$ with boundary $\de M$ and corners $\de\de M$. Note that around points in corners $p \in \de\de M$ there is a notion of inward quadrant $\bbQ(T_pM) \subset T_pM$. It can be defined e.g. in coordinates, since the transition functions have to preserve both boundaries and corners. If $Q \subset X$ is any quadrant, its boundary is the union of two half-hyperplanes whose intersection is a $(\dim X - 2)$-dimensional subspace $W$. This subspace acts on $Q$ by translations. Again, $\R_{>0}$ acts on $Q$ by scaling. Note that in this case, $\Tilde{\mathsf{C}}_{\{pt\}}(Q) \cong I$, where $I$ is an interval. Hence the definition of collapsed configurations should be adapted to this case. We want to compactify the open configuration spaces 
\begin{equation}
\mathsf{Conf}_{S,T,U}^\mathscr{C}(M,\de M,\de\de M)
\end{equation} 
where $M$ is a manifold with corners. We proceed to define collapsed configurations as above: 
\begin{defn}[Collapsed configurations for manifolds with corners] 
Let $(M,\de M, \de\de M)$ be a manifold with corners. Let $S,T,U$ be finite sets and $\mathfrak{P} = (S_1,\ldots,S_k)$ be a partition of $S \sqcup T\sqcup U$. Then a $\mathfrak{P}$-collapsed $(S,T,U)$-configuration in $M$ is a $k$-tuple of pairs $(p_\sigma,c_\sigma)$ such that 
\begin{enumerate}
\item $p_\sigma \in M$ and $p_\sigma\neq p_{\sigma'}$, for all $\sigma\neq \sigma'$, 
\item $S_\sigma \cap T \neq \varnothing \Rightarrow p_\sigma \in \de M$,
\item $S_\sigma\cap U\neq \varnothing \Rightarrow p_\sigma\in\de\de M$,
\item $$c_\sigma \in \begin{cases} 
\Tilde{\mathsf{C}}^\mathscr{C}_{S_\sigma}(T_{p_\sigma}M)  & p_\sigma \in M \setminus \de M \\  \Tilde{\mathsf{C}}^\mathscr{C}_{S \cap S_\sigma, T \cap S_\sigma}(\mathbb{H}(T_{p_\sigma}M)) & p_\sigma \in \de M\setminus \de \de M\\  \Tilde{\mathsf{C}}^\mathscr{C}_{S \cap S_\sigma, T \cap S_\sigma,U\cap S_\sigma}(\bbQ(T_{p_\sigma}M))& p_\sigma \in \de\de M\end{cases}$$
\end{enumerate}
where, for $Y$ a quadrant of $X$, we have 
$\Tilde{\mathsf{C}}^\mathscr{C}_{S,\varnothing,\varnothing}(Y) = \Tilde{\mathsf{C}}^\mathscr{C}_{\varnothing,T, \varnothing}(Y) = \{pt\}$, 
$ \Tilde{\mathsf{C}}^\mathscr{C}_{\varnothing,\varnothing,\{pt\}}(Y)\cong I$, and for $|S \sqcup T \sqcup U| \geq 2$ we define
\begin{equation}
\Tilde{\mathsf{C}}^\mathscr{C}_{S,T,U}(Y) := \coprod_{\substack{\mathfrak{P}=\{S_1,\ldots,S_k\}\\ S\sqcup T\sqcup U = \sqcup_\sigma S_\sigma, k \geq 2 }}\left\lbrace (y_\sigma,c_\sigma)\ \bigg|\ (y_\sigma, c_\sigma)\ \mathfrak{P}\text{-collapsed $(S,T,U)$-configuration in $Y$} \right\rbrace \bigg/ (\de Y \times \R_{>0}) 
\end{equation}
\end{defn}
This compactified configuration space has three types of boundary strata: Strata where a set of bulk points collapses in the bulk (called Type I strata), strata where a subset of bulk and boundary points collapses at the boundary (called Type II strata), and strata where a subset of all points collapses to a corner point (called Type III strata): 
\begin{multline}
\de\mathsf{C}_{S,T,U}(M,\de M, \de \de M) = \coprod_{S' \subseteq S} \de^\textnormal{I}_{S'} \mathsf{C}_{S,T,U}(M,\de M, \de \de M) \\ \amalg\coprod_{S' \subseteq S \sqcup T} \de_{S'}^{\textnormal{II}} \mathsf{C}_{S,T,U}(M,\de M, \de \de M) \amalg\coprod_{S' \subseteq S \sqcup T \sqcup U} \de_{S'}^{\textnormal{III}}\mathsf{C}_{S,T,U}(M,\de M, \de \de M) 
\end{multline}
\begin{rem} 
At this point, one can generalize the definitions above to that of compactifications of configuration spaces on stratified manifolds, with strata of any codimension. This is required for the extension of perturbative quantization to fully extended theories. 
\end{rem}
\begin{notation}
For a manifold $M$ without boundary, we also denote the compactified configuration space of $n$ points  $\mathsf{C}_{[n]}(M)$ on $M$ by $\mathsf{C}_n(M)$ (here $[n] = \{1,\ldots, n\}$). Moreover, for a manifold $M$ with boundary, we denote the compactified configuration space $\mathsf{C}_{[n],[m]}(M)$ of $n$ points on the bulk of $M$ and $m$ points on the boundary $\de M$ of $M$ by $\mathsf{C}_{[n],[m]}(M,\de M)$. We will also write $\mathsf{C}_\Gamma(M)$ for $\mathsf{C}_{[n],[m]}(M,\de M)$, if $\Gamma$ is a graph with $n+m$ vertices, $n$ vertices in the bulk of $M$ and $m$ vertices on $\de M$. Moreover, we will write $\mathsf{C}^\mathscr{C}_{n,m}(M)$ (or $\mathsf{C}^\mathscr{C}_\Gamma(M)$) for $\mathsf{C}^\mathscr{C}_{[n],[m],\varnothing}(M, \de M, \de\de M)$, if $M$ is a manifold with corners.
\end{notation}

\section{Deformation quantization and the Poisson Sigma Model}\label{app:PSM}
In this section we recollect some aspects of Kontsevich's star product \cite{K, CI, CKTB}, its globalization construction \cite{CFT,CF3,BCM,D}, and recall the relation with the Poisson Sigma Model \cite{CF1,CF2}. 

\subsection{Kontsevich's formality map on $\R^d$}
\label{app:kontsevich}
Kontsevich's formality map is an $L_{\infty}$ (quasi-iso)morphism from multivector fields $T_{poly}\R^d:=\Gamma\left(\bigwedge^\bullet T\R^d\right)$ to multidifferential operators $D_{poly}^\bullet\R^d$ on $\R^d$. As such it consists of a family of maps 

\begin{align}
\begin{split}
\calU_n\colon \Gamma\left(\bigwedge^{k_1}T\R^d\right)\oplus\dotsm \oplus\Gamma\left(\bigwedge^{k_n} T\R^d\right)&\to D^\bullet_{poly}\R^d\\
(\xi_1,\ldots,\xi_n)&\mapsto\calU_n(\xi_1,\ldots,\xi_n):=\sum_{\Gamma\in \mathcal{G}_{n,\ell}}w_\Gamma B_{\Gamma,\xi_1,\ldots,\xi_n}, 
\end{split}
\end{align}
where $\mathcal{G}_{n,\ell}$ is the set of graphs with $n+\ell$ numbered vertices, with $\ell:=2-2n+\sum_{i=1}^nk_i$, such that the $j$th vertex for $1\leq j\leq n$ emanates exactly $k_j$ arrows (without short loops). Here $k_i$ represents the degree of the multivector field $\xi_i$. Note that $\calU_n(\xi_1,\ldots,\xi_n)$ acts on $\ell$ functions. Here $B_{\Gamma,\xi_1,\ldots,\xi_n}$ are multidifferential operators, depending a graph $\Gamma$ and also on the vector fields $\xi_1,\ldots,\xi_n$, and the $w_\gamma$ are weights corresponding to a graph $\Gamma$ as in \cite{K}. For a vector field $\xi$ (i.e. $\xi$ is of degree $1$) and a bivector field $\Pi$ (i.e. $\Pi$ is of degree $2$) we can define 
\begin{align}
P(\Pi)&:= \sum_{j=0}^{\infty} \frac{\varepsilon^j}{j!}\calU_j(\Pi,\ldots,\Pi), \\
A(\xi,\Pi)&:=\sum_{j=0}^{\infty}\frac{\varepsilon^j}{j!}\calU_{j+1}(\xi,\Pi,....,\Pi),\label{eq:defn_A}\\
F(\xi_1,\xi_2,\Pi)&:=\sum_{j=0}^\infty \frac{\varepsilon^j}{j!}\calU_{j+2}(\xi_1,\xi_2,\Pi,\ldots,\Pi).  \label{eq:defn_F}
\end{align}
We have chosen the letters in this way, because later we will think of $P$ to be Kontsevich's star product for $\Pi$ a given Poisson tensor, $A$ as a connection 1-form and $F$ as its curvature. Let us take a look at some of the graphs appearing for some chosen multivector fields. For example, for a bivector field $\Pi$, we get that the term $U_{1}(\Pi)$ corresponds to the first graph of Figure \ref{formality}, whereas for a multivector field $\mathcal{V}$ of degree $r$ we get for $\calU_1(\mathcal{V})$ the second graph of Figure \ref{formality}. Let now $\xi$ be a vector field. Note that the number $\ell$ for $\calU_n(\xi,\Pi,\ldots,\Pi)$ will always be $1$ for every $n$, which implies that $A(\xi,\Pi)$ takes a smooth map $f$ as an argument.

\begin{figure}[ht]
\centering
\subfigure[Graph corresponding to a bivector field $\Pi$]{
\tikzset{
particle/.style={thick,draw=black},
particle2/.style={thick,draw=blue},
avector/.style={thick,draw=black, postaction={decorate},
    decoration={markings,mark=at position 1 with {\arrow[black]{triangle 45}}}},
gluon/.style={decorate, draw=black,
    decoration={coil,aspect=0}}
 }
\begin{tikzpicture}[x=0.07\textwidth, y=0.07\textwidth]
\node[](1) at (0,0){};
\node[](2) at (0.3,0.7){};
\node[](3) at (-0.8,-0.3){};
\node[](4) at (0.8,-0.3){};
\node[] (v) at (0,0.5){$\Pi$};
\node[] (d1) at (-1,-2){$f_1$};
\node[] (d2) at (1,-2){$f_2$};
\draw[fermion] (0,0)--(-1,-1.5);
\draw[fermion] (0,0)--(1,-1.5);
\draw[particle] (-2,-1.5)--(2,-1.5);
\node[vertex](v) at (0,0){};
\node[vertex](vert1) at (-1,-1.5){};
\node[vertex](vert2) at (1,-1.5){};
\end{tikzpicture}
}
\quad
\subfigure[Graph corresponding to a multivector field $\mathcal{V}$ of degree $r$, where $f_1,\ldots,f_r\in C^\infty(M)$]{
\tikzset{
particle/.style={thick,draw=black},
particle2/.style={thick,draw=blue},
avector/.style={thick,draw=black, postaction={decorate},
    decoration={markings,mark=at position 1 with {\arrow[black]{triangle 45}}}},
gluon/.style={decorate, draw=black,
    decoration={coil,aspect=0}}
 }
\begin{tikzpicture}[x=0.07\textwidth, y=0.07\textwidth]
\node[](1) at (0,0){};
\node[](2) at (0.3,0.7){};
\node[](3) at (-0.8,-0.3){};
\node[](4) at (0.8,-0.3){};
\draw[fermion] (0,0)--(-1,-1.5);
\node[] (v) at (0,0.5){$\mathcal{V}$};
\node[] (1) at (-1,-2){$f_1$};
\node[] (2) at (-0.5,-2){$f_2$};
\node[] (3) at (0,-2){$f_3$};
\node[] (4) at (0.5,-2){$f_4$};
\node[] (5) at (1,-2){$f_5$};
\node[] (6) at (1.5,-2){$f_6$};
\node[] (1) at (5,-2){$f_r$};
\draw[fermion] (0,0)--(1,-1.5);
\draw[fermion] (0,0)--(-0.5,-1.5);
\draw[fermion] (0,0)--(0,-1.5);
\draw[fermion] (0,0)--(0.5,-1.5);
\draw[fermion] (0,0)--(1.5,-1.5);
\node[] (p) at (2,-1){$\dotsm$};
\draw[particle] (-2,-1.5)--(6,-1.5);
\draw[fermion] (0,0)--(5,-1.5);
\node[vertex](vert) at (0,0){};
\node[vertex](vert1) at (-1,-1.5){};
\node[vertex](vert2) at (-0.5,-1.5){};
\node[vertex](vert3) at (0,-1.5){};
\node[vertex](vert4) at (0.5,-1.5){};
\node[vertex](vert5) at (1.5,-1.5){};
\node[vertex](vert6) at (1,-1.5){};
\node[vertex](vert7) at (5,-1.5){};
\end{tikzpicture}
}
\caption{The graphs $\calU_1(\Pi)$ and $\calU_1(\mathcal{V})$.}
\label{formality}
\end{figure}
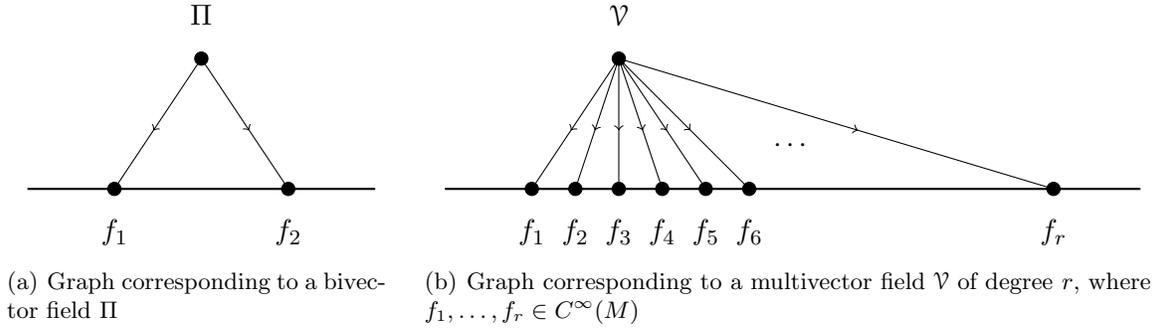
We want to look at graphs appearing for higher terms in $A$. We can, e.g., consider the $n=3$ term, i.e. $\calU_3(\xi,\Pi,\Pi)$. Some example of graphs in $\mathcal{G}_{3,1}$, which are taken in account for the sum, are given in Figure \ref{formality_A}.

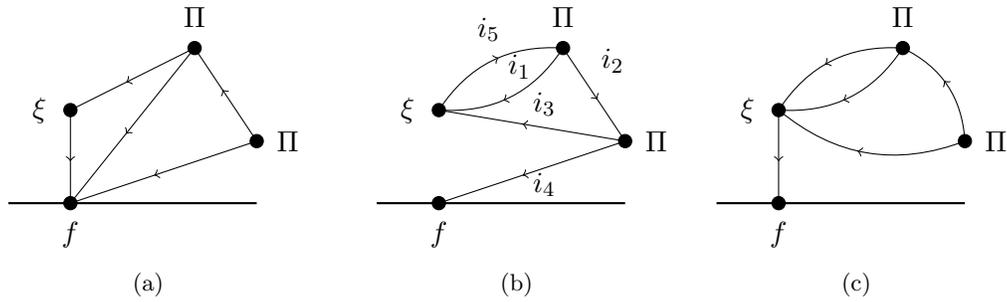
\begin{figure}[ht]
\centering
\subfigure[]{
\tikzset{
particle/.style={thick,draw=black},
particle2/.style={thick,draw=blue},
avector/.style={thick,draw=black, postaction={decorate},
    decoration={markings,mark=at position 1 with {\arrow[black]{triangle 45}}}},
gluon/.style={decorate, draw=black,
    decoration={coil,aspect=0}}
 }
\begin{tikzpicture}[x=0.05\textwidth, y=0.05\textwidth]
\node[](1) at (0,0){};
\node[](2) at (0.3,0.7){};
\node[](3) at (-0.8,-0.3){};
\node[](4) at (0.8,-0.3){};
\node[] (v1) at (-1.5,0){$\xi$};
\node[] (v2) at (1,1.5){$\Pi$};
\node[] (v3) at (2.5,-0.5){$\Pi$};
\node[] (d1) at (-1,-2){$f$};
\draw[fermion] (-1,0)--(-1,-1.5);
\draw[fermion] (1,1)--(-1,-1.5);
\draw[fermion] (1,1) to (-1,0);
\draw[fermion] (2,-0.5)--(-1,-1.5);
\draw[fermion] (2,-0.5) to (1,1);
\draw[particle] (-2,-1.5)--(2,-1.5);
\node[vertex](vert1) at (-1,0){};
\node[vertex](vert2) at (1,1){};
\node[vertex](vert3) at (2,-0.5){};
\node[vertex](vert4) at (-1,-1.5){};
\end{tikzpicture}
}
\quad
\subfigure[]{
\tikzset{
particle/.style={thick,draw=black},
particle2/.style={thick,draw=blue},
avector/.style={thick,draw=black, postaction={decorate},
    decoration={markings,mark=at position 1 with {\arrow[black]{triangle 45}}}},
gluon/.style={decorate, draw=black,
    decoration={coil,aspect=0}}
 }
\begin{tikzpicture}[x=0.05\textwidth, y=0.05\textwidth]
\node[](1) at (0,0){};
\node[](2) at (0.3,0.7){$i_1$};
\node[](3) at (-0.8,-0.3){};
\node[](3) at (0.7,0.1){$i_3$};
\node[](3) at (-0.2,1.3){$i_5$};
\node[](3) at (1.8,0.8){$i_2$};
\node[](3) at (0.7,-1.2){$i_4$};
\node[](4) at (0.8,-0.3){};
\node[] (v1) at (-1.5,0){$\xi$};
\node[] (v2) at (1,1.5){$\Pi$};
\node[] (v3) at (2.5,-0.5){$\Pi$};
\node[] (d1) at (-1,-2){$f$};
\draw[fermion] (-1,0) to [bend left](1,1);
\draw[fermion] (2,-0.5)--(-1,0);
\draw[fermion] (1,1) to [bend left](-1,0);
\draw[fermion] (2,-0.5)--(-1,-1.5);
\draw[fermion] (1,1)--(2,-0.5);
\draw[particle] (-2,-1.5)--(2,-1.5);
\node[vertex](vert1) at (-1,0){};
\node[vertex](vert2) at (2,-0.5){};
\node[vertex](vert3) at (1,1){};
\node[vertex](vert4) at (-1,-1.5){};
\end{tikzpicture}
}
\subfigure[]{
\tikzset{
particle/.style={thick,draw=black},
particle2/.style={thick,draw=blue},
avector/.style={thick,draw=black, postaction={decorate},
    decoration={markings,mark=at position 1 with {\arrow[black]{triangle 45}}}},
gluon/.style={decorate, draw=black,
    decoration={coil,aspect=0}}
 }
\begin{tikzpicture}[x=0.05\textwidth, y=0.05\textwidth]
\node[](1) at (0,0){};
\node[](2) at (0.3,0.7){};
\node[](3) at (-0.8,-0.3){};
\node[](4) at (0.8,-0.3){};
\node[] (v1) at (-1.5,0){$\xi$};
\node[] (v2) at (1,1.5){$\Pi$};
\node[] (v3) at (2.5,-0.5){$\Pi$};
\node[] (d1) at (-1,-2){$f$};
\draw[fermion] (-1,0)--(-1,-1.5);
\draw[fermion] (1,1) to [bend left](-1,0);
\draw[fermion] (1,1) to [bend right](-1,0);
\draw[fermion] (2,-0.5) to [bend left](-1,0);
\draw[fermion] (2,-0.5) to [bend right](1,1);
\draw[particle] (-2,-1.5)--(2,-1.5);
\node[vertex](vert1) at (-1,0){};
\node[vertex](vert2) at (1,1){};
\node[vertex](vert3) at (2,-0.5){};
\node[vertex](vert4) at (-1,-1.5){};
\end{tikzpicture}
}
\caption{Example of graphs in $\mathcal{G}_{3,1}$.}
\label{formality_A}
\end{figure}

We can also explicitly say what the differential operator given by a graph will be. E.g. for the graph as in \ref{formality_A} (b) we get 
\begin{equation}
\partial_{i_1}\partial_{i_3}\xi^{i_5}\partial_{i_2}\partial_{i_2}\Pi^{i_3i_4}\partial_{i_5}\Pi^{i_1i_2}\partial_{i_4}(f).
\end{equation}
By definition of $F$, for every $n$ we get that $\ell=0$, i.e. the image of $\mathcal{U}_n$ will be a differential operator of degree zero, which is a smooth function. Some examples for graphs in $\mathcal{G}_{3,0}$ are given in Figure \ref{formality_F}.

\begin{figure}[ht]
\centering
\subfigure[]{
\tikzset{
particle/.style={thick,draw=black},
particle2/.style={thick,draw=blue},
avector/.style={thick,draw=black, postaction={decorate},
    decoration={markings,mark=at position 1 with {\arrow[black]{triangle 45}}}},
gluon/.style={decorate, draw=black,
    decoration={coil,aspect=0}}
 }
\begin{tikzpicture}[x=0.05\textwidth, y=0.05\textwidth]
\node[](1) at (0,0){};
\node[](2) at (0.3,0.7){};
\node[](3) at (-0.8,-0.3){};
\node[](4) at (0.8,-0.3){};
\node[] (v1) at (-1.5,0){$\xi_1$};
\node[] (v2) at (1,1.5){$\xi_2$};
\node[] (v3) at (2.5,-0.5){$\Pi$};
\node[] (d1) at (-1,-2){};
\draw[fermion] (-1,0) to [bend right](2,-0.5);
\draw[fermion] (1,1) to (-1,0);
\draw[fermion] (2,-0.5)--(-1,0);
\draw[fermion] (2,-0.5) to (1,1);
\draw[particle] (-2,-1.5)--(2,-1.5);
\node[vertex](vert1) at (1,1){};
\node[vertex](vert2) at (2,-0.5){};
\node[vertex](vert4) at (-1,0){};
\end{tikzpicture}
}
\quad
\subfigure[]{
\tikzset{
particle/.style={thick,draw=black},
particle2/.style={thick,draw=blue},
avector/.style={thick,draw=black, postaction={decorate},
    decoration={markings,mark=at position 1 with {\arrow[black]{triangle 45}}}},
gluon/.style={decorate, draw=black,
    decoration={coil,aspect=0}}
 }
\begin{tikzpicture}[x=0.05\textwidth, y=0.05\textwidth]
\node[](1) at (0,0){};
\node[](2) at (0.3,0.7){};
\node[](3) at (-0.8,-0.3){};
\node[](4) at (0.8,-0.3){};
\node[] (v1) at (-1.5,0){$\xi_1$};
\node[] (v2) at (1,1.5){$\xi_2$};
\node[] (v3) at (2.5,-0.5){$\Pi$};
\node[] (d1) at (-1,-2){};
\draw[fermion] (-1,0) to (2,-0.5);
\draw[fermion] (1,1) to (-1,0);
\draw[fermion] (2,-0.5) to [bend left](1,1);
\draw[fermion] (2,-0.5) to [bend right](1,1);
\draw[particle] (-2,-1.5)--(2,-1.5);
\node[vertex](vert1) at (1,1){};
\node[vertex](vert2) at (2,-0.5){};
\node[vertex](vert4) at (-1,0){};
\end{tikzpicture}
}
\quad
\subfigure[]{
\tikzset{
particle/.style={thick,draw=black},
particle2/.style={thick,draw=blue},
avector/.style={thick,draw=black, postaction={decorate},
    decoration={markings,mark=at position 1 with {\arrow[black]{triangle 45}}}},
gluon/.style={decorate, draw=black,
    decoration={coil,aspect=0}}
 }
\begin{tikzpicture}[x=0.05\textwidth, y=0.05\textwidth]
\node[](1) at (0,0){};
\node[](2) at (0.3,0.7){};
\node[](3) at (-0.8,-0.3){};
\node[](4) at (0.8,-0.3){};
\node[] (v1) at (-1.5,0){$\xi_1$};
\node[] (v2) at (1,1.5){$\xi_2$};
\node[] (v3) at (2.5,-0.5){$\Pi$};
\node[] (d1) at (-1,-2){};
\draw[fermion] (1,1) to [bend right](-1,0);
\draw[fermion] (-1,0) to [bend right](1,1);
\draw[fermion] (2,-0.5)--(-1,0);
\draw[fermion] (2,-0.5) to (1,1);
\draw[particle] (-2,-1.5)--(2,-1.5);
\node[vertex](vert1) at (1,1){};
\node[vertex](vert2) at (2,-0.5){};
\node[vertex](vert4) at (-1,0){};
\end{tikzpicture}
}
\caption{Example of graphs in $\mathcal{G}_{3,0}$.}
\label{formality_F}
\end{figure}
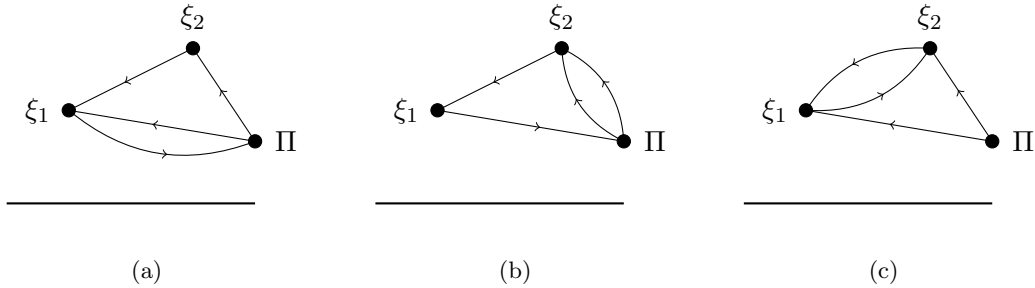

\subsection{Notions of formal geometry}
\label{app:formal_geometry}
We recall the most important notions of formal geometry as in \cite{GK,B} following the presentation as in \cite{CF3} and \cite{BCM}. For a smooth manifold $\mathscr{P}$ we can consider a formal exponential map $\varphi$ on $\mathscr{P}$, such that for $x\in\mathscr{P}$ we have $\varphi_x\colon T_x\mathscr{P}\to  \mathscr{P}$, and we define a vector field $R\in \Gamma(T^*\mathscr{P}\otimes T\mathscr{P}\otimes \widehat{S}T^*\mathscr{P})$, which is a 1-form with values in derivations of $\widehat{S}T^*\mathscr{P}$. Here $\widehat{S}$ denotes the completed symmetric algebra. In local coordinates we have $R=R_i\dd x^{i}$ with
\begin{equation}
R_i(x;y) = \left(\left(\frac{\partial \varphi_x}{\partial y}\right)^{-1}\right)^k_j\frac{\partial\varphi_x^{j}}{\partial x^{i}}\frac{\de }{\de y^k} =: Y^k_i(x;y)\frac{\de}{\de y^k}
\end{equation}
Then we can define the classical Grothendieck connection $D_\mathsf{G}:=\dd +R$, which is flat. For a vector field $\xi=\xi\frac{\partial}{\partial x^{i}}$ we have $D^\xi_\mathsf{G}=\xi+\widehat{\xi}$, where 
\begin{equation}
\label{eq:xi_hat}
\widehat{\xi}(x;y)=\iota_\xi R(x;y)=\xi^{i}Y_i^k(x;y)\frac{\partial}{\partial y^k}. 
\end{equation}

\subsection{Globalization}
\label{app:globalization}
Now let us describe how to generalize the above procedure to an arbitrary Poisson manifold $(\mathscr{P},\Pi)$. Namely, let $x \in \mathscr{P}$, and $\varphi$ a formal exponential map on $\mathscr{P}$. Then $\mathsf{T}\varphi_x^*\Pi$, the Taylor expansion of $\Pi$ around $x$ defined using $\varphi$, is a Poisson tensor on $\Hat{S}T^*_x\mathscr{P}$. Any choice of coordinates on $T_xM$ now allows us to identify $\Hat{S}T^*_x\mathscr{P} \cong \R[[y_1,\ldots,y_d]]$ and define Kontsevich's star product $P(\mathsf{T}\varphi^*_x\Pi)$. See \cite{CFT} for a discussion of the equivariance of this construction in the choice of coordinates. In this way we get a new bundle $\mathcal{E}:= \Hat{S}T^*\mathscr{P}[[\varepsilon]]$ of $\star$-algebras. One can use the Grothendieck connection defined in \ref{app:formal_geometry} to give a description of a subalgebra $\mathcal{A}\subset \Gamma(\mathcal{E})$ which is a deformation quantization of $C^\infty(\mathscr{P})$ seen as a subalgebra of $\Gamma(\mathcal{E})$. Formally we have
\begin{equation}
\diagram
\centering
\Gamma(\calE)\supset C^\infty(\mathscr{P})&\rTo^{\textsf{Deformation Quantization}}&\mathcal{A}\subset \Gamma(\mathcal{E}).\\
\enddiagram
\end{equation}
The algebra $\mathcal{A}$ is given by closed sections under a deformation of the Grothendieck connection, which is defined in two steps:
For a tangent vector $\xi\in T_x\mathscr{P}$, we let
\begin{equation}
\label{loc_def_conn}
\mathcal{D}_\mathsf{G}^\xi:=\xi+A\left(\widehat{\xi},\mathsf{T}\varphi^*_x\Pi\right)= D_\mathsf{G}^\xi+O(\varepsilon),
\end{equation}
where again we denote by $\mathsf{T}\varphi^*_x\Pi$ the Poisson tensor $\Pi$ lifted to a formal neighborhood and $\Hat{\xi}$ is defined as in \eqref{eq:xi_hat}. One can write 
\begin{equation}
\label{AppB:connection_G}
\mathcal{D}_{\mathsf{G}} = \dd + A(R,\mathsf{T}\varphi^*\Pi) 
\end{equation}
interpreting $A(R,\mathsf{T}\varphi^*\Pi)$ as a one-form valued in differential operators on $\mathcal{E}$. At some point $x \in \mathscr{P}$, in coordinates $x^i$ around $x$, it is given by 
\begin{equation} 
A(R,\mathsf{T}\varphi_x^*\Pi) = \dd x^i A(R_i(x;y),\mathsf{T}\varphi^*_x\Pi) =  \dd x^i A\left(Y_i^k(x;y)\frac{\partial}{\partial y^k},\mathsf{T}\varphi^*_x\Pi\right).   
\end{equation}
One can then show \cite{CFT} that $\mathcal{D}_\mathsf{G}$ is a globally defined connection on $\Gamma(\mathcal{E})$, a derivation, and that $(\mathcal{D}_\mathsf{G})^2$ is an inner derivation, i.e. 
\begin{equation}
(\mathcal{D}_\mathsf{G})^2\sigma=[F^\mathscr{P},\sigma]_\star:=F^\mathscr{P}\star\sigma-\sigma\star F^\mathscr{P}, 
\end{equation}
for any $\sigma\in \Gamma(\mathcal{E})$, where $\gls{F^P}$ is the \textsf{Weyl curvature} tensor of $\mathcal{D}_\mathsf{G}$ given by $F^\mathscr{P}(\xi_1,\xi_2):=F(\widehat{\xi}_1,\widehat{\xi}_2,\mathsf{T}\varphi^*\Pi)$, where $\xi_1,\xi_2\in T_x\mathscr{P}$ are two tangent vectors on $\mathscr{P}$. More, precisely, $F^\mathscr{P}$ is a 2-form valued in sections of $\mathcal{E}$ which in local coordinates can be expressed as 
\begin{equation}
F^\mathscr{P}_x = \dd x^i \wedge \dd x^j F(R_i(x;y),R_j(x;y),\mathsf{T}\varphi^*_x\Pi).
\end{equation}
For the Weyl tensor we get $\mathcal{D}_\mathsf{G}F^\mathscr{P}=0$. The task is to modify the globalized connection $\mathcal{D}_\mathsf{G}$ slightly more, so that it becomes flat but still remaining a derivation. One can set\footnote{For any two $\mathcal{E}$-valued $1$-forms $\gamma=\gamma_i\dd x^{i},\sigma=\sigma_j\dd x^{j}\in\Omega^1(\mathscr{P},\mathcal{E})$ one defines their star product by $\gamma\star\sigma:=(\gamma_i\star\sigma_j)\dd x^{i}\land \dd x^{j}$} 
\begin{equation}
\label{def_conn}
\overline{{\mathcal{D}}}_\mathsf{G}:=\mathcal{D}_\mathsf{G}+[\gamma,\enspace]_\star,
\end{equation}
and observe that for any $1$-form $\gamma\in\Omega^1(\mathscr{P},\mathcal{E})$ this connection is a derivation. Moreover, its Weyl curvature tensor is then given by
\begin{equation}
\label{flatness}
\gls{barF^P}=F^\mathscr{P}+\mathcal{D}_\mathsf{G}\gamma+\gamma\star\gamma. 
\end{equation}
We call \eqref{loc_def_conn} the \textsf{deformed Grothendieck connection} and \eqref{def_conn} the \textsf{modified deformed Grothendieck connection}. One then needs to find $\gamma\in \Omega^1(\mathscr{P},\mathcal{E})$ such that $\overline{F}^\mathscr{P}=0$, which implies that $(\overline{\mathcal{D}}_\mathsf{G})^2=0$, so that $\overline{{\mathcal{D}}}_\mathsf{G}$-closed sections will form the algebra $\mathcal{A}$ as a deformation quantization of $C^\infty(\mathscr{P})$. If we compute $(\overline{{\mathcal{D}}}_\mathsf{G})^2$ explicitly, by using \eqref{def_conn} we get 
\begin{equation}
\label{square_conn}
(\overline{{\mathcal{D}}}_\mathsf{G})^2=\underbrace{(\mathcal{D}_\mathsf{G})^2}_{:=[F^\mathscr{P},\enspace]_\star}+\mathcal{D}_\mathsf{G}[\gamma,\enspace]_\star+[\gamma,[\gamma,\enspace]_\star]_\star.
\end{equation}
More precisely, $\gamma$ has to satisfy 
\begin{equation}
F^\mathscr{P}+\mathcal{D}_\mathsf{G}\gamma+\gamma\star\gamma=0.
\end{equation}
The existence of such a $\gamma$ was shown in \cite{CF3,CFT} by homological perturbation theory. 
One can actually construct $\gamma$ to be a solution of the more general equation given by 
\begin{equation}
\label{gen_curv}
\overline{F}^\mathscr{P}_\omega=F^\mathscr{P}+\varepsilon\omega+\mathcal{D}_\mathsf{G}\gamma+\gamma\star\gamma=0,
\end{equation}
where $\omega\in\Omega^2(\mathscr{P},\calE)$ such that $\mathcal{D}_{\mathsf{G}}\omega=0$ and $[\omega,\enspace]_\star=0$ \cite{CFT}. 

Now we want to focus on some special cases.
We want to look at two important examples of Poisson structures. 

\subsubsection{Constant Poisson structure} 
The situation of a constant Poisson structure is a first example to think about. Let $(\mathscr{P},\Pi)$ be a Poisson manifold with constant Poisson structure $\Pi$ and $\xi\in T_x\mathscr{P}$ for $x\in \mathscr{P}$ be a fixed tangent vector. By the definition of $A$, and the fact that each vertex has only one outgoing and no incoming arrow, we get $A(\widehat{\xi},\mathsf{T}\varphi^*\Pi)=\widehat{\xi}$, which leads to the fact that 
\begin{equation}
\mathcal{D}^\xi_\mathsf{G}=(\xi+\widehat{\xi}) = D_\mathsf{G}^{\xi}.
\end{equation}
Therefore we get $(\mathcal{D}_\mathsf{G})^2=0$ and thus $F^\mathscr{P}=0$. We can then choose $\gamma=0$.

\subsubsection{Linear Poisson structure} 
Let now $(\mathscr{P}=\mathfrak{g}^*,\Pi)$ be a Poisson manifold with linear Poisson structure $\Pi(x)=\Pi_k^{ij}x^k\frac{\partial}{\partial x^{i}}\land \frac{\partial}{\partial x^j}$, where $\Pi^{ij}_k$ represent the structure constants of a Lie algebra $\mathfrak{g}$, and $\xi\in T_x\mathscr{P}$ for $x\in\mathscr{P}$ be a fixed tangent vector. As in the constant case, we observe that $A(\widehat{\xi},\mathsf{T}\varphi^*\Pi)=\widehat{\xi}$, which is the case since the integral of a bulk vertex with one incoming and one outgoing arrow is zero, and since there is at most one incoming arrow for each vertex. Again we may choose $\gamma=0$.

\subsection{Connection to the Poisson Sigma Model} 
\label{app:conn_PSM}
In \cite{CF1} and \cite{CF2} it was shown that Kontsevich's formality map on $\R^d$ can be intepreted as the perturbative computation of  expectation values of observables of the Poisson Sigma Model on the upper half plane (or respectively the disk) with values in $\R^d$. The graphs which appear in the construction of Kontsevich's star product on Poisson manifolds \cite{K} are given on the upper half plane, where they can collapse, according to the boundary of the configuration space, on the boundary of the upper half plane. This means that the graphs that appear in the Poisson Sigma Model are exactly the graphs that appear for Kontsevich's star product. More precisely, if one considers the disk $D$ in $\R^2$ and the classical action of the Poisson Sigma Model on $D$ given by $S_D[(X,\eta)]=\int_D\left(\langle \eta,\dd X\rangle+\frac{1}{2}\langle\Pi(X),\eta\land\eta\rangle\right)$, we can asymptotically write Kontsevich's star product for two smooth maps $f$ and $g$ as a perturbative expansion of the following path integral:
\begin{equation}
f\star g(x)=\int_{X(\infty)=x}f(X(0))g(X(1))\ee^{\frac{\I}{\hbar}S_D[(X,\eta)]},
\end{equation}
where $0,1,\infty$ represent some marked points on the boundary of $D$. Note that $x\in \Map(D,\R^d)$ is a constant map, i.e. the we get a local representation of the star product. If one considers a general Poisson manifold $(\mathscr{P},\Pi)$, one can consider the constant map $x\in\Map(D,\mathscr{P})$ as a point sitting in $\mathscr{P}$ giving a local product on each fiber. As already described in \ref{app:globalization}, one can then algebraically construct the star product on all of $\mathscr{P}$.

\section{On the Propagator}
\label{app:prop}
We have an explicit propagator for the Poisson Sigma Model, i.e. using the superfields of it, on a disk with alternating boundary conditions, which was computed in \cite{CF6}, in \cite{CT1} and, in full generality, in \cite{AF}. 

\subsection{Construction of the branes}
Consider an $n$-sided polygon $P_n=u(\mathbb{H}^+)$ where $u:\mathbb{H}^+\to P_n$ is a suitable homeomorphism between the compactified complex upper half plane $\mathbb{H}^+$ and $P_n$, depending on the number of the branes considered.
Let $G_{S_i}$, be the relevant superpropagators for the Poisson Sigma Model with $n$ branes defined by constraints $C_j=\{x^{\mu_j}=0\mid \mu_j\in I_j\}$ (also called \textsf{branes}) and index sets $S_1=I_1^C\cap I_2\cap I_3^C\cap\dotsm\cap I_n$, $S_2=I_1\cap I_2^C\cap I_3\cap\dotsm \cap I_n^C$ for $n$ even, and $S_1=I_1^C\cap I_2\cap I_3^C\cap\dotsm\cap I_n^C$, $S_2=I_1\cap I_2^C\cap\dotsm \cap I_n$ for $n$ odd, which are called \textsf{relevant}. It turns out that the $C_i\subset \mathscr{P}$ are coisotropic submanifolds of $\mathscr{P}$ \cite{CF6}. 
\subsection{Constructing integral kernels}
The integral kernels $\theta(Q,P)_{S_i}:=-\frac{\I}{\hbar}\langle \widehat{\mathsf{X}}^\bullet(Q)\widehat{\boldsymbol{\eta}}_\bullet(P)\rangle$ for the two brane case are given by:
\begin{align}
\theta(Q;P)_{S_i}&=\frac{1}{2\pi}\dd \arg\frac{(u-v)(\bar u-v)}{(\bar u+v)(u+v)},\\
\theta(Q,P)_{S_2}&=\frac{1}{2\pi}\dd\arg\frac{(u-v)(\bar u+v)}{(\bar u-v)(u+v)},
\end{align}
where $P_2:=u(\mathbb{H}^+)$ with $u(z)=\sqrt{z}$, $v:=u(w)$, $\dd=\dd_u+\dd_v$. We identify $(P,Q)$ with the couple $(u,v)$. Consider e.g. $P_2$ to be the worldsheet disk $\Sigma$ with boundary $\partial\Sigma=\bigsqcup_{1\leq j\leq 6}J_j$ (we denote the intervals here by $J$ instead of $I$ such that there is no confusion with the index sets) and the branes $C_1=\{x^{\mu_1}=0\mid \mu_1\in I_1=\{1,\ldots,n\}\}$ and $C_2=\{x^{\mu_2}=0\mid\mu_2\in I_2=\varnothing\}$, which correspond to the boundary conditions of $\partial_1\Sigma$ and $\partial_2^{\text{tot}}\Sigma$ respectively. The components $\partial_1\Sigma$ and $\partial_2^{\text{tot}}\Sigma$ are such that $\partial\Sigma=\partial_1\Sigma\sqcup\partial_2^{\text{tot}}\Sigma$, where $\partial_1\Sigma$ is chosen to be some $J_1$ endowed with the $\frac{\delta}{\delta\E}$-polarization and $\partial_2^\text{tot}\Sigma=\bigsqcup_{2\leq j\leq 6} J_j$ such that $J_j$ is endowed with the $\frac{\delta}{\delta\mathbb{X}}$-polarization and with the boundary condition $\hateta\equiv0$ for $j$ odd and even respectively. Now we get $S_1=I_1^C\cap I_2=\varnothing$ and $S_2=I_1\cap I_2^C=\{1,\ldots,n\}$. Now $P_2$ is defined by $P_2=u(\mathbb{H}^+)$, where $u$ is the map $z\mapsto \sqrt{z}$. Points $(P,Q)\in P_2\times P_2$ are represented respectively by a pair of complex numbers $(u,v)$ in the first quadrant, with $u=u(z)$, $v=u(w)$ for all $(z,w)\in\mathbb{H}^+\times\mathbb{H}^+$. The boundary $\partial_1 P_2$ (corresponding to $\partial_1\Sigma$) is given by the positive imaginary axis, while $\partial_2P_2$ (corresponding to $\partial_2^{\text{tot}}\Sigma$) is given by the positive real axis. 

\subsection{Construction of superpropagators}
The boundary conditions imposed by the index sets $S_i$ are $\theta(v,u\in\partial_1P_2)_{S_1}=\theta(u\in\partial_2P_2,u)_{S_1}=0$, $\theta(v,u\in \partial_2P_2)_{S_2}=\theta(v\in\partial_1P_2,u)_{S_2}=0$. Let 
\begin{align}
\label{mirror1}
\psi(u,v)_{S_1}&=\arg\frac{(u-v)(\bar u-v)}{(\bar u+v)(u+v)},\\
\label{mirror2}
\psi(u,v)_{S_2}&=\arg\frac{(u-v)(\bar u+v)}{(\bar u-v)(u+v)},
\end{align}
which satisfy the same boundary conditions as $\theta(v,u)_{S_i}$. Now for vanishing cohomology, we get the following Theorem.
\begin{thm}
The integral kernels for the superpropagators $G_{S_i}$ in presence of two branes are given by 
\begin{equation}
\label{superpropagators}
\theta(v,u)_{S_i}=\frac{1}{2\pi}\dd\psi(u,v)_{S_i},
\end{equation}
with \textsf{angle maps} \eqref{mirror1} and \eqref{mirror2}. The integral kernels satisfy the additional boundary conditions $\theta(v,u)_{S_1}=\theta(v,\bar u)=\theta(-\bar v,u)_{S_1}$, $\theta(v,u)_{S_2}=\theta(v,-\bar u)_{S_2}=\theta(\bar v,u)_{S_2}$, i.e. every boundary component of $P_2$ is labeled by a boundary condition for both the variables $(u,v)$. By construction $\theta(v,u)_{S_1}=\theta(u,v)_{S_2}$, $\theta(v,u)_{S_2}=\theta(u,v)_{S_1}$.
\end{thm}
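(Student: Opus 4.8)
The superpropagator $G_{S_i}$ is characterized, in the BV-BFV gauge fixing, as the (essentially unique) closed $1$-form $\theta$ on the compactified configuration space $\mathsf{C}_2(P_2)$ which has unit winding along the diagonal and which obeys the vanishing conditions on $\partial_1 P_2$ and $\partial_2 P_2$ dictated by the index sets $S_i$. The plan is therefore to (i) produce a candidate by the method of images, (ii) verify that it has the correct singularity and boundary behaviour, and (iii) invoke uniqueness to conclude that it coincides with $G_{S_i}$. First I would set up the conformal model: the map $u(z)=\sqrt{z}$ sends $\mathbb{H}^+$ to the first quadrant $P_2$, carrying the two halves of $\partial\mathbb{H}^+$ onto $\partial_1 P_2$ (positive imaginary axis) and $\partial_2 P_2$ (positive real axis). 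Since the Kontsevich angle form is a pullback of $\dd\arg$, it is conformally natural, so it is enough to write the propagator directly in the quadrant coordinates $(u,v)$.

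To enforce a boundary condition on \emph{both} axes simultaneously I would introduce the three image points $\bar u,\,-u,\,-\bar u$, obtained by reflecting the source $u$ across the real axis, the imaginary axis and the origin, and assign them charges so that the resulting angle function vanishes on the prescribed boundary. For $S_1$ this produces
\begin{equation*}
\psi(u,v)_{S_1}=\arg(u-v)+\arg(\bar u-v)-\arg(\bar u+v)-\arg(u+v),
\end{equation*}
which is exactly the angle map \eqref{mirror1}, and symmetrically for $S_2$. Closedness is then immediate, since $\theta_{S_i}=\tfrac{1}{2\pi}\dd\psi_{S_i}$ is exact away from the diagonal; the only interior singularity comes from $\arg(u-v)$, the three image factors being smooth there, so the winding around $u=v$ is $1$ with normalization $\tfrac{1}{2\pi}$.

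The boundary conditions are checked by substitution. Putting $u=\ii s$ on $\partial_1 P_2$ gives $\bar u=-u$, whence $\bar u-v=-(u+v)$ and $\bar u+v=-(u-v)$, so the four $\arg$'s cancel in pairs and $\psi_{S_1}$ is constant, i.e. $\theta_{S_1}$ restricts to zero; putting $v=\bar v$ real on $\partial_2 P_2$ one uses $\arg(\bar u-v)=-\arg(u-v)$ and $\arg(u+v)=-\arg(\bar u+v)$ to obtain again $\psi_{S_1}\equiv 0$. The computation for $S_2$ is identical with the two axes interchanged. The further relations $\theta(v,u)_{S_1}=\theta(v,\bar u)_{S_1}$ and $\theta(v,u)_{S_1}=\theta(-\bar v,u)_{S_1}$ follow by inspection: the first reflection merely transposes the two numerator and the two denominator factors of the rational function, and the second, together with the identity $\arg(\bar w)=-\arg(w)$, maps the four factors to one another with matching signs; hence $\psi_{S_1}$ is invariant and so is $\theta_{S_1}=\tfrac{1}{2\pi}\dd\psi_{S_1}$. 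The duality $\theta(v,u)_{S_1}=\theta(u,v)_{S_2}$ follows by swapping $u\leftrightarrow v$ and comparing \eqref{mirror1} with \eqref{mirror2}, the equality holding at the level of $\theta$, where the additive $\pi$-constants produced by $\arg(-w)$ are annihilated by $\dd$.

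Uniqueness then closes the argument: any two closed $1$-forms with the same unit diagonal singularity and the same boundary vanishing differ by a globally regular closed form restricting to zero on $\partial P_2$, which must be exact and boundary-trivial by the vanishing of the relevant cohomology (the ``for vanishing cohomology'' input quoted before the statement), forcing the candidate to equal $G_{S_i}$. The main obstacle I expect is not the algebra of images but the analytic control at the corner of $P_2$, i.e. the origin where the two branes meet: one must confirm that $\theta_{S_i}$ extends to the compactification $\mathsf{C}_2(P_2)$ with corners and vanishes when either endpoint approaches that corner, since precisely this vanishing was invoked in the treatment of corners $\mathscr{C}_1,\mathscr{C}_2$ in Section \ref{sec:ABC}. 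Establishing it requires analysing the limiting behaviour of $\psi_{S_i}$ under the blow-up at the origin, rather than merely its pointwise boundary values.
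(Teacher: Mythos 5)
Your proposal is correct and follows exactly the route the paper itself sketches: the paper gives no explicit proof of this theorem, but the surrounding text sets up precisely your argument --- the angle maps \eqref{mirror1}, \eqref{mirror2} are built by the method of images so as to ``satisfy the same boundary conditions as $\theta(v,u)_{S_i}$'', and the identification with the superpropagator is then asserted ``for vanishing cohomology'', deferring the full computation to the cited references. Your substitution checks on $\partial_1 P_2$ and $\partial_2 P_2$, the reflection symmetries, the $S_1\leftrightarrow S_2$ duality, and the closing uniqueness step all match that implicit argument, and your closing remark about verifying the extension and vanishing at the corner of $P_2$ is a legitimate point of care that the paper relies on in Section \ref{sec:ABC} without spelling out here.
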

\subsection{Relation to Kontsevich's propagator}
Let $\phi$ be Kontsevich's angle 1-form. Then, one can show that 
\begin{align}
\theta(v,u)_{\A_1}&=\frac{1}{2\pi}\dd\arg\frac{(u-v)(u+v)}{(u+\bar v)(u-\bar v)}=\frac{1}{2\pi}\dd\arg\frac{(z-w)}{(z-\bar w)}=\frac{1}{2\pi}\dd\phi(z,w),\\
\theta(v,u)_{\A_2}&=\frac{1}{2\pi}\dd\arg\frac{(u-v)(u+v)}{(\bar u- v)(\bar u+ v)}=\frac{1}{2\pi}\dd\arg\frac{(z-w)}{(\bar z- w)}=\frac{1}{2\pi}\dd\phi(w,z),
\end{align}
where $\A_1=I_2\cap I_2$ and $\A_2=I_1^C\cap I_2^C$.

\end{appendix}

\printglossary

\printbibliography

\end{document}